\documentclass[11pt,letterpaper]{article}
\usepackage{amsmath,amssymb,amsfonts,amsthm}
\usepackage[mathscr]{euscript}
\usepackage{tikz}
\usepackage{tikz-cd}
\usepackage{float}
\usepackage{dsfont}
\usepackage{enumitem}
\usepackage{bbm}
\usepackage{xcolor}
\usepackage[ruled,vlined]{algorithm2e}
\usepackage[margin=1in]{geometry}
\usepackage[colorlinks, citecolor = blue, linkcolor = magenta]{hyperref}
\usepackage{cleveref}
\numberwithin{equation}{section}

\allowdisplaybreaks

\newtheorem{theorem}{Theorem}[section]
\newtheorem{lemma}[theorem]{Lemma}
\newtheorem{corollary}[theorem]{Corollary}
\newtheorem{conjecture}[theorem]{Conjecture}

\theoremstyle{definition}
\newtheorem{remark}{Remark}
\newtheorem{proposition}[theorem]{Proposition}
\newtheorem{definition}[theorem]{Definition}
\newtheorem{problem}[theorem]{Problem}

\renewcommand{\leq}{\leqslant}
\renewcommand{\le}{\leqslant}
\renewcommand{\geq}{\geqslant}
\renewcommand{\ge}{\geqslant}

\newcommand{\cA}{\ensuremath{\mathcal{A}}}

\newcommand{\cD}{\ensuremath{\mathcal{D}}}
\newcommand{\cE}{\ensuremath{\mathcal{E}}}

\newcommand{\cG}{\ensuremath{\mathcal{G}}}
\newcommand{\cH}{\ensuremath{\mathcal{H}}}

\newcommand{\cJ}{\ensuremath{\mathcal{J}}}

\newcommand{\cP}{\ensuremath{\mathcal{P}}}

\newcommand{\cS}{\ensuremath{\mathcal{S}}}

\newcommand{\cW}{\ensuremath{\mathcal{W}}}

\newcommand{\bE}{\ensuremath{\mathbb{E}}}
\newcommand{\bF}{\ensuremath{\mathbb{F}}}
\newcommand{\bN}{\ensuremath{\mathbb{N}}}
\newcommand{\bP}{\ensuremath{\mathbb{P}}}
\newcommand{\bR}{\ensuremath{\mathbb{R}}}

\newcommand{\bcJ}{\boldsymbol{\cJ}}
\newcommand{\bcP}{\boldsymbol{\cP}}
\newcommand{\bcS}{\boldsymbol{\cS}}
\newcommand{\bcW}{\boldsymbol{\cW}}

\newcommand{\supp}{\mathrm{supp}}
\newcommand{\yes}{\mathrm{yes}}
\newcommand{\no}{\mathrm{no}}

\newcommand{\ex}{\mathrm{ex}}

\newcommand{\frakP}{\mathfrak{P}}

\DeclareMathOperator*{\argmax}{argmax}

\newcommand{\nil}{\textup{\texttt{nil}}}

\newcommand{\dsam}{\mathsf{dsam}}
\newcommand{\sam}{\mathsf{sam}}

\newcommand{\Wedge}{\mathsf{Wedge}}

\newcommand{\Square}{\mathsf{Square}}
\newcommand{\Walk}{\mathsf{Walk}}
\newcommand{\Hop}{\mathsf{Hop}}
\newcommand{\HopD}{\mathsf{HopD}}

\newcommand{\Var}[1]{\mathrm{Var} \left[ #1 \right]}

\newcommand{\Ex}[1]{\bE \left[ #1 \right]}

\newcommand{\Exs}[2]{\bE_{#1}\left[ #2 \right]}
\renewcommand{\Pr}[1]{\bP \left[ #1 \right]} 
\newcommand{\Pru}[2]{\underset{ #1 }\bP \left[ #2 \right]}
\newcommand{\Prs}[2]{\bP_{ #1 }\left[ #2 \right]}

\newcommand{\ind}[1]{\mathds{1} \left[ #1 \right] }

\title{Testing Properties of Edge Distributions}
\author{Yumou Fei\thanks{Department of EECS, Massachusetts Institute of Technology.}}
\date{}

\begin{document}

\maketitle 

\begin{abstract}
We initiate the study of distribution testing for probability distributions over the edges of a graph, motivated by the closely related question of ``edge-distribution-free'' graph property testing. The main results of this paper are nearly-tight bounds on testing bipartiteness, triangle-freeness and square-freeness of edge distributions, whose sample complexities are shown to scale as $\Theta(n)$, $n^{4/3\pm o(1)}$ and $n^{9/8\pm o(1)}$, respectively.

The technical core of our paper lies in the proof of the upper bound for testing square-freeness, wherein we develop new techniques based on certain birthday-paradox-type lemmas that may be of independent interest. We will discuss how our techniques fit into the general framework of distribution-free property testing. We will also discuss how our results are conceptually connected with Tur\'an problems and subgraph removal lemmas in extremal combinatorics.
\end{abstract}

\newpage 

\tableofcontents

\newpage 

\section{Introduction}
Suppose $\Lambda$ is a finite set, and $\cP$ is a class of probability distributions on $\Lambda$. In the standard model of distribution testing \cite{goldreich1998property,batu2000testing}, given sampling access to an unknown distribution $\mu$ over $\Lambda$, an algorithm should accept with probability at least $2/3$ if $\mu$ belongs to the class $\cP$, and reject if $\mu$ has total variation distance at least $\varepsilon$ to any distribution in $\cP$. 

In many central problems such as uniformity testing and identity testing (see e.g. the survey \cite{canonne2022topics} and references therein), the domain $\Lambda$ is a general \emph{unstructured} set, i.e. there are no relations among the elements of $\Lambda$. However, there are also important examples of problems where the domain $\Lambda$ is endowed with a certain structure. For example, in monotonicity testing of distributions (introduced by \cite{batu2004sublinear}), the domain is assumed to be a partially ordered set. Some concrete domain structures, such as the hypercube $\Lambda=\{0,1\}^{n}$, have also been studied in the literature for various problems.

In this paper, we initiate the study of distribution testing with domain
\[
\Lambda=\binom{[n]}{2}=\{\text{two-element subsets of }[n]\}.
\]
A distribution over such a domain can be viewed as an edge-weighted graph on $n$ vertices, and a random sample from the distribution is a random edge from the graph generated with probabilities proportional to the edge weights.

We focus the present study on properties of distributions that are characterized solely by the \emph{support} of the unknown distribution $\mu$, i.e.
\[
\supp(\mu):=\big\{x\in \Lambda\,\big|\, \mu(\{x\})>0\big\}.
\]
If the domain $\Lambda$ is a general unstructured set, then the only (symmetric) information about the support is its cardinality. Indeed, the problem of estimating the support size (or testing whether the support size is at most some value) of distributions has been extensively studied (see e.g. \cite{valiant2017estimating,ferreira2025testing}). Since in our context the support of a distribution is the edge set of a graph, instead of a bare subset of an unstructured domain, one can study much richer properties of the support such as bipartiteness and subgraph-freeness.

For the sake of convenience, we say an edge distribution $\mu$ over $\binom{[n]}{2}$ satisfies a certain graph property (such as triangle-freeness) if the support of $\mu$ satisfies that property. The main results of this paper are nearly-tight bounds on the sample complexities of testing bipartiteness, triangle-freeness and square-freeness of edge distributions:

\begin{theorem}[Informal]\label{thm:main_informal}
The sample complexities of testing bipartiteness, triangle-freeness and square-freeness of edge distributions on $n$ vertices are $\Theta(n)$, $n^{4/3\pm o(1)}$ and $n^{9/8\pm o(1)}$, respectively.
\end{theorem}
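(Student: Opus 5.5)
The plan is to treat the three properties separately. In each case the tester is the canonical one: draw $s$ samples and reject iff the sampled support contains the forbidden structure (an odd cycle, a triangle, or a $4$-cycle). For the matching lower bounds I will use a yes/no pair of distributions whose sample views are close in total variation until a ``birthday'' event occurs.

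\emph{Bipartiteness.} For the upper bound I take $s=C n/\varepsilon$ samples. If $\mu$ is $\varepsilon$-far from bipartite, then every bipartition $(A,[n]\setminus A)$ has monochromatic edges of total mass at least $\varepsilon$. So a fixed bipartition survives all $s$ samples with probability at most $(1-\varepsilon)^s\le e^{-\varepsilon s}$. A union bound over the $2^n$ bipartitions shows that, with probability $\ge 2/3$, no bipartition is consistent with the samples, i.e.\ the sampled graph is non-bipartite.

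For the lower bound I compare two distributions:
\begin{itemize}[nosep]
\item the uniform distribution on the edges of a random bipartite graph with $\Theta(n)$ edges and bounded degrees, which is in the class;
\item the uniform distribution on the edges of a random graph of the same density, which with high probability is $\Omega(1)$-far from bipartite, since the max-cut of a sparse random graph misses a constant fraction of edges.
\end{itemize}
With $s=o(n)$ samples from either distribution, the sampled graph is with high probability a forest of small random trees, and it has the same law in both cases. Hence $\Omega(n)$ samples are needed.

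\emph{Triangles and squares: upper bounds.} The heuristic behind the exponents is the following. Suppose the far-ness is witnessed by $k$ edge-disjoint copies of $H$ ($H=K_3$ or $C_4$), each edge of mass about $\varepsilon/k$. Then the expected number of copies fully sampled is about $s^{|E(H)|}k\,(\varepsilon/k)^{|E(H)|}$. This gives $s\approx k^{2/3}\le n^{4/3}$ for triangles, using $k\le n^2$. For squares it gives $s\approx k^{3/4}\le n^{9/8}$, using that a graph with $k$ edge-disjoint $C_4$'s whose union is ``spread'' can be taken with $k\lesssim n^{3/2}$ after removing a square-free part (Kővári--Sós--Turán).

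To make this rigorous I will:
\begin{enumerate}[nosep]
\item greedily extract from an $\varepsilon$-far $\mu$ a family of copies of $H$ carrying total mass $\Omega(\varepsilon)$;
\item bucket the edges by mass dyadically, losing $n^{o(1)}$ factors;
\item within the dominant bucket structure, prove a birthday-paradox lemma: $s$ samples contain a full copy with constant probability.
\end{enumerate}
For step~3 I will use a second-moment (Janson-type) bound on the number of fully sampled copies. The variance is controlled by pairs of copies sharing edges, and heavy shared edges are handled by conditioning on them being sampled and recursing on the remaining structure.

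\emph{Lower bounds.} For triangles I take a Ruzsa--Szemer\'edi graph: $n^2/e^{O(\sqrt{\log n})}$ edges, each in exactly one triangle. The no-instance is uniform on its edges, which is $\Omega(1)$-far from triangle-free. For the yes-instance I replace each triangle by a triangle-free gadget, for instance a $3$-partite lift on a random $6$-cycle, designed so that the joint law of any $\le 2$ sampled edges per gadget matches, after a random relabeling of vertices. Then the total variation between the sample views is bounded by the probability of sampling three edges of one triangle, which is $O(s^3/k^2)=o(1)$ for $s\le n^{4/3-o(1)}$. Squares are handled analogously: I would use a $C_4$-rich construction with $n^{3/2-o(1)}$ edge-disjoint squares and no other short cycles, built from an incidence/polarity-type or Behrend-type construction, which yields the $n^{9/8-o(1)}$ bound.

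The main obstacle is the square-freeness upper bound. Unlike triangles, a graph far from $C_4$-free need not contain many edge-disjoint squares of comparable mass, because the masses can be highly non-uniform, with heavy ``hub'' edges. The second-moment calculation must therefore handle overlapping squares sharing heavy edges or vertices. I would first try to prove a weighted birthday lemma that handles two-level mass profiles, such as a heavy wedge plus light closing paths, and then bootstrap it through the dyadic decomposition.
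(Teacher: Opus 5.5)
\textbf{Main gap: the square-freeness upper bound.} Your bipartiteness upper bound matches the paper. The square-freeness upper bound, however, is not proved, and the heuristic behind it does not hold up.

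The step ``$k\lesssim n^{3/2}$ after removing a square-free part'' has no justification. The $O(n^{3/2})$ bound (via $\ex^{=1}(n,C_4)\le 2\ex(n,C_4)$) only constrains packings whose squares create no \emph{unintended} squares. An $\varepsilon$-far distribution can instead force every fractional square packing to use $\Theta(n^2)$ squares of weight at most $\binom{n}{2}^{-1}$. The uniform distribution on $K_n$ is an example. There the expected number of fully sampled packed squares after $s$ samples is at most $s^4/\binom{n}{2}^3$, so about $n^{3/2}$ samples are needed to catch one. Yet squares actually appear after $O(n)$ samples, all of them unintended.

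So any plan that counts fully sampled members of an extracted family is capped at $n^{3/2}$, with or without dyadic bucketing and a Janson bound. A second-moment bound over all squares of the support with arbitrary weights is exactly the ``weighted birthday lemma'' you leave open.

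The missing idea is a dichotomy: exploit unintended squares when the packing is spread out, and certify a large $\ell_4$-norm of the packing when it is not. The paper does this as follows.
\begin{itemize}
\item Take the LP-dual fractional packing $q$ with linearly independent support. By Hall's theorem, assign each support square a private edge. This gives $p'\preceq p$ with $\sum_e (p'_e)^4=\sum_\zeta q_\zeta^4$.
\item Dilute case: $\sum_{\{a,c\}}\HopD[p'](a,c)\ge\varepsilon/8$. Simulate samples from $\Walk[p']$ by pairing edge samples at a common middle vertex, after discarding low-degree vertices. A birthday argument on the grid $\binom{[n]}{2}\times[n]$ of wedges then finds walks $a$--$b$--$c$ and $a$--$d$--$c$ with $b\neq d$, i.e.\ a square, within $O(\varepsilon^{-1}n\log n)$ samples.
\item Concentrated case: let $S_b$ be the diagonal plus the pairs $(a,c)$ for which $b$ is the heaviest middle vertex, and let $\gamma_b=\sum_{(a,c)\in S_b}p'_{ab}p'_{cb}$. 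Then $\sum_b\sqrt{\gamma_b}\ge\varepsilon/4$.
\item In that case, the largest eigenvalue of the $0/1$ matrix of $S_b$ is at most $\sqrt{|S_b|}$. Combined with Cauchy--Schwarz and H\"older, using $\sum_b|S_b|=2n(n-1)$, this gives $\sum_\zeta q_\zeta^4\gtrsim\varepsilon^4n^{-9/2}$. The classical birthday paradox on the packed squares then succeeds with $O(n^{9/8}/\varepsilon)$ samples.
\end{itemize}

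\textbf{Triangle upper bound: the extraction step.} The extracted family must be a \emph{fractional} packing with $\sum_\zeta\lambda_\zeta 1_\zeta\preceq p$, obtained by LP duality, with support at most $\binom n2$. A maximal edge-disjoint family always covers mass $\ge\varepsilon$. But it may consist of triangles $abw$ whose edges $aw,bw$ have negligible mass, each blocking an intended triangle $abc$; such a family detects nothing. With the fractional packing, split each sampled edge $e$ among the copies $\zeta\ni e$ with probabilities $\lambda_\zeta/p_e$. The copies then become independent birthday cells, so no second-moment control of overlaps is needed.

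\textbf{Triangle lower bound: mismatched instances.} Your no-instance is the Ruzsa--Szemer\'edi graph itself. Lifting each triangle to a $6$-cycle doubles the vertices and edges while keeping degrees. Pairs of samples sharing a vertex therefore become about half as frequent, which is detectable with $O(\sqrt n)$ samples. Matching laws ``per gadget'' is not the right condition, because gadgets share vertices.

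Both instances must be random $2$-lifts $R_y(E)$, with $y$ uniform subject to $y_{ab}+y_{bc}+y_{ca}=1$ (yes) or $=0$ (no) on each triangle. Since the triangles are edge-disjoint, in both cases $y$ is uniform on any edge set containing no whole triangle. Hence the total variation distance is at most the probability of sampling a full triangle of $E$, which is $O(s^3/|E|^2)$.

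\textbf{Square lower bound.} The same lift works, since a square in a $2$-lift projects to a square. All you need is a graph in which every edge lies in exactly one square (Timmons--Verstra\"ete, from $3$-fold Sidon sets). Requiring ``no other short cycles'' is unattainable at $n^{3/2-o(1)}$ edges, because $\ex(n,C_6)=O(n^{4/3})$.

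\textbf{Bipartiteness lower bound.} Your claim that the sampled forests have the ``same law'' requires the two random models to agree in local statistics, and even then it holds only approximately. The paper's yes/no $2$-lifts of a complete $k$-partite graph give exactly equal laws unless a cycle or repeated edge is sampled.
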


\subsection{Distribution-Free Testing of Functions}

In this subsection, we show how our results relate to \emph{distribution-free property testing (of functions)}. We first present the following standard formalization of the distribution testing model used in Theorem~\ref{thm:main_informal}.

\begin{definition}\label{def:support_testing}
Suppose $\cH$ is a nonempty (downward-closed) family of subsets of a finite domain $\Lambda$. For any parameter $\varepsilon\in (0,1)$, we define $\dsam(\cH,\varepsilon)$ to be the minimum possible value of positive integer $m$ such that the following holds: there exists an algorithm that for any distribution $\mu$ over $\Lambda$, takes $m$ independent samples from $\mu$ and
\begin{enumerate}[label=(\arabic*)]
\item accepts with probability at least $2/3$ if $\supp(\mu)\in \cH$;
\item rejects with probability at least $2/3$ if $\left\|\mu-\nu\right\|_{\mathrm{TV}}\geq \varepsilon$ for any distribution $\nu$ over $\Lambda$ with $\supp(\nu)\in \cH$.
\end{enumerate}
\end{definition}

Distribution-free property testing (of functions) was first introduced by Goldreich, Goldwasser and Ron \cite{goldreich1998property} and has been studied extensively. The (sample-based) distribution-free property testing model is defined as follows.

\begin{definition}[{\cite[Definition 2.1]{goldreich1998property}}]\label{def:distribution_free_testing}
Suppose $\cH$ is a nonempty family of Boolean-valued functions on a finite domain $\Lambda$.\footnote{By identifying a subset of $\Lambda$ with its indicator function, we view a family of Boolean-valued functions interchangeably as a family of subsets of the domain.} For any parameter $\varepsilon\in (0,1)$, we define $\sam(\cH,\varepsilon)$ to be the minimum possible value of positive integer $m$ such that the following holds: there exists an algorithm that for any distribution $\mu$ over $\Lambda$ and any function $f:\Lambda\rightarrow\{0,1\}$, takes $m$ independent $f$-labeled samples $\big(x^{(1)},f(x^{(1)})\big),\dots,\big(x^{(m)},f(x^{(m)})\big)$, where each $x^{(i)}$ is drawn independently from $\mu$, and
\begin{enumerate}[label=(\arabic*)]
\item accepts with probability at least $2/3$ if $f\in \cH$;
\item rejects with probability at least $2/3$ if $\Prs{x\sim \mu}{f(x)\neq g(x)}\geq \varepsilon$ for any function $g\in \cH$. 
\end{enumerate}
\end{definition}

It is easy to observe the following relation between Definitions~\ref{def:support_testing} and~\ref{def:distribution_free_testing} (see Section~\ref{sec:proof_of_easy_equivalence} for a proof).

\begin{proposition}\label{prop:downward_closed_equivalence}
Suppose $\cH$ is a downward-closed family of subsets of a finite domain $\Lambda$. For any parameter $\varepsilon\in (0,1)$, we have
\[
\dsam(\cH,\varepsilon)\leq \sam(\cH,\varepsilon)\leq \frac{20}{\varepsilon}\cdot \big(\dsam(\cH,\varepsilon)+1\big).
\]
\end{proposition}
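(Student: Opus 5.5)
The two inequalities are proved by two separate reductions, one in each direction.

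\emph{First inequality, $\dsam\le\sam$.} Given a $\sam$-tester $T$ using $m$ labeled samples, we build a $\dsam$-tester. It takes $m$ samples $x^{(1)},\dots,x^{(m)}$ from $\mu$, feeds $T$ the labeled samples $(x^{(i)},1)$, and outputs whatever $T$ outputs. This simulates running $T$ on the pair $(\mu,f)$ with $f=\mathbf 1_{\supp(\mu)}$, because every sample lies in $\supp(\mu)$.
\begin{itemize}
\item If $\supp(\mu)\in\cH$, then $f\in\cH$, so $T$ accepts with probability at least $2/3$.
\item Suppose instead that $\mu$ is $\varepsilon$-far in total variation from every $\nu$ with $\supp(\nu)\in\cH$. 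Take any $g\in\cH$, viewed as a set $S$, and let $\nu$ be $\mu$ conditioned on $S\cap\supp(\mu)$.
\item If $\mu(S)>0$, downward closure gives $\supp(\nu)=S\cap\supp(\mu)\in\cH$. Also $\|\mu-\nu\|_{\mathrm{TV}}=1-\mu(S)=\Pr_{x\sim\mu}[f(x)\ne g(x)]$. Hence $\Pr_{x\sim\mu}[f(x)\neq g(x)]\ge\varepsilon$.
\item If $\mu(S)=0$, the disagreement probability is $1\ge\varepsilon$.
\end{itemize}
So $T$ rejects with probability at least $2/3$.

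\emph{Second inequality, $\sam\le\frac{20}{\varepsilon}(\dsam+1)$.} Given $(\mu,f)$, let $A=f^{-1}(1)$. Our plan is to reduce to testing the conditional distribution $\mu_A$ of $\mu$ on $A$.

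\emph{Rejection sampling.} Draw $N=\lceil 20(d+1)/\varepsilon\rceil$ labeled samples, where $d=\dsam(\cH,\varepsilon)$. The samples with label $1$ are i.i.d.\ draws from $\mu_A$.

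\emph{Case split on $\mu(A)$.}
\begin{itemize}
\item If fewer than $d$ samples have label $1$, accept. This is justified when $\mu(A)$ is small: then $g=\emptyset\in\cH$ (by nonemptiness and downward closure) is within distance $\mu(A)$ of $f$.
\item Otherwise, run the $\dsam$-tester with parameter $\varepsilon$ on the first $d$ label-$1$ samples.
\end{itemize}

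\emph{Completeness.} If $f\in\cH$, then $\supp(\mu_A)\subseteq A$. By downward closure $\supp(\mu_A)\in\cH$, so the $\dsam$-tester accepts with probability at least $2/3$. The early-accept branch only helps.

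\emph{Soundness.} Suppose $f$ is $\varepsilon$-far. Since $\emptyset\in\cH$, this forces $\mu(A)\ge\varepsilon$. Then the expected number of label-$1$ samples is at least $20(d+1)$, and by Markov/Chernoff the probability of seeing fewer than $d$ of them is at most about $1/20$.

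\emph{Transferring farness.} This is the step I expect to be the crux. We must show $\mu_A$ is $\varepsilon$-far from every $\nu$ with $\supp(\nu)\in\cH$.
\begin{itemize}
\item Let $S=\supp(\nu)$. Then $\|\mu_A-\nu\|_{\mathrm{TV}}\ge 1-\mu_A(S)=\mu(A\setminus S)/\mu(A)\ge \mu(A\setminus S)$.
\item Let $g=\mathbf 1_{S\cap A}\in\cH$. Then $\Pr_{x\sim\mu}[f(x)\ne g(x)]=\mu(A\setminus S)$, and this is at least $\varepsilon$ by assumption.
\item Combining, $\|\mu_A-\nu\|_{\mathrm{TV}}\ge\varepsilon$.
\end{itemize}

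\emph{Error accounting.} The $\dsam$-tester errs with probability at most $1/3$ and the sampling step fails with probability at most about $1/20$. The total exceeds $1/3$, so we first amplify the $\dsam$-tester, or equivalently tune the constants. I would keep the constant $20$ by slightly adjusting thresholds: for example, use the $2/3$ guarantee with a majority vote only if needed, or note that the slack in $20(d+1)$ versus $d$ makes the failure probability tiny, about $e^{-\Omega(d)}$ plus Markov. Handling this constant cleanly is the only fiddly point.
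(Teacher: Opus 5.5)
Your two reductions are the paper's. For $\dsam\le\sam$ you feed the $\sam$-tester the labeled samples $(x,1)$, which simulates $f=\mathbf{1}_{\supp(\mu)}$. For the converse you use the positively labeled samples as i.i.d.\ draws from $\mu_A$, accept when there are too few of them, and transfer farness through $g=\mathbf{1}_{S\cap A}$. The first inequality and the farness transfer are correct as written, and more explicit than the paper. The paper asserts the first reduction without checking farness, and it passes to a $\nu$ supported in $f^{-1}(1)$ ``without loss of generality''. The genuine gap is the error accounting, which you leave open, and one of your two proposed repairs cannot work.

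In the far case, let $\eta>0$ be the probability of the early-accept branch. With a single call to the $\dsam$-tester on $d$ positive samples, all you can guarantee is rejection probability $(1-\eta)\cdot\frac23$. When the tester's guarantee is tight, this is strictly below $\frac23$ for every $\eta>0$. Making $\eta$ tiny through the slack in $N$ does not help, and there is no threshold to tune in a one-call scheme. Also, Markov's inequality gives no lower-tail bound on the number of positives; you need Chernoff or Chebyshev.

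Amplification is therefore mandatory, and it is exactly what the paper does. With $m=\dsam(\cH,\varepsilon)$, it draws $\lceil 18m/\varepsilon\rceil$ labeled samples and accepts if at most $9m$ are positive. Otherwise it splits the first $9m$ positive samples into $9$ batches of size $m$ and takes a majority vote of the $\dsam$-tester. Conditional on the labels, the positive samples are i.i.d.\ from $\mu_A$, so the batches are independent.

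\begin{itemize}
\item \emph{Completeness:} a majority of an odd number of independent runs, each accepting with probability at least $\frac23$, still accepts with probability at least $\frac23$.
\item \emph{Soundness:} since $\mu(A)\ge\varepsilon$, Chernoff bounds the early-accept probability by $e^{-18m/8}<\frac19$. The majority errs with probability at most $\bP[\mathrm{Bin}(9,\frac13)\ge 5]<\frac16$. So the rejection probability is at least $1-\frac19-\frac16>\frac23$.
\end{itemize}

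Your budget $\lceil 20(d+1)/\varepsilon\rceil$ works equally well if you raise the threshold to $9d$. The expected number of positives is then at least $20(d+1)\ge 2\cdot 9d$, so the early-accept probability is at most $e^{-5}$.
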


Therefore, the distribution testing problem in Definition~\ref{def:support_testing} can basically be viewed as the special case of (sample-based) distribution-free property testing of Boolean-valued functions where the property is downward-closed. In the rest of the paper, we will mostly work with Definition~\ref{def:distribution_free_testing} instead of Definition~\ref{def:support_testing}.

Our main theorem (Theorem~\ref{thm:main_informal}) can then be formalized as follows.

\begin{theorem}[Formal version of Theorem~\ref{thm:main_informal}]\label{thm:main}
For positive integers $n$, let $\cG^{\mathrm{bip}}_{n}$, $\cG^{\mathrm{tri}}_{n}$ and $\cG^{\mathrm{squ}}_{n}$ be the collection of bipartite, triangle-free and square-free subsets of $\binom{[n]}{2}$, respectively. For any $\varepsilon\in (0,\frac{1}{10})$, we have\footnote{The lower bound $\sam\big(\cG^{\text{bip}}_{n},1/9\big)\geq \Omega(n)$ was already proven by Goldreich and Ron \cite[Theorem 4.6]{goldreich2016sample} even in the case where the unknown distribution over $\binom{[n]}{2}$ is uniform. They also proved an $O(n/\varepsilon)$ upper bound in the uniform distribution case; our result $\sam\big(\cG^{\text{bip}}_{n},\varepsilon\big)\leq O(n/\varepsilon)$ extends their upper bound to the distribution-free setting.}
\begin{align}
\Omega(n)&\leq \sam\big(\cG^{\mathrm{bip}}_{n},\varepsilon\big)\leq O(n/\varepsilon),\label{eq:bipartiteness_result}\\
n^{4/3}\exp\left(-O\left(\sqrt{\log n}\right)\right)&\leq\sam\big(\cG^{\mathrm{tri}}_{n},\varepsilon\big)\leq O(n^{4/3}/\varepsilon),\text{ and}\label{eq:triangle_result}\\
n^{9/8}\exp\left(-O\left(\sqrt{\log n}\right)\right)&\leq\sam\big(\cG^{\mathrm{squ}}_{n},\varepsilon\big)\leq O(n^{9/8}/\varepsilon).\label{eq:square_result}
\end{align}
\end{theorem}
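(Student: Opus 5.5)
The plan is to prove the six bounds separately. Upper bounds use one-sided testers that reject only on seeing a forbidden structure among the sampled edges. Lower bounds use two indistinguishable ensembles built from extremal graphs: Behrend-type constructions for triangles, and incidence graphs of generalized quadrangles plus a Behrend-type blowup for squares. The $\exp(-O(\sqrt{\log n}))$ loss comes from the Behrend construction.

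\textbf{Bipartiteness.} The tester draws $m=O(n/\varepsilon)$ labeled samples and rejects iff the sampled positive edges contain an odd cycle. For soundness, suppose $f$ is $\varepsilon$-far from bipartite under $\mu$. I will use a union bound over the $2^n$ two-colorings. Fix a coloring $\chi$. The set $M_\chi$ of monochromatic edges in $\supp f$ has $\mu$-mass at least $\varepsilon$, since otherwise deleting it gives a bipartite $g$ that is $\varepsilon$-close to $f$. The probability that no sample lands in $M_\chi$ is at most $(1-\varepsilon)^m\le e^{-\varepsilon m}\ll 2^{-n}$. Hence with high probability every coloring has a monochromatic sampled edge, so the sampled graph is not bipartite. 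For the lower bound I take the Goldreich--Ron uniform-distribution bound, which transfers via Proposition~\ref{prop:downward_closed_equivalence}. Alternatively, compare a uniform distribution on a random complete bipartite graph with one on a random graph of similar density. Both look like forests when there are $o(n)$ samples, because the birthday paradox gives no cycles.

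\textbf{Triangles.} The tester rejects iff some triangle appears among the sampled positive edges. For soundness, apply a removal-type argument directly to $\mu$. If $f$ is $\varepsilon$-far, iteratively extract triangles $T$ whose edges all carry substantial mass, and use that the far-ness forces many edge-disjoint "heavy" triangles in a weighted sense. The key step is a weighted birthday lemma. It should show that if there is a family of edge-disjoint triangles whose total weight (minimum edge mass times count) is at least $\varepsilon$, then $O(n^{4/3}/\varepsilon)$ samples catch all three edges of some triangle with constant probability. The balancing point is this: a triangle family with edges of mass $p$ has at most $n^2/3$ members, so the number of triangles is at least $\varepsilon/p$. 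With $m$ samples, the expected number of fully hit triangles is about $(\varepsilon/p)(mp)^3$, and a second-moment bound is needed alongside it. The worst case is $p\approx n^{-2}\cdot$(something); the second-moment calculation with edge-disjointness yields $n^{4/3}$. For the lower bound, take a Behrend (Ruzsa--Szemerédi) graph on $n$ vertices with $n^2 e^{-O(\sqrt{\log n})}$ edges partitioned into induced edge-disjoint triangles. Put mass on a random subset of triangles of size about $n^{2/3}$ weight scale, and compare against a triangle-free distribution with matching low-order statistics. With $m\ll n^{4/3}e^{-O(\sqrt{\log n})}$ samples, no triangle is fully observed, and the pair-collision statistics coincide.

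\textbf{Squares (main obstacle).} The tester again rejects iff a $C_4$ appears among the samples with $m=O(n^{9/8}/\varepsilon)$. A naive edge-disjoint-copies argument gives a worse exponent. The reason is that squares can overlap on vertices in many ways, for instance many squares sharing a heavy vertex pair forming a $K_{2,t}$. The plan is to decompose by the degree/mass profile of vertices and pairs. In one case we collect many "wedges" (paths of length 2) with the same endpoints. Here a birthday-paradox lemma applies: with $m$ samples, two sampled wedges sharing endpoints appear once the wedge collision count exceeds a threshold. In the other case squares are spread out, and the triangle-style second moment applies. I expect proving the right birthday lemma for wedge collisions under arbitrary weights to be the hardest step; optimizing the two regimes should give $9/8$. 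The lower bound uses a dense $C_4$-free-ish construction combining a polarity graph (Erdős–Rényi–Brown, $n^{3/2}$ edges) with Behrend-style planting of sparse disjoint squares, tuned so that detection requires $n^{9/8-o(1)}$ samples.
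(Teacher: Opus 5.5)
Your bipartiteness argument (a union bound over the $2^n$ colorings, and Goldreich--Ron for the lower bound) is the paper's. Each of the other four bounds has a genuine gap.

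\textbf{Triangle upper bound.} Your intermediate claim is that $\varepsilon$-farness forces edge-disjoint triangles whose minimum edge masses sum to $\Omega(\varepsilon)$. This is false. Put mass $\frac12$ on $\{1,2\}$ and mass $\frac{1}{4(n-2)}$ on each $\{1,c\},\{2,c\}$. This is $\frac14$-far from triangle-free, yet every triangle contains $\{1,2\}$, so every edge-disjoint family has weight $O(1/n)$. What you need instead is a \emph{fractional} packing $\lambda$ on at most $\binom{n}{2}$ triangles with $\sum_T\lambda_T 1_T\preceq p$ and $\sum_T\lambda_T\ge\varepsilon/3$. LP duality gives this. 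So does your greedy extraction, if done fractionally: subtract the minimum residual mass from all three edges, which zeroes one edge per step. You then couple each sample from $p$ with a sample from a distribution that gives every triangle $T$ three \emph{private} slots of mass $\lambda_T$, so the overlapping triangles behave as if disjoint. A Poissonized birthday bound with $\sum_T\lambda_T^3\ge(\varepsilon/3)^3\binom{n}{2}^{-2}$ then gives $O(n^{4/3}/\varepsilon)$.

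\textbf{Square upper bound.} Here the same argument yields only $\binom{n}{2}^{3/4}=\Theta(n^{3/2})$. Your sketch never says why the non-collision regime does better, so ``optimizing the two regimes'' has no content. In fact your regimes are backwards: the planted-square birthday argument must be used in a \emph{concentrated} regime, and the whole difficulty is proving concentration. The paper proceeds as follows.
\begin{enumerate}
\item Linear independence of the packing's support plus Hall's theorem give each square $\zeta$ a private edge carrying $q_\zeta$. This produces $p'\preceq p$ with $\sum(p'_{ab})^4=\sum_\zeta q_\zeta^4$. The dichotomy must be run on $p'$, not $p$, so that the $\ell_4$ bound transfers to the packing.
\item If $\sum\HopD[p']\ge\varepsilon/8$, a grid birthday lemma plus a coupling that simulates $\Walk[p']$ from edge samples finds a square in $\widetilde O(n/\varepsilon)$ samples. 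Here $\HopD$ is the walk mass between each endpoint pair minus its heaviest single middle vertex. Your wedge collisions must likewise have \emph{distinct} middles, or they are not squares.
\item Otherwise, let $S_b$ collect the endpoint pairs whose heaviest middle is $b$ (plus the diagonal). The top eigenvalue of its $0/1$ matrix is at most $\sqrt{|S_b|}$, and $\sum_b|S_b|=2n(n-1)$. Together with Cauchy--Schwarz and H\"older this gives $\sum(p'_{ab})^4\ge 2(\varepsilon/8)^4n^{-9/2}$. That is effectively only $n^{3/2}$ slots, a weighted K\H{o}v\'ari--S\'os--Tur\'an phenomenon. The $k=4$ birthday bound then gives $O(n^{9/8}/\varepsilon)$.
\end{enumerate}

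\textbf{Triangle lower bound.} ``Compare against a triangle-free distribution with matching low-order statistics'' is exactly the step that needs a construction, and you give none. The paper uses a $2$-lift. Take a base graph $E$ in which every edge lies in exactly one triangle, and set $R_y(E)=\bigl\{\{(a,t),(b,t+y_{ab})\}:\{a,b\}\in E,\ t\in\bF_2\bigr\}$. Choose $y$ uniformly subject to one of two rules.
\begin{enumerate}
\item Every base triangle has $y$-sum $1$. Each triangle then lifts to a $6$-cycle, and $R_y(E)$ is triangle-free because triangles in the lift project to base triangles.
\item Every base triangle has $y$-sum $0$. Each triangle lifts to two triangles, so $R_y(E)$ is $\frac13$-far.
\end{enumerate}
In both cases any two edges of a base triangle carry i.i.d.\ uniform $y$-values. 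So the sample distributions coincide unless all three edges of some base triangle are sampled, which has probability at most $\frac{|E|}{3}(m/|E|)^3$. With Ruzsa--Szemer\'edi this gives $m\ge|E|^{2/3}=n^{4/3-o(1)}$.

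\textbf{Square lower bound.} The same lift works, since $4$-cycles in the lift project to $4$-cycles. But it needs a base graph with $n^{3/2}e^{-O(\sqrt{\log n})}$ edges, each lying in \emph{exactly one} $C_4$. That is, you need $\ex^{=1}(n,C_4)\ge n^{3/2}e^{-O(\sqrt{\log n})}$, which the paper gets from Timmons--Verstra\"ete via $3$-fold Sidon sets in $\bF_p^2$ built from Behrend-type sets. This gives $m\ge|E|^{3/4}=n^{9/8-o(1)}$. Your proposed ingredients do not supply such a graph:
\begin{enumerate}
\item Planting squares into the Erd\H{o}s--R\'enyi polarity graph creates unintended $C_4$'s, since almost every pair of its vertices already has a common neighbour. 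This breaks indistinguishability.
\item Generalized-quadrangle incidence graphs are $C_4$-free with only $\Theta(n^{4/3})$ edges.
\end{enumerate}
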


\begin{remark}\label{rem:tester_with_query}
In addition to the $f$-labeled sampling access to $\mu$ as described in Definition~\ref{def:distribution_free_testing}, one can also allow the distribution-free property tester to \emph{query} the function $f$ on any input $x\in \Lambda$ and receive the value $f(x)$. Viewing an $f$-labeled sample as also containing a ``query,'' the total query complexity of such an algorithm is the sum of the number of samples taken and the number of additional queries made. In this paper, unless otherwise stated, we assume the property testers to be \emph{sample-based}, i.e. they do not have the power to make oracle queries for function values.
\end{remark}

\subsection{Additional Results}

In this subsection, we present a few additional results complementing Theorem~\ref{thm:main}. The first additional result is the following generalization of the bipartiteness-testing result \eqref{eq:bipartiteness_result}: the sample complexity of testing any \emph{graph-homomorphism property} is $\Theta(n)$.

\begin{theorem}\label{thm:graph_homomorphism}
Let $H$ be a fixed simple graph with at least one edge. For positive integers $n$, let $\cG_{n}^{H\textup{-hom}}$ be the collection of edge sets $E\subseteq \binom{[n]}{2}$ such that there is a graph homomorphism from the graph $([n],E)$ to $H$. Then there exists a constant $\varepsilon_{0}\in (0,1)$ depending only on $H$ such that for any $\varepsilon\in (0,\varepsilon_{0}]$ we have
\[
\Omega(n)\leq \sam\big(\cG^{H\textup{-hom}}_{n},\varepsilon\big)\leq O(n/\varepsilon).
\]
\end{theorem}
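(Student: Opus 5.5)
The upper bound will come from the standard Occam/VC-style argument for proper testing via consistency, exploiting that the class $\cG^{H\textup{-hom}}_{n}$ is small. Let $h=|V(H)|$. Every $E\in\cG^{H\textup{-hom}}_{n}$ is contained in a maximal member of the form
\[
E_\phi=\big\{\{u,v\}: \{\phi(u),\phi(v)\}\in E(H)\big\},\qquad \phi:[n]\to V(H),
\]
so there are at most $h^n$ such maximal sets.

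\textbf{Tester.} Draw $m=C n\log h/\varepsilon$ labeled samples. Accept iff some $\phi$ is consistent with them, meaning that every sample $x$ with $f(x)=1$ lies in $E_\phi$.

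\textbf{Completeness.} If $f\in\cG^{H\textup{-hom}}_{n}$, a homomorphism $\phi$ exists and is consistent with any sample, so the tester accepts.

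\textbf{Soundness.} Suppose $f$ is $\varepsilon$-far. Fix a $\phi$ and set $g=f\cdot\ind{x\in E_\phi}$, which lies in the class since it is downward closed. Then $\Prs{x\sim\mu}{f(x)=1,\,x\notin E_\phi}=\Prs{x\sim\mu}{f(x)\neq g(x)}\geq\varepsilon$. Hence the probability that $\phi$ survives $m$ samples is at most $(1-\varepsilon)^m$. A union bound over the $h^n$ maps $\phi$ gives rejection probability at least $1-h^n e^{-\varepsilon m}\geq 2/3$ for a suitable constant $C$. Since $h$ is fixed, this yields $O(n/\varepsilon)$.

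\textbf{Lower bound.} I plan to adapt the Goldreich--Ron argument, which handles $H=K_2$ (bipartiteness), to a general $H$ with at least one edge. Let $k=\chi(H)\geq 2$. I would first try the following pair of instances, both with $\mu$ uniform on the edges of a graph $G$ on $\Theta(n)$ vertices.
\begin{itemize}[label=$\bullet$]
\item \textbf{Yes case:} $G$ is a random complete $k$-partite-like graph, or a homomorphic blow-up of $H$ itself, so $G$ maps to $H$.
\item \textbf{No case:} $G$ is a graph in which every set carrying all but an $\varepsilon_0$ fraction of the edges still fails to map to $H$, yet whose local structure on $o(n)$ samples looks identical.
\end{itemize}

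The key tool is the birthday paradox. With $m=cn$ samples the edges seen form a graph on about $2m$ vertices, and the expected number of vertices hit twice is $O(m^2/n^2)$ per edge pair. More carefully, for $m\le cn$ the sampled edges form a forest with constant probability, and indeed a matching plus few paths. A forest with at least one edge maps to $K_2$, and hence to $H$. Both distributions must therefore agree in the distribution of this small-cycle-free sample.

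I would therefore take both instances to be random graphs $G(n,p)$ with $p=D/n$ for a large constant $D$, with the labeled samples drawn from $\mu$ uniform on the edges. With $m=cn$ samples drawn uniformly from the $\Theta(Dn)$ edges, the sampled subgraph is close in distribution to a random graph of average degree $O(c)$, which is subcritical and hence has tree components except for $O(1)$ cycles. In the no case the full graph $G(n,D/n)$ is, with high probability, $\varepsilon_0$-far from $H$-colorable for $D$ large, by the known facts that sparse random graphs have chromatic number larger than $\chi(H)$ robustly (alternatively, bounded odd girth issues, so use the $k$-colorability threshold). To make the yes/no sample distributions literally close I would compare against a planted $H$-blowup random graph of the same density. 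Small sampled subgraphs of the planted and unplanted models are statistically close (contiguity/quiet planting in the sparse regime).

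\textbf{Main obstacle.} The main obstacle is the lower bound: establishing that the planted and non-planted sparse graphs are indistinguishable from $cn$ edge samples, and that the unplanted graph is robustly $\varepsilon_0$-far with $\varepsilon_0$ depending only on $H$. I expect the farness part to follow from a first-moment count over the $h^n$ maps $\phi$, each capturing at most a $1-\delta$ fraction of the edges with probability $e^{-\Omega(Dn)}$. The indistinguishability part I would reduce to the sample graph being a forest with constant probability, in which case both distributions on forests coincide up to vertex labels.
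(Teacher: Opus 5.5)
Your upper bound is the paper's argument: the canonical tester plus a union bound over the $h^{n}$ maps $[n]\to V(H)$, each surviving $m$ samples with probability at most $(1-\varepsilon)^{m}$.

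The lower bound has a genuine gap, at the step you flag yourself. Nothing in your construction makes the yes and no sample distributions ``coincide on forests.'' With $m=cn$ samples from a graph with $\Theta(n)$ edges, the sample has $\Theta(n)$ edges (plus $\Theta(n)$ repeated samples) even when it is a forest. A tester can aggregate $\Theta(n)$ local statistics of it, and two unrelated random-graph models generally induce different laws on such forests.

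For example, suppose the planted $H$-blowup has class-dependent expected degrees, as it does with equal class sizes whenever $H$ is not regular. Then the expected number of pairs of sampled edges sharing an endpoint differs between the two models by $\Theta(m^{2}/n)$. Its standard deviation is only $O(m/\sqrt{n})$, so already $m\gg\sqrt{n}$ samples distinguish them. A mismatch in the law of $|E(G)|$ is likewise visible in the number of repeated samples.

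Contiguity or quiet planting does not close this. Contiguity by itself gives no quantitative bound on total variation distance. Moreover, for $D$ large enough that $G(n,D/n)$ is far from $H$-colorable, the planted and unplanted graphs are not even contiguous, since $H$-colorability separates them. What you would need is a quantitative small-subgraph-conditioning analysis of the \emph{sampled} multigraphs under carefully matched models. That is the entire difficulty, and it is not supplied.

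The missing idea is to make the yes and no graphs two different $2$-lifts of the \emph{same} base graph, twisted by a uniformly random vector. Agreement on forests then becomes an exact symmetry rather than an approximation. The paper's construction is as follows.
\begin{enumerate}
\item Set $k=|V(H)|+1\geq 3$, take the complete $k$-partite base graph on $[n]\times[k]$, and draw a uniform $x\in\bF_{2}^{n\times k}$.
\item Yes case: lift each base edge $\{(a,i),(b,j)\}$ to $\{(a,i,t+x_{a,i}),(b,j,t+x_{b,j}+1)\}$ for $t\in\bF_{2}$. This graph is bipartite via $(a,i,t)\mapsto t+x_{a,i}$, hence homomorphic to any $H$ with an edge.
\item No case: lift each base edge to $\{(a,i,t+x_{a,i}),(b,j,t+x_{b,j})\}$ for $t\in\bF_{2}$. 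This gives two disjoint copies of $K_{n,\dots,n}$. By a convexity-plus-pigeonhole argument, each copy needs at least $n^{2}$ deletions to become $(k-1)$-colorable, so the uniform distribution is $k^{-2}$-far from $H$-homomorphic.
\end{enumerate}
A uniform edge of either lift is a uniform base edge together with a uniformly random choice of its two lifts. Whenever the $m$ sampled base edges contain no cycle and no repeated edge, the uniform $x$ makes the lifted endpoint labels uniformly distributed in both cases. So the two sample laws agree exactly on this event. Because the base graph is dense, a union bound bounds the bad event by $\sum_{r\geq 2}(m/n)^{r}<1/3$ for $m<n/3$.

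Your first-moment farness argument for $G(n,D/n)$ is sound. With the lift construction, however, farness is deterministic and no random-graph estimates are needed.
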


A generalization of graph-homomorphism properties is the class of \emph{semi-homogeneous graph partition properties} \cite{fiat2021efficient}, but the $\Theta(n)$ sample complexity in Theorem~\ref{thm:graph_homomorphism} does not generalize to this class of properties. In fact, the property of being a clique belongs to this class, and our next result shows that testing it requires only $\Theta(n^{2/3})$ samples.\footnote{The property of being a clique is in fact a \emph{homogeneous} graph partition property, as defined in \cite{fiat2021efficient}.}
\begin{theorem}\label{thm:clique}
For positive integers $n$, let $\cG^{\textup{cliq}}_{n}$ be the collection of subsets of $\binom{[n]}{2}$ that correspond to cliques. For any $\varepsilon\in (0,\frac{1}{10})$, we have
\[
\Omega(n^{2/3})\leq \sam\big(\cG^{\textup{cliq}}_{n},\varepsilon\big)\leq O(n^{2/3}/\varepsilon).
\]
\end{theorem}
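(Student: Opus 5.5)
The plan is to prove the two bounds separately, working directly with $\sam$ rather than $\dsam$: $\cG^{\textup{cliq}}_{n}$ is not downward-closed, so Proposition~\ref{prop:downward_closed_equivalence} does not apply. The starting point is that a labeled sample set is consistent with some clique iff no $0$-labeled sampled edge has both endpoints in $V_1$, the set of vertices touched by $1$-labeled sampled edges. One direction holds because a clique containing the sampled $1$-edges must contain every edge inside $V_1$. For the other, the clique on $V_1$ itself is consistent with the samples.

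The tester therefore takes $m=C n^{2/3}/\varepsilon$ samples and rejects iff it sees such a witness: a $0$-edge $uv$ together with $1$-edges at $u$ and at $v$. It has one-sided error, so completeness is immediate.

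\textbf{Upper bound (soundness).} Suppose $f$ is $\varepsilon$-far from every clique. I would split the samples into three batches of size $m/3$. Let $S$ be the vertices covered by $1$-edges in the first batch.

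Since $f$ is far from $K_S$, at least one of the following has mass $\ge \varepsilon/2$:
\begin{itemize}[label=(\roman*)]
\item[(a)] the $0$-edges inside $S$;
\item[(b)] the $1$-edges with an endpoint outside $S$.
\end{itemize}
In case (a), the second batch hits such a $0$-edge with high probability, giving a witness. Case (b) is the main obstacle: the uncovered $1$-mass may be spread over many light vertices, and a naive covering argument only yields $O(n/\varepsilon)$.

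To handle it, I would classify vertices by their $1$-degree mass $d(v)=\mu(\{e\ni v: f(e)=1\})$, using threshold $\tau\approx n^{-2/3}$.
\begin{itemize}
\item Vertices with $d(v)\ge\tau$ are all covered by the first batch with high probability. Hence case (b) forces $\Omega(\varepsilon)$ $1$-mass on edges touching light vertices.
\item I would then compare $f$ with the clique on the set $L$ of all vertices touched by $1$-edges. Far-ness says the $0$-edges inside $L$ carry mass $\gtrsim\varepsilon$, unless the $1$-mass itself is small, which can be handled separately.
\item Covering both endpoints of a given $0$-edge $uv$ happens with probability about $\min(1,m\,d(u))\cdot\min(1,m\,d(v))$. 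A third-batch sample then hits such a covered $0$-edge.
\item Summing this ``three-way collision'' probability over $0$-edges, using $\sum_v d(v)\le 2$ and at most $n$ vertices, should give expected witnesses $\gtrsim \varepsilon m^3/n^2$, which is $\Omega(1)$ at $m\approx n^{2/3}/\varepsilon$.
\end{itemize}
Making this convexity/birthday calculation rigorous, and balancing the heavy and light cases, is the step I expect to be hardest. I would first try a dyadic bucketing of vertices by $d(v)$ and a second-moment bound on the witness count.

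\textbf{Lower bound.} I would use two indistinguishable ensembles built on a random perfect matching $M$ on $[n]$. The distribution puts mass $(1-\varepsilon')/(n/2)$ on each matching edge and spreads mass $\varepsilon'$ uniformly over the non-matching edges, with $\varepsilon'=3\varepsilon$.
\begin{itemize}
\item In the yes-ensemble every edge has label $1$, so $f$ is the complete clique.
\item In the no-ensemble the matching edges are labeled $1$ and the rest $0$. Any clique then either omits almost all of the $1$-edges or contains most of the $0$-mass, so $f$ is $\varepsilon$-far.
\end{itemize}
Condition on the event that no sampled non-matching edge has both endpoints in $V(\text{sampled matching edges})$ and no two sampled non-matching edges share a vertex. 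On this event the labeled sample distributions are nearly identical after a random relabeling, since the light labels carry no usable information. The probability that the event fails is $O(m\cdot\varepsilon\,(m/n)^2 + m^2/n)$, which is $o(1)$ for $m=c\,n^{2/3}$, giving $\Omega(n^{2/3})$.
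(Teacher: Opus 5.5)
Your consistency criterion and the resulting one-sided tester are correct: the witnesses are exactly the paper's alternating paths. However, both bounds have genuine gaps, and the lower bound fails outright.

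\textbf{Lower bound.} Your two ensembles have the same $\mu$ but label the non-matching edges all $1$ (yes) versus all $0$ (no). These edges carry mass $3\varepsilon$, so a tester sees one within $O(1/\varepsilon)$ samples, and its label reveals the ensemble. The claim that ``the light labels carry no usable information'' is therefore false, and no indistinguishability follows. Your failure estimate is also off: its $m^{2}/n$ term equals $c^{2}n^{1/3}$ at $m=cn^{2/3}$, so the event you condition on in fact fails with high probability.

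The missing idea is that yes-instances must reproduce the \emph{local} label pattern of no-instances, $0$-edges included, so that only a complete $1$-$0$-$1$ alternating path tells them apart. The paper uses a six-vertex gadget. In the no case it has edges $\{1,2\},\{2,3\},\{3,4\},\{5,6\}$ with labels $1,0,1,0$. In the yes case it has edges $\{1,2\},\{2,3\},\{4,5\},\{5,6\}$ with labels $1,0,1,0$, where $f$ is the clique on the union of the copies of $\{1,2,4,5\}$. Any two edges from one gadget look alike in the two cases. So with $n$ randomly relabeled disjoint copies under uniform $\mu$, distinguishing requires three samples from one copy, which happens with probability $O(m^{3}/n^{2})$.

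\textbf{Upper bound.} The decisive estimate is left as a hope, and the route you sketch toward it does not work.

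Comparing $f$ with $K_L$ only tells you that the $0$-edges inside $L$ have mass $\ge\varepsilon$. It gives no lower bound on the $1$-degrees $d(u),d(v)$ of their endpoints, which may be arbitrarily small. Hence $\sum_{f(uv)=0}\mu(uv)\min(1,m\,d(u))\min(1,m\,d(v))$ need not be $\gtrsim\varepsilon(m/n)^{2}$, whatever convexity you use with $\sum_v d(v)\le 2$. Such instances are close to a smaller clique, which is precisely the information that comparing only with $K_L$ throws away.

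Comparing instead with $K=\{v: d(v)\ge\varepsilon/(4n)\}$ repairs the expectation, but you still need concentration, and a plain second-moment bound can fail. The $\mu$-mass of $0$-edges with both endpoints covered by the first batch can have its mean dominated by a rare event. For example, $0$-mass may be piled on a single vertex whose $1$-degree is barely above the threshold.

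The paper bypasses this with a structural step you are missing. Lemma~\ref{lem:LP_weak_duality}, applied to the violation hypergraph (whose hyperedges are the alternating paths), gives a fractional packing $\lambda$ of alternating paths with $\sum_P\lambda_P\ge\varepsilon/3$ and $\sum_P\lambda_P 1_P\preceq\mu$. The paper encodes this packing as a sub-probability mass function on $[n]^4$. It then applies Lemma~\ref{lem:tree_birthday} to the path $T=\{(1,2),(2,3),(3,4)\}$ with root $4$; that lemma's pruning and stochastic-domination steps supply exactly the robustness your second-moment plan lacks. This yields the $O(n^{2/3}/\varepsilon)$ bound.
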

Note that since $\cG_{n}^{\textup{cliq}}$ is not downward-closed for $n\geq 3$, Theorem~\ref{thm:clique} cannot be formulated as a ``distribution testing'' result via Proposition~\ref{prop:downward_closed_equivalence} (while Theorem~\ref{thm:graph_homomorphism} can).

Turning to subgraph-freeness properties, however, we are unable to determine the sample complexity of testing $H$-freeness for every fixed graph $H$. A relatively simple special case that we are able to solve is when $H$ is a tree:

\begin{theorem}\label{thm:tree-freeness}
For any simple graph $H$ with at least one edge and any positive integer $n$, let $\cG^{H\textup{-free}}_{n}$ be the collection of $H$-free subsets of $\binom{[n]}{2}$. If $H$ is a fixed tree with $t$ edges, there exists a constant $\varepsilon_{0}$ depending only on $t$ such that for any $\varepsilon\in (0,\varepsilon_{0}]$ we have
\[
\Omega(n^{(t-1)/t})\leq \sam\big(\cG^{H\textup{-free}}_{n},\varepsilon\big)\leq O(n^{(t-1)/t}/\varepsilon).
\]
\end{theorem}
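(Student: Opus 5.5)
Throughout I work in the sample-based distribution-free model of Definition~\ref{def:distribution_free_testing}. Since $\cG^{H\textup{-free}}_{n}$ is downward-closed, by Proposition~\ref{prop:downward_closed_equivalence} I may equivalently think of an unknown edge distribution $\mu$ whose support should be $H$-free. The upper and lower bounds both rest on the same picture: a copy of the tree $H$ among the $1$-labeled sampled edges is forced once $t$ edges collide at the right vertices, and this is a birthday-paradox event.

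\textbf{Upper bound.} The tester takes $m=C n^{(t-1)/t}/\varepsilon$ samples. It rejects if and only if the $1$-labeled sampled edges contain a copy of $H$, so it has one-sided error. Suppose $f$ is $\varepsilon$-far from $H$-free. Let $\mu'$ be $\mu$ restricted to $f^{-1}(1)$; its total mass is at least $\varepsilon$.

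I use the standard fact that a graph whose minimum degree is at least $t$ contains every tree with $t$ edges, found by greedy embedding. Because of this, I will look for a structure where many vertices carry heavy mass and can be chained together.

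The key combinatorial step is a \emph{robust} version of this fact. Suppose every $H$-free subgraph $E'\subseteq f^{-1}(1)$ misses $\mu$-mass at least $\varepsilon$. Then I want to find vertex-disjoint or "spread" witnesses. My first attempt is an iterative peeling procedure. Define a vertex to be \emph{$\delta$-heavy} if at least $t$ of its incident $1$-edges each have mass at least $\delta$. Then repeatedly delete the edges incident to vertices of low degree, in the sense that fewer than $t$ distinct neighbors have substantial mass. The edges deleted are organized so that they form an $H$-free set: whatever survives at low degree is a forest of bounded "effective degree". So the surviving structure carries mass at least $\varepsilon/2$, and in it every vertex has $t$ well-spread neighbors.

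Then I use a $t$-wise collision argument. The greedy embedding lets me reduce to showing that, with constant probability, some vertex $v$ together with a growing tree around it gets $t$ sampled edges realizing $H$. This should be analyzed like "$t$ samples landing on the same star", whose threshold over $n$ centers is $n^{(t-1)/t}$. For a star $K_{1,t}$, the expected number of $t$-collisions at a center is $\sum_v \binom{m}{t}\prod (\text{edge masses})$, which is at least $m^t \varepsilon^t/n^{t-1}$ by convexity and Hölder. I then finish with a second-moment argument.

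For general trees I plan to induct on the tree: remove a leaf and argue that copies of the smaller tree spread over many vertices. I expect the main obstacle to be making this variance and spread argument work simultaneously for arbitrary trees and arbitrary weightings. If the direct induction fails, my fallback is to bucket edge masses dyadically, losing only constant factors since $t$ is fixed.

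\textbf{Lower bound.} Take $H$ with a vertex of degree $d$ and $t$ edges. The construction builds on a perfect-matching-like family of about $n/t$ vertex-disjoint copies of $H$, each placed on random vertices, with uniform edge weights. In the yes-instance I delete one random edge per copy, which gives an $H$-free instance. In the no-instance the copies are full, and this is $\Omega(1/t)$-far from $H$-free because one edge per copy must be removed. With $m\ll n^{(t-1)/t}$ samples, the expected number of copies with all $t$ edges sampled is about $(n/t)(mt/n)^t=o(1)$. Conditioned on no copy being fully sampled, the two distributions of labeled samples coincide after a random relabeling, since every partial copy looks identical. Total variation is therefore $o(1)$. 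This gives the $\Omega(n^{(t-1)/t})$ bound for $\varepsilon\le\varepsilon_0=c/t$.
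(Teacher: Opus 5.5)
\textbf{Upper bound.} What you wrote is a plan rather than a proof, and its one concrete structural step is false. Peeling off edges at vertices with fewer than $t$ substantial neighbours does \emph{not} leave an $H$-free set of deleted edges, because small degeneracy does not imply $H$-freeness. For example, take $\mu$ uniform on about $n/(t+1)$ vertex-disjoint copies of $H$. For a path with $t\ge 3$ edges every vertex has degree at most $2<t$, and for a star $K_{1,t}$ every leaf has degree $1$. So your peeling deletes every edge, yet $\mu$ is $1/t$-far from $H$-free. These disjoint-copy instances are exactly the hard ones, so there is no min-degree core to exploit.

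The star estimate has the same defect. The expected number of sampled stars centred at $v$ is governed by the $t$-th elementary symmetric function of the masses of the \emph{distinct} edges at $v$. This vanishes whenever $v$'s mass sits on fewer than $t$ edges, however large $\deg_p(v)$ is, so H\"older applied to weighted degrees bounds nothing useful.

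The missing idea is how farness enters the argument. By LP duality (Lemma~\ref{lem:LP_weak_duality}), farness yields a fractional packing $\lambda$ of copies of $H$ with $\sum_E\lambda_E 1_E\preceq p$ and $\sum_E\lambda_E\ge\varepsilon/t$. You then encode $\lambda$ as a sub-probability mass function $f$ on labelled embeddings $[n]^V$ with $\pi_{u,v}f(a,b)\le p_{ab}$. Your H\"older/collision intuition is correct for the \emph{root marginal} of $f$, not for raw degrees. The leaf-removal induction you gesture at needs a precise invariant, which is the paper's Lemma~\ref{lem:tree_birthday}. After an $\varepsilon$-pruning of light vertices, the set of roots reachable by partial embeddings built from $\lceil Cn^{(t-1)/t}\rceil$ samples per tree edge stochastically dominates a fresh batch of $\lceil Cn^{(t-s)/t}\rceil$ root samples, for a subtree with $s$ edges. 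The engine is Lemma~\ref{lem:tree_bipartite_birthday_paradox}: the intersection of independent batches of sizes $Cn^{1-\beta}$ and $Cn^{1-\gamma}$ dominates a single batch of size $Cn^{1-\beta-\gamma}$. You flag this invariant as the unresolved obstacle; it is the entire content of the upper bound.

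\textbf{Lower bound.} The claim that, conditioned on no copy being fully sampled, ``every partial copy looks identical'' is false, and for $t\ge 3$ fatally so. In the no-instance, the samples landing in one copy are i.i.d.\ uniform over the $t$ edges of $H$. In the yes-instance, they are i.i.d.\ uniform over the $t-1$ edges of $H-e$ for a random $e$. Already, two samples in one copy coincide with probability $1/t$ versus $1/(t-1)$. For the path with $3$ edges, the probabilities of (coincide, adjacent, disjoint) are $(1/3,4/9,2/9)$ versus $(1/2,1/3,1/6)$. About $\Theta_t(m^2/n)$ pairs of samples share a copy, and this tends to infinity once $m\gg\sqrt n$. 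So collision counting (the supports have sizes in ratio $t:(t-1)$), or the scale-free ratio of repeated to adjacent pairs, separates the cases with $O_t(\sqrt n)=o(n^{(t-1)/t})$ samples. If you instead keep $\mu$ uniform on all copy edges and label the deleted edge $0$, a single $0$-label gives the instance away.

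You need gadgets whose $(t-1)$-wise statistics agree exactly. The paper takes $H^{(0)}=\bigsqcup_{T'\subseteq T,\ |T\setminus T'|\text{ even}}(V,T')$ and $H^{(1)}=\bigsqcup_{T'\subseteq T,\ |T\setminus T'|\text{ odd}}(V,T')$. Both have $2^{t-2}t$ edges, and each $S\subsetneq T$ lies in exactly $2^{t-|S|-1}$ copies in either gadget. Equivalently, deleting all copies of any fixed $e\in T$ makes the two gadgets isomorphic. Hence at most $t-1$ uniform samples from a randomly relabelled gadget have the same distribution in both cases, while $H^{(0)}$ contains $H$ and $H^{(1)}$ is $H$-free. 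Only with such gadgets does your birthday bound on fully sampled gadgets (for $m<\tfrac13 n^{(t-1)/t}$) finish the argument.
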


\subsection{Related Work}\label{subsec:related_work}

Possibly due to the close connection with PAC learning, many studies on distribution-free property testing focused on functions on the hypercube $\{0,1\}^{n}$ (e.g. \cite{glasner2009distribution,dolev2011distribution,chen2016tight,blais2021vc,chen2022distribution,chen2024distribution}). Nevertheless, distribution-free models have also been considered in the context of graph property testing, as we discuss below.

A few papers \cite{goldreich2019testing,gishboliner2019testing} studied a model called ``vertex-distribution-free'' graph property testing. In that model, the unknown distribution is over the vertices of a graph, and (more critically,) distances between graphs are measured with respect to the vertex distribution: suppose $E_{1},E_{2}\subseteq \binom{[n]}{2}$ are two edge sets, then the distance between $E_{1}$ and $E_{2}$ with respect to a distribution $\mu$ over $[n]$ is
\[
\sum_{\{u,v\}\in E_{1}\triangle E_{2}}\mu(\{u\})\cdot\mu(\{v\}),\quad\text{where }E_{1}\triangle E_{2}\text{ is the symmetric difference between }E_{1}\text{ and }E_{2}.
\]
This roughly corresponds to taking $\Lambda=\binom{[n]}{2}$ in Definition~\ref{def:distribution_free_testing} and restricting the unknown distribution $\mu$ over $\binom{[n]}{2}$ to be a ``product distribution'' (see e.g. \cite[Section 10.1.3]{goldreich1998property}). It turns out that such product distributions behave not too differently from the uniform distribution in many important aspects. In particular, subgraph removal lemmas (that are well-known in the uniform distribution setting; see e.g. \cite[Section 4.4]{shapira2022local}) still hold~\cite{goldreich2019testing}, implying that any subgraph-freeness property can be tested in constant queries\footnote{By ``constant queries,'' we mean the query complexity depends only on the proximity parameter $\varepsilon$ but not on the number of vertices of the graph.} in the vertex-distribution-free model.

In contrast, the questions considered in the present work correspond to taking $\Lambda=\binom{[n]}{2}$ in Definition~\ref{def:distribution_free_testing} but not imposing any restriction on the unknown distribution $\mu$. This setting perhaps should be called the ``edge-distribution-free'' model. In this model, (especially since distances between graphs are no longer measured with respect to a product distribution) subgraph removal lemmas no longer make sense, and many basic properties such as triangle-freeness cannot be tested in constant queries, as already observed in \cite[Section 10.1.4]{goldreich1998property}. Retreating from the unattainable constant-query regime, it is still natural to ask whether some properties have query/sample complexities that grow with the size parameter $n$ more slowly than other properties --- which is exactly what Theorems~\ref{thm:main} to~\ref{thm:tree-freeness} attempt to answer.\footnote{Indeed, it was asked in \cite{sublinear_open_99} whether one can define, motivate, and prove non-trivial results in ``an edge-distribution-free model'' for graph property testing.} 

Another type of distribution-free models that has been considered for graphs features unknown distributions over $[n]\times [d]$ (see e.g. \cite{halevy2008distribution}), where $[n]$ is the vertex set and $d$ is an upper bound on the vertex degrees. This is arguably closer in spirit to the setting of unknown vertex distributions than to the one of unknown edge distributions.

\subsection{Further Motivation for Edge-Distribution-Free Testing}\label{subsec:further_motivation}

As mentioned in Section~\ref{subsec:related_work}, it has been more popular to study the case $\Lambda=\{0,1\}^{n}$ in Definition~\ref{def:distribution_free_testing} than the case $\Lambda=\binom{[n]}{2}$. However, sometimes questions about the latter domain naturally arise when studying questions about the former. In particular, in the paper \cite{chen2024distribution} on distribution-free testing of decision lists (a class of Boolean functions on the hypercube), it turns out that a hard case for a decision list tester is (roughly speaking) when the unknown distribution over $\{0,1\}^{n}$ is actually supported on the ``weight-2 slice''
\[
\big\{x\in \{0,1\}^{n}\,\big|\,\text{the Hamming weight of }x\text{ is }2\big\},
\]
which is clearly equivalent to the domain $\binom{[n]}{2}$. The class of functions $\{0,1\}^{n}\rightarrow\{0,1\}$ that are decision lists, when restricted to the weight-2 slice, becomes a class of functions $\binom{[n]}{2}\rightarrow\{0,1\}$ or equivalently a class of graphs known as \emph{threshold graphs}. In order to show that the property of being a decision list on $\{0,1\}^{n}$ can be tested in $O(n^{11/12})$ queries, the authors of \cite{chen2024distribution} had to (roughly speaking) first show the following:

\begin{theorem}[Implicit in \cite{chen2024distribution}] \label{thm:threshold_graph}
The property of being a threshold graph on $n$ vertices can be tested in the edge-distribution-free model (with queries; see Remark~\ref{rem:tester_with_query}) using at most $O(n^{2/3})$ queries (samples and queries combined; see Remark~\ref{rem:tester_with_query}).
\end{theorem}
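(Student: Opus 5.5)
Throughout, write $G=([n],E)$ for the graph of $f$, i.e.\ $E=f^{-1}(1)$, and write $\mu$ for the unknown edge distribution.

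\emph{Structural facts.} A threshold graph is characterized by the absence of induced $2K_2$, $P_4$ and $C_4$. Equivalently, its vertices admit an ordering in which each vertex is either isolated or dominating with respect to the vertices that come after it. Equivalently again, there are a weight function $w:[n]\to\bR$ and a threshold $\theta$ with $\{u,v\}\in E$ iff $w(u)+w(v)\geq\theta$. From any of these forms, each forbidden structure is a ``witness'' consisting of two disjoint pairs $\{a,b\},\{c,d\}$ on four distinct vertices with $ab,cd\in E$ and $ac,bd\notin E$ (up to relabeling). In the weight picture this reads $w(a)+w(b),\,w(c)+w(d)\geq\theta>w(a)+w(c),\,w(b)+w(d)$, which cannot happen. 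The tester will therefore look for a violated alternating $4$-cycle of this kind, made of two labeled edges of opposite labels and a complementary pair.

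\emph{The tester.} Take $m=O(n^{2/3}/\varepsilon)$ labeled samples, and let $V_S$ be the set of at most $2m$ endpoints they cover. Then query $f$ on pairs inside $V_S$, but only on a restricted set of pairs: those that, together with two sampled pairs, can complete an alternating $4$-cycle. Concretely, for each pair of samples $\{a,b\}$ labeled $1$ and $\{c,d\}$ labeled $0$ that share no vertex, query the cross pairs $ac, bd, ad, bc$. This naive version costs $O(m^2)$ queries, far too many. So the queries must be budgeted: choose $k=O(n^{2/3})$ random sample pairs and query only the cross pairs among them, or, better, use a birthday-type argument in which samples sharing a vertex reduce the budget. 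Reject iff the induced labeled subgraph on the queried pairs is not consistent with any threshold graph. Consistency can be checked by a partial-order / degree-ordering elimination procedure. This gives one-sidedness and hence perfect completeness.

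\emph{Soundness, the main obstacle.} Suppose $f$ is $\varepsilon$-far from threshold under $\mu$. I would prove a removal-type lemma. Say a labeled edge $e$ is \emph{bad} if there are many partners $e'$ such that $e,e'$ together with one or two additional pairs form a violation. Removing, i.e.\ relabeling, the edges that are not bad should leave a threshold labeling via a greedy construction of the vertex ordering; this gives $\mu(\text{bad})\gtrsim\varepsilon$. The sample complexity then comes from a collision count. With $m$ samples, we want two bad samples $e,e'$ whose witness pairs among their endpoints are violated. The difficulty is that witness pairs involve \emph{unsampled} pairs, which must be queried. Queries are unrestricted in location but limited in number.

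\emph{Balancing and where it might fail.} The exponent $2/3$ should come from balancing two costs. One is the $n^{2/3}$ samples needed to get a birthday collision among heavy vertices; the other is the $m^2/n$ expected number of vertex-sharing sample pairs that each need a constant number of queries. Heavy vertices are those of $\mu$-degree at least $n^{-1/3}$, and these produce collisions quickly; light vertices force the violations to be spread over many disjoint edges, where random pairs of samples give alternating structures after querying. I expect the hardest step to be the structural lemma that converts $\varepsilon$-farness into ``$\Omega(\varepsilon)$ mass of edges with many violation partners'', because, unlike for subgraph freeness, the labels of non-edges also matter. I would first try to prove it by constructing the threshold ordering greedily, peeling off vertices of maximum or minimum degree, and charging each inconsistency to a witness.
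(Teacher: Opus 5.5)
The paper gives no proof of this statement; it is quoted as implicit in \cite{chen2024distribution}. Your sketch therefore has to stand on its own, and it does not. The tester is never pinned down, its query count is never bounded, and the soundness lemma is only a hope. Several concrete steps are also wrong.

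(i) Your explicit rule pairs a $1$-labeled sample $\{a,b\}$ with a vertex-disjoint $0$-labeled sample $\{c,d\}$. But in a witness ($ab,cd\in E$, $ac,bd\notin E$), every $1$-pair meets every $0$-pair. So these two samples are never two pairs of one witness, contrary to your stated plan. On $\{a,b,c,d\}$ a violation could only be formed by the four queried cross pairs, and you give no argument that such violations must occur.

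(ii) The accounting for the exponent is off. Since $\sum_v\deg_\mu(v)\le 1$, at most $n^{1/3}$ vertices have $\deg_\mu(v)\ge n^{-1/3}$, so collisions among them need only about $n^{1/6}$ samples landing on them, not $n^{2/3}$. The number of vertex-sharing sample pairs is of order $m^2\sum_v\deg_\mu(v)^2$. This equals $m^2/n$ only for near-uniform vertex marginals and can be $\Theta(m^2)$. At $m=n^{2/3}$ your two ``balanced'' costs are $n^{2/3}$ and $n^{1/3}$. Worse, a vertex-sharing pair $ux\in E$, $uy\notin E$ fixes only three vertices of a witness. The fourth vertex $v$ (with $vy\in E$, $vx\notin E$) is not determined, so no constant number of queries completes it.

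(iii) The removal-type lemma is the entire difficulty and is unproved. Moreover, distance under $\mu$ is governed by minimal violations of the labeling restricted to $\supp(\mu)$, and these include alternating cycles of every even length. For example, an alternating $6$-cycle is a minimal violation: summing the weight inequalities gives $\sum_v w(v)\ge 3\theta>\sum_v w(v)$, while every alternating path is realizable. So the $4$-cycle witness you eventually detect may need several pairs outside $\supp(\mu)$.

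What is missing is an actual mechanism yielding $n^{2/3}$. That exponent is the birthday exponent for three samples forming a path, i.e.\ a $3$-edge tree (Lemma~\ref{lem:tree_birthday} with $t=3$; compare the clique bound of Theorem~\ref{thm:clique}), not a two-sample collision. A natural route, not necessarily the one in \cite{chen2024distribution}, is as follows. Look for three sampled pairs forming a path $x\!-\!u\!-\!y\!-\!v$ with labels $1,0,1$ (or $0,1,0$), and query only the closing pair $\{v,x\}$. In a threshold graph, $w(u)+w(x)\ge\theta>w(u)+w(y)$ gives $w(x)>w(y)$, hence $w(v)+w(x)>w(v)+w(y)\ge\theta$. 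So the answer $0$ (resp.\ $1$) exhibits a violation.

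Turning this into a proof still requires two things. First, a soundness argument, perhaps an LP-duality packing as in Lemma~\ref{lem:LP_weak_duality} adapted to the non-uniform violation hypergraph. It must show that $\varepsilon$-farness makes an inconsistently closing path likely within $O(n^{2/3}/\varepsilon)$ samples, even though a path assembled by the birthday lemma from different violations may close consistently. Second, control of the number of closing queries when $\mu$ has high-degree vertices. Your proposal addresses neither.
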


It seems likely that in order to obtain a query-optimal distribution-free decision list tester, one has to first optimize the query complexity in Theorem~\ref{thm:threshold_graph}. We note that a query lower bound of $\Omega(\sqrt{n})$ for testing decision lists and (implicitly) for testing threshold graphs was shown in \cite{chen2024distribution}.

\section{Technical Overview}

In this section, we provide an overview of our proof techniques. 

\subsection{General Framework}\label{subsec:general_framework}

We start by describing a general framework for distribution-free property testing.

\begin{definition}\label{def:violation_hypergraph}
Suppose $\cH$ is a nonempty family of Boolean-valued functions on a finite domain $\Lambda$. Fix a function $f:\Lambda\rightarrow \{0,1\}$. A subset $S\subseteq \Lambda$ is said to be an \emph{$f$-violation} of the property $\cH$ if there does not exist $h\in \cH$ such that $f$ agrees with $h$ on $S$ (i.e. $f(x)=h(x)$ for all $x\in S$). A subset $S\subseteq \Lambda$ is said to be a \emph{minimal $f$-violation} of $\cH$ if $S$ is an $f$-violation of $\cH$ but no proper subset of $S$ is an $f$-violation of $\cH$. The collection of minimal $f$-violations of $\cH$ is the edge set of a hypergraph on the vertex set $\Lambda$ that we call the \emph{violation hypergraph} of $f$ against $\cH$.
\end{definition}

The notion of violation hypergraph was formally introduced in \cite{dolev2011distribution},\footnote{In \cite{dolev2011distribution} the edge set of the violation hypergraph is the collection of $f$-violations, instead of minimal $f$-violations.} and is inherently important for property testing because of the following observation:

\begin{proposition}
For any distribution $\mu$ over $\Lambda$, if $S$ is a vertex cover of the violation hypergraph of $f$ against $\cH$ with minimum possible measure under $\mu$, then
\begin{equation}\label{eq:vertex_cover_equivalence}
\mu(S)=\min_{h\in \cH}\Pru{x\sim \mu}{f(x)\neq h(x)}.
\end{equation}
\end{proposition}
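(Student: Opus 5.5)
We prove the two inequalities separately; both come from one correspondence. For a set $T\subseteq\Lambda$ and $h\in\cH$, write $D_h=\{x\in\Lambda: f(x)\neq h(x)\}$ for the disagreement set. The plan is to show two things: every disagreement set $D_h$ is a vertex cover of the violation hypergraph, and every vertex cover $T$ contains some disagreement set $D_h$. Since $\mu$ is a nonnegative measure, monotonicity of $\mu$ then gives both directions of \eqref{eq:vertex_cover_equivalence}.

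\emph{Step 1: $\mu(S)\le \min_{h\in\cH}\Pr_{x\sim\mu}[f(x)\neq h(x)]$.} Fix $h\in\cH$. Let $M$ be any minimal $f$-violation. If $M\cap D_h=\emptyset$, then $f$ agrees with $h$ on $M$. This contradicts the fact that $M$ is an $f$-violation. Hence $D_h$ meets every hyperedge, so it is a vertex cover. By the minimality of $\mu(S)$ over vertex covers, $\mu(S)\le\mu(D_h)=\Pr_{x\sim\mu}[f(x)\ne h(x)]$. Taking the minimum over $h$ gives the claim.

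\emph{Step 2: the reverse inequality.} Let $T$ be any vertex cover. We claim $\Lambda\setminus T$ is not an $f$-violation. Suppose it were. Since $\Lambda$ is finite, we could shrink $\Lambda\setminus T$ to a minimal subset that is still an $f$-violation. That subset is a minimal $f$-violation, that is, a hyperedge, and it is disjoint from $T$. This contradicts $T$ being a vertex cover. So there is some $h\in\cH$ agreeing with $f$ on $\Lambda\setminus T$. Then $D_h\subseteq T$, and hence $\min_{h}\mu(D_h)\le\mu(T)$. Applying this with $T=S$ finishes the proof.

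The only points needing care are degenerate cases. The empty set is never an $f$-violation, because $\cH$ is nonempty. So the minimal-subset argument in Step 2 is well-defined, and if the hypergraph has no edges then the empty cover corresponds to $f\in\cH$. I expect no real obstacle. The main subtlety is Step 2's passage from ``not containing any minimal violation'' to ``not being a violation,'' and this relies only on finiteness of $\Lambda$ and the fact that the violation property is upward-closed.
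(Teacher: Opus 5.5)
Your proof is correct and follows the paper's argument essentially verbatim. Each disagreement set $D_h$ is a vertex cover, which gives $\mu(S)\le\min_h\mu(D_h)$. For the converse, if $\Lambda\setminus S$ were an $f$-violation it would contain a minimal violation disjoint from $S$, so some $h\in\cH$ must agree with $f$ off $S$. Your check that $\emptyset$ is never an $f$-violation, because $\cH\neq\emptyset$, is a degenerate case the paper leaves implicit.
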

\begin{proof}
For any $h\in \cH$, the set $\{x\in \Lambda\mid f(x)\neq h(x)\}$ is a vertex cover of the violation hypergraph of $f$ against $\cH$, so its measure under $\mu$ is at least $\mu(S)$. Conversely, we claim that there exists some $h\in \cH$ such that
\[
\{x\in \Lambda\mid f(x)\neq h(x)\}\subseteq S,
\]
which implies that the right-hand side of \eqref{eq:vertex_cover_equivalence} is at most $\mu(S)$. Assume on the contrary that for every $h\in \cH$ there exists some $x\in \Lambda\setminus S$ such that $f(x)\neq h(x)$. Then $\Lambda\setminus S$ is an $f$-violation of $\cH$, and hence there exists a minimal $f$-violation of $\cH$ that is contained in $\Lambda\setminus S$. This contradicts the assumption that $S$ is a vertex cover of the violation hypergraph.  
\end{proof}

Note that a one-sided-error tester for the property $\cH$ can reject a function $f$ if and only if it has sampled (or queried) all elements of an $f$-violation of $\cH$. Therefore, the question of analyzing the one-sided-error sample complexity of testing $\cH$ is equivalent to: given that the minimum-weight vertex cover of the violation hypergraph of $f$ has weight at least $\varepsilon$ under $\mu$, how many samples from $\mu$ does one need to get a full edge of the $f$-violation hypergraph with high probability?

It turns out that one can prove a fairly general birthday-paradox-type lemma in response to this question (see Lemma~\ref{lem:hypergraph_birthday} for a formal version of the following lemma):

\begin{lemma}[{\cite[Lemma 2.2]{chen2024distribution}}, informal]\label{lem:hypergraph_birthday_informal}
For any $k$-uniform hypergraph on $n$ vertices with vertices weighted by $\mu$, if the minimum-weight vertex cover has weight at least $\varepsilon$, then $O(n^{(k-1)/k}/\varepsilon)$ samples from $\mu$ are sufficient to find a full edge with high probability. 
\end{lemma}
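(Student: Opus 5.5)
We prove the statement by induction on $k$, for a $k$-uniform hypergraph $\cH$ on a vertex set $\Lambda$ with $|\Lambda|=n$, weighted by $\mu$, whose minimum-weight vertex cover has $\mu$-weight at least $\varepsilon$. The base case $k=1$ is immediate. Every vertex that forms a singleton edge must lie in every cover, so the set of such vertices has weight at least $\varepsilon$. Hence $O(1/\varepsilon)$ samples hit one of them with probability at least $2/3$, which matches $n^{0}/\varepsilon$.

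For the inductive step we split the $m=Cn^{(k-1)/k}/\varepsilon$ samples into two phases. First, call a vertex $v$ \emph{heavy} if $\mu(v)\ge \theta:=n^{-1/k}\varepsilon/ C'$ and \emph{light} otherwise. There are at most $n^{1/k}C'/\varepsilon$ heavy vertices. We then distinguish two cases.

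\textbf{Case 1: the heavy vertices already support a sizable cover deficit.} Suppose a minimum cover $S$ satisfies $\mu(S\cap\text{heavy})\ge\varepsilon/2$. With about $n^{(k-1)/k}/\varepsilon$ samples we see each heavy vertex in $S$ many times, so effectively we learn all of the heavy part of $S$. We then condition: pick a sampled heavy vertex $v$ and form the link hypergraph $\cH_v=\{e\setminus\{v\}: v\in e\in\cH\}$, which is $(k-1)$-uniform. The standard reduction uses minimality of $S$. Replacing $v$ in the cover by a cover of the link is an alternative cover, so the link's minimum cover weight, summed appropriately over the heavy vertices we have seen, accounts for the missing weight. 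Averaging over the sampled heavy vertices, some $v$ has a link with cover weight $\gtrsim \varepsilon'$. We then apply the inductive hypothesis on the remaining $n$ vertices with $O(n^{(k-2)/(k-1)}/\varepsilon')$ samples, and the parameters must be chosen so that the total stays $O(n^{(k-1)/k}/\varepsilon)$.

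\textbf{Case 2: the light vertices carry most of the cover.} Here we use a second-moment (birthday) argument. By LP duality there is a fractional matching $w$ on the edges with $\sum_e w_e=\varepsilon/2$ and load $\sum_{e\ni v}w_e\le\mu(v)$ at every light vertex $v$. Let $X$ be the number of edges $e$ with all $k$ vertices sampled, each counted with an appropriate weight proportional to $w_e/\prod_{v\in e}\mu(v)$. We compute $\Ex{X}\gtrsim m^k\varepsilon/\text{poly}$. The variance terms indexed by edge pairs sharing $j$ vertices are bounded using $\mu(v)<\theta$. Paley--Zygmund then yields success once $m^{k}\cdot(\text{stuff})\ge n^{k-1}/\varepsilon^k$, i.e.\ $m\approx n^{(k-1)/k}/\varepsilon$.

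The main obstacle is making the two cases combine cleanly. The cleaner route is probably a single potential-function argument: split the $m$ samples into $k$ rounds and show that after each round the ``residual'' hypergraph restricted to edges whose sampled-so-far part is fixed still has large fractional-matching mass. A vertex of weight $\mu(v)$ is hit in a round with probability about $\min(1,m\mu(v))$, and the threshold $\theta\approx 1/m$ balances the heavy and light regimes. I would first attempt this round-by-round fractional-matching version, since it avoids the variance bookkeeping of Case 2.
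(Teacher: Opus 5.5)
Your proposal has genuine gaps in both cases. You also leave their combination, and the round-by-round alternative, open yourself, so it does not yet prove the lemma.

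In Case~1, minimality of $S$ only tells you that the link of each heavy $v\in S$ has cover weight at least $\mu(v)$. Averaging over up to $C'n^{1/k}/\varepsilon$ heavy vertices therefore certifies a single link of cover weight only $\varepsilon'\approx\theta=\varepsilon n^{-1/k}/C'$, since all heavy weights may be $\approx\theta$. Applying the inductive hypothesis on $n$ vertices at that weight costs $n^{(k-2)/(k-1)}/\varepsilon'\approx C' n^{(k-2)/(k-1)+1/k}/\varepsilon$. This exceeds the target $n^{(k-1)/k}/\varepsilon$ by a factor $n^{(k-2)/(k(k-1))}$ for every $k\ge 3$. To use the full heavy weight $\varepsilon/2$ you would have to pass to the combined link $\{e\setminus H: e\cap H\neq\emptyset\}$ with $H=S\cap\text{heavy}$. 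That hypergraph is not uniform, so your inductive hypothesis does not apply to it as stated. (Your two thresholds $\theta=\varepsilon n^{-1/k}/C'$ and $\theta\approx 1/m$ also disagree for $k\ge 3$.)

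In Case~2, weighting each edge by $w_e/\prod_{v\in e}\mu(v)$ makes $X$ heavy-tailed because of vertices of \emph{tiny} weight. Normalize so that $\mathbb{E}[X]\approx\sum_e w_e$. Then an edge with $w_e=\mu(v)=\eta$ for all $v\in e$ contributes about $m^{-k}\eta^{2-k}$ to $\mathbb{E}[X^2]$, which is unbounded as $\eta\to 0$ when $k\ge 3$. An upper bound $\mu(v)<\theta$ cannot control this; you need a lower bound. For example, first zero out all vertices with $\mu(v)<\varepsilon/(2n)$, which costs at most $\varepsilon/2$ of cover weight; this lower bound is what produces the $n^{(k-1)/k}$ threshold in a correct variance estimate. (Also, LP duality only gives fractional matching mass $\ge\varepsilon/(2k)$, not $\varepsilon/2$.)

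The paper needs no induction, no heavy/light split, and no variance computation. Lemma~\ref{lem:LP_weak_duality} takes an optimal fractional matching $\lambda$ of minimum support. Its edge-indicator vectors are linearly independent, so it has at most $|V|$ edges; it is dominated by $p$; and its mass is at least $\varepsilon/k$. Domination lets each matching edge $e_i$ receive its own private row of $k$ cells of mass $\lambda_{e_i}$. Sampling from $p$ then stochastically dominates sampling from a grid with \emph{disjoint} rows, and all overlap issues disappear. The Poissonized birthday bound of Lemma~\ref{lem:classical_birthday} finds a fully hit row once $m\ge 18k\bigl(\sum_i\lambda_{e_i}^k\bigr)^{-1/k}$. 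The power-mean inequality over at most $|V|$ rows gives $\bigl(\sum_i\lambda_{e_i}^k\bigr)^{1/k}\ge|V|^{-(k-1)/k}\varepsilon/k$, so the support-size bound is exactly where $|V|^{(k-1)/k}$ comes from.

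Alternatively, your Case~2 idea, repaired, carries the whole lemma without Case~1 or induction. Poissonize so that vertex hits are independent, prune as above, and normalize by $\pi_v=1-e^{-m\mu(v)}$ rather than $m\mu(v)$. For each pair $(e,e')$, apply the load constraint at the lightest vertex of $e\cap e'$ when that vertex has $m\mu(v)\le 1$; otherwise the pair contributes only $O(1)^k w_ew_{e'}$. This yields $\mathbb{E}[X^2]=O_k\bigl((\mathbb{E}[X])^2\bigr)$ at $m=O_k(n^{(k-1)/k}/\varepsilon)$, at the cost of $2^{O(k)}$ constants and a boosting step.
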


\begin{remark}
A proof of the $k=2$ case of Lemma~\ref{lem:hypergraph_birthday_informal} was implicit already in \cite{dolev2011distribution}; it was abstracted into the current form and generalized to $k\geq 3$ by \cite{chen2024distribution}. The proof of Lemma~\ref{lem:hypergraph_birthday_informal} in \cite{chen2024distribution} is different from the proof in \cite{dolev2011distribution}; see Section~\ref{subsec:overview_graph} for further discussion.
\end{remark}

As a simple application of Lemma~\ref{lem:hypergraph_birthday_informal}, consider the question of monotonicity testing over general posets. Suppose $\Lambda$ is a partially ordered set and $f:\Lambda\rightarrow\{0,1\}$ is a function, and we want to test the property that $f$ is monotone, i.e. $f(x)\leq f(y)$ for all $x\leq y$. It is easy to see that any minimal $f$-violation of monotonicity must be a pair $\{x,y\}\subseteq \Lambda$ such that $x< y$. Therefore, the violation hypergraphs are 2-uniform, and applying Lemma~\ref{lem:hypergraph_birthday_informal} immediately yields the following result of \cite{blais2021vc}:

\begin{theorem}[{\cite[Theorem 7.9]{blais2021vc}}]
For any partially ordered set $\Lambda$ with $n$ elements, if $\cH$ is the collection of monotone Boolean-valued functions on $\Lambda$, then $\sam(\cH,\varepsilon)\leq O(\sqrt{n}/\varepsilon)$.\footnote{In the uniform distribution case, monotonicity can be tested in $O\big(\sqrt{n/\varepsilon}\big)$ samples \cite{fischer2002monotonicity}.}
\end{theorem}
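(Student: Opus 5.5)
The plan is a direct reduction to the $k=2$ case of Lemma~\ref{lem:hypergraph_birthday_informal}, in its formal version Lemma~\ref{lem:hypergraph_birthday}, after identifying the violation hypergraph of a function $f:\Lambda\to\{0,1\}$ against the class $\cH$ of monotone functions.

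\textbf{Step 1: minimal violations are comparable pairs.} Let $S\subseteq\Lambda$ be an $f$-violation, i.e.\ no monotone $h$ agrees with $f$ on $S$. I claim $S$ contains $x<y$ with $f(x)=1$ and $f(y)=0$. Suppose not. Define $h(z)=1$ if there is some $x\in S$ with $f(x)=1$ and $x\le z$, and $h(z)=0$ otherwise. This $h$ is the indicator of an up-set, so it is monotone. If $f(z)=1$ for $z\in S$, take $x=z$ to get $h(z)=1$. If $f(z)=0$ for $z\in S$, then any $x\in S$ with $f(x)=1$ and $x\le z$ would give $x<z$ (equality is impossible since $f(z)=0$), which is exactly the excluded pair. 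Hence $h(z)=0$. So $h$ agrees with $f$ on $S$, a contradiction. Conversely, such a pair is itself a violation, while singletons never are, since constant functions are monotone. Therefore every minimal $f$-violation is a pair $\{x,y\}$ with $x<y$, $f(x)=1$, $f(y)=0$, and the violation hypergraph is $2$-uniform on $n$ vertices.

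\textbf{Step 2: the tester.} Draw $m=C\sqrt n/\varepsilon$ labeled samples. Reject if and only if two sampled points $x<y$ have labels $f(x)=1$ and $f(y)=0$. This test is one-sided. If $f$ is monotone, no violating pair exists, so it always accepts.

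\textbf{Step 3: soundness.} Suppose $\Pr_{x\sim\mu}[f(x)\ne h(x)]\ge\varepsilon$ for every monotone $h$. By \eqref{eq:vertex_cover_equivalence}, every vertex cover of the violation graph has $\mu$-measure at least $\varepsilon$. Lemma~\ref{lem:hypergraph_birthday} with $k=2$ then says that $O(n^{1/2}/\varepsilon)$ samples contain a full edge with probability at least $2/3$, once $C$ is chosen suitably. A full edge is exactly a violating comparable pair, so the tester rejects.

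The only nontrivial point is the structural claim in Step 1. The up-set closure construction handles it, and the rest is bookkeeping around the assumed birthday lemma.
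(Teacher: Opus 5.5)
Your proposal is correct and follows the paper's route: the minimal $f$-violations of monotonicity are exactly the comparable pairs $x<y$ with $f(x)=1$ and $f(y)=0$, so the violation hypergraph is $2$-uniform, and then Lemma~\ref{lem:hypergraph_birthday} with $k=2$ applies. The paper only asserts the structural claim as ``easy to see,'' and your up-set construction of $h$ supplies the verification it omits.
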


\begin{remark}\label{rem:canonical_tester_monotone}
As discussed earlier, every sample-based property testing problem admits a one-sided-error \emph{canonical tester}: the tester simply rejects if there is a violation of the property within the observed samples. When the property is \emph{downward-closed} (more commonly referred to as \emph{monotone} in the property testing literature), the canonical tester has an even simpler description.

Let $\cH$ be a downward-closed family of subsets of a finite domain $\Lambda$, and let $\mu$ be a probability distribution over $\Lambda$. Recall from Definition~\ref{def:distribution_free_testing} that in order to test whether an unknown set $E \subseteq \Lambda$ belongs to $\cH$, the algorithm receives samples $e_1,\dots,e_m$ drawn from $\mu$, together with the information of whether $e_i \in E$ for each $i\in[m]$. We call $e_i$ a \emph{positive sample} if $e_i \in E$. Since $\cH$ is downward-closed, it follows that the canonical tester for $\cH$ rejects if and only if the set formed by the positive samples does not belong to $\cH$.
\end{remark}

\subsection{Applying the Framework to Graph Problems}\label{subsec:overview_graph}

The framework in Section~\ref{subsec:general_framework} is especially suitable for analyzing subgraph-freeness properties. It is easy to see that for any graph $H$ and any $f:\binom{[n]}{2}\rightarrow \{0,1\}$, the minimal $f$-violations of $\cG^{H\textup{-free}}_{n}$ (defined in Theorem~\ref{thm:tree-freeness}) are the copies of $H$ in the edge set $f^{-1}(1)$. Therefore, for any graph $H$ with $t$ edges, any violation hypergraph against $H$-freeness is $t$-uniform. We may thus apply Lemma~\ref{lem:hypergraph_birthday_informal} and immediately get:
\begin{theorem}\label{thm:subgraph_freeness_upper_bound}
For any simple graph $H$ with $t\geq 1$ edges, we have
\[
\sam\big(\cG^{H\textup{-free}}_{n},\varepsilon\big)\leq O(1/\varepsilon)\cdot \binom{n}{2}^{(t-1)/t}=O(n^{2(t-1)/t}/\varepsilon).
\]
\end{theorem}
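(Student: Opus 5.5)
The plan is to run the canonical one-sided tester of Remark~\ref{rem:canonical_tester_monotone}: draw $m=C\binom{n}{2}^{(t-1)/t}/\varepsilon$ labeled samples, and reject exactly when the positive samples contain a copy of $H$. This tester never rejects an $H$-free $f$. Indeed, if $f^{-1}(1)$ is $H$-free, then every subset of the positive samples is also $H$-free, since $H$-freeness is downward-closed. So completeness is immediate, and the only real work is soundness.

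\textbf{Soundness.} Fix $f$ and $\mu$ such that $\Pr_{x\sim\mu}[f(x)\ne g(x)]\ge\varepsilon$ for every $H$-free $g$. Let $\mathcal V$ be the violation hypergraph of $f$ against $\cG^{H\textup{-free}}_n$, as in Definition~\ref{def:violation_hypergraph}. Its vertex set is $\Lambda=\binom{[n]}{2}$, so $|\Lambda|=N:=\binom n2$.

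I will show that the hyperedges of $\mathcal V$ are exactly the edge sets of copies of $H$ inside $f^{-1}(1)$.

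\emph{Such a copy $S$ is an $f$-violation.} Any $h$ that agrees with $f$ on $S$ has $S\subseteq h^{-1}(1)$, so $h$ contains a copy of $H$ and is not in the property.

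\emph{Such a copy is minimal.} Take a proper subset $S'\subsetneq S$. Let $h$ be the indicator of $S'\cap f^{-1}(1)=S'$. Then $h^{-1}(1)=S'$ has fewer than $t$ edges, so it is $H$-free, and $h$ agrees with $f$ on $S'$. Hence $S'$ is not a violation.

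\emph{Every minimal violation has this form.} Let $S$ be any $f$-violation. Consider $h=\mathbb 1[S\cap f^{-1}(1)]$, which agrees with $f$ on $S$. Since $S$ is a violation, $h$ must fail the property, so $S\cap f^{-1}(1)$ contains a copy of $H$. That copy is itself a violation contained in $S$, so by minimality it equals $S$.

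Consequently $\mathcal V$ is $t$-uniform, because every copy of $H$ has exactly $t$ edges.

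By the proposition relating vertex covers to distance (equation \eqref{eq:vertex_cover_equivalence}), the minimum $\mu$-weight of a vertex cover of $\mathcal V$ equals $\min_{h}\Pr_{x\sim\mu}[f(x)\ne h(x)]$. By our assumption on $f$ this is at least $\varepsilon$.

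Lemma~\ref{lem:hypergraph_birthday_informal}, in its formal form Lemma~\ref{lem:hypergraph_birthday}, applies to a $t$-uniform hypergraph on $N$ vertices. It gives that $O(N^{(t-1)/t}/\varepsilon)$ samples contain a full hyperedge of $\mathcal V$ with probability at least $2/3$, once the constant $C$ is chosen appropriately. A full hyperedge is a copy of $H$ all of whose edges are positive samples, so the tester rejects. Finally, $N^{(t-1)/t}\le n^{2(t-1)/t}$, which gives the stated bound.

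The main point needing care is the identification of minimal violations with copies of $H$, since this is what makes $\mathcal V$ $t$-uniform. Beyond that, one only has to check that the formal lemma's hypothesis (a minimum vertex cover of weight at least $\varepsilon$) and its failure probability match the constants in Definition~\ref{def:distribution_free_testing}; if the lemma's success probability differs, constant-factor amplification of $m$ fixes it. The case $t=1$ is trivial: the bound is $O(1/\varepsilon)$, and each sample is positive with probability at least $\varepsilon$.
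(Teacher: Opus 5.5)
Your proposal is correct and follows the paper's argument. You identify the minimal $f$-violations of $H$-freeness with copies of $H$ inside $f^{-1}(1)$, so the violation hypergraph is $t$-uniform on $\binom{n}{2}$ vertices, and then apply Lemma~\ref{lem:hypergraph_birthday} to the canonical tester; you simply spell out the minimality check and the vertex-cover/distance step that the paper leaves implicit.
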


The upper bound part of \eqref{eq:triangle_result} follows as a special case of Theorem~\ref{thm:subgraph_freeness_upper_bound}, by taking $H$ to be a triangle. 

However, if we apply Theorem~\ref{thm:subgraph_freeness_upper_bound} to the square-freeness property, we can only get the upper bound $\sam\big(\cG^{\textup{squ}}_{n},\varepsilon\big)\leq O(n^{3/2}/\varepsilon)$, failing to reach the optimal bound $n^{9/8\pm o(1)}$ as stated in \eqref{eq:square_result}. In order to obtain the desired upper bound $O(n^{9/8}/\varepsilon)$, we have to develop new techniques based on Lemma~\ref{lem:hypergraph_birthday_informal}:

\begin{enumerate}
\item We first open up the proof of Lemma~\ref{lem:hypergraph_birthday_informal} (due to \cite{chen2024distribution}) as a white box. The main idea of the proof is to use linear programming duality to turn the \emph{universally-quantified} condition about vertex cover into an \emph{existence} of ``fractional matching'' in the violation hypergraph (Lemma~\ref{lem:LP_weak_duality}). In the context of testing square-freeness, the edges of the violation hypergraph are copies of squares in the input graph, so the ``fractional matching'' would translate to a family of ``weighted squares'' whose convex combination is dominated by the edge distribution (Definition~\ref{def:witness}).

\item The fractional matching effectively allows us to ``embed'' a classical-birthday-paradox structure into the edge distribution. One can use Carath\'eodory's theorem to limit the number of squares in the fractional matching, i.e. the number of ``birthday slots,'' to at most $O(n^{2})$. In $O(n^{3/2})$ samples, with high probability there are four people sharing a birthday, i.e. four edges forming a square. This is how Lemma~\ref{lem:hypergraph_birthday_informal} is proved in \cite{chen2024distribution} (and works perfectly well for testing triangle-freeness), but falls short of the optimal bound for testing square-freeness by a polynomial factor.

\item The main new idea is to \emph{delay} the application of Carath\'eodory's theorem, and to try to milk the ``fractional matching'' for more. It turns out that there are \emph{two} different sources of squares that we could hope to reveal by samples. On the one hand, there is the family of weighted squares ``planted'' into the edge distribution by the fractional matching, which has been our only source of squares. On the other hand, if these squares are planted in a sufficiently ``dilute'' manner, we argue that a huge number of \emph{unintended} squares will inevitably be created during the process, and these unintended ones will be our second source of squares.

\item We have to divide into two cases based on the ``fractional matching'' we obtained from linear programming duality. If the fractional matching is ``dilute,'' we argue that an unintended square will likely show up in as few as $\widetilde{O}(n)$ samples (Lemma~\ref{lem:dilute_case}). In the ``concentrated'' case, we argue that the number of ``birthday slots'' are effectively reduced to $O(n^{3/2})$, and hence there will likely be ``four people sharing a birthday'' within $O(n^{9/8})$ samples (Lemma~\ref{lem:concentrated_case}). We remark that a nontrivial amount of effort is required for finding a formalization of the ``diluteness'' notion that works smoothly in the proof (see Section~\ref{subsec:case_analysis}).

\end{enumerate}

The other sample complexity upper bounds proved in this paper (Theorems~\ref{thm:graph_homomorphism} to~\ref{thm:tree-freeness}) require different techniques, for which we choose not to provide overviews here.

\subsubsection{Subgraph-Removal for Sparse Graphs}\label{subsubsec:sparse_removal}

The discussion above is reminiscent of the celebrated \emph{removal lemmas} in graph theory (see e.g. the survey \cite{conlon2013graph}). Suppose $H$ is a fixed connected simple graph, and consider a graph on $n$ vertices whose edge set consists of $m$ edge-disjoint copies of $H$. How large can $m$ be if no ``unintended'' copy of $H$ is allowed, i.e. every edge is in exactly one copy of $H$? The subgraph removal lemma implies that for any $H$ with at least $t\geq 3$ vertices, we must have $m\leq o(n^{2})$ if there is no unintended copy of $H$. Furthermore, if $m=\Omega(n^{2})$ edge-disjoint copies of $H$ are planted, then there must be as many as $\Omega(n^{t})$ unintended copies of $H$. 

For convenience of further discussion, we use the following non-standard notation.

\begin{definition}
Given a simple graph $H$, let $\ex^{=1}(n,H)$ be the maximum number of edges in an $n$-vertex graph where every edge is contained in exactly one copy of $H$.
\end{definition}

For certain graphs $H$, one can pack into an $n$-vertex graph as many as $n^{2-o(1)}$ copies of $H$ without creating unintended copies. In the case where $H=C_{3}$ is a triangle, the celebrated Ruzsa-Szemer\'edi construction \cite{ruzsa1978triple} (based on Behrend's construction \cite{behrend1946sets} of integer sets without 3-term arithmetic progressions) shows that:

\begin{proposition}[\cite{behrend1946sets,ruzsa1978triple}]\label{prop:ruzsa_semeredi}
We have $\ex^{=1}(n,C_{3})\geq n^{2}\exp\bigl(-O\bigl(\sqrt{\log n}\bigr)\bigr)$.
\end{proposition}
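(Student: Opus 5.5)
The plan is to follow the classical Ruzsa--Szemer\'edi route: build a tripartite graph from a large 3-AP-free set of integers, so that every edge lies in exactly one triangle.

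\emph{Step 1: the 3-AP-free set.} Invoke Behrend's construction to get, for every $N$, a set $A\subseteq[N]$ with no nontrivial three-term arithmetic progression and $|A|\geq N\exp\bigl(-O\bigl(\sqrt{\log N}\bigr)\bigr)$. Recall the argument: take vectors in $\{0,\dots,d-1\}^k$ lying on a common sphere. Read them as base-$(2d)$ digits, so that addition has no carries. Then a 3-AP $a+c=2b$ in $A$ forces $x+z=2y$ coordinatewise, which strict convexity of the sphere forbids unless $x=y=z$. Choosing $k\approx\sqrt{\log N}$ and pigeonholing over the $O(kd^2)$ possible radii gives the stated bound.

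\emph{Step 2: the tripartite graph.} Take vertex classes $X=[N]$, $Y=[2N]$, $Z=[3N]$, so there are $6N$ vertices in total. For each $x\in[N]$ and $a\in A$, form the triangle on $x\in X$, $x+a\in Y$, $x+2a\in Z$, and let $G$ be the union of these $N|A|$ triangles.

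Each edge determines its triangle. The edge $(x,y)$ gives $a=y-x$. The edge $(y,z)$ gives $a=z-y$, and then $x=y-a$. The edge $(x,z)$ gives $2a=z-x$. So the triangles are edge-disjoint, and $G$ has $3N|A|$ edges.

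\emph{Step 3: no unintended triangles.} Since $G$ is tripartite, any triangle in it has the form $x,y,z$ with $y-x=a_1$, $z-y=a_2$ and $z-x=2a_3$ for some $a_1,a_2,a_3\in A$. Then $a_1+a_2=2a_3$, which is a 3-AP in $A$. By the choice of $A$ this forces $a_1=a_2=a_3$, so the triangle is one of the planted ones. Hence every edge of $G$ lies in exactly one triangle.

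\emph{Step 4: conclusion.} Set $N=\lfloor n/6\rfloor$ and add isolated vertices to reach exactly $n$ vertices. The edge count is then $3N|A|\geq n^2\exp\bigl(-O\bigl(\sqrt{\log n}\bigr)\bigr)$, as required.

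The only substantive step is Behrend's bound in Step 1, and even that is classical and can simply be cited. The rest is bookkeeping. The one point that needs care there is the digit encoding: the digits must be restricted below $d$ while the base is $2d$, so that $x+z$ never carries and the coordinatewise deduction is valid.
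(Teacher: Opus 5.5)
Your proposal is correct. It is exactly the classical Behrend plus Ruzsa--Szemer\'edi construction that the paper invokes by citation; the paper gives no proof of its own. You check both key points: the planted triangles are edge-disjoint, and every triangle in the tripartite graph comes from a solution of $a_1+a_2=2a_3$ in $A$, which is then trivial.
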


We will use Proposition~\ref{prop:ruzsa_semeredi} to prove the sample complexity lower bound for testing triangle-freeness stated in \eqref{eq:triangle_result}. Indeed, Proposition~\ref{prop:ruzsa_semeredi} almost immediately implies an $n^{4/3-o(1)}$ sample lower bound for \emph{one-sided-error} triangle-freeness testers. To extend the lower bound to hold against two-sided-error testers, we use a standard constructional technique (see Section~\ref{subsec:triangle_lower_bound}) that has appeared in, for example, lower bounds for triangle-freeness testers in the ``general graph model'' \cite{alon2008testing}.

There are also some graphs $H$ for which the bound $\ex^{=1}(n,H)\leq  o(n^{2})$ provided by the removal lemma can be improved by a polynomial factor in $n$. Recall that the Tur\'an number of $H$, denoted by $\ex(n,H)$, is the maximum number of edges in an $n$-vertex graph with no subgraphs isomorphic to $H$. For any graph in which every edge is contained in exactly one copy of $H$, deleting one edge from every copy of $H$ yields an $H$-free graph, so we have:
\begin{proposition}
For a fixed simple graph $H$ with at least two edges, we have $\ex^{=1}(n,H)\leq 2\cdot \ex(n,H)$. 
\end{proposition}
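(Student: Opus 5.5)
Take an $n$-vertex graph $G=([n],E)$ in which every edge lies in exactly one copy of $H$, with $|E|=\ex^{=1}(n,H)$. The plan is to delete a small set of edges to obtain an $H$-free subgraph, and then compare its size with $\ex(n,H)$.

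First, the copies of $H$ in $G$ are pairwise edge-disjoint. If two distinct copies shared an edge $e$, then $e$ would lie in two copies, contradicting the hypothesis. Let $\mathcal{C}$ be the family of copies of $H$ in $G$. Since every edge lies in some copy, $\mathcal{C}$ partitions $E$ into blocks of exactly $t:=|E(H)|$ edges each. Hence $|\mathcal{C}|=|E|/t$.

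Next, form $G'$ by deleting one edge from each copy $C\in\mathcal{C}$. Every copy of $H$ in $G$ belongs to $\mathcal{C}$ and has lost an edge, so $G'$ contains no copy of $H$. (Any copy of $H$ in $G'$ would also be a copy in $G$, because $G'\subseteq G$.) Therefore $|E|-|E|/t=|E(G')|\leq \ex(n,H)$. This gives
\[
|E|\leq \frac{t}{t-1}\,\ex(n,H)\leq 2\,\ex(n,H),
\]
where the last inequality uses $t\geq 2$.

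No step here is a real obstacle. The only point that needs care is edge-disjointness: it is what makes the number of deleted edges exactly $|E|/t$ and not larger. It also explains the hypothesis that $H$ has at least two edges: for $t=1$ the deletion would remove every edge, and the bound would fail.
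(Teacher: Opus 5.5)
Your proof is correct and is the same argument as the paper's: delete one edge from each of the (necessarily edge-disjoint) copies of $H$ to get an $H$-free graph that keeps at least half of the edges. You spell out the edge-disjointness and counting step that the paper leaves implicit, and this yields the slightly sharper bound $\ex^{=1}(n,H)\leq \frac{t}{t-1}\,\ex(n,H)$.
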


The K\H{o}v\'ari-S\'os-Tur\'an theorem \cite{kHovari1954problem} shows for any fixed bipartite graph $H$ with $t$ vertices that $\ex(n,H)\leq O(n^{2-1/t})$, so we also have $\ex^{=1}(n,H)\leq O(n^{2-1/t})$. When $H=C_{4}$ is a square (i.e. 4-cycle), the resulting upper bound $\ex^{=1}(n,C_{4})\leq n^{3/2}$ is the key reason we are able to improve the upper bound on $\sam\big(\cG^{\textup{squ}}_{n},\varepsilon\big)$ from  $O(n^{3/2}/\varepsilon)$ to $O(n^{9/8}/\varepsilon)$, as discussed earlier. That being said, we are not able to use the bound $\ex^{=1}(n,C_{4})\leq O(n^{3/2})$ or $\ex(n,C_{4})\leq O(n^{3/2})$ as a black box to prove the $O(n^{9/8})$ sample complexity upper bound, and it seems that some careful case analysis (as described earlier) is necessary for proving the latter.

In terms of lower bounds, it was shown by \cite{brown1966graphs,erdHos1966problem} that $\ex(n,C_{4})=\Theta(n^{3/2})$, and the following lower bound on $\ex^{=1}(n,C_{4})$ is (implicitly) shown by Timmons and Verstra\"ete \cite{timmons2015counterexample}:

\begin{proposition}[\cite{timmons2015counterexample}]\label{prop:ruzas_square}
We have $\ex^{=1}(n,C_{4})\geq n^{3/2}\exp\bigl(-O\bigl(\sqrt{\log n}\bigr)\bigr)$.
\end{proposition}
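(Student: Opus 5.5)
We will construct an $n$-vertex graph with $n^{3/2}\exp(-O(\sqrt{\log n}))$ edges in which every edge lies in exactly one $4$-cycle. The plan is to take a $C_4$-free host graph with the right density and substitute each of its edges by a square in a structured, arithmetic way. A Behrend-type set then rules out unintended squares, in the same spirit as the Ruzsa--Szemer\'edi construction behind Proposition~\ref{prop:ruzsa_semeredi}.

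\textbf{Host graph.} Fix a prime $p$ with $p^2\asymp n$ and let $G_0$ be the Erd\H{o}s--R\'enyi--Brown polarity graph, or a bipartite incidence graph of points and lines of the projective plane over $\bF_p$. This graph has $\Theta(p^2)$ vertices, is $C_4$-free, and has $\Theta(p^3)=\Theta(n^{3/2})$ edges. We give it an algebraic description: vertices are $(a,b)\in\bF_p^2$ (points) and $[m,c]$ (lines), with $(a,b)\sim[m,c]$ iff $b=ma+c$.

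\textbf{Blowing up into squares.} Choose a Behrend set $B\subseteq[p/4]$ with no nontrivial solutions to the relevant linear equation (e.g.\ $x+y=2z$), of size $|B|\ge p\exp(-O(\sqrt{\log p}))$. The idea, following Timmons--Verstra\"ete, is to work in a product $\bF_p^2\times\bF_p^2$ (or to replace the $C_4$-free structure by a $4$-partite ``linear'' layout). We then plant one square for each pair (base element, $\beta\in B$), of the form
\[
x\to x+\beta v_1\to x+\beta(v_1+v_2)\to x+\beta(v_1+v_2+v_3)\to x,
\]
with prescribed directions $v_i$ summing to zero across the four parts. The parameters are chosen so that the planted squares are edge-disjoint, and so that the number of planted edges is about $n\cdot\sqrt{n}\cdot|B|/p=n^{3/2}\exp(-O(\sqrt{\log n}))$. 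Concretely, we take vertex parts $V_1,\dots,V_4$, each a copy of $\bF_p^2$ scaled so that $|V_i|\asymp n$. We join $V_i$ to $V_{i+1}$ by edges of ``slope'' $m$ and ``step'' $i\beta$, and keep the incidence-geometry structure so that each vertex has degree $\asymp\sqrt n\,|B|/p$.

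\textbf{Checking there are no unintended squares.} Any $4$-cycle in a $4$-partite graph whose edges go only between consecutive parts $V_i,V_{i+1}$ (indices mod $4$) must use one edge of each type $V_1V_2,\dots,V_4V_1$. It then yields the linear relations
\[
\beta_1+\beta_2+\beta_3+\beta_4\cdot(\text{coeff})=0
\]
together with a slope relation. The $C_4$-freeness (linearity) of the point--line incidence forces all four edges to share the same slope $m$. The step relation becomes a $3$-term-progression-type equation in $B$, so $\beta_1=\dots=\beta_4$ and the square is a planted one. Edge-disjointness of the planted squares follows from the same uniqueness: an edge determines its base point, slope and $\beta$.

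\textbf{Main obstacle.} The delicate step is choosing the coefficients so that two things happen at once. First, every $4$-cycle (including ones that revisit parts in a non-cyclic order, e.g.\ $V_1V_2V_1V_2$) reduces to an AP-free equation in $B$ or is killed by $C_4$-freeness of the host. Second, the edge count keeps the full $n^{3/2}$ factor up to the $\exp(-O(\sqrt{\log n}))$ loss from Behrend. We would first try coefficients $(1,1,-1,-1)$, so that the equation is $\beta_1+\beta_2=\beta_3+\beta_4$. This would require a Sidon-type restriction, but a Sidon set is too small. If so, we would switch to coefficients $(1,2,-3)$-style weights, which give an equation forcing a $3$-AP, for which Behrend's bound is available.
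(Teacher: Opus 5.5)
Your proposal has a genuine gap: the construction is never specified, and the verification step fails as written.

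The claim that every $4$-cycle in a layout with edges only between consecutive parts $V_i,V_{i+1}$ uses one edge of each type is false. Cycles such as $V_1V_2V_1V_2$ or $V_1V_2V_3V_2$ occur, as you note yourself later.

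Consider any translation-type layout, where an edge between two fixed parts is $\{x,x+s\}$ with $s$ in a step set $S$. Then the cycle $x,\ x+s_1,\ x+s_1-s_2,\ x+s_1-s_2+s_3$ is a genuine unintended square whenever $s_1-s_2+s_3-s_4=0$ with $s_1\neq s_2$ and $s_2\neq s_3$. This relation does not depend on the weights you assign to the parts. So switching from $(1,1,-1,-1)$ to ``$(1,2,-3)$-style'' weights cannot remove it: the step set between any two parts must be a Sidon set.

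\begin{itemize}
\item If the steps are $\{\beta v:\beta\in B\}$ for a fixed direction $v$, as in your planted squares, they are collinear. Since $|B|\gg\sqrt p$, two pairs of elements of $B$ share a difference, and unintended squares abound.
\item If instead the slope $m$ ranges freely alongside $\beta$, the degree is $p|B|$ rather than $|B|$. The edge count then contradicts $\mathrm{ex}^{=1}(n,C_4)\le 2\,\mathrm{ex}(n,C_4)=O(n^{3/2})$.
\item You never say how the slope should depend on $\beta$.
\end{itemize}

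More basically, a $4$-cycle gives a relation among four independent parameters. $3$-AP-freeness of $B$ alone can never force $\beta_1=\beta_2=\beta_3=\beta_4$. The $C_4$-free point--line host plays no checkable role in the argument.

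The missing idea is the one you dismissed: a Sidon set is not too small. What is needed is a set of size about $\sqrt N$ in a group $\Gamma$ of order $N\asymp n$, which is exactly the optimal Sidon size.

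The paper (following Timmons--Verstra\"ete) takes $|\Gamma|$ coprime to $6$ and vertex set $\Gamma\times[4]$. It joins $(x,i)$ to $(y,j)$ when $\{i,j\}\in\{\{1,3\},\{1,4\},\{2,3\},\{2,4\}\}$ and $y-x=(j-i)a$ for some $a\in A$. The planted squares are $\{(x+ia,i):i\in[4]\}$, and there are $4N|A|$ edges. Any $4$-cycle, in whatever order it visits the parts, yields $\sum_k c_ka_k=0$ with $a_k\in A$, $|c_k|\le 3$ and $\sum_k c_k=0$. So it suffices that $A$ be a $3$-fold Sidon set.

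Such an $A$ of size $p\exp(-O(\sqrt{\log p}))$ in $\Gamma=\bF_p^2$ comes from lifting to the parabola, $A=\{(a,a^2):a\in A_0\}$. Here $A_0\subseteq\bF_p$ has no nontrivial solutions to $x_1+x_2+x_3=3x_4$ or $d_1x_1+d_2x_2=(d_1+d_2)x_3$ ($1\le d_1,d_2\le 20$). It is obtained by intersecting random translates of Behrend sets.

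The quadratic coordinate supplies the Sidon-type rigidity through the identity $c_1c_2(a_1-a_2)^2=c_3c_4(a_3-a_4)^2$. The choice $p\equiv\pm5\pmod{12}$ makes $3$ a non-residue, which forces $c_1c_2c_3c_4$ to be a perfect square. The identity then factors into linear relations that $A_0$ excludes.

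Your ``slope and step'' picture becomes correct only once the slope is tied to the step, $\beta\cdot(1,\beta)=(\beta,\beta^2)$ --- which is exactly this parabola lift.
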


As in the case of triangle-freeness, we will use Proposition~\ref{prop:ruzas_square} to prove the sample complexity lower bound for testing square-freeness stated in \eqref{eq:square_result}. For the sake of completeness, we will sketch the proof of Proposition~\ref{prop:ruzas_square} in Section~\ref{subsec:square_lower_bound}. 

\begin{remark} To the best of the author's knowledge, it is unknown whether $\ex^{=1}(n,C_{4})=o(n^{3/2})$,\footnote{Indeed, Solymosi \cite{solymosi2011c4} conjectured that $\ex^{=1}(n,C_{4})=o(n^{3/2})$, while Verstra\"ete \cite{verstraete2016extremal} conjectured that $\ex^{=1}(n,C_{4})=\Theta(n^{3/2})$.} and determining the asymptotics of $\ex^{=1}(n,C_{3})$ is a major open problem (see e.g. \cite{shapira2022local}).
\end{remark}

\section{Preliminaries}

\subsection{General Notations}\label{subsec:general_notations}

In this subsection we summarize general notational conventions used throughout this paper.

\paragraph{Sets.} For two subsets $E_{1},E_{2}$ of a domain $\Lambda$, we use $E_{1}\triangle E_{2}:=(E_{1}\setminus E_{2})\cup (E_{2}\setminus E_{1})$ to denote the symmetric difference between $E_{1}$ and $E_{2}$.

\paragraph{Probability.} For a finite domain $\Lambda$ and a probability distribution $\mu$ over $\Lambda$, we write $\Exs{x\sim \mu}{\cdot}$ and $\Prs{x\sim \mu}{\cdot}$ to denote expectation and probability, respectively, when $x\in \Lambda$ is a random element following the distribution $\mu$. A \emph{probability mass function} on $\Lambda$ is a function $f:\Lambda\rightarrow[0,+\infty)$ such that $\sum_{x\in \Lambda}f(x)=1$. A \emph{sub-probability mass function} on $\Lambda$ is a function $f:\Lambda\rightarrow [0,+\infty)$ such that $\sum_{x\in \Lambda}f(x)\leq 1$. Similarly, a \emph{probability vector} indexed by $\Lambda$ is a vector $p\in [0,1]^{\Lambda}$ such that $\sum_{x\in \Lambda}p_{x}=1$, while a vector $p\in [0,1]^{\Lambda}$ is called a \emph{sub-probability vector} if $\sum_{x\in \Lambda}p_{x}\leq 1$.

\paragraph{Sampling.} Given a finite domain $\Lambda$, a \emph{sample} from a sub-probability vector $p\in [0,1]^{\Lambda}$ is a random element $y$ of an extended domain $\Lambda\cup\{\nil\}$ such that 
\[
\Prs{y}{y=x}=p_{x}\text{ for any }x\in \Lambda\quad\text{and}\quad \Prs{y}{y=\nil}=1-\sum_{x\in \Lambda}p_{x}.
\]
The special symbol $\nil$ will always be used as an ``outside'' placeholder element in such contexts. Samples from sub-probability mass functions are similarly defined. 

\paragraph{Empirical vectors.} Given a sequence of elements $y_{1},\dots,y_{m}\in \Lambda$, we define the \emph{empirical count vector} of this sequence to be the vector $w\in \bN^{\Lambda}=\{0,1,2,\dots\}^{\Lambda}$ where the coordinate $w_{x}$ equals the number of indices $i\in [m]$ such that $y_{i}=x$, for each element $x\in \Lambda$. The \emph{empirical indicator vector} of this sequence is the vector $w'\in \{0,1\}^{\Lambda}$ defined by $w'_{x}=\ind{w_{x}\geq 1}$ for all $x\in \Lambda$.

\paragraph{Sampling processes.} Suppose $p\in [0,1]^{\Lambda}$ is a sub-probability vector, and $f:\Lambda\rightarrow [0,1]$ is the sub-probability mass function associated with $p$ (i.e. $f(x)=p_{x}$ for all $x\in \Lambda$). Consider the following canonical sampling process:
\begin{enumerate}
\item Take a batch of $m$ independent samples $y_{1},\dots,y_{m}$ from $p$.
\item Let $w\in \bN^{\Lambda}$ be the empirical count vector of the sequence $y_{1},\dots,y_{m}$, and output $w$. 
\end{enumerate}
We use $\bcS(p,m)$ or $\bcS(f,m)$ to denote the distribution of the output vector $w$ in the above process.\footnote{Note that if $p\in [0,1]^{\Lambda}$ is a probability vector, then $\bcS(p,m)$ is a multinomial distribution. However, if $p$ is only a \emph{sub}-probability vector, then $\bcS(p,m)$ may not be supported on the layer $\left\{w\in \bN^{\Lambda}\,\middle|\,\sum_{x\in \Lambda}w_{x}=m \right\}$.} If the empirical count vector in step 2 of the process is replaced with the empirical indicator function, the resulting output distribution over $\{0,1\}^{\Lambda}$ is denoted by $\bcS'(p,m)$ or $\bcS'(f,m)$.

\paragraph{Combinatorial structures.} For a fixed positive integer $n$, we define various combinatorial structures associated with the edge set $\binom{[n]}{2}$. We define
\[
\Square(n):=\left\{\big\{\{a,b\},\{b,c\},\{c,d\},\{d,a\}\big\}\subseteq \binom{[n]}{2}\,\middle|\,a,b,c,d\text{ are distinct elements of }[n]\right\}
\]
to be the collection of all four-edge sets that correspond to squares. Two edges in $\binom{[n]}{2}$ are said to form a \emph{wedge} if they have exactly one common vertex, and we correspondingly define
\[
\Wedge(n):=\left\{\big\{\{a,b\},\{b,c\}\big\}\subseteq \binom{[n]}{2}\,\middle|\,a,b,c\text{ are distinct elements of }[n]\right\}
\]
to be the collection of wedges on the vertex set $[n]$. A wedge $\big\{\{a,b\},\{b,c\}\big\}$ can also be viewed as an ordered pair $\big(\{a,c\},b\big)\in \binom{[n]}{2}\times [n]$. By an abuse of notation, we identify the collection $\Wedge(n)$ with the subset
\begin{equation}\label{eq:wedge_in_grid}
\Wedge(n):=\left\{\big(\{a,c\},b\big)\in \binom{[n]}{2}\times [n]\,\middle|\,b\not\in \{a,c\}\right\}\subseteq \binom{[n]}{2}\times [n].
\end{equation}

\paragraph{Subgraph-freeness.} Given any constant $\varepsilon\in (0,1)$ and a fixed simple graph $H$, a sub-probability vector $p\in [0,1]^{\binom{[n]}{2}}$ is said to be $\varepsilon$-far from $H$-free if for any edge set $E\in \cG^{H\textup{-free}}_{n}$ (see the statement of Theorem~\ref{thm:tree-freeness}), we have
\[
\sum_{e\in\binom{[n]}{2}\setminus E}p_{e}\geq \varepsilon.
\]

\subsection{Stochastic Domination}

Sampling processes (as formally introduced in Section~\ref{subsec:general_notations}) are of central importance in this paper. In order to meaningfully compare different sampling processes, we make the following definition of stochastic domination. Recall from Section~\ref{subsec:general_notations} that the output of a sampling process is a random element of the space $\bN^{\Lambda}$ for some index set $\Lambda$, so it suffices to ``compare'' distributions over $\bN^{\Lambda}$.

\begin{definition}\label{def:stochastic_domination}
Let $\Lambda$ be a finite set and let $\mu,\nu$ be probability distributions over $\bN^{\Lambda}$. For parameters $\lambda_{1},\lambda_{2}\in(0,1]$, we say $\nu$ is \emph{$(\lambda_{1},\lambda_{2})$-dominated} by $\mu$, written as \[\nu\leq_{(\lambda_{1},\lambda_{2})}\mu,\]
if there exists a coupling distribution $\rho$ over $\bN^{\Lambda}\times \bN^{\Lambda}$ such that the following conditions hold:
\begin{enumerate}[label=(\arabic*)]
\item $\Prs{(w,z)\sim \rho}{w\succeq z}\geq \lambda_{1}$.\footnote{For vectors $w,z\in \bN^{\Lambda}$, we write $w\succeq z$ if $w_{x}\geq z_{x}$ for all $x\in \Lambda$.}
\item 
For any subset $S\subseteq \bN^{\Lambda}$, we have $\Prs{(w,z)\sim \rho}{w\in S}=\mu(S)$.
\item For any subset $S\subseteq \bN^{\Lambda}$, we have $\lambda_{2}\cdot \Prs{(w,z)\sim \rho}{z\in S}\leq  \nu(S)$.
\end{enumerate}
When $\lambda_{1}=\lambda_{2}=1$, we simply say that $\nu$ is dominated by $\mu$, omitting the $(\lambda_{1},\lambda_{2})$.
\end{definition}
Our definition of stochastic domination has the following basic property.

\begin{proposition}\label{prop:stochastic_domination}
Let $\Lambda$ be a finite set, and let $\lambda_{1},\lambda_{2}\in(0,1]$ be constants. Suppose $\mu$ and $\nu$ are probability distributions over $\bN^{\Lambda}$ such that $\nu$ is $(\lambda_{1},\lambda_{2})$-dominated by $\mu$. Then for any downward-closed subset $S\subseteq \bN^{\Lambda}$, we have
\[
\mu(S)\leq \lambda_{2}^{-1}\cdot \nu(S)+(1-\lambda_{1}).
\]
\end{proposition}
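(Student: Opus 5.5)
The plan is to read the inequality directly off the coupling $\rho$ from Definition~\ref{def:stochastic_domination}, using downward-closedness of $S$ to move membership from the $w$-coordinate to the $z$-coordinate. Here "downward-closed" means: if $w\in S$ and $w\succeq z$ with $z\in\bN^{\Lambda}$, then $z\in S$.

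First, by condition (2), $\mu(S)=\Prs{(w,z)\sim\rho}{w\in S}$. We split this event according to whether $w\succeq z$:
\[
\Prs{\rho}{w\in S}\leq \Prs{\rho}{w\in S,\ w\succeq z}+\Prs{\rho}{w\not\succeq z}.
\]
By condition (1), the second term is at most $1-\lambda_{1}$.

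For the first term, suppose $w\in S$ and $w\succeq z$. Downward-closedness then gives $z\in S$, so the event is contained in $\{z\in S\}$. Hence the first term is at most $\Prs{\rho}{z\in S}$. Condition (3) bounds this by $\lambda_{2}^{-1}\nu(S)$.

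Combining the two bounds yields $\mu(S)\leq \lambda_{2}^{-1}\nu(S)+(1-\lambda_{1})$, as claimed. No step is a real obstacle; the only point needing care is applying downward-closedness in the correct direction (from $w$ down to $z$). The argument also does not require the $z$-marginal of $\rho$ to equal $\nu$, only the inequality in condition (3).
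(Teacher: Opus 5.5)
Your proof is correct and is essentially the paper's argument. The paper uses the same union bound $\Prs{\rho}{w\in S}\leq \Prs{\rho}{z\in S}+\Prs{\rho}{w\not\succeq z}$, justified by downward-closedness, and then substitutes the three conditions of Definition~\ref{def:stochastic_domination}; you simply make the inclusion $\{w\in S,\ w\succeq z\}\subseteq\{z\in S\}$ explicit.
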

\begin{proof}
It suffices to notice the union bound inequality
\[
\Pru{(x,y)\sim\rho}{x\in S}\leq \Pru{(x,y)\sim\rho}{y\in S}+\Pru{(x,y)\sim\rho}{x\not\succeq y},
\]
and then replace the three terms in the inequality by the desired quantities, using the three conditions in Definition~\ref{def:stochastic_domination}.
\end{proof}

\section{Testing Bipartiteness}

The goal of this section is to prove Theorem~\ref{thm:graph_homomorphism}. Note that when the graph $H$ is a single edge (on two vertices), the collection $\cG^{H\textup{-hom}}_{n}$ is identical to $\cG^{\textup{bip}}_{n}$; so the bipartiteness testing result stated in \eqref{eq:bipartiteness_result} is a special case of Theorem~\ref{thm:graph_homomorphism}. The proofs of both the upper bound and the lower bound are similar to \cite[Theorem 4.6]{goldreich2016sample}.

\subsection{Upper Bound}

Suppose $H$ is a fixed simple graph on the vertex set $[k]$, and for any indices $i,j\in [k]$ we denote 
\[
H_{ij}=\begin{cases}
1, &\text{if }i\neq j\text{ and }H\text{ contains the edge }\{i,j\},\\
0, &\text{otherwise}.
\end{cases}
\]
Given an \emph{assignment map} $\tau:[n]\rightarrow [k]$, let $F_{\tau,H}$ be the collection of edges $\{a,b\}\in \binom{[n]}{2}$ such that $H_{\tau(a),\tau(b)}=0$. By definition, an edge set $E\subseteq \binom{[n]}{2}$ belongs to the collection $\cG^{H\textup{-hom}}_{n}$ if and only if $E\cap F_{\tau,H}=\emptyset$ for some $\tau:[n]\rightarrow [k]$.

Let $\mu$ be a probability distribution over $\binom{[n]}{2}$. For any edge set $E\subseteq \binom{[n]}{2}$, it is easy to see that 
\begin{equation}\label{eq:homomorphism_distance}
\min_{E'\in \cG^{H\textup{-hom}}_{n}}\mu(E\triangle E')=\min_{\tau:[n]\rightarrow [k]}\mu(E\cap F_{\tau,H})
\end{equation}
We claim that for any edge set $E\subseteq \binom{[n]}{2}$ that is $\varepsilon$-far from $\cG^{H\textup{-hom}}_{n}$ with respect to $\mu$, the canonical tester for $\cG^{H\textup{-hom}}_{n}$ (described in Remark~\ref{rem:canonical_tester_monotone}) rejects $E$ with probability at least $2/3$ after receiving $O(n/\varepsilon)$ labeled samples from $\mu$. By \eqref{eq:homomorphism_distance} and Remark~\ref{rem:canonical_tester_monotone}, it suffices to prove the following lemma.

\begin{lemma}
Let $\varepsilon\in (0,1)$ be a constant. Suppose $E\subseteq \binom{[n]}{2}$ is an edge set and $\mu$ is a distribution over $\binom{[n]}{2}$ such that $\mu(E\cap F_{\tau,H})\geq \varepsilon$ for any map $\tau:[n]\rightarrow [k]$. For any integer $m\geq \varepsilon^{-1}(2+n\ln k)$, in $m$ independent samples from $\mu$, the probability is at least $2/3$ that for any $\tau:[n]\rightarrow [k]$, there is a sampled edge that belongs to $E\cap F_{\tau,H}$.
\end{lemma}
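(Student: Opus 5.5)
This follows from a union bound over all $k^{n}$ assignment maps $\tau:[n]\rightarrow[k]$. For a fixed $\tau$, a single sample from $\mu$ lands in $E\cap F_{\tau,H}$ with probability $\mu(E\cap F_{\tau,H})\geq\varepsilon$. Since the $m$ samples are independent, the probability that none of them lands in $E\cap F_{\tau,H}$ is at most
\[
(1-\varepsilon)^{m}\leq e^{-\varepsilon m}.
\]

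Summing over all $\tau$, the probability that some $\tau$ has no sampled edge in $E\cap F_{\tau,H}$ is at most
\[
k^{n}e^{-\varepsilon m}=\exp\left(n\ln k-\varepsilon m\right).
\]
The hypothesis $m\geq\varepsilon^{-1}(2+n\ln k)$ gives $\varepsilon m\geq 2+n\ln k$. The failure probability is therefore at most $e^{-2}<1/3$, so with probability at least $2/3$ every $\tau$ has a sampled edge in $E\cap F_{\tau,H}$.

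There is no real obstacle here. The only point that matters is that each event is determined by a fixed set of measure at least $\varepsilon$, so the per-$\tau$ bound uses only independence of the samples and needs no conditioning. This is exactly why the count of $k^{n}$ events produces the additive $n\ln k$ term in the sample bound.
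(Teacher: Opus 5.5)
Your proof is correct and follows the paper's argument: bound the failure probability for each fixed $\tau$ by $(1-\varepsilon)^{m}\leq e^{-\varepsilon m}$, then take a union bound over all $k^{n}$ maps. The hypothesis on $m$ makes the total failure probability at most $e^{-2}<1/3$.
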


\begin{proof}
For any fixed map $\tau:[n]\rightarrow [k]$, the probability that no sample falls in $E\cap F_{\tau,H}$ is at most $(1-\varepsilon)^{m}\leq \exp(-\varepsilon m)\leq \frac{1}{3}\exp(-n\ln k)=\frac{1}{3}k^{-n}$. By union bound over all maps $\tau:[n]\rightarrow [k]$, it follows that with probability at least $2/3$, for any $\tau$ there is a sampled edge falling in $E\cap F_{\tau,H}$.
\end{proof}

\begin{corollary}\label{cor:bipartiteness_upper_bound}
For any fixed simple graph $H$ with at least one edge, we have $\sam\big(\cG^{H\textup{-hom}}_{n},\varepsilon\big)\leq O(n/\varepsilon)$.
\end{corollary}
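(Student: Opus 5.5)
The plan is to run the canonical tester of Remark~\ref{rem:canonical_tester_monotone} with $m=\lceil \varepsilon^{-1}(2+n\ln k)\rceil=O(n/\varepsilon)$ samples, where $k$ is the (constant) number of vertices of $H$, and then check the two requirements of Definition~\ref{def:distribution_free_testing}. Recall that this tester rejects exactly when the set of positive samples is not in $\cG^{H\textup{-hom}}_{n}$. That family is downward-closed, since a homomorphism of $([n],E)$ to $H$ restricts to any subgraph.

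\emph{Completeness.} Suppose $f=\mathbb{1}_E$ with $E\in\cG^{H\textup{-hom}}_{n}$. Every positive sample lies in $E$, so the set of positive samples is a subset of $E$. By downward-closure it is also in $\cG^{H\textup{-hom}}_{n}$. Hence the tester accepts with probability $1$.

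\emph{Soundness.} Suppose $E$ satisfies $\mu(E\triangle E')\ge\varepsilon$ for every $E'\in\cG^{H\textup{-hom}}_{n}$. By \eqref{eq:homomorphism_distance}, this gives $\mu(E\cap F_{\tau,H})\ge\varepsilon$ for every $\tau:[n]\to[k]$.
\begin{itemize}
\item The preceding lemma then says that, with probability at least $2/3$, for every $\tau$ some sampled edge lies in $E\cap F_{\tau,H}$.
\item Such an edge is a positive sample, because it lies in $E$.
\item Let $E_{\mathrm{pos}}$ denote the set of positive samples. Then $E_{\mathrm{pos}}$ meets $F_{\tau,H}$ for every $\tau$.
\item By the characterization stated just before \eqref{eq:homomorphism_distance}, a set lies in $\cG^{H\textup{-hom}}_{n}$ iff it avoids some $F_{\tau,H}$. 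So $E_{\mathrm{pos}}\notin\cG^{H\textup{-hom}}_{n}$, and the tester rejects.
\end{itemize}

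The only point needing care is \eqref{eq:homomorphism_distance}, which the text asserts as easy; I would verify it quickly.
\begin{itemize}
\item Given any $E'\in\cG^{H\textup{-hom}}_{n}$, choose $\tau$ with $E'\cap F_{\tau,H}=\emptyset$. Then $E\cap F_{\tau,H}\subseteq E\setminus E'\subseteq E\triangle E'$, so the right-hand side is at most the left-hand side.
\item Conversely, for any $\tau$ take $E'=E\setminus F_{\tau,H}$. This set avoids $F_{\tau,H}$, so it lies in $\cG^{H\textup{-hom}}_{n}$, and $E\triangle E'=E\cap F_{\tau,H}$. This gives the reverse inequality.
\end{itemize}

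Since $k$ is a constant, $m=O(n/\varepsilon)$ samples suffice, which proves the corollary. There is no real obstacle here; all the work is in the union-bound lemma, which is assumed.
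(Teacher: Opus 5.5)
Your proposal is correct and is essentially the paper's argument: you run the canonical one-sided tester with $m=\lceil\varepsilon^{-1}(2+n\ln k)\rceil$ samples, convert farness into $\mu(E\cap F_{\tau,H})\ge\varepsilon$ for every $\tau$ via \eqref{eq:homomorphism_distance}, and conclude with the union bound over all $k^{n}$ assignments $\tau:[n]\to[k]$. Your explicit checks of completeness (via downward-closedness) and of \eqref{eq:homomorphism_distance} fill in details that the paper only asserts as easy.
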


\subsection{Lower Bound}\label{subsec:bipartiteness_lower_bound}

In this subsection, we prove the lower bound part of Theorem~\ref{thm:graph_homomorphism}. Throughout this subsection, we let $k\geq 3$ be a fixed integer. A basic tool in the proof is the fact that a complete regular $k$-partite graph is far from $(k-1)$-colorable.

\begin{lemma}\label{lem:complete_k_partite_to_k_minus_1_colorable}
Let $V_1,\dots,V_k$ be pairwise disjoint sets, each of size $n$, and let
\[
\Gamma:=\bigl\{\{u,v\}: u\in V_i,\ v\in V_j,\ 1\le i<j\le k\bigr\}.
\]
Thus $\Gamma$ is the edge set of the complete $k$-partite graph with parts
$V_1,\dots,V_k$. If $E\subseteq \Gamma$ is such that the graph $(V_1\cup\cdots\cup V_k,E)$ is
$(k-1)$-colorable, then
\(|\Gamma\setminus E|\ge n^2\).
\end{lemma}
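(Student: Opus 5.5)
Fix a proper $(k-1)$-coloring $c$ of $(V_1\cup\cdots\cup V_k,E)$, and let $C_1,\dots,C_{k-1}$ be its color classes. Every edge of $\Gamma$ with both endpoints in the same color class is missing from $E$. So it suffices to show that
\[
\sum_{j=1}^{k-1}\bigl|\{\{u,v\}\in\Gamma : u,v\in C_j\}\bigr|\ \geq\ n^2 .
\]
For each color class $C_j$ set $a_{ij}=|C_j\cap V_i|$. The edges of $\Gamma$ inside $C_j$ number
\[
e_j=\sum_{i<i'}a_{ij}a_{i'j}=\tfrac12\Bigl(s_j^2-\sum_i a_{ij}^2\Bigr),\qquad s_j=\sum_i a_{ij}.
\]
The task is therefore to minimize $\sum_j e_j$ subject to $\sum_j a_{ij}=n$ for every $i$ and $a_{ij}\ge 0$.

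I would argue by a simple pigeonhole/counting bound rather than optimizing exactly. Let $m_j=\max_i a_{ij}$ and let $i_j$ be an index attaining it. Then
\[
e_j\ \ge\ m_j\,(s_j-m_j),
\]
because $\sum_{i<i'}a_{ij}a_{i'j}\ge a_{i_jj}\sum_{i\neq i_j}a_{ij}$.

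Next I use the fact that there are $k$ parts but only $k-1$ colors. Each color $j$ has exactly one ``dominant'' part $i_j$, so some part $V_{i^*}$ is dominant for no color. For every $j$ we then have $a_{i^*j}\le m_j$ and $a_{i^*j}\le s_j-m_j$, the latter because $a_{i^*j}$ is among the non-dominant entries. Hence
\[
e_j\ \ge\ m_j(s_j-m_j)\ \ge\ a_{i^*j}\cdot \max(m_j,\,s_j-m_j)\ \ge\ a_{i^*j}\, m_j .
\]
Summing over $j$ gives $\sum_j e_j\ge\sum_j a_{i^*j}m_j$. This alone is not obviously $\ge n^2$, since $m_j$ could be small where $a_{i^*j}$ is large. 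The main obstacle is turning this into the clean bound $n^2$.

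To handle it, I would use the stronger inequality $e_j\ge a_{i^*j}(s_j-a_{i^*j})$, which holds because edges from $V_{i^*}\cap C_j$ to the rest of $C_j$ are all in $\Gamma$. Since $i^*$ is not dominant for $j$, the part $V_{i_j}$ contributes $a_{i_jj}=m_j\ge a_{i^*j}$, so $s_j-a_{i^*j}\ge m_j\ge a_{i^*j}$ and $e_j\ge a_{i^*j}^2$. Refining further, for every part $i$ the edges of $\Gamma$ inside $C_j$ number at least $a_{ij}(s_j-a_{ij})$, and I can sum such bounds over a matching-like choice.

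The cleanest route I expect to work is to count pairs. Let $D$ be the set of parts dominant for at least one color; $|D|\le k-1$. For the free part $i^*$, each color $j$ gives
\[
e_j\ \ge\ \sum_{i\ne i^*}a_{i^*j}a_{ij}\ \ge\ a_{i^*j}a_{i_jj}\ \ge\ a_{i^*j}^2 .
\]
Combined with $\sum_j a_{i^*j}=n$, this gives only $n^2/(k-1)$ by Cauchy--Schwarz. So I would instead pair $V_{i^*}$ against a second part: choose distinct parts greedily so that each color contributes. If this fails to reach exactly $n^2$, I would fall back to exactly minimizing the convex-type quadratic. The extremal configuration should be two parts sharing one color class, giving exactly $n^2$, with all other parts getting their own color. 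I would prove by a smoothing/exchange argument that moving vertices toward this configuration never increases $\sum_j e_j$.
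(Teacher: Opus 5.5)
Your reduction is correct and matches the paper's: every edge of $\Gamma$ inside a color class is missing from $E$. So it suffices to show $\sum_j e_j=\sum_{i<i'}\sum_j a_{ij}a_{i'j}\ge n^2$ for all nonnegative $a_{ij}$ with $\sum_j a_{ij}=n$ for each $i$. But the proposal never proves this inequality. The dominant-part estimates are valid, yet, as you note, they only give $\sum_j a_{i^*j}^2\ge n^2/(k-1)$. The ``greedy pairing'' is never specified. The final step, a smoothing/exchange argument toward the extremal configuration, is only an intention: you give no reason why the moves never increase $\sum_j e_j$. Calling the objective a ``convex-type quadratic'' also misidentifies the structure. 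The form is indefinite, and convexity would give no reason for a minimizer to sit at a vertex of the feasible region, which is where your conjectured extremal configuration lies. Indeed, global estimates such as writing $\sum_j e_j=\frac12\bigl(\sum_j s_j^2-\sum_{i,j}a_{ij}^2\bigr)$ and applying Cauchy--Schwarz yield only $kn^2/(2(k-1))$, which is less than $n^2$ for $k\ge 3$.

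The missing idea is that no term of the objective multiplies two entries from the same part. Fix the rows $a_{i'}$ with $i'\ne i$. The sum then equals $\sum_j a_{ij}\bigl(\sum_{i'\ne i}a_{i'j}\bigr)$ plus a quantity not depending on row $i$, so it is \emph{linear} in $(a_{i1},\dots,a_{i,k-1})$ on the simplex $\{a_{ij}\ge 0,\ \sum_j a_{ij}=n\}$. Hence replacing row $i$ by $n$ times the indicator of a color $j$ minimizing $\sum_{i'\ne i}a_{i'j}$ does not increase the sum. The new coloring need not be proper for $E$, but that is irrelevant once you are only minimizing a function of the $a_{ij}$. Doing this for $i=1,\dots,k$ in turn makes every part monochromatic. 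With $k$ parts and $k-1$ colors, two parts share a color and already contribute $n\cdot n=n^2$. This is precisely the paper's argument: it minimizes successively over each normalized vector $p_i\in\Delta_{k-2}$, reduces to extreme points, and applies pigeonhole. Note that the exchange that works is part by part, toward the best color for that part given the others, not a motion toward a prescribed target configuration.
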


\begin{proof}
Fix a proper $(k-1)$-coloring of the graph $(V_1\cup\cdots\cup V_k,E)$, and let
$C_1,\dots,C_{k-1}$ denote its color classes. For each $i\in [k]$ and
$c\in [k-1]$, set \(w_{i,c}:=|V_i\cap C_c|\). Then
\[
w_{i,1}+\cdots+w_{i,k-1}=n
\qquad\text{for every }i\in [k].
\]

Now fix a color $c\in [k-1]$ and two distinct indices $i,j\in [k]$. Every pair of vertices \(u\in V_i\cap C_c\) and \(v\in V_j\cap C_c\) forms an edge of $\Gamma$, but cannot belong to $E$, since $u$ and $v$ have the same color. Hence all \(
w_{i,c}w_{j,c}\) such edges lie in $\Gamma\setminus E$. Summing over all colors and all pairs
$i<j$, we obtain
\begin{equation}\label{eq:remove_edges_in_w}
|\Gamma\setminus E|
\ge
\sum_{c=1}^{k-1}\sum_{1\le i<j\le k} w_{i,c}\,w_{j,c}.
\end{equation}

Define
\(p_{i,c}:=w_{i,c}/n\),
so that each vector
\(
p_i:=(p_{i,1},\dots,p_{i,k-1})
\)
lies in the simplex
\[
\Delta_{k-2}
:=
\Bigl\{
(t_1,\dots,t_{k-1})\in \mathbb{R}_{\ge 0}^{k-1}:
t_1+\cdots+t_{k-1}=1
\Bigr\}.
\]
By \eqref{eq:remove_edges_in_w}, it suffices to prove \(
\sum_{1\le i<j\le k}\sum_{c=1}^{k-1} p_{i,c}\,p_{j,c}\ge 1\) for all \(p_1,\dots,p_k\in \Delta_{k-2}\).

Since $\sum_{1\le i<j\le k}\sum_{c=1}^{k-1} p_{i,c}\,p_{j,c}$ depends linearly on each vector $p_{i}$, by minimizing successively in each
variable, we may assume that each $p_i$ is an extreme point of $\Delta_{k-2}$, that is, one of the standard basis vectors. Since there are $k$
vectors but only $k-1$ possible basis vectors, the pigeonhole principle implies that $p_{i^*}=p_{j^*}$ for some $\{i^*,j^*\}\in\binom{[k]}{2}$, and hence $\sum_{1\le i<j\le k}\sum_{c=1}^{k-1} p_{i,c}\,p_{j,c}\geq 1$, as desired.
\end{proof}

We then proceed to define the hardness distributions used in the lower bound proof.

\begin{definition}
Consider the vertex set $[n]\times [k]\times \bF_{2}$ of size $2kn$. For any vector $x\in\bF_2^{n\times k}$, we define two edge sets $E^{\yes}(x),E^{\no}(x)\subseteq \binom{[n]\times [k]\times \bF_{2}}{2}$ by
\begin{align*}
E^{\yes}(x)&=\left\{
\bigl\{(a,i,t+x_{a,i}),(b,j,t+x_{b,j}+1)\bigr\}\,\middle|\,a,b\in [n],\;\{i,j\}\in\binom{[k]}{2},\;t\in \bF_{2}\right\},\text{ and}\\
E^{\no}(x)&=\left\{
\bigl\{(a,i,t+x_{a,i}),(b,j,t+x_{b,j})\bigr\}\,\middle|\,a,b\in [n],\;\{i,j\}\in\binom{[k]}{2},\;t\in \bF_{2}\right\}.
\end{align*}
\end{definition}

Note that both $E^{\yes}(x)$ and $E^{\no}(x)$ have cardinality $(k-1)kn^{2}$. Furthermore, they have the following nice properties.
\begin{proposition}\label{prop:property_of_Eyes_Eno}
For any vector $x\in \bF_{2}^{n\times k}$, we have:
\begin{enumerate}[label=(\arabic*)]
\item The graph $\big([n]\times [k]\times\bF_{2},E^{\yes}(x)\big)$ is bipartite.
\item Any subset $E'\subseteq E^{\no}(x)$ such that the graph $\big([n]\times [k]\times \bF_{2},\,E'\big)$ is $(k-1)$-colorable (equivalently, admits a homomorphism to the $(k-1)$-vertex complete graph) must satisfy \[\bigl|E^{\no}(x)\setminus E'\bigr|\geq k^{-2}\bigl|E^{\no}(x)\bigr|.\]
\end{enumerate}
\end{proposition}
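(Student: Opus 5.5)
For part (1), I will exhibit an explicit 2-coloring. Color each vertex $(a,i,s)$ by $\chi(a,i,s):=s+x_{a,i}\in\bF_{2}$. An edge of $E^{\yes}(x)$ has the form $\{(a,i,t+x_{a,i}),(b,j,t+x_{b,j}+1)\}$. Its endpoints receive colors
\[
t+x_{a,i}+x_{a,i}=t
\qquad\text{and}\qquad
t+x_{b,j}+1+x_{b,j}=t+1,
\]
which differ. (Note $i\neq j$, so no edge is a loop or joins two copies of the same vertex.) Hence $\chi$ is a proper 2-coloring, and the graph is bipartite.

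For part (2), I will reduce to Lemma~\ref{lem:complete_k_partite_to_k_minus_1_colorable}. The same map $\chi$ shows that every edge of $E^{\no}(x)$ joins two vertices of equal $\chi$-value $t$. So $E^{\no}(x)$ splits into two vertex-disjoint pieces, one for each $t\in\bF_{2}$. For fixed $t$, define
\[
V^{(t)}_{i}:=\{(a,i,t+x_{a,i}) : a\in[n]\},\qquad i\in[k].
\]
These are pairwise disjoint sets of size $n$. The edges of $E^{\no}(x)$ with parameter $t$ are exactly all pairs $\{u,v\}$ with $u\in V^{(t)}_{i}$, $v\in V^{(t)}_{j}$ and $i\neq j$. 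That is, this piece is precisely the complete $k$-partite graph $\Gamma^{(t)}$ on parts $V^{(t)}_{1},\dots,V^{(t)}_{k}$. Moreover $\Gamma^{(0)}$ and $\Gamma^{(1)}$ are disjoint, and $E^{\no}(x)=\Gamma^{(0)}\sqcup\Gamma^{(1)}$.

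Now let $E'\subseteq E^{\no}(x)$ be $(k-1)$-colorable. Then $E'\cap\Gamma^{(t)}$ is $(k-1)$-colorable for each $t$, since it is a subgraph of a $(k-1)$-colorable graph. Applying Lemma~\ref{lem:complete_k_partite_to_k_minus_1_colorable} to each piece gives $|\Gamma^{(t)}\setminus E'|\geq n^{2}$. Summing over $t$,
\[
|E^{\no}(x)\setminus E'|\geq 2n^{2}.
\]
Finally, $|E^{\no}(x)|=k(k-1)n^{2}$, so
\[
2n^{2}=\frac{2}{k(k-1)}\,|E^{\no}(x)|\geq k^{-2}\,|E^{\no}(x)|.
\]

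The only step needing any care is confirming that each $t$-piece is exactly the complete $k$-partite graph on sets of size exactly $n$. This holds because $a\mapsto (a,i,t+x_{a,i})$ is injective for each fixed $i$ and $t$. The rest is bookkeeping.
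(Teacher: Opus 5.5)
Your proof is correct and matches the paper's argument: the paper uses the same map $\tau_x(a,i,t)=x_{a,i}+t$ both to 2-color $E^{\yes}(x)$ and to split $E^{\no}(x)$ into two vertex-disjoint complete regular $k$-partite graphs. It then applies Lemma~\ref{lem:complete_k_partite_to_k_minus_1_colorable} to each piece to get $|E^{\no}(x)\setminus E'|\geq 2n^2\geq k^{-2}|E^{\no}(x)|$, and you additionally spell out the bookkeeping the paper leaves implicit (restricting $E'$ to each piece, and the count $|E^{\no}(x)|=k(k-1)n^2$).
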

\begin{proof}
The graph $\big([n]\times [k]\times\bF_{2},E^{\yes}(x)\big)$ is clearly bipartite because of the partitioning map $\tau_{x}:[n]\times [k]\times \bF_{2}\rightarrow \bF_{2}$ given by $\tau(a,i,t)=x_{a,i}+t$. On the other hand, the graph $\big([n]\times [k]\times\bF_{2},E^{\no}(x)\big)$ is the vertex-disjoint union of two complete regular $k$-partite graphs (the vertex sets of the two connected components are $\tau^{-1}_{x}(0)$ and $\tau_{x}^{-1}(1)$, respectively). Therefore, it follows from Lemma~\ref{lem:complete_k_partite_to_k_minus_1_colorable} that
\[
\bigl|E^{\no}(x)\setminus E'\bigr|\geq 2n^{2}\geq k^{-2}\bigl|E^{\no}(x)\bigr|.\qedhere
\]
\end{proof}

We next show that when $x$ is randomized, the edge sets $E^{\yes}(x)$ and $E^{\no}(x)$ are indistinguishable for algorithms that only take $o(n)$ samples.

\begin{lemma}\label{lem:bipartiteness_indistinguishable}
Suppose there is a randomized map\footnote{A randomized map is a probability distribution over deterministic maps.}
\(
\cA:\binom{[n]\times[k]\times \bF_{2}}{2}^{m}\rightarrow \{0,1\} 
\)
that satisfies the following.
\begin{enumerate}[label=(\arabic*)]
\item For a uniformly random $x\in\bF_{2}^{n\times k}$ and independent edge samples $e_{1},\dots,e_{m}\in E^{\yes}(x)$, we have $\Pr{\cA(e_{1},\dots,e_{m})=1}\geq 2/3$. 
\item For a uniformly random $x\in\bF_{2}^{n\times k}$ and independent edge samples $e_{1},\dots,e_{m}\in E^{\no}(x)$, we have $\Pr{\cA(e_{1},\dots,e_{m})=0}\geq 2/3$. 
\end{enumerate}
Then we must have $m\geq n/3$.
\end{lemma}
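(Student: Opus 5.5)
The approach is a birthday-paradox coupling. I will show that, conditioned on the event that the $m$ sampled edges touch no ``vertex slot'' $(a,i)\in[n]\times[k]$ more than once, the joint distribution of $(e_1,\dots,e_m)$ is identical in the yes and no cases. I will also show that this event has probability at least $2/3$ when $m<n/3$, or at least that it fails with probability well below $1/3$. Then the total variation distance between the two sample distributions is less than $1/3$, which contradicts conditions (1) and (2): these force $\Pr{\cA=1}$ to differ by at least $1/3$ between the two cases.

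First I record how a uniform sample from $E^{\yes}(x)$ looks. It is an edge
\[
\bigl\{(a,i,t+x_{a,i}),(b,j,t+x_{b,j}+1)\bigr\}
\]
with $a,b$ uniform in $[n]$, $\{i,j\}$ uniform in $\binom{[k]}{2}$ and $t$ uniform in $\bF_2$. Each such edge arises from exactly one choice of $(a,b,\{i,j\},t)$, once we fix which endpoint carries the label $i$; this ordering convention needs a brief check, but the cardinality $(k-1)kn^2$ confirms the parametrization is bijective. So $m$ uniform samples are given by independent tuples $(a_\ell,b_\ell,\{i_\ell,j_\ell\},t_\ell)$. The observed third coordinates are $t_\ell+x_{a_\ell,i_\ell}$ and $t_\ell+x_{b_\ell,j_\ell}+1$ in the yes case, and $t_\ell+x_{a_\ell,i_\ell}$ and $t_\ell+x_{b_\ell,j_\ell}$ in the no case. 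The slots $(a_\ell,i_\ell),(b_\ell,j_\ell)$ are always distinct within one sample, since $i_\ell\neq j_\ell$.

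Next I condition on the slot sequence $\bigl((a_\ell,i_\ell),(b_\ell,j_\ell)\bigr)_{\ell\le m}$ and on the event $\cE$ that all $2m$ slots are distinct. On $\cE$, the $2m$ bits $x_{\text{slot}}$ are independent and uniform, and the $t_\ell$ are also independent and uniform. Each sample therefore has a uniformly random pair of third-coordinate labels in $\bF_2^2$, independent across samples. The yes and no cases differ only by the bijection flipping the second label, which preserves the uniform distribution. Hence, conditioned on the slots and on $\cE$, the two sample distributions coincide. Because the slot distribution is the same in both cases, the only distinguishing contribution comes from $\neg\cE$, so the total variation distance is at most $\Pr{\neg\cE}$.

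Finally I bound $\Pr{\neg\cE}$. The $2m$ slots are not fully independent; within a sample they are a uniform pair with distinct $i\neq j$. Any two slots from different samples coincide with probability $1/(nk)$. Summing over pairs of slots by a union bound gives roughly $\binom{2m}{2}/(nk)\le 2m^2/(nk)$. The birthday threshold is about $\sqrt{nk}$, far below $n/3$, so the naive argument only rules out $m=O(\sqrt n)$. That is the main obstacle.

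This means I must refine the coupling: collisions are allowed, and the goal is to show the labels still look identical unless a cycle forms. I would instead condition on the multigraph on slots with edges $\{(a_\ell,i_\ell),(b_\ell,j_\ell)\}$. When this multigraph is a forest, the labels are uniform and independent in both cases; this follows by induction over leaves, using the fresh $t_\ell$ together with a fresh slot bit. The refined event $\cE'$ is therefore that no cycle occurs, including repeated edges. This is a random graph on $nk$ vertices with $m$ roughly uniform edges. The expected number of cycles of length $r$ is at most about $(2m/(nk))^r\cdot\frac{1}{2r}$, up to factors from the $i\neq j$ constraint, each slot edge being present with probability about $2/(nk)^2$ per sample. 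With $m\le n/3$ and $k\ge 3$, the ratio satisfies $2m/(nk)\le 2/9$. Summing the geometric-type series $\sum_r (2/9)^r/(2r)$ gives well under $1/3$; the constants need care, and this is where the $n/3$ threshold comes from. The contradiction then follows as above.
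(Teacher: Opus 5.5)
Your refined argument is essentially the paper's proof. Both project the samples to the slot multigraph on $[n]\times[k]$, note that the third-coordinate labels have the same distribution in the yes and no cases whenever this multigraph has no cycles or repeated edges, and then bound the probability of a cycle by a union bound that sums to a geometric series. One small correction to your constants: a fixed slot-edge is hit by a single sample with probability $1/(\binom{k}{2}n^{2})$, not about $2/(nk)^{2}$. So the ratio is $2m/((k-1)n)\le m/n<1/3$ rather than $2m/(nk)$, and this still gives $\sum_{r\ge 2}(m/n)^{r}<1/6<1/3$, as required.
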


\begin{proof}
In the two assumptions on $\cA$ stated in the lemma, the input $(e_{1},\dots,e_{m})$ to $\cA$ follows two different distributions. It suffices to show that these two distributions over $\binom{[n]\times [k]\times \bF_{2}}{2}^{m}$, which we denote by $\cD^{\yes}$ and $\cD^{\no}$, respectively, have total variation distance less than $1/3$ if $m<n/3$. Both $\cD^{\yes}$ and $\cD^{\no}$ can be alternatively generated by first sampling edges $\{u_{1},v_{1}\},\dots,\{u_{m},v_{m}\}$ uniformly at random from the edge set
\[
E_{n,k}=\left\{\bigl\{(a,i),(b,j)\bigr\}\,\middle|\,a,b\in [n],\;\{i,j\}\in \binom{[k]}{2}\right\},
\]
and then letting
\[
e_{i}=\big\{(u_{i},t_{i}),(v_{i},s_{i})\big\}\text{ for some suitably chosen }s_{i},t_{i}\in \bF_{2}
\]
for all $i\in [m]$. Note that the first step (choosing $u_{i}$'s and $v_{i}$'s) is identical for $\cD^{\yes}$ and $\cD^{\no}$, while the second step may be implemented differently for the two. Furthermore, if the collection $\bigl\{\{u_{1},v_{1}\},\dots,\{u_{m},v_{m}\}\bigr\}$ sampled in the first step does not contain a cycle or repeated edges, the second step is also identical for $\cD^{\yes}$ and $\cD^{\no}$. Since $\big\{\{u_{1},v_{1}\},\dots,\{u_{m},v_{m}\}\big\}$ contains a cycle or repeated edges with probability at most (by union bound)
\[
\sum_{r=2}^{+\infty}(kn)^{r}\cdot\frac{m^{r}}{\binom{k}{2}^{r}n^{2r}}\leq \sum_{r=2}^{+\infty}\left(\frac{m}{n}\right)^{r},
\]
we have $\|\cD^{\yes}-\cD^{\no}\|_{\mathrm{TV}}\leq \sum_{r=2}^{+\infty}\left(m/n\right)^{r}<1/3$ if $m<n/3$.
\end{proof}

\begin{corollary}\label{cor:bipartiteness_lower_bound}
For any $k\geq 3$ and any fixed simple graph $H$ with $(k-1)$ vertices and at least one edge, we have $\sam\big(\cG^{H\textup{-hom}}_{2kn},k^{-2}\big)\geq n/3$.
\end{corollary}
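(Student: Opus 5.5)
We would reduce Corollary~\ref{cor:bipartiteness_lower_bound} to Lemma~\ref{lem:bipartiteness_indistinguishable} and Proposition~\ref{prop:property_of_Eyes_Eno}, identifying the vertex set $[n]\times[k]\times\bF_{2}$ with $[2kn]$ by a fixed bijection. The argument is by contradiction. Suppose $m:=\sam\big(\cG^{H\textup{-hom}}_{2kn},k^{-2}\big)<n/3$, and let $\cT$ be a tester achieving it. For each $x\in\bF_{2}^{n\times k}$, apply $\cT$ with the unknown distribution $\mu_x$ taken to be uniform over $E^{\yes}(x)\cup E^{\no}(x)$. The key point is that on this support the label of an edge is determined by its membership.

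Next we define the function being tested in each case. In the yes case the function is $f=\mathbf{1}_{E^{\yes}(x)}$, and in the no case it is $f=\mathbf{1}_{E^{\no}(x)}$. These two sets are disjoint, since their $\bF_2$-coordinates differ in parity relative to $\tau_x$. A labeled sample therefore carries no information beyond the edge itself, together with knowledge of whether it lies in the yes-set or the no-set.

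This cleanly exposes the issue: in this form the sample directly reveals which case we are in, which is bad. So instead we take $\mu$ uniform on $E^{\yes}(x)$ in the yes case and uniform on $E^{\no}(x)$ in the no case, with $f\equiv 1$ on the support, meaning $f=\mathbf{1}_{\supp\mu}$. All labels are then $1$, and the tester's input is just $m$ i.i.d.\ edges from $E^{\yes}(x)$ or from $E^{\no}(x)$. This is exactly the input format of Lemma~\ref{lem:bipartiteness_indistinguishable}, with $\cA$ being $\cT$ fed the constant labels; averaging over random $x$ preserves the $2/3$ guarantees.

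It remains to check the two conditions, and this is the main step. The yes condition holds provided $H$ has an edge: a bipartite graph maps homomorphically onto a single edge of $H$, so $E^{\yes}(x)\in\cG^{H\textup{-hom}}$ by Proposition~\ref{prop:property_of_Eyes_Eno}(1).

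For the no condition we need $\min_{g\in\cG^{H\textup{-hom}}}\Pr_{\mu}[f\neq g]\ge k^{-2}$. Take any $E'\in\cG^{H\textup{-hom}}$. Since $H$ has $k-1$ vertices, a homomorphism to $H$ composed with the inclusion $H\subseteq K_{k-1}$ shows that $E'\cap E^{\no}(x)$ is $(k-1)$-colorable. Proposition~\ref{prop:property_of_Eyes_Eno}(2) then gives $\mu(E^{\no}(x)\setminus E')\ge k^{-2}$, and this mass is contained in the disagreement set. So the tester rejects with probability $\ge 2/3$. Lemma~\ref{lem:bipartiteness_indistinguishable} now forces $m\ge n/3$, a contradiction.
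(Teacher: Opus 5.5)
Your final argument is correct and is exactly the paper's proof. It takes $\mu$ uniform on $E^{\yes}(x)$ or on $E^{\no}(x)$ with all labels equal to $1$, maps onto a single edge of $H$ for the yes case, uses $H\subseteq K_{k-1}$ plus Proposition~\ref{prop:property_of_Eyes_Eno}(2) to get distance $k^{-2}$ in the no case, and then applies Lemma~\ref{lem:bipartiteness_indistinguishable} after averaging over $x$.

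The preliminary attempt with $\mu$ uniform on $E^{\yes}(x)\cup E^{\no}(x)$ can simply be deleted. Its stated reason for failing is also inaccurate: that union does not depend on $x$, and for uniformly random $x$ the labels reveal nothing unless the sampled pairs close a cycle.
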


\begin{proof}
For any $x\in \bF_{2}^{n\times k}$, since the graph $\big([n]\times [k]\times \bF_{2},\,E^{\yes}(x)\big)$ is bipartite by Proposition~\ref{prop:property_of_Eyes_Eno}(1), it is also homomorphic to $H$ (because we can map the vertex set $[n]\times [k]\times \bF_{2}$ homomorphically to the two endpoints of a single edge in $H$). On the other hand, it follows from Proposition~\ref{prop:property_of_Eyes_Eno}(2) that if we let $\mu$ denote the uniform distribution over $E^{\no}(x)$ (considered as an edge set over $[2kn]$), then
\[
\mu\big(E^{\no}(x)\triangle E'\big)\geq k^{-2}
\]
for any $E'\in\cG^{H\textup{-hom}}_{2kn}$ (because any graph homomorphic to $H$ is also homomorphic to the $(k-1)$-vertex complete graph). Therefore, any sample-based distribution-free tester for $\cG^{H\textup{-hom}}_{2kn}$ with proximity parameter $\varepsilon=k^{-2}$ and sample complexity $m$, when considered as a randomized map $\cA:\binom{[n]\times [k]\times \bF_{2}}{2}^{m}\rightarrow \{0,1\}$, must satisfy the conditions of Lemma~\ref{lem:bipartiteness_indistinguishable} and hence $m\geq n/3$.
\end{proof}

Combining Corollaries~\ref{cor:bipartiteness_upper_bound} and~\ref{cor:bipartiteness_lower_bound} yields Theorem~\ref{thm:graph_homomorphism}.

\section{Upper Bound for Square-Freeness}\label{sec:square_upper_bound}

In this section, we prove our main result $\sam\big(\cG^{\textup{squ}}_{n},\varepsilon\big)\leq O(n^{9/8}/\varepsilon)$, following the ideas outlined in Section~\ref{subsec:overview_graph}. In Section~\ref{subsec:birthday_lemmas}, we develop some new birthday-paradox-type lemmas (similar to Lemma~\ref{lem:hypergraph_birthday_informal}). Then we show how to apply them to the problem of testing square-freeness in Section~\ref{subsec:case_analysis}. 

\subsection{Birthday Paradox Lemmas}\label{subsec:birthday_lemmas}

\subsubsection{Birthday Paradox in Grids}

Suppose there is a probability distribution over the cells in a grid with $n$ rows and $r$ columns. The classical birthday paradox states that in $O(\sqrt{n})$ samples from the distribution, with high probability there are two samples falling in the same row. It turns out that for our applications, it is important to additionally ask that the two samples fall in \emph{different cells} of the same row. However, such an event is no longer guaranteed to happen with high probability if the distribution over cells is arbitrary (for example, consider the case where the distribution is supported on a single column). The following lemma identifies a condition (on the distribution) under which this event must happen with high probability in $O(\sqrt{n})$ samples.
 
\begin{lemma}\label{lem:birthday_minus_max}
Let $\varepsilon,\delta\in (0,1)$ be constants, and let $r,n$ be positive integers. Suppose $p\in [0,1]^{n\times r}$ is a sub-probability vector such that
\[
\sum_{a=1}^{n}\left(\sum_{b=1}^{r}p_{ab}-\max_{b\in [r]}p_{ab}\right)\geq \varepsilon.
\]
Then, in
\(
m=64\left\lceil \varepsilon^{-1}\log(2/\delta)\sqrt{n}\right\rceil
\)
independent samples drawn from $p$, the probability is at least $1-\delta$ that there exist two sampled pairs $(a,b)$ and $(a,c)$ with the same first coordinate $a\in[n]$ but different second coordinates $b,c\in[r]$.
\end{lemma}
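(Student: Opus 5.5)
The plan is to reduce the statement to a two-colour birthday problem inside each row and then run a two-phase sampling argument. Write $s_a=\sum_{b}p_{ab}$, $M_a=\max_b p_{ab}$ and $q_a=s_a-M_a$, so that $\sum_a q_a\ge\varepsilon$.

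\textbf{Splitting each row.} For each row $a$ I partition the columns $[r]$ into two sets $A_a,B_a$ whose masses $x_a=\sum_{b\in A_a}p_{ab}$ and $y_a=\sum_{b\in B_a}p_{ab}$ are both at least $q_a/2$.
\begin{itemize}
\item If $M_a\ge s_a/2$, let $A_a$ be the maximizing cell alone. Then $x_a=M_a\ge q_a$ and $y_a=q_a$.
\item Otherwise, add the cells greedily to the currently lighter side. The final imbalance is at most $M_a$, so the lighter side has mass at least $(s_a-M_a)/2=q_a/2$.
\end{itemize}
Any sample in $A_a$ together with any sample in $B_a$ gives two samples in row $a$ with different columns. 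It therefore suffices to find, with probability at least $1-\delta$, some row $a$ containing a sample in $A_a$ and a sample in $B_a$.

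\textbf{Constant-probability claim.} I would first prove that $m_0=C\sqrt n/\varepsilon$ samples succeed with probability at least $1/2$, for a suitable absolute constant $C$. Then I split the $m$ samples into $\lceil\log_2(1/\delta)\rceil$ independent blocks of size $m_0$. Each block fails with probability at most $1/2$, so all fail with probability at most $\delta$; the constant $64$ and the factor $\log(2/\delta)$ in $m$ leave room for this.

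For a single block I use two phases of $m_0/2$ samples each. Let $W$ be the random $y$-mass of the rows whose $A$-part is hit in phase 1, $W=\sum_a y_a\,\ind{A_a\text{ hit}}$. Conditional on phase 1, phase 2 fails with probability $(1-W)^{m_0/2}\le e^{-m_0W/2}$. So it is enough to show that $W\ge c\,\varepsilon/\sqrt n$ with probability at least $3/4$, for an absolute constant $c$ with $Cc$ large.

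\textbf{Heavy and light rows.} Call row $a$ heavy if $q_a\ge 4/m_0$, and light otherwise.
\begin{itemize}
\item \emph{Heavy rows carry $q$-mass at least $\varepsilon/2$.} Sampling only from heavy rows' $A$- and $B$-parts, each heavy $A_a$ is hit with probability at least $1-e^{-1}$. The expected hit $y$-mass is then a constant fraction of $\sum_{\text{heavy}}y_a\gtrsim\varepsilon$, and a Markov-type argument on the missed mass gives $W\gtrsim\varepsilon$ with constant probability. Alternatively, I would note that in this case already $O(1/\varepsilon)$ samples suffice.
\item \emph{Light rows carry $q$-mass at least $\varepsilon/2$.} Here $\Pr{A_a\text{ hit}}\ge(1-e^{-1})\,m_0x_a/2$. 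By Cauchy--Schwarz,
\[
\Ex{W}\gtrsim m_0\sum_{\text{light}}x_ay_a\ge \frac{m_0}{4}\sum_{\text{light}}q_a^2\ge \frac{m_0\varepsilon^2}{16n}=\Theta\!\left(\frac{C\varepsilon}{\sqrt n}\right).
\]
To pass from expectation to a lower-tail bound, I would use that the indicators $\ind{A_a\text{ hit}}$ over distinct rows are negatively associated (occupancy in a multinomial). Hence a Chernoff-type lower tail applies to $W$, which is a weighted sum of these indicators.
\end{itemize}

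\textbf{Main obstacle.} The hard part is the lower-tail concentration in the light case. The weights $y_a$ are not bounded by $q_a$, since $y_a$ can be the huge maximizing side. A few rows with large $y_a$ and tiny $x_a$ could then dominate $\Ex{W}$ while almost never being hit. I would try to fix this first by replacing $y_a$ with the truncated weight $\min(y_a,x_a)\ge q_a/2$: the hit probability in phase 2 only decreases, and the Cauchy--Schwarz bound still holds with $q_a^2/4$. After truncation, each term $\min(y_a,x_a)\le x_a$ is small for light rows, so the Chernoff bound applies with a small maximal summand. If bounding the maximal summand is still an issue, I would fall back on a direct second-moment (Paley--Zygmund) computation for the number of cross pairs $(i,j)$ with sample $i$ in $A_a$ and sample $j$ in $B_a$, restricted to light rows.
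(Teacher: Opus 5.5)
Your route is genuinely different from the paper's, and up to the value of the absolute constant it can be made to work. As written, however, it does not prove the statement with the explicit constant $64$.

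\textbf{How the paper does it.} The paper uses no fixed split and no concentration for dependent variables. It keeps only rows with $q_a\ge\varepsilon/(2n)$, which carry $q$-mass at least $\varepsilon/2$. It then uses an adaptive target: when a first-half sample lands at $(a,b)$ in a not-yet-seen such row, the second-half target is $\{(a,c):c\neq b\}$, whose mass is at least $q_a$ whatever $b$ is. While the discovered $q$-mass is below $\varepsilon/4$, each sample opens a new row with probability at least $\varepsilon/4$, since a row's total mass is at least $q_a$. Each opening adds at least $\varepsilon/(2n)$. So the target mass is at least $\min\{\varepsilon/4,\tfrac{\varepsilon}{2n}N\}$ with $N$ dominating $\mathrm{Bin}(m/2,\varepsilon/4)$. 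One Chernoff bound plus $(1-\text{mass})^{m/2}$ then gives failure $\delta$ directly. The threshold $\varepsilon/(2n)$ does the job of your Cauchy--Schwarz step, and the cap at $\varepsilon/4$ absorbs your heavy case. Since there is no negative association and no amplification, $64$ comes out with room to spare. Your fixed halves, heavy/light dichotomy, negative association and independent blocks give a natural ``expected collisions'' picture, $\bE[W]\gtrsim m_0\sum_a q_a^2$, but each step costs a constant factor.

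\textbf{The gap.} Reading $\log$ as natural, for small $\delta$ each of your $\lceil\log_2(1/\delta)\rceil$ blocks gets only about $64\ln 2\approx 44$ units of $\sqrt n/\varepsilon$. You therefore need per-block failure at most $1/2$ with $C\lesssim 44$, and your remark that $64$ ``leaves room'' is asserted rather than checked.
\begin{itemize}
\item The heavy case as sketched cannot supply this. With threshold $q_a\ge 4/m_0$ you only know each heavy $A_a$ is hit with probability at least $1-e^{-1}$. Markov on the missed mass then certifies $W\ge\theta\sum_{\text{heavy}}y_a$ only with probability $1-e^{-1}/(1-\theta)<0.64$, whatever $C$ is. That is short of the $3/4$ you require.
\item Raising the threshold weakens the light-case Chernoff bound. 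There the normalized mean is only about $C^2/200$, after the factor-$2$ losses from the split and the $\varepsilon/2$ dichotomy and the $1-e^{-1}$ linearization. So at $C\approx 44$ there is almost no slack.
\end{itemize}
To close this, either accept a constant larger than $64$, or do the bookkeeping explicitly. For example, bound the conditional failure by $e^{-m_0W/2}$ and use negative association to get $\bE[e^{-m_0W/2}]\le\prod_a\bigl(1-\pi_a(1-e^{-m_0y_a/2})\bigr)$, where $\pi_a=\bP[A_a\text{ hit}]$. This treats heavy and light rows uniformly.

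\textbf{Smaller points.}
\begin{itemize}
\item The bound $\bP[A_a\text{ hit}]\ge(1-e^{-1})m_0x_a/2$ is false when $x_a>2/m_0$. This happens for light rows in your first splitting case, where $A_a$ is the maximizing cell. Use $x_a\ge q_a/2$ and $m_0q_a/4<1$ instead.
\item Your ``main obstacle'' is misdiagnosed. With your split, $y_a\le 2q_a$ always: $y_a=q_a$ in the first case, and $y_a\le s_a<2q_a$ in the second. So the light-row weights are already below $8/m_0$, and the negatively-associated Chernoff bound applies without truncation. It is $x_a$, not $y_a$, that can be large.
\item The aside that $O(1/\varepsilon)$ samples suffice in the heavy case is false. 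Take about $C\sqrt n/8$ rows, each made of two cells of mass $4\varepsilon/(C\sqrt n)$. All of them are heavy, yet $\Omega(n^{1/4}/\varepsilon)$ samples are needed before any row is hit twice.
\end{itemize}
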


\begin{proof}
We define a sub-probability vector $q\in [0,1]^{n}$ by letting
\[
q_{a}=\sum_{b=1}^{r}p_{ab}-\max_{b\in [r]}p_{ab}
\]
for each $a\in [n]$. Let $A=\{a\in [n]\mid q_{a}\geq \varepsilon/(2n)\}$. Since $\sum_{a=1}^{n}q_{a}\geq \varepsilon$ and \[\sum_{a\in [n]\setminus A}q_{a}<n\cdot \varepsilon/(2n)=\varepsilon/2,\]
it follows that $\sum_{a\in A}q_{a}\geq \varepsilon/2$.

\paragraph{Processing first half of samples.} Let $X_{1},\dots,X_{m}$ be a sequence of $m$ independent samples drawn from $p$. For integers $k=1,2,\dots, m/2 $, we iteratively define (random) subsets $S_{k}\subseteq [n]$ and $Q_{k}\subseteq [n]\times [r]$ using the following procedure. 
\begin{enumerate}
\item Initialize $S_{0}=\emptyset$.
\item Repeat the following for $k=1,2,\dots, m/2$:
\begin{enumerate}[label=(\roman*)]
\item If $X_{k}=\nil$ or the first coordinate of $X_{k}$ is not in $A\setminus S_{k-1}$, let $S_{k}=S_{k-1}$ and $Q_{k}=\emptyset$.
\item If $X_{k}=(a,b)$ for some $a\in A\setminus S_{k-1}$ and $b\in [r]$, let $S_{k}=S_{k-1}\cup\{a\}$ and $Q_{k}=\{(a,c)\mid c\in [r]\setminus \{b\}\}$.
\end{enumerate}
\end{enumerate}
We then define random variables $Z_{k}$ by
\[
Z_{k}:=\begin{cases}
1, &\text{if }\sum_{a\in S_{k-1}}q_{a}\geq \varepsilon/4,\\
q_{a^*},&\text{if }\sum_{a\in S_{k-1}}q_{a}<\varepsilon/4\text{ and }S_{k}\setminus S_{k-1}\text{ is a singleton set }\{a^*\},\\
0,&\text{if }\sum_{a\in S_{k-1}}q_{a}<\varepsilon/4\text{ and }S_{k}\setminus S_{k-1}=\emptyset.
\end{cases}
\]
for all positive integers $k\leq  m/2$. It is easy to see that for any such $k$, we have\footnote{To see this, notice that if $\sum_{a\in S_{k-1}}q_{a}\geq \varepsilon/4$, then $Z_{k}=1$ with conditional probability 1. Otherwise, the (conditional) probability that the first coordinate of $X_{k}$ lies in $A\setminus S_{k-1}$ is at least $\left(\sum_{a\in A}q_{a}\right)-\left(\sum_{a\in S_{k-1}}q_{a}\right)\geq \varepsilon/4$.}
\[
\Pr{Z_{k}\geq \varepsilon/(2n)\,\middle|\,X_{1},\dots,X_{k-1}}\geq \varepsilon/4.
\]
Therefore, the random variable $2\varepsilon^{-1}n\cdot\sum_{k=1}^{ m/2}Z_{k}$ stochastically dominates the sum of $ m/2$ independent Bernoulli random variables with mean $\varepsilon/4$. By the multiplicative Chernoff bound, we have
\[\Pr{2\varepsilon^{-1}n\cdot \sum_{k=1}^{ m/2 }Z_{k}\leq \frac{\varepsilon m}{16}}\leq \exp\left(-\frac{1}{8}\cdot\frac{\varepsilon m}{8}\right)\leq \frac{\delta }{2}.\]
Since $\sum_{a\in S_{m/2}}q_{a}\geq \min\left\{\varepsilon/4,\sum_{k=1}^{m/2}Z_{k}\right\}$ by construction, it therefore follows that
\begin{equation}\label{eq:first_half_of_samples}
\Pr{\sum_{a\in S_{m/2}}q_{a}< \min\left\{\frac{\varepsilon^{2}m}{32n},\frac{\varepsilon}{4}\right\}}\leq \frac{\delta}{2}.
\end{equation}

\paragraph{Processing second half of samples.} Now consider the set $Q=\bigcup_{k=1}^{m/2}Q_{k}\subseteq [n]\times [r]$. If any of the second half of samples $X_{m/2+1},\dots,X_{m}$ falls in $Q$, by definition there exist a pair $(a,b)\in \{X_{1},\dots,X_{m/2}\}$ and a pair $(a,c)\in \{X_{m/2+1},\dots,X_{m}\}$ with the same first coordinate $a\in[n]$ but different second coordinates $b,c\in[r]$. So it suffices to show that with probability at least $1-\delta$, at least one of $X_{m/2+1},\dots,X_{m}$ falls in $Q$. Since
\[
\sum_{(a,b)\in Q}p_{ab}\geq \sum_{a\in S_{m/2}}\left(\sum_{b=1}^{r}p_{ab}-\max_{b\in [r]}p_{ab}\right)=\sum_{a\in S_{m/2}}q_{a},
\]
we have
\begin{align}
&\quad \Pr{Q\cap\{X_{m/2+1},\dots,X_{m}\}=\emptyset\,\middle|\, \sum_{a\in S_{m/2}}q_{a}\geq \min\left\{\frac{\varepsilon^{2}m}{32n},\frac{\varepsilon}{4}\right\}}\nonumber\\
&\leq \left(1-\min\left\{\frac{\varepsilon^{2}m}{32n},\frac{\varepsilon}{4}\right\}\right)^{m/2}\leq \exp\left(-\min\left\{\frac{\varepsilon^{2}m^{2}}{64n},\frac{\varepsilon m}{8}\right\}\right)\leq \frac{\delta }{2}.\label{eq:second_half_of_samples}
\end{align}
The desired conclusion then follows by combining \eqref{eq:first_half_of_samples} and \eqref{eq:second_half_of_samples}.
\end{proof}

\subsubsection{Vertex Cover and Fractional Matching}

As discussed in Section~\ref{subsec:overview_graph}, the duality between vertex covers and fractional matchings is the key idea behind Lemma~\ref{lem:hypergraph_birthday_informal}. The duality argument is formalized in the following lemma.

\begin{lemma}\label{lem:LP_weak_duality}
Let $\varepsilon\in (0,1)$ be a constant, and let $G=(V,E)$ be a $k$-uniform hypergraph. Suppose $p\in[0,1]^{V}$ is a sub-probability vector such that for any vertex cover $C$ of $G$ we have $\sum_{v\in C}p_{v}\geq \varepsilon$. Then there exists a sub-probability vector $\lambda=(\lambda_{e})_{e\in E}\in [0,1]^{E}$ such that the following holds:
\begin{enumerate}[label=(\arabic*)]
\item The indicator vectors $1_{e}\in \{0,1\}^{V}$,\footnote{The indicator vector $1_{e}$ is defined by $1_{e}(v)=1$ if $v\in e$ and $1_{e}(v)=0$ if $v\not\in e$, for vertices $v\in V$.} for $e\in \supp(\lambda)$, are linearly independent in $\bR^{V}$.
\item We have the coordinate-wise vector inequality $\sum_{e\in E}\lambda_{e}\cdot 1_{e}\preceq p$.
\item The sum $\sum_{e\in E}\lambda_{e}$ lies in the range $[\varepsilon/k,1/k]$.
\end{enumerate}
\end{lemma}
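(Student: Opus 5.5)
We prove the statement by LP duality for the fractional vertex cover / fractional matching pair, and then pass to a basic (vertex) optimal solution to obtain linear independence. Consider the primal LP
\[
\min\Big\{\textstyle\sum_{v\in V}p_{v}y_{v}\;:\;y\in\bR_{\ge 0}^{V},\ \sum_{v\in e}y_{v}\ge 1\ \text{for all }e\in E\Big\}
\]
(fractional vertex cover weighted by $p$). Its dual is
\[
\max\Big\{\textstyle\sum_{e\in E}\lambda_{e}\;:\;\lambda\in\bR_{\ge 0}^{E},\ \sum_{e\ni v}\lambda_{e}\le p_{v}\ \text{for all }v\in V\Big\}.
\]
The dual constraint is exactly condition (2). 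The dual is feasible ($\lambda=0$) and bounded, so strong duality holds and both optima coincide.

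\textbf{Lower bound $\varepsilon/k$.} The main obstacle is that the hypothesis concerns integral covers, while duality speaks about fractional ones. The standard fix is a rounding step. Let $y^{*}$ be an optimal fractional cover and set $C=\{v: y^{*}_{v}\ge 1/k\}$. Every edge $e$ has $k$ vertices with $\sum_{v\in e}y^{*}_{v}\ge1$, so some vertex of $e$ has $y^{*}_{v}\ge 1/k$. Hence $C$ is a vertex cover, and
\[
\varepsilon\le\sum_{v\in C}p_{v}\le k\sum_{v}p_{v}y^{*}_{v}.
\]
So the LP optimum is at least $\varepsilon/k$, and by strong duality some feasible $\lambda$ has $\sum_e\lambda_e\ge\varepsilon/k$. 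If $E=\emptyset$, the empty set is a cover of weight $0<\varepsilon$, so the hypothesis fails; we may therefore assume $E\neq\emptyset$.

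\textbf{Upper bound $1/k$ and sub-probability.} For any feasible $\lambda$,
\[
k\sum_{e}\lambda_{e}=\sum_{v}\sum_{e\ni v}\lambda_{e}\le\sum_{v}p_{v}\le1,
\]
which gives $\sum_e\lambda_e\le 1/k\le 1$. In particular each $\lambda_e\in[0,1]$, so $\lambda$ is a sub-probability vector, and condition (3) holds for any optimal $\lambda$.

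\textbf{Linear independence.} Take $\lambda$ to be an optimal basic feasible solution of the dual; it exists because the feasible region is a nonempty pointed polyhedron (it lies in the nonnegative orthant) and the objective is bounded. If the vectors $1_{e}$ for $e\in\supp(\lambda)$ were linearly dependent, say $\sum_{e\in\supp(\lambda)} c_e 1_e=0$ with $c\ne0$, then perturbing $\lambda\pm t c$ for small $t>0$ keeps every constraint $\sum_{e\ni v}\lambda_e\le p_v$ unchanged and keeps $\lambda\ge0$. This writes $\lambda$ as the midpoint of two distinct feasible points, contradicting that $\lambda$ is extreme. Hence condition (1) holds, and since this basic $\lambda$ is optimal, conditions (2) and (3) hold for it as well.
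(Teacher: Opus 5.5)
Your proof is correct. It differs from the paper's mainly in how you get the lower bound $\varepsilon/k$.

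The paper never writes down the fractional vertex cover LP and never invokes strong duality. It takes an optimal $\lambda^*$ of the same packing LP and observes that the set $C$ of \emph{saturated} vertices (those with $\sum_{e}\lambda^*_e 1_e(v)=p_v$) is already an integral vertex cover: an edge with no saturated vertex would let $\lambda^*_e$ be increased, contradicting optimality. Then $\varepsilon\le\sum_{v\in C}p_v=\sum_{v\in C}\sum_e\lambda^*_e 1_e(v)\le k\sum_e\lambda^*_e$ follows directly. This is a weak-duality, complementary-slackness-style argument (hence the lemma's name) and is fully self-contained.

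You instead route through strong duality and the threshold rounding $y^*_v\ge 1/k$ of an optimal fractional cover. This is equally valid and is the familiar $k$-approximation argument for hypergraph vertex cover, but it needs heavier LP machinery.

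For linear independence, the paper takes an optimal solution of minimum support and moves along a dependency $c$. It uses $k$-uniformity to get $\sum_e c_e=0$, so optimality is preserved while the support shrinks. Your extreme-point argument is the same idea in packaged form. Its small advantage is that it does not need $\sum_e c_e=0$, since extremality depends only on the feasible region. Its cost is that it relies on the existence of an optimal vertex, which the paper's support-reduction argument in effect proves by hand.

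Your remark about $E=\emptyset$ is harmless but not needed.
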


\begin{proof}
Consider the following linear program over the variables $\lambda_{e}$, for $e\in E$:
\begin{alignat}{2}
\text{maximize }&&\sum_{e\in E}\lambda_{e}\nonumber\\
\text{subject to }&&\sum_{e\in E}\lambda_{e}\cdot 1_{e}&\preceq p \label{eq:LP_domination}\\
&& \lambda_{e}&\geq 0,\quad\text{for all }e\in E.\label{eq:LP_nonnegativity}
\end{alignat}
Let $\lambda^*=(\lambda^*_{e})_{e\in E}$ be an optimal solution to the linear program with minimum possible support size. Consider the edge set
\[
C:=\left\{v\in V\,\middle|\,\sum_{e\in E}\lambda^*_{e}\cdot 1_{e}(v)=p_{v}\right\}.
\]
By the optimality of $\lambda^*$, it is easy to see that $C$ is a vertex cover of $G$, and hence
\[
\varepsilon\leq \sum_{v\in C}p_{v}=\sum_{v\in C}\sum_{e\in E}\lambda^*_{e}\cdot 1_{e}(v)\leq k\sum_{e\in E}\lambda^*_{e}.
\]
Note that the constraint \eqref{eq:LP_domination} implies
\[
\sum_{e\in E}\lambda^*_{e}=\frac{1}{k}\sum_{v\in V}\sum_{e\in E}\lambda^*_{e}\cdot 1_{e}(v)\leq \frac{1}{k}\sum_{v\in V}p_{v}\leq \frac{1}{k},
\]
and hence the vector $\lambda^*$ satisfies the requirement (3) of the lemma. Combined with \eqref{eq:LP_nonnegativity}, this also implies $\lambda^*$ is a sub-probability vector. The requirement (2) is obviously satisfied by $\lambda^*$ due to \eqref{eq:LP_domination}. It now remains to show that $\lambda^*$ satisfies requirement (1) of the lemma.

Suppose the indicator vectors $1_{e}\in \{0,1\}^{V}$, for $e\in \supp(\lambda^*)$, are not linearly independent. Then there exists a not-all-zero coefficient vector $c=(c_{e})_{e\in \supp(\lambda^*)}$ such that
\begin{equation}\label{eq:LP_feasibility_after_move}
\sum_{e\in \supp(\lambda^*)}c_{e}\cdot 1_{e}=0.
\end{equation}
In particular, we have
\begin{equation}\label{eq:LP_optimality_after_move}
\sum_{e\in \supp(\lambda^*)}c_{e}=\frac{1}{k}\sum_{v\in V}\sum_{e\in \supp(\lambda^*)}c_{e}\cdot 1_{e}(v)=0.
\end{equation}
Therefore, there exists some $e\in \supp(\lambda^*)$ such that $c_{e}<0$. Define a positive number $d$ to be
\[
d:=\min_{e\in \supp(\lambda^*),\;c_{e}<0}\left(-\frac{\lambda^*_{e}}{c_{e}}\right).
\]
It is easy to see that $\lambda^*+d\cdot c$ is another optimal solution to the linear program introduced at the beginning of the proof (the feasibility is due to \eqref{eq:LP_feasibility_after_move} and the definition of $d$, while the optimality is due to \eqref{eq:LP_optimality_after_move}). Furthermore, by the definition of $d$, the support size of $\lambda^*+d\cdot c$ is smaller than the support size of $\lambda^*$ by at least 1. This contradicts the definition of $\lambda^*$. Therefore, the indicator vectors $1_{e}$, for $e\in \supp(\lambda^*)$, must be linearly independent.
\end{proof}

We note that a weaker version of the above lemma, where the linear independence condition in the first item is replaced with $|\supp(\lambda)|\leq |V|$, serves as a crucial step in the proof of Lemma~\ref{lem:hypergraph_birthday_informal} by \cite{chen2024distribution}. As discussed in Section~\ref{subsec:overview_graph}, for applications in testing square-freeness, we need to extract this step from \cite[Proof of Lemma 2.2]{chen2024distribution} and strengthen it slightly (from $|\supp(\lambda)|\leq |V|$ to linear independence).

\subsubsection{Birthday Paradox in Hypergraphs}

Another component of \cite{chen2024distribution}'s proof of Lemma~\ref{lem:hypergraph_birthday_informal} is a ``classical birthday paradox'' on grids (somewhat similar to Lemma~\ref{lem:birthday_minus_max}). For our applications, we also need to slightly strengthen this component (\cite[Lemma 6.5]{chen2024distribution}), as stated in the next lemma. 

\begin{lemma}\label{lem:classical_birthday}
Let $k,n$ be positive integers. Suppose $q\in [0,1]^{n+1}$ is a sub-probability vector such that $\sum_{i=1}^{n+1}q_{i}= 1/k$. Define a probability vector $\widetilde{q}\in [0,1]^{(n+1)\times k}$ such that $\widetilde{q}_{ij}=q_{i}$ for each $(i,j)\in [n+1]\times [k]$. For any integer $m$ with
\[
m\geq 18k\left(\sum_{i=1}^{n}q_{i}^{k}\right)^{-1/k},
\]
in $m$ independent samples from $\widetilde{q}$, with probability at least $4/5$ there exists $i\in [n]$ such that the pairs $(i,j)$, for $j\in [k]$, are all sampled.
\end{lemma}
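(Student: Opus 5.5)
The plan is to reduce to independent column-wise experiments and then apply a first-moment bound combined with negative dependence. Write $t:=\lfloor m/k\rfloor$ and split the first $kt$ samples into $k$ consecutive batches $B_{1},\dots,B_{k}$, each of size $t$. For $i\in[n]$ and $j\in[k]$, let $I_{ij}$ be the indicator that the cell $(i,j)$ is hit by some sample of batch $B_{j}$. If $\prod_{j}I_{ij}=1$ for some $i\in[n]$, then all of $(i,1),\dots,(i,k)$ are sampled, which is the event we want.

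Each sample lands in $(i,j)$ with probability $q_{i}$, so
\[
\Pr{I_{ij}=1}=1-(1-q_{i})^{t}\geq 1-e^{-y_{i}},\qquad y_{i}:=q_{i}t .
\]
The batches are independent, so $p_{i}:=\Pr{\prod_{j}I_{ij}=1}\geq (1-e^{-y_{i}})^{k}$.

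Within a single batch, the occupancy indicators of distinct cells are negatively associated, by the standard balls-into-bins fact. A union of independent negatively associated families is again negatively associated. Since the increasing functions $\prod_{j}I_{ij}$, for $i\in[n]$, depend on disjoint sets of variables, they are negatively associated as well. This gives
\[
\Pr{\text{no row }i\in[n]\text{ is complete}}\leq\prod_{i=1}^{n}(1-p_{i})\leq \exp\Bigl(-\sum_{i=1}^{n}p_{i}\Bigr).
\]
It therefore suffices to show either that some single $p_{i}\geq 4/5$, or that $\sum_{i}p_{i}\geq \ln 5$.

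The hypothesis gives $t\geq 17\bigl(\sum_{i\le n}q_{i}^{k}\bigr)^{-1/k}$ (absorbing the floor into the constant), i.e. $\sum_{i\le n}y_{i}^{k}\geq 17^{k}$. I would split into cases.

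\emph{Case 1: some $y_{i}\geq 2k$.} Then $p_{i}\geq (1-e^{-2k})^{k}\geq 1-ke^{-2k}\geq 1-e^{-2}>4/5$, and we are done.

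\emph{Case 2: all $y_{i}<2k$.} Here I would use $1-e^{-y}\geq y/(1+y)$, so that $p_{i}\geq \bigl(y_{i}/(1+y_{i})\bigr)^{k}$. Rows with $y_{i}\leq 1$ satisfy $p_{i}\geq (y_{i}/2)^{k}$. If these rows carry at least half of $\sum_i y_{i}^{k}$, then $\sum_{i}p_{i}\geq 17^{k}/(2\cdot 2^{k})\geq \ln 5$.

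The remaining subcase, in which rows with $1<y_{i}<2k$ carry most of $\sum_i y_{i}^{k}$, is the step I expect to be the main obstacle. The bound $p_{i}\geq (y_{i}/(1+y_{i}))^{k}$ loses up to a factor $e^{-k/y_{i}}$ relative to $1$, and the number of such rows must be controlled. The extra constraint I would bring in is $\sum_{i}y_{i}\leq t/k$, which comes from $\sum_{i} q_{i}\leq 1/k$.

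What I would try first is a dyadic decomposition of the range $(1,2k)$. Within the dyadic level $y\sim 2^{\ell}$, the number of rows is at least (the level's share of $\sum_i y_{i}^{k}$)$/2^{(\ell+1)k}$, and each such row has $p_{i}\geq e^{-k/2^{\ell}}$. The aim is to show that some level with $2^{\ell}\gtrsim k$ is populated, or otherwise that the lower levels already supply $\sum_i p_{i}\geq\ln 5$ thanks to the slack factor $17^{k}$. If this balancing fails, the fallback is to choose the batch sizes non-uniformly, or to replace $2k$ by a larger threshold such as $k\ln k$ in Case 1.
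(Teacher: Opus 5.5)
\textbf{There is a genuine gap, and it is in your first step, not in the subcase you flagged.} Your negative-association argument across rows is valid. The trouble is the batching. When column $j$ may only use batch $B_j$, the expected number of hits of cell $(i,j)$ drops from $mq_i$ to about $mq_i/k$. The hypothesis only controls $\sum_{i\le n}(mq_i)^k\ge(18k)^k$. After batching you are left with $\sum_i y_i^k\ge 17^k$, with no factor of $k$. A row-completion probability of the form $(1-e^{-y_i})^k$ with $y_i=O(1)$ then decays exponentially in $k$.

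This is a failure of the event you analyze, not just of your estimates. Take $n=1$, $q_1=1/k$, $q_2=0$ and $m=18k^2$, which meets the hypothesis with equality. The real experiment is a coupon collector with $k$ coupons and $18k^2$ draws, so it succeeds with probability at least $1-ke^{-18k}$. In your batched experiment, however, $t=18k$ and
\[
\Pr{I_{11}=\cdots=I_{1k}=1}=\bigl(1-(1-1/k)^{18k}\bigr)^{k}\le \exp\bigl(-k(1-1/k)^{18}e^{-18}\bigr)\longrightarrow 0\quad (k\to\infty).
\]
So no refinement of Case 2 can close this, whether dyadic levels or a moved Case~1 threshold. Non-uniform batch sizes only make it worse: $x\mapsto\log(1-e^{-x})$ is concave, so equal batches maximize $\prod_j\bigl(1-(1-q_i)^{t_j}\bigr)$.

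\textbf{The missing idea is to let every column see all $m$ samples.} Negative association cannot supply the lower bound you would then need. Within a row the occupancy events are negatively correlated, which is the wrong direction for lower-bounding the probability that the row is complete.

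The paper instead Poissonizes. It replaces the multinomial counts by independent $\mathrm{Poisson}(mq_i)$ counts, which are $(1,\tfrac14)$-dominated by the true counts because $\Pr{\mathrm{Poisson}(m)\le m}\ge\tfrac14$. Hence the downward-closed failure event has probability at most $4\prod_{i\le n}\bigl(1-(1-e^{-mq_i})^k\bigr)$.

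With the full rate $mq_i$, the crude bound $1-e^{-x}\ge x/(6k)$ on $[0,3k]$ loses a factor $6k$ per coordinate. That loss is exactly what the $18k$ in the hypothesis pays for, since $\sum_i (mq_i/(6k))^k\ge 3^k$. If some row has $mq_i\ge 3k$, that row alone is incomplete with probability at most $ke^{-3k}\le\tfrac1{20}$, so that case is handled directly.
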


\begin{proof}
Let $x\in \bN^{(n+1)\times k}$ be the empirical count vector of the $m$ samples. Let $y\in \bN^{(n+1)\times k}$ be a random vector such that the coordinates $y_{ij}$, for $(i,j)\in [n+1]\times [k]$, are mutually independent, and each coordinate $y_{ij}$ follows the Poisson distribution with mean $mq_{i}$. A basic property of Poisson distribution is that for any nonnegative integer $t$, the distribution of $y$ conditioned on the event $\left\{\sum_{i=1}^{n+1}\sum_{j=1}^{k}y_{ij}=t\right\}$ is exactly the distribution of the empirical count vector of $t$ independent samples from $\widetilde{q}$. Furthermore, since $\sum_{i=1}^{n+1}\sum_{j=1}^{k}y_{ij}$ follows the Poisson distribution with mean
\[
\sum_{i=1}^{n+1}\sum_{j=1}^{k}mq_{i}=m,
\]
it is easy to see (for example by the Berry Esseen theorem) that
\[
\Prs{y}{\sum_{i=1}^{n+1}\sum_{j=1}^{k}y_{ij}\leq m}\geq \frac{1}{4}.
\]
It is then straightforward to deduce that the distribution of $y$ is $(1,\frac{1}{4})$-dominated by the distribution of $x$, by constructing a coupling as per Definition~\ref{def:stochastic_domination}.

Let $S\subseteq \bN^{(n+1)\times k}$ be the downward-closed subset defined by
\[
S=\left\{z\in \bN^{(n+1)\times k}\,\middle|\,\forall i\in [n],\,\exists j\in [k]\text{ such that }z_{ij}=0\right\}.
\]
By independence between the coordinates of $y$, we have
\[
\Prs{y}{y\in S}=\prod_{i=1}^{n}\Pr{\big.\exists j\in [k]\text{ such that }y_{ij}=0}=\prod_{i=1}^{n}\left(1-\big(1-\exp(-mq_{i})\big)^{k}\right).
\]
If there exists $i\in [n]$ such that $mq_{i}\geq 3k$, then
\[
\Prs{y}{y\in S}\leq 1-\big(1-\exp(-3k)\big)^{k}\leq k\exp(-3k)\leq \frac{1}{20}.
\]
If $mq_{i}\leq 3k$ for all $i\in [n]$, then $\exp(-mq_{i})\leq 1-mq_{i}/(6k)$ for all $i$ and we have
\[
\Prs{y}{y\in S}\leq \prod_{i=1}^{n}\left(1-\left(\frac{mq_{i}}{6k}\right)^{k}\right)\leq \exp\left(-\sum_{i=1}^{n}\left(\frac{mq_{i}}{6k}\right)^{k}\right)\leq \exp(-3^{k})\leq \frac{1}{20}.
\]
Therefore, in either case we have $\Pr{y\in S}\leq 1/20$. It then follows from Proposition~\ref{prop:stochastic_domination} and the conclusion of the last paragraph that
\[
\Prs{x}{x\in S}\leq 4\cdot \Prs{y}{y\in S}\leq \frac{1}{5},
\]
as desired.
\end{proof}

We are now ready to prove Lemma~\ref{lem:hypergraph_birthday_informal}, the formal version of which is given below.

\begin{lemma}[{\cite[Lemma 2.2]{chen2024distribution}}]\label{lem:hypergraph_birthday}
Let $\varepsilon\in (0,1)$ be a constant, and let $G=(V,E)$ be a $k$-uniform hypergraph. Suppose $p\in[0,1]^{V}$ is a sub-probability vector such that for any vertex cover $C$ of $G$ we have $\sum_{v\in C}p_{v}\geq \varepsilon$. Then for any integer $m$ with
\[
m\geq \frac{18k^{2}|V|^{(k-1)/k}}{\varepsilon},
\]
in $m$ independent samples from $p$, with probability at least $4/5$ there exists an edge in $E$ such that all vertices of the edge are sampled.
\end{lemma}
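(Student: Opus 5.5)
We combine Lemma~\ref{lem:LP_weak_duality} with Lemma~\ref{lem:classical_birthday}. Lemma~\ref{lem:LP_weak_duality} gives a sub-probability vector $\lambda\in[0,1]^{E}$ with linearly independent supports and $\sum_{e}\lambda_e 1_e\preceq p$. The total mass is $\Lambda_{0}:=\sum_e\lambda_e\in[\varepsilon/k,1/k]$. Linear independence of the indicator vectors in $\bR^{V}$ forces $N:=|\supp(\lambda)|\le |V|$. Enumerate $\supp(\lambda)=\{e_1,\dots,e_N\}$ and fix an ordering $e_i=\{v_{i1},\dots,v_{ik}\}$ of the vertices of each edge.

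We then build the birthday instance. Set $q_i=\lambda_{e_i}$ for $i\le N$ and $q_{N+1}=1/k-\Lambda_0\ge 0$, so that $\sum_i q_i=1/k$. Pad with zero entries so that Lemma~\ref{lem:classical_birthday} applies with $n:=N$. We couple a sample from $p$ with a sample from $\widetilde q$ as follows. Since $\sum_i\lambda_{e_i}1_{e_i}\preceq p$, we can decompose $p=\sum_{i\le N}\sum_{j\le k}\lambda_{e_i}\delta_{v_{ij}}+r$ with $r\succeq 0$ a sub-probability residual. A sample from $p$ is then produced by first drawing a cell $(i,j)\in[N+1]\times[k]$ or \nil, and outputting $v_{ij}$ when $i\le N$.

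There is a small mismatch: the cells with $i\le N$ carry mass exactly $\lambda_{e_i}$, but row $N+1$ under $\widetilde q$ has mass $k\,q_{N+1}=1-k\Lambda_0$, while the residual $r$ has mass $1-\sum p\cdot$ stuff. The cleanest fix is to regard row $N+1$ together with the residual and \nil\ as ``garbage'' outcomes. Both processes give each row $i\le N$ cell the same probability, so we can couple them so that the events ``cell $(i,j)$ is sampled'' for $i\le N$ coincide exactly. Hence if all $(i,j)$, $j\in[k]$, are sampled for some $i\le N$, then all vertices of the edge $e_i\in E$ are sampled from $p$.

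It remains to check the sample bound. By the power-mean (Hölder) inequality, $\sum_{i\le N}q_i^k\ge N^{1-k}\bigl(\sum_{i\le N} q_i\bigr)^k\ge N^{1-k}(\varepsilon/k)^k$. Therefore
\[
18k\Bigl(\sum_{i\le N}q_i^k\Bigr)^{-1/k}\le 18k\cdot N^{(k-1)/k}\cdot\frac{k}{\varepsilon}\le \frac{18k^2|V|^{(k-1)/k}}{\varepsilon}\le m.
\]
Lemma~\ref{lem:classical_birthday} then yields success probability at least $4/5$.

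The only delicate point is the coupling step, namely matching the garbage masses. It is harmless because Lemma~\ref{lem:classical_birthday} only concerns rows $1,\dots,N$ and the per-cell probabilities $q_i$ of those rows agree under both processes. The bound $N\le|V|$ is exactly where the linear independence from Lemma~\ref{lem:LP_weak_duality} enters.
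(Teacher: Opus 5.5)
Your proposal is correct and follows the paper's proof essentially step for step. Both use Lemma~\ref{lem:LP_weak_duality} to get $\lambda$ with $N=|\supp(\lambda)|\le |V|$, add the extra row $q_{N+1}=1/k-\Lambda_0$, apply the power-mean bound $\sum_{i\le N}q_i^k\ge N^{1-k}\bigl(\sum_{i\le N}q_i\bigr)^k$, and then transfer Lemma~\ref{lem:classical_birthday} to $p$ via the domination $\sum_i\lambda_{e_i}1_{e_i}\preceq p$. The only difference is that you write out the coupling the paper leaves implicit; it works because the garbage masses match exactly (row $N+1$ has mass $1-k\Lambda_0$, as do the residual $r$ and the \nil\ outcome together), and the ``padding with zero entries'' is not needed.
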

\begin{proof}
We first apply Lemma~\ref{lem:LP_weak_duality} to obtain a sub-probability vector $\lambda\in [0,1]^{E}$ that satisfies the three conditions stated in Lemma~\ref{lem:LP_weak_duality}. By the first condition, the indicator vectors $1_{e}$, for $e\in \supp(\lambda)$, are linearly independent in $\bR^{V}$. In particular, this means $|\supp(\lambda)|\leq |V|$. We denote the elements of $\supp(\lambda)$ by $e_{1},\dots,e_{n}$. By the third condition in Lemma~\ref{lem:LP_weak_duality}, we know that
\[
\frac{\varepsilon}{k}\leq \sum_{i=1}^{n}\lambda_{e_{i}}\leq \frac{1}{k}.
\]

There obviously exists a map $\varphi:[n+1]\times [k]\rightarrow V\cup\{\nil\}$ that maps the set $\{i\}\times [k]$ bijectively to the vertices of $e_{i}$ for each $i\in [n]$, and maps the set $\{n+1\}\times [k]$ to $\{\nil\}$. Consider the probability vector $\widetilde{q}\in [0,1]^{(n+1)\times k}$ defined by
\begin{alignat*}{2}
\widetilde{q}_{ij}=q_{i}&=\lambda_{e_{i}} \quad&&\text{for each }(i,j)\in [n]\times [k],\\
\widetilde{q}_{n+1,j}=q_{n+1}&=\frac{1}{k}-\sum_{i=1}^{n}\lambda_{e_{i}}\quad&&\text{for each }j\in [k].
\end{alignat*}
It follows from the second condition in Lemma~\ref{lem:LP_weak_duality} that
\begin{equation}\label{eq:hypergraph_birthday_domination}
\sum_{(i,j)\in \varphi^{-1}(v)}\widetilde{q}_{ij}=\sum_{i=1}^{n}\lambda_{e_{i}}\cdot 1_{e_{i}}(v)\leq p_{v},\quad\text{for each }v\in V.
\end{equation}

Since
\[
m\geq \frac{18k^{2}|V|^{(k-1)/k}}{\varepsilon}\geq \frac{18kn^{(k-1)/k}}{\sum_{i=1}^{n}q_{i}}\geq \frac{18k}{\left(\sum_{i=1}^{n}q_{i}^{k}\right)^{1/k}},
\]
Lemma~\ref{lem:classical_birthday} implies that in $m$ independent samples from $\widetilde{q}$, with probability at least $4/5$ there exists $i\in [n]$ such that the pairs $(i,j)$, for $j\in [k]$, are all sampled. Due to the stochastic domination given by  \eqref{eq:hypergraph_birthday_domination}, it follows that in $m$ independent samples from $p$, with probability at least $4/5$ there exists an edge in $\{e_{1},\dots,e_{n}\}$ such that all vertices of the edge are sampled.
\end{proof}

Although Lemma~\ref{lem:hypergraph_birthday} is not used in the analysis of the square-freeness tester (only its predecessors Lemmas~\ref{lem:LP_weak_duality} and~\ref{lem:classical_birthday} are), it immediately implies Theorem~\ref{thm:subgraph_freeness_upper_bound} and in particular the triangle-freeness upper bound $\sam\big(\cG^{\textup{tri}}_{n},\varepsilon\big)\leq O(n^{4/3}/\varepsilon)$.

\subsection{The Case Analysis}\label{subsec:case_analysis}

In this section, we prove the upper bound $\sam\big(\cG^{\textup{squ}}_{n},\varepsilon\big)\leq O(n^{9/8}/\varepsilon)$. As discussed in Section~\ref{subsec:overview_graph}, the core of the proof is a case analysis where we divide into the ``dilute'' case and the ``concentrated'' case. In the following, we first formalize the notion of diluteness; then the two cases will be handled in Section~\ref{subsubsec:dilute} and~\ref{subsubsec:concentrated}, respectively.

\subsubsection{From Edges to Squares to Edges}\label{subsubsec:witness_and_descendent}

The following two definitions will play a key role in the formalization of diluteness.

\begin{definition}\label{def:witness}
Suppose $p\in [0,1]^{\binom{[n]}{2}}$ is a sub-probability vector and $\varepsilon\in (0,1)$ is a constant. A sub-probability vector $q\in [0,1]^{\Square(n)}$ that satisfies the following conditions is called an \emph{$\varepsilon$-square-witness} of $p$:
\begin{enumerate}[label=(\arabic*)]
\item The indicator vectors $1_{\zeta}\in \{0,1\}^{\binom{[n]}{2}}$, for $\zeta\in \supp(q)$, are linearly independent in $\bR^{\binom{[n]}{2}}$.
\item We have $\sum_{\zeta\in \Square(n)}q_{\zeta}\cdot 1_{\zeta}\preceq  p$.
\item We have $\sum_{\zeta\in \Square(n)}q_{\zeta}\geq \varepsilon/4$.  
\end{enumerate}
\end{definition}

\begin{definition}\label{def:descendant}
Suppose $q\in [0,1]^{\Square(n)}$ is a sub-probability vector. A sub-probability vector $p'\in [0,1]^{\binom{[n]}{2}}$ is called a \emph{descendant} of $q$ if there exists an injective map $\varphi:\supp(q)\rightarrow \binom{[n]}{2}$ such that the following holds:
\begin{enumerate}[label=(\arabic*)]
\item For each square $\zeta\in \supp(q)$, the image $\varphi(\zeta)$ is an edge contained in $\zeta$.
\item For each square $\zeta\in \supp(q)$, we have $p'_{\varphi(\zeta)}=q_{\zeta}$.
\item For each edge $\{a,b\}$ that is not an image of $\varphi$, we have $p'_{ab}=0$.
\end{enumerate}
\end{definition}

The next lemma ensures that the objects in Definitions~\ref{def:witness} and~\ref{def:descendant} are obtainable if we start with a sub-probability vector that is $\varepsilon$-far from square-free.

\begin{lemma}\label{lem:existence_witness_descendant}
Let $\varepsilon\in (0,1)$ be a constant, and suppose $p\in [0,1]^{\binom{[n]}{2}}$ is a sub-probability vector that is $\varepsilon$-far from square-free. Then there exist sub-probability vectors $q\in [0,1]^{\Square(n)}$ and $p'\in [0,1]^{\binom{[n]}{2}}$ such that $q$ is an $\varepsilon$-square-witness of $p$ and $p'$ is a descendant of $q$.
\end{lemma}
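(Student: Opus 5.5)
Obtaining $q$ is a direct application of Lemma~\ref{lem:LP_weak_duality}; building $p'$ uses the linear independence from condition (1) of Definition~\ref{def:witness}.

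\textbf{Step 1: the witness $q$.} Let $G$ be the $4$-uniform hypergraph on vertex set $\binom{[n]}{2}$ with edge set $\Square(n)$. A vertex cover $C$ of $G$ is a set of edges whose complement $\binom{[n]}{2}\setminus C$ contains no square. So $E:=\binom{[n]}{2}\setminus C$ is square-free, and since $p$ is $\varepsilon$-far from square-free,
\[
\sum_{e\in C}p_{e}\geq\varepsilon .
\]
Lemma~\ref{lem:LP_weak_duality} with $k=4$ then gives a sub-probability vector $q=\lambda\in[0,1]^{\Square(n)}$ satisfying:
\begin{itemize}
\item[(a)] the vectors $1_{\zeta}$, $\zeta\in\supp(q)$, are linearly independent;
\item[(b)] $\sum_{\zeta} q_{\zeta}1_{\zeta}\preceq p$;
\item[(c)] $\sum_{\zeta} q_{\zeta}\geq\varepsilon/4$.
\end{itemize}
These are exactly the three conditions of Definition~\ref{def:witness}, so $q$ is an $\varepsilon$-square-witness of $p$.

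\textbf{Step 2: an injective edge choice.} We need an injective map $\varphi:\supp(q)\to\binom{[n]}{2}$ with $\varphi(\zeta)\in\zeta$. This is a system of distinct representatives for the family of $4$-sets $\{\zeta:\zeta\in\supp(q)\}$, so we apply Hall's theorem. Take any subfamily $\mathcal{F}\subseteq\supp(q)$, and let $U=\bigcup_{\zeta\in\mathcal{F}}\zeta$ be the set of edges it covers. The vectors $1_{\zeta}$ for $\zeta\in\mathcal{F}$ are linearly independent by (a), and they all lie in the coordinate subspace $\bR^{U}$, which has dimension $|U|$. Hence $|\mathcal{F}|\leq|U|$, which is Hall's condition. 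This gives the desired $\varphi$.

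\textbf{Step 3: the descendant $p'$.} Define $p'_{\varphi(\zeta)}=q_{\zeta}$ for each $\zeta\in\supp(q)$, and $p'_{e}=0$ for every other edge $e$. This is well defined because $\varphi$ is injective. It is a sub-probability vector because its total mass is $\sum_{\zeta}q_{\zeta}\leq 1$. It satisfies the three conditions of Definition~\ref{def:descendant} by construction, so $p'$ is a descendant of $q$.

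The only nontrivial step is Step 2, and it reduces to the observation that linear independence of the indicator vectors implies Hall's condition.
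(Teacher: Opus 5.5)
Your proposal is correct and matches the paper's proof step for step. The witness comes from applying Lemma~\ref{lem:LP_weak_duality} to the $4$-uniform square hypergraph, and $p'$ comes from Hall's theorem via the same dimension count (linearly independent indicator vectors of a subfamily lie in a space of dimension $|U|$, so $|\mathcal{F}|\le|U|$); your explicit check that complements of vertex covers are square-free is a detail the paper leaves implicit.
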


\begin{proof}
Consider the $4$-uniform hypergraph whose vertex set is $\binom{[n]}{2}$ and whose edge set is $\Square(n)$. Since $p\in [0,1]^{\binom{[n]}{2}}$ is $\varepsilon$-far from square-free, the condition in Lemma~\ref{lem:LP_weak_duality} is satisfied. The existence of an $\varepsilon$-square-witness $q\in [0,1]^{\Square(n)}$ therefore follows directly from the conclusion of Lemma~\ref{lem:LP_weak_duality}.

We now claim that if $q\in [0,1]^{\Square(n)}$ is a sub-probability vector whose indicator vectors $(1_{\zeta})_{\zeta\in \supp(q)}$ are linearly independent in $\bR^{\binom{[n]}{2}}$, then there exists a sub-probability vector $p'\in [0,1]^{\binom{[n]}{2}}$ that is a descendant of $q$.

By Definition~\ref{def:descendant}, it suffices to construct an injective map
\(
\varphi:\supp(q)\rightarrow \binom{[n]}{2}
\)
such that for every square $\zeta\in\supp(q)$ the edge $\varphi(\zeta)$ belongs to $\zeta$. 

To this end, consider the bipartite incidence relation between $\supp(q)$ and $\binom{[n]}{2}$, where a square $\zeta\in\supp(q)$ is adjacent to an edge $e\in\binom{[n]}{2}$ whenever $e\in\zeta$. By Hall's matching theorem, it suffices to show that for every subset $Z\subseteq \supp(q)$, the set of edges covered by $Z$ has size at least $|Z|$. Let
\(
E(Z):=\bigcup_{\zeta\in Z}\zeta
\)
denote the set of edges covered by $Z$.

Suppose for contradiction that $|E(Z)|\le |Z|-1$. Since each indicator vector satisfies
\(
1_{\zeta}=\sum_{e\in\zeta}1_{\{e\}}
\), it follows that
\[
\mathrm{span}\big((1_{\zeta})_{\zeta\in Z}\big)
\subseteq
\mathrm{span}\big((1_{\{e\}})_{e\in E(Z)}\big).
\]
Hence
\[
\dim \mathrm{span}\big((1_{\zeta})_{\zeta\in Z}\big)
\le
\dim \mathrm{span}\big((1_{\{e\}})_{e\in E(Z)}\big)
\le |E(Z)|
\le |Z|-1,
\]
which contradicts the assumed linear independence of the vectors
$(1_{\zeta})_{\zeta\in\supp(q)}$.
\end{proof}

As stated in Lemma~\ref{lem:existence_witness_descendant}, our intention is to start with an arbitrary sub-probability vector $p\in [0,1]^{\binom{[n]}{2}}$ that is $\varepsilon$-far from square-free, first transition to a ``witness'' vector in the space $[0,1]^{\Square(n)}$, and then transition back to a ``descendant'' vector in the original space $[0,1]^{\binom{[n]}{2}}$. The reason for transitioning from edge distributions to square distributions is not hard to understand: similarly to the analysis of the triangle-freeness tester, the ``witness'' vector acts as a ``fractional matching'' on which Lemma~\ref{lem:classical_birthday} can be applied. 

\begin{lemma}\label{lem:concentrated_case_prelim}
Let $\varepsilon\in (0,1)$ be a constant, and suppose $p\in [0,1]^{\binom{[n]}{2}}$ and $q\in [0,1]^{\Square(n)}$ are sub-probability vectors such that $q$ is an $\varepsilon$-square-witness of $p$. For any integer $m$ with
\[
m\geq 72\left(\sum_{\zeta\in \Square(n)}q_{\zeta}^{4}\right)^{-1/4},
\]
in $m$ independent samples from $p$, with probability at least $4/5$ there exists a square $\zeta\in \Square(n)$ such that all edges of $\zeta$ are sampled.
\end{lemma}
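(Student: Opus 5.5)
The plan is to mimic the proof of Lemma~\ref{lem:hypergraph_birthday}, with $k=4$, but to stop before the final step where the sum $\sum_i q_i^4$ is lower bounded via $|\supp(q)|\le |V|$. The target is to apply Lemma~\ref{lem:classical_birthday} directly with the given quantity $\sum_{\zeta}q_\zeta^4$.

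Concretely, I enumerate $\supp(q)=\{\zeta_1,\dots,\zeta_N\}$. The $\zeta_i$ are distinct squares, so the indicator vectors are distinct; linear independence is not even needed here. I then fix a map $\varphi:[N+1]\times[4]\to\binom{[n]}{2}\cup\{\nil\}$ sending $\{i\}\times[4]$ bijectively onto the four edges of $\zeta_i$, and sending $\{N+1\}\times[4]$ to $\nil$. Next I set $\widetilde q_{ij}=q_{\zeta_i}$ for $i\le N$. The leftover mass is $q_{N+1}=\tfrac14-\sum_i q_{\zeta_i}$, and this requires $\sum_\zeta q_\zeta\le 1/4$. That bound follows from condition (2) of Definition~\ref{def:witness}: summing coordinates gives $4\sum_\zeta q_\zeta\le\sum_e p_e\le 1$. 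So $\widetilde q$ is a probability vector on $(N+1)\times[4]$ of the required form with $\sum_{i=1}^{N+1}q_i=1/4$. Again by condition (2), for each edge $e$ the pushforward mass $\sum_{(i,j)\in\varphi^{-1}(e)}\widetilde q_{ij}=\sum_\zeta q_\zeta 1_\zeta(e)$ is at most $p_e$.

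Then Lemma~\ref{lem:classical_birthday} with $k=4$ and $n=N$ applies, since $m\ge 72(\sum_{i\le N}q_{\zeta_i}^4)^{-1/4}=18\cdot 4\,(\cdot)^{-1/4}$. It gives that with probability at least $4/5$, $m$ samples from $\widetilde q$ hit all four cells $(i,1),\dots,(i,4)$ for some $i\le N$.

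The remaining step, which needs the most care though it is routine, is transferring this to samples from $p$. The key fact is that the pushforward $\varphi_*\widetilde q$, restricted to edges, is coordinatewise at most $p$. So one can couple a sample from $p$ with a sample $Y$ from $\widetilde q$ as follows: with probability $\varphi_*\widetilde q(e)$ output $e$ and let $Y$ be a uniformly chosen cell in $\varphi^{-1}(e)$ weighted by $\widetilde q$; otherwise let $Y$ be a $\nil$-cell, i.e.\ a cell of row $N+1$ (the mass $q_{N+1}$ per cell plus the slack absorbs this), while the $p$-sample takes its remaining mass. Under this coupling, whenever $Y$ lands in a cell of a row $i\le N$, the $p$-sample equals $\varphi(Y)$. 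Hence the event ``all cells of some row $i\le N$ are hit'' implies that all four edges of $\zeta_i$ are sampled from $p$, giving probability at least $4/5$.
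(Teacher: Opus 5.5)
Your proposal is correct and follows the paper's route: you build $\widetilde q$ from $\supp(q)$, put the leftover mass $\tfrac14-\sum_\zeta q_\zeta$ on a $\nil$ row, invoke Lemma~\ref{lem:classical_birthday} with $k=4$ (so $18k=72$), and transfer to samples from $p$ using $\sum_\zeta q_\zeta 1_\zeta\preceq p$, as in the proof of Lemma~\ref{lem:hypergraph_birthday}. Your explicit coupling correctly spells out the stochastic-domination step the paper leaves implicit, and you are right that neither linear independence nor the H\"older step is needed here.
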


\begin{proof}
The conclusion follows easily from Lemma~\ref{lem:classical_birthday} and stochastic domination, similarly to the proof of Lemma~\ref{lem:hypergraph_birthday}.
\end{proof}

However, as will become clear in Section~\ref{subsubsec:case_analysis}, we will only apply Lemma~\ref{lem:concentrated_case_prelim} in the ``concentrated'' case; in the ``dilute'' case we will have to transition back to the ``descendant'' vector in $[0,1]^{\binom{[n]}{2}}$.

\subsubsection{The Diluteness Notion}\label{subsubsec:case_analysis}

The notion of ``diluteness'' hinges upon the following three natural definitions. First of all, a distribution over $\binom{[n]}{2}$ naturally induces a distribution over the vertex set $[n]$:

\begin{definition}
Let $p\in [0,1]^{\binom{[n]}{2}}$ be a sub-probability vector. We define a sub-probability mass function $\deg_{p}:[n]\rightarrow [0,1]$ by letting
\[
\deg_{p}(a)=\frac{1}{2}\sum_{b\in [n]\setminus\{a\}}p_{ab}\quad\text{for all }a\in [n].
\]
\end{definition}

The function $\deg_{p}(\cdot)$ is a sub-probability mass function because of the identity \[\sum_{a\in [n]}\deg_{p}(a)=\sum_{\{a,b\}\in \binom{[n]}{2}}p_{ab}.\] 

A distribution over $\binom{[n]}{2}$ also induces a distribution over the collection $\Wedge(n)$ (see Section~\ref{subsec:general_notations}), since the wedges $(\{a,c\},b)$ correspond to \emph{length-2 walks} (from $a$ to $b$ to $c$) on $[n]$.

\begin{definition}\label{def:walk}
Let $p\in [0,1]^{\binom{[n]}{2}}$ be a sub-probability vector, and let $B\subseteq [n]$ be a subset of vertices. We define a sub-probability mass function $\Walk[p,B]:\mathsf{Wedge}(n)\rightarrow [0,1]$ as follows. For all wedges $\big(\{a,c\},b\big)\in \Wedge(n)$, we let
\[
\Walk[p,B]\big(\{a,c\},b\big):=
\begin{cases}
p_{ab}p_{bc}/(2\deg_{p}(b)), &\text{if }b\in B,\\
0, &\text{if }b\not\in B,
\end{cases}
\]
with the convention that expressions of the form $0/0$ are taken to be $0$. When $B=[n]$, we abbreviate $\Walk[p]:=\Walk[p,[n]]$
\end{definition}

Note that for any sub-probability vector $p\in [0,1]^{\binom{[n]}{2}}$ we have
\begin{equation}\label{eq:total_mass_walk}
\sum_{(\{a,c\},b)\in \Wedge(n)}\Walk[p,B]\big(\{a,c\},b\big)\leq \sum_{b\in B}\frac{1}{2\deg_{p}(b)}\cdot\frac{1}{2}\left(\sum_{a\in [n]\setminus\{b\}}p_{ab}\right)^{2}=\sum_{b\in B}\deg_{p}(b)\leq 1.
\end{equation}

By taking the marginal on the start and end vertices of the distribution on length-2 walks, we naturally obtain a distribution over ``hops.'' 

\begin{definition}\label{def:hop}
Let $p\in [0,1]^{\binom{[n]}{2}}$ be a sub-probability vector, and let $B\subseteq [n]$ be a subset of vertices. We define two functions $\Hop[p,B],\HopD[p,B]:\binom{[n]}{2}\rightarrow [0,1]$ as follows. For all $\{a,c\}\in \binom{[n]}{2}$, we let 
\begin{align}
\Hop[p,B](a,c)&:=\sum_{b\in [n]\setminus \{a,c\}}\Walk[p,B]\big(\{a,c\},b\big),\nonumber\\
\HopD[p,B](a,c)&:=\Hop[p,B](a,c)-\max_{b\in [n]\setminus \{a,c\}}\Walk[p,B]\big(\{a,c\},b\big).\label{eq:def_HopD}
\end{align}
When $B=[n]$, we abbreviate $\Hop[p]:=\Hop[p,[n]]$ and $\HopD[p]:=\HopD[p,[n]]$.
\end{definition}

We are now ready to define the notion of diluteness: given a sub-probability vector $p\in [0,1]^{\binom{[n]}{2}}$ that is far from square-free, we classify it as ``dilute'' if for some \emph{descendant} $p'$ of some \emph{$\varepsilon$-square-witness} $q$ of $p$, the sub-probability mass function $\HopD[p']$ has sufficiently large total mass. Note that the following lemma will be applied to the descendant $p'$ instead of the original vector $p$.

\begin{lemma}[Dilute case lemma]\label{lem:dilute_case}
Let $\varepsilon\in (0,1)$ be a constant. Suppose $p\in [0,1]^{\binom{[n]}{2}}$ is a sub-probability vector such that
\begin{equation}\label{eq:dilute_case_assumption}
\sum_{\{a,c\}\in \binom{[n]}{2}}\HopD[p](a,c)\geq \varepsilon.
\end{equation}
Then, in $m=10^{3}\left\lceil\varepsilon^{-1}n\log(12n)\right\rceil$ independent samples drawn from $p$, the probability is at least $2/3$ that there exist four sampled pairs $\{a,b\}$,~$\{b,c\}$,~$\{c,d\}$ and $\{d,a\}$ forming a square. 
\end{lemma}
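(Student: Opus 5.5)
The plan is to reduce the dilute case to Lemma~\ref{lem:birthday_minus_max}, applied to the grid whose rows are the pairs $\{a,c\}\in\binom{[n]}{2}$ (so about $n^{2}/2$ rows) and whose columns are the middle vertices $b\in[n]$. The row-cell weights are $\Walk[p](\{a,c\},b)$. By \eqref{eq:total_mass_walk}, $\Walk[p]$ is a sub-probability vector. Hypothesis \eqref{eq:dilute_case_assumption} is exactly the condition $\sum_{\text{rows}}(\text{row sum}-\text{row max})\geq\varepsilon$ of Lemma~\ref{lem:birthday_minus_max}, by the definition \eqref{eq:def_HopD} of $\HopD$.

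Suppose we can draw about $\varepsilon^{-1}\sqrt{n^{2}}\log(1/\delta)=O(\varepsilon^{-1}n\log(1/\delta))$ independent samples from $\Walk[p]$. Then with probability $1-\delta$ we get two wedges $(\{a,c\},b)$ and $(\{a,c\},d)$ with $b\neq d$. Their four edges $\{a,b\},\{b,c\},\{c,d\},\{d,a\}$ form a square, since $a,b,c,d$ are distinct ($b,d\notin\{a,c\}$). So the task reduces to simulating wedge samples from $\Walk[p]$ using edge samples from $p$, where a simulated wedge counts as ``observed'' only if both of its edges actually appear among the edge samples.

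For the simulation, I would use the fact that $\Walk[p](\{a,c\},b)=p_{ab}p_{bc}/(2\deg_p(b))$ is the law of the following experiment. Pick $b$ with probability $\deg_p(b)$. Then pick two independent edges at $b$, each drawn with probability $p_{ab}/(2\deg_p(b))$, and condition on the two other endpoints being distinct. This experiment is slightly dominated by the exact law, so some care is needed. A cleaner way is to partition the $m$ edge samples into blocks. Within a block of size $L$, each vertex $b$ is hit by roughly $L\deg_p(b)$ edges, and two edges $\{a,b\},\{b,c\}$ both appear in the block with probability about $(Lp_{ab})(Lp_{bc})$. That is not proportional to $\Walk$, because the weight $1/\deg_p(b)$ is missing.

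So I would instead split the vertices by degree level: $\deg_p(b)\in(2^{-j-1},2^{-j}]$ for $j=0,\dots,\lceil\log_2(12n)\rceil$, discarding levels below about $\varepsilon/(12n)$, which carry negligible wedge mass. By pigeonhole, some level $B_j$ keeps at least $\varepsilon/(2\log(12n))$ of the $\HopD$ mass. For that, I would check that $\HopD[p,B]$ is superadditive over a partition of $B$, since a max of sums is at most the sum of maxes. Within a level, $1/\deg_p(b)\approx 2^{j}$ is constant up to a factor of $2$. Hence the probability that a block of $L\approx 2^{j}$ edge samples contains both edges of the wedge $(\{a,c\},b)$ is at least $c\,p_{ab}p_{bc}2^{j}\geq c'\,\Walk[p,B_j](\{a,c\},b)$. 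Events for different wedges in the same block must be handled by extracting at most one wedge per block, for example the wedge formed by the first two edges at some vertex of $B_j$, or by a Poissonization/domination argument in the style of Definition~\ref{def:stochastic_domination} and Proposition~\ref{prop:stochastic_domination}. In this way each block yields an independent sample from a sub-probability vector dominating a constant multiple of $\Walk[p,B_j]$, with $\nil$ otherwise.

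With $2^{j}\leq 12n/\varepsilon$, there are $m/L$ blocks. I would run Lemma~\ref{lem:birthday_minus_max} with $\varepsilon'=\Omega(\varepsilon/\log n)$ on each level, or pick the good level and use only its block length. The count comes out to $O(\varepsilon^{-1}n\log n)$ edge samples, matching $m=10^{3}\lceil\varepsilon^{-1}n\log(12n)\rceil$ up to constants. The main obstacle is this simulation step: showing that the wedge extracted per block dominates $\Walk[p,B_j]$ up to a constant, while keeping blocks independent. I would also have to verify that a high-degree center $b$ (large $L\deg_p(b)$) does not spoil proportionality. That is why the block length is tied to the degree level rather than fixed, so that each block sees $O(1)$ edges at a typical vertex of $B_j$ only in expectation summed appropriately. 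If needed, I would cap by choosing uniformly random pairs among the edges seen at $b$.
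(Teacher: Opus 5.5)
\textbf{There is a genuine gap, in two places: the simulation step and the degree-level pigeonhole.}

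\emph{The simulation step does not give the claimed sample count.} Your reduction to Lemma~\ref{lem:birthday_minus_max} on the grid $\binom{[n]}{2}\times[n]$ is right. But if each block of $L\approx 2^{j}$ edge samples yields at most one wedge, then $m$ edge samples give only $m/L$ wedge samples. So your own accounting gives $m=O\bigl(L\,(\varepsilon')^{-1}n\log(1/\delta)\bigr)$, not $O(\varepsilon^{-1}n\log n)$. The final count silently drops the factor $L$, which can be as large as $n/\varepsilon$. For example, take $p$ uniform on $K_{n/2,n/2}$. Every degree is $1/n$, so $L\approx n$. The walk distribution is spread evenly over $\Theta(n^{2})$ pairs $\{a,c\}$, so Lemma~\ref{lem:birthday_minus_max} needs on the order of $n$ extracted wedges, i.e.\ on the order of $n^{2}$ edge samples. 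A block of length $2^{j}$ contains in expectation $\Theta(|B_j|)$ wedges centered in $B_j$. Keeping one wastes the rest, and keeping many brings back the dependence problem you flagged.

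\emph{The pigeonhole over degree levels is unjustified, and in fact false.} The inequality you propose to check, $\HopD[p,B]\geq\sum_j\HopD[p,B_j]$, bounds the total over levels from above, which is the wrong direction for pigeonhole. The reverse bound you need fails, because the two distinct centers $b\neq d$ of a pair $\{a,c\}$ can lie in different levels. Take $N$ vertex-disjoint squares $a_kb_kc_kd_k$ (with $4N\leq n$), with $p_{a_kb_k}=8/(11N)$ and the other three edges of weight $1/(11N)$. Then $\sum_{\{a,c\}}\HopD[p](a,c)=1/11$. But $\deg_p(a_k)=\deg_p(b_k)=9/(22N)$ and $\deg_p(c_k)=\deg_p(d_k)=2/(22N)$. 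So no dyadic level contains both centers of any pair, and $\HopD[p,B_j]\equiv 0$ for every $j$.

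\textbf{The missing idea: neither blocks nor levels are needed.} Your first description (pick $b$ with probability $\deg_p(b)$, then two independent edges at $b$, sending $a=c$ to $\nil$) is already exactly $\Walk[p]$, with no conditioning. The paper realizes it at a cost of $O(1)$ edge samples per wedge (Lemma~\ref{lem:dilute_domination}).
\begin{itemize}
\item It discards only centers with $\deg_p(b)<\varepsilon/(2n)$, which costs at most $\varepsilon/2$ of the $\HopD$ mass (Lemma~\ref{lem:dilute_case_Markov}), and keeps all other centers in a single set $B$.
\item It orients each of the first $m/5$ edge samples uniformly at random as $(a_i,b_i)$, so that $b_i=b$ with probability $\deg_p(b)$. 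The remaining $4m/5$ samples serve as a shared reservoir.
\item The $j$-th first-batch sample with center $b$ is paired with the $j$-th reservoir sample whose random orientation puts $b$ first. The other endpoint of that reservoir sample has law $p_{bc}/(2\deg_p(b))$.
\item Each reservoir sample is used at most once, with fresh randomness if a reservoir runs out. This produces $m/5$ i.i.d.\ samples of $\Walk[p,B]$. If no reservoir runs out, all their edges occur among the $m$ samples.
\item Since $m\deg_p(b)\gtrsim\log n$ for $b\in B$, a Chernoff bound and a union bound over $b$ show that no reservoir runs out with high probability.
\end{itemize}
Thus $1/\deg_p(b)$ enters through conditioning on the center rather than through a block length. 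The $\log n$ in $m$ comes from the union bound, not from levels. Lemma~\ref{lem:birthday_minus_max} applied to $\Walk[p,B]$ with $m/5$ samples, combined with Proposition~\ref{prop:stochastic_domination}, then finishes the proof.
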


Lemma~\ref{lem:dilute_case} will be proved in Section~\ref{subsubsec:dilute}.

Intuitively speaking, if the total mass of $\HopD[p']$ is too small, then either the first inequality of \eqref{eq:total_mass_walk} is very loose, or the subtraction of the maximum in \eqref{eq:def_HopD} loses too much weight. In both cases, we can argue that $p'$ is ``concentrated'' in a suitable sense. This intuition will be formalized in Section~\ref{subsubsec:concentrated}, where we prove the following lemma. 

\begin{lemma}[Concentrated case lemma]\label{lem:concentrated_case}
Let $\varepsilon\in (0,1)$ be a constant. Suppose $p\in [0,1]^{\binom{[n]}{2}}$ is a sub-probability vector such that
\begin{equation}\label{eq:concentrated_case_assumption}
\sum_{\{a,b\}\in \binom{[n]}{2}}p_{ab}-\sum_{\{a,c\}\in \binom{[n]}{2}}\HopD[p](a,c)\geq \varepsilon.
\end{equation}
Then we have
\[
\sum_{\{a,b\}\in \binom{[n]}{2}}p_{ab}^{4}\geq \frac{2\varepsilon^{4}}{n^{9/2}}.
\]
\end{lemma}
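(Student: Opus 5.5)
The plan is to split the deficit in \eqref{eq:concentrated_case_assumption} exactly into the two losses named in the intuition before the lemma, and to bound each loss by a power of $S_4:=\sum_{\{a,b\}}p_{ab}^{4}$. I will use the shorthand $D(b):=2\deg_{p}(b)=\sum_{a\neq b}p_{ab}$.

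\textbf{Step 1: exact decomposition.} For a fixed middle vertex $b$, summing $p_{ab}p_{bc}$ over unordered pairs $\{a,c\}$ with $a\neq c$ gives $\frac12\bigl(D(b)^2-\sum_a p_{ab}^2\bigr)$. Dividing by $D(b)$ and summing over $b$ yields
\[
\sum_{\{a,c\}}\Hop[p](a,c)=\sum_b\deg_p(b)-\sum_b\frac{\sum_a p_{ab}^2}{2D(b)}.
\]
Since $\sum_b\deg_p(b)=\sum p_{ab}$, the left-hand side of \eqref{eq:concentrated_case_assumption} equals $T_1+T_2$, where
\[
T_1:=\sum_b\frac{\sum_a p_{ab}^2}{2D(b)},\qquad
T_2:=\sum_{\{a,c\}}\max_{b}\frac{p_{ab}p_{bc}}{D(b)}.
\]
Hence $T_1\ge\varepsilon/2$ or $T_2\ge\varepsilon/2$, and I treat the two cases separately.

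\textbf{Step 2: the case $T_1\ge\varepsilon/2$.} Since $\sum_a p_{ab}^2\le D(b)\max_a p_{ab}$, each summand of $T_1$ is at most $\frac12\max_a p_{ab}\le\frac12\bigl(\sum_a p_{ab}^4\bigr)^{1/4}$. Applying H\"older over the $n$ vertices $b$ gives
\[
T_1\le\tfrac12 n^{3/4}(2S_4)^{1/4}.
\]
This yields $S_4\gtrsim\varepsilon^4/n^{3}$, which is much stronger than the claimed bound $2\varepsilon^4/n^{9/2}$ once the constants are tracked.

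\textbf{Step 3: the case $T_2\ge\varepsilon/2$, which I expect to be the main obstacle.} Let $M_{ac}$ be the maximum in $T_2$. Two facts are immediate: $M_{ac}\le\min(p_{ab},p_{bc})$ at the maximizing $b$, and $\sum_{\{a,c\}}M_{ac}\le 1$ by \eqref{eq:total_mass_walk}.

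The naive route is to bound $M_{ac}\le(\sum_b p_{ab}^2p_{bc}^2)^{1/4}$ and apply H\"older over all $\sim n^2$ pairs. This only gives $T_2\lesssim n^{7/4}S_4^{1/4}$, which is too weak. So I would instead use a layer-cake bound $T_2\le\int_0^\infty\#\{\{a,c\}:M_{ac}>t\}\,dt$, together with the following counting facts.

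\begin{itemize}
\item A pair with $M_{ac}>t$ requires a wedge at some $b$ whose two edges both have weight $>t$. If $d_t(b)$ denotes the number of such heavy edges at $b$, then $D(b)\ge t\,d_t(b)$.
\item The count of pairs is therefore at most $\sum_b d_t(b)^2$.
\item This sum is at most $\min\bigl(nE_t,\;E_t^2,\;2n/t\bigr)$, where $E_t$ is the number of edges of weight $>t$ and satisfies $E_t\le\min(1/t,\,S_4/t^4)$.
\end{itemize}

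The crucial extra saving comes from the denominator. At a vertex $b$ carrying $d$ heavy edges, every contribution $p_{ab}p_{bc}/D(b)$ is at most $p_{ab}p_{bc}/(td)$. So vertices with many heavy edges produce many pairs, but each pair gets correspondingly small weight. I would weight the layer-cake count by this factor $1/d_t(b)$, working dyadically in both $t$ and $d$.

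I would then optimise the resulting piecewise bound. The goal is to show $T_2\lesssim n^{9/8}S_4^{1/4}$: the balance point between the $n^2$ cap on pairs and the $\ell_4$-controlled heavy-edge count should produce the exponent $9/8$, mirroring the $\ex(n,C_4)\sim n^{3/2}$ heuristic. Substituting $T_2\ge\varepsilon/2$ then gives $S_4\ge 2\varepsilon^4/n^{9/2}$ after fixing constants.

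If the dyadic bookkeeping does not close, I would fall back to a Cauchy--Schwarz argument pairing $\sum_{ac}M_{ac}$ against the wedge sum $\sum_b\bigl(\sum_a p_{ab}^2\bigr)^2/D(b)^2$. That sum is bounded by exploiting $\sum_a p_{ab}^2\le D(b)^{2/3}\bigl(\sum_a p_{ab}^4\bigr)^{1/3}$ and $\sum_b D(b)\le 2$.
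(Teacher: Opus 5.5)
\textbf{There is a genuine gap: Step 3 is stated as a goal but not proved, and it carries essentially the whole lemma.}

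Your Steps 1 and 2 are correct. The identity $\sum p_{ab}-\sum\HopD[p](a,c)=T_1+T_2$ is exact, and the diagonal term $T_1$ is easy (it even gives $S_4\gtrsim\varepsilon^4/n^3$). In Step 3, however, you only state the target $T_2\lesssim n^{9/8}S_4^{1/4}$.

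\emph{The layer-cake route is not carried out.} Your bounds on $\#\{M_{ac}>t\}$ rest only on $M_{ac}\le\min(p_{ab},p_{bc})$. This discards $D(b)$, which is why, as you observe, they stall at $n^{7/4}$. The inequality $D(b)\ge t\,d_t(b)$ is supposed to recover the loss through a ``$1/d_t(b)$ weighting,'' but it is never turned into an actual bound. The claim that the balance point ``should produce $9/8$'' is a hope, not an argument.

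\emph{The fallback does not work.}
\begin{enumerate}
\item Your estimate $\sum_a p_{ab}^2\le D(b)^{2/3}(\sum_a p_{ab}^4)^{1/3}$ leaves a factor $D(b)^{-2/3}$. The constraint $\sum_b D(b)\le 2$ cannot control it, since it bounds $D(b)$ only from above.
\item More fundamentally, after a global Cauchy--Schwarz over all $\sim n^2$ pairs you would need the implication $W\gtrsim\varepsilon^2/n^2\Rightarrow S_4\gtrsim\varepsilon^4/n^{9/2}$, where $W=\sum_b(\sum_a p_{ab}^2)^2/D(b)^2$. This implication is false. Take $\Theta(n)$ vertex-disjoint cherries with all edge weights $\varepsilon n^{-3/2}$: then $W\asymp\varepsilon^2/n^2$ but $S_4\asymp\varepsilon^4/n^5$.
\end{enumerate}
So this route gives at best exponent $5/4$, not $9/8$.

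\textbf{The missing idea is a per-vertex estimate organized by the maximizer.}
\begin{enumerate}
\item For each pair $\{a,c\}$ fix a maximizer $h(\{a,c\})$. Let $S_b$ be the symmetric set of ordered pairs $(a,c)$ with $h(\{a,c\})=b$, together with the diagonal pairs $(a,a)$, $a\neq b$. Put $\gamma_b=\sum_{(a,c)\in S_b}p_{ab}p_{cb}$.
\item Then $T_1+T_2=\frac12\sum_b\gamma_b/D(b)\le\frac12\sum_b\sqrt{\gamma_b}$, because $\gamma_b\le D(b)^2$. This is the precise way the denominator is exploited.
\item Including the diagonal handles $T_1$ and $T_2$ at once. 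Your $\varepsilon/2$ split would cost a constant factor: even with a correct Step 3, the $T_2$ branch only gives $S_4\ge\varepsilon^4/(2n^{9/2})$.
\item For a symmetric $S\subseteq[n]^2$ one has $\sum_{(a,c)\in S}x_ax_c\le\sqrt{|S|}\sum_a x_a^2$, since the top eigenvalue of its $0/1$ matrix is at most $\sqrt{|S|}$. Apply this with $x_a=p_{ab}^2$ and combine with Cauchy--Schwarz to get $\gamma_b^2\le|S_b|^{3/2}\sum_a p_{ab}^4$.
\item Every pair has exactly one maximizer, so $\sum_b|S_b|=2n(n-1)$ is spread over only $n$ vertices. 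H\"older then gives
$\sum_b\sqrt{\gamma_b}\le(2S_4)^{1/4}\bigl(\sum_b\sqrt{|S_b|}\bigr)^{3/4}\le(2S_4)^{1/4}(2n^3)^{3/8}$.
\end{enumerate}
This is exactly where $n^{9/8}$ comes from, and it yields the stated constant $2\varepsilon^4/n^{9/2}$.
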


Before proving Lemmas~\ref{lem:dilute_case} and~\ref{lem:concentrated_case}, we first derive from these lemmas the desired upper bound for testing square-freeness:

\begin{theorem}\label{thm:square_freeness_upper}
Let $\varepsilon\in (0,1)$ be a constant and let $n$ be a positive integer. Suppose $p\in [0,1]^{ \binom{[n]}{2}}$ is a sub-probability vector that is $\varepsilon$-far from square-free. Then in $O(n^{9/8}/\varepsilon)$ independent samples from $p$, with probability at least $2/3$ there exist four sampled pairs $\{a,b\},\{b,c\},\{c,d\}$ and $\{d,a\}$ forming a square.
\end{theorem}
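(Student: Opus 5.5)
Given $p$ that is $\varepsilon$-far from square-free, I will first invoke Lemma~\ref{lem:existence_witness_descendant}. This yields an $\varepsilon$-square-witness $q\in[0,1]^{\Square(n)}$ of $p$ and a descendant $p'$ of $q$, via an injective map $\varphi$. Definition~\ref{def:descendant} gives
\[
\sum_{e}p'_{e}=\sum_{\zeta}q_{\zeta}\geq \varepsilon/4 .
\]
I then split into two cases according to whether $\sum_{\{a,c\}}\HopD[p'](a,c)$ is at least $\varepsilon/8$. Since the two quantities in \eqref{eq:dilute_case_assumption} and \eqref{eq:concentrated_case_assumption} sum to $\sum_e p'_e\geq\varepsilon/4$, one of them is at least $\varepsilon/8$.

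\emph{Dilute case.} Here Lemma~\ref{lem:dilute_case}, applied to $p'$ with parameter $\varepsilon/8$, shows that $O(\varepsilon^{-1}n\log n)$ samples from $p'$ contain a square with probability at least $2/3$. To transfer this to $p$, I use the coordinatewise domination $p'\preceq p$. For each edge $e=\varphi(\zeta)$ we have $p'_e=q_\zeta\leq \sum_{\zeta'\ni e}q_{\zeta'}\leq p_e$ by Definition~\ref{def:witness}(2), and all other coordinates of $p'$ vanish. So one sample from $p$ can be coupled with one sample from $p'$ so that they agree whenever the latter is not $\nil$; concretely, sample from $p$ and replace the outcome $e$ by $\nil$ with probability $1-p'_e/p_e$. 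The event ``contains a square'' is upward-closed, so its probability under $p$ is at least that under $p'$. Since $n\log n\leq O(n^{9/8})$, this case uses $O(n^{9/8}/\varepsilon)$ samples.

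\emph{Concentrated case.} Here Lemma~\ref{lem:concentrated_case}, applied to $p'$ with $\varepsilon/8$, gives
\[
\sum_e (p'_e)^4\geq 2(\varepsilon/8)^4 n^{-9/2}.
\]
Because $\varphi$ is injective and $p'_{\varphi(\zeta)}=q_\zeta$, this quantity equals $\sum_\zeta q_\zeta^4$. Lemma~\ref{lem:concentrated_case_prelim} then applies to $q$ as an $\varepsilon$-square-witness of $p$. With
\[
m\geq 72\Bigl(\textstyle\sum_\zeta q_\zeta^4\Bigr)^{-1/4}=O\bigl(n^{9/8}/\varepsilon\bigr),
\]
samples from $p$ directly contain a full square with probability at least $4/5$.

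Taking $m$ to be the maximum of the two sample sizes gives success probability at least $2/3$ in either case. The real difficulty lies in Lemmas~\ref{lem:dilute_case} and~\ref{lem:concentrated_case}, which are assumed here. The only point in this derivation that needs care is the transfer from $p'$ to $p$ in the dilute case, and the coupling above settles it cleanly.
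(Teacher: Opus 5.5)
Your proposal is correct and follows the paper's proof step for step. You obtain $q$ and $p'$ from Lemma~\ref{lem:existence_witness_descendant} and split at $\varepsilon/8$; the dilute case uses Lemma~\ref{lem:dilute_case} together with $p'\preceq p$, and the concentrated case uses Lemma~\ref{lem:concentrated_case} and Lemma~\ref{lem:concentrated_case_prelim}, with the identity $\sum_e (p'_e)^4=\sum_\zeta q_\zeta^4$ coming from injectivity of $\varphi$. Your explicit thinning coupling simply spells out the domination step that the paper states in one line.
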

\begin{proof}[Proof assuming Lemmas~\ref{lem:dilute_case} and~\ref{lem:concentrated_case}]
We first apply Lemma~\ref{lem:existence_witness_descendant} to obtain sub-probability vectors $q\in [0,1]^{\Square(n)}$ and $p'\in [0,1]^{\binom{[n]}{2}}$ such that $q$ is an $\varepsilon$-square-witness of $p$ and $p'$ is a descendant of $q$. By Definitions~\ref{def:descendant} and~\ref{def:witness} we know that
\[
\sum_{\{a,b\}\in \binom{[n]}{2}}p'_{ab}= \sum_{\zeta\in \Square(n)}q_{\zeta}\geq \frac{\varepsilon}{4}.
\]
Therefore, we have either
\begin{equation}\label{eq:dilute_case}
\sum_{\{a,c\}\in \binom{[n]}{2}}\HopD[p'](a,c)\geq \frac{\varepsilon}{8}
\end{equation}
or
\begin{equation}\label{eq:concentrated_case}
\sum_{\{a,b\}\in \binom{[n]}{2}}p'_{ab}-\sum_{\{a,c\}\in \binom{[n]}{2}}\HopD[p'](a,c)\geq \frac{\varepsilon}{8}.
\end{equation}

If \eqref{eq:dilute_case} holds, then by Lemma~\ref{lem:dilute_case} we know that it takes $O(\varepsilon^{-1}n\log n)$ samples from $p'$ to get all four edges of a square with probability at least $2/3$. Since Definitions~\ref{def:descendant} and~\ref{def:witness} implies
\[
p'\preceq \sum_{\zeta\in\Square(n)}q_{\zeta}\cdot 1_{\zeta}\preceq p,
\]
it also takes at most $O(\varepsilon^{-1}n\log n)$ samples from $p$ to get all four edges of a square with probability at least $2/3$.

If \eqref{eq:concentrated_case} holds, then by Lemma~\ref{lem:concentrated_case} we know that
\[
\frac{2(\varepsilon/8)^{4}}{n^{9/2}}\leq \sum_{\{a,b\}\in \binom{[n]}{2}}(p'_{ab})^{4}=\sum_{\zeta\in \Square(n)}q_{\zeta}^{4}.
\]
Therefore, it follows from Lemma~\ref{lem:concentrated_case_prelim} that it takes at most $O(n^{9/8}/\varepsilon)$ samples from $p$ to get all four edges of a square with probability at least $2/3$.

In conclusion, in either of the cases it takes at most $O(n^{9/8}/\varepsilon)$ samples from $p$ to get all four edges of a square with probability at least $2/3$.
\end{proof}

\begin{corollary}\label{cor:square_freeness_upper}
We have $\sam\big(\cG^{\textup{squ}}_{n},\varepsilon\big)\leq O(n^{9/8}/\varepsilon)$.
\end{corollary}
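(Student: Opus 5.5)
The plan is to derive the bound from Theorem~\ref{thm:square_freeness_upper} by running the canonical one-sided tester of Remark~\ref{rem:canonical_tester_monotone}. Since $\cG^{\textup{squ}}_{n}$ is downward-closed, this tester takes $m=O(n^{9/8}/\varepsilon)$ labeled samples, with the constant taken from Theorem~\ref{thm:square_freeness_upper}. It rejects if and only if the positive samples (those $e_i$ with $f(e_i)=1$) contain all four edges of some square.

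\textbf{Completeness.} If $f\in\cG^{\textup{squ}}_{n}$, then $f^{-1}(1)$ contains no square. The positive samples lie in $f^{-1}(1)$, so they never contain a square, and the tester accepts with probability $1$.

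\textbf{Soundness.} Suppose $f$ is $\varepsilon$-far from $\cG^{\textup{squ}}_{n}$ with respect to $\mu$. Define the sub-probability vector $p\in[0,1]^{\binom{[n]}{2}}$ by $p_e=\mu(\{e\})\cdot f(e)$. First I check that $p$ is $\varepsilon$-far from square-free in the sense of Section~\ref{subsec:general_notations}.
\begin{itemize}
\item Take any square-free $E\subseteq\binom{[n]}{2}$, and let $g$ be the indicator of $E\cap f^{-1}(1)$.
\item Since $\cG^{\textup{squ}}_{n}$ is downward-closed, $g\in\cG^{\textup{squ}}_{n}$.
\item We have $\{x: f(x)\neq g(x)\}=f^{-1}(1)\setminus E$.
\item Hence $\sum_{e\notin E}p_e=\mu\big(f^{-1}(1)\setminus E\big)=\Pr_{x\sim\mu}[f(x)\neq g(x)]\geq\varepsilon$.
\end{itemize}

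Next, replacing each non-positive labeled sample by $\nil$ turns the positive samples into exactly $m$ independent samples from $p$ in the sense of Section~\ref{subsec:general_notations}. Theorem~\ref{thm:square_freeness_upper} then says that with probability at least $2/3$, these samples contain four pairs forming a square, so the tester rejects.

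Together, completeness and soundness give $\sam\big(\cG^{\textup{squ}}_{n},\varepsilon\big)\leq O(n^{9/8}/\varepsilon)$. There is no genuine obstacle here: the substantive work is in Lemmas~\ref{lem:dilute_case} and~\ref{lem:concentrated_case}. The only point needing care is the downward-closedness step above, which shows that distance to the property equals the mass of $p$ outside a square-free set.
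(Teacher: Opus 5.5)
Your proposal is correct and matches the paper's proof. The paper also runs the canonical one-sided tester, sets $p_e=\mu(\{e\})$ for $e\in f^{-1}(1)$ and $p_e=0$ otherwise, transfers $\varepsilon$-farness from $\mu$ to $p$, and invokes Theorem~\ref{thm:square_freeness_upper}; your downward-closedness check of the farness transfer and your $\nil$-replacement argument (showing the positive samples are samples from $p$) fill in steps the paper simply calls ``clear.''
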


\begin{proof}
We show that the canonical one-sided-error tester (see Remark~\ref{rem:canonical_tester_monotone}) rejects with probability at least $2/3$ if the unknown edge set $E$ and distribution $\mu$ over $\binom{[n]}{2}$ satisfy $\mu(E\triangle E')\geq \varepsilon$ for any $E'\in \cG^{\textup{squ}}_{n}$. Consider the sub-probability vector $p\in [0,1]^{\binom{[n]}{2}}$ defined by 
\[p_{e}=\begin{cases}
\mu(\{e\}),&\text{if }e\in E\\
0, &\text{if }e\not\in E.
\end{cases}\] 
It is clear that $E$ is $\varepsilon$-far from square-free under $\mu$ if and only if $p$ is $\varepsilon$-far from square-free. The conclusion thus follows from Theorem~\ref{thm:square_freeness_upper}.
\end{proof}

\subsubsection{The Dilute Case}\label{subsubsec:dilute}

The main idea of the proof of Lemma~\ref{lem:dilute_case} is to apply Lemma~\ref{lem:birthday_minus_max}, the birthday paradox lemma on grids. The grid on which we will apply the birthday paradox is $\binom{[n]}{2}\times [n]$, which contains $\Wedge(n)$ as a subset (see \eqref{eq:wedge_in_grid}). In order to apply Lemma~\ref{lem:birthday_minus_max}, we have to be able to sample \emph{cells} of the grid, which in our setting corresponds to sampling from the sub-probability mass function $\Walk[p]$ given sampling access to the sub-probability vector $p\in [0,1]^{\binom{[n]}{2}}$. A natural way to sample wedges given edge samples is the following:

\begin{definition}
Let $p\in [0,1]^{\binom{[n]}{2}}$ be a sub-probability vector. For any integer $m\geq 1$, we define $\bcW(p,m)$ to be the output distribution of the following sampling process:
\begin{enumerate}
\item Independently sample $m$ edges from the distribution $p$, and let $x\in \bN^{\binom{[n]}{2}}$ be the empirical count vector of the $m$ samples.
\item For each wedge $(\{a,c\},b)\in \Wedge(n)$, we let $w_{ac,b}=x_{ab}x_{bc}$.
\item Output the vector $w\in \bN^{\Wedge(n)}$.
\end{enumerate}
\end{definition}

The next lemma shows that if the degree of every vertex is not too small, the sampling process defined above is sufficient for simulating access to $\Walk[p]$ (as far as stochastic domination is concerned).

\begin{lemma}\label{lem:dilute_domination}
Let $\varepsilon,\delta\in (0,1)$ be constants. Suppose $B\subseteq [n]$ is a subset of vertices and $p\in [0,1]^{\binom{[n]}{2}}$ is a sub-probability vector such that $\deg_{p}(b)\geq \varepsilon/(2n)$ for all $b\in B$. For any integer $m\geq 6\varepsilon^{-1}n\log(2\delta^{-1}n)$, we have (see Definition~\ref{def:stochastic_domination})
\[\bcS(\Walk[p,B],m)\leq_{(1-\delta,1)}\bcW(p,5m).\]
\end{lemma}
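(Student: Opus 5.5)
We will build an explicit coupling between $z\sim\bcS(\Walk[p,B],m)$ and $w\sim\bcW(p,5m)$ such that, outside an event of probability at most $\delta$, we have $w\succeq z$. Since $\lambda_{2}=1$, the marginal of $z$ must be exactly $\bcS(\Walk[p,B],m)$.

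\textbf{Step 1: a two-stage description of $z$.} A sample from $\Walk[p,B]$ is generated as follows. With probability $\sum_{b\in B}\deg_p(b)$, pick a center $b\in B$ with probability $\deg_p(b)$. Then pick two edges $\{a,b\},\{b,c\}$ independently, each with probability $p_{ab}/(2\deg_p(b))$, i.e.\ from the conditional distribution of $p$ on edges incident to $b$. Otherwise output $\nil$. If $a=c$ we also output $\nil$, which only removes mass and so is harmless; indeed $\Walk$ places mass $p_{ab}p_{bc}/(2\deg_p(b))$ on the unordered wedge, which counts both orders $(a,c),(c,a)$ of the pair, since $\frac{p_{ab}}{2\deg}\cdot\frac{p_{bc}}{2\deg}\cdot 2\cdot\deg=p_{ab}p_{bc}/(2\deg)$. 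Consequently $z$ is generated by first drawing the number of wedges centered at each $b$, namely $N_b$, distributed as a multinomial with $\sum_b N_b\le m$ and $\bE N_b=m\deg_p(b)$. For each $b$ we then draw $2N_b$ independent edges from the conditional distribution at $b$ and pair them up into $N_b$ wedges.

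\textbf{Step 2: the coupling.} In $5m$ edge samples from $p$, the number $M_b$ of sampled edges incident to $b$ has mean $5m\cdot 2\deg_p(b)=10m\deg_p(b)$. Conditioned on the counts $M_b$, the incident edges at $b$ are i.i.d.\ from the conditional distribution at $b$. These events overlap across different $b$, since each edge touches two vertices, but for fixed $b$ the edges incident to $b$ are exactly such an i.i.d.\ sequence. We therefore couple at each $b$ separately. Take the first $2N_b$ incident edges at $b$ (when $M_b\ge 2N_b$) and use them to form the wedges of $z$ centered at $b$. Every wedge $(\{a,c\},b)$ formed from two distinct incident sampled edges $\{a,b\},\{b,c\}$ then satisfies $x_{ab}x_{bc}\ge$ the number of such pairings. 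This pairing count must be checked carefully: the pairings at $b$ using the edge pair $\{a,b\},\{b,c\}$ number at most $\min(x_{ab},x_{bc})\le x_{ab}x_{bc}$ when we pair disjoint consecutive slots. Wedges at different centers are distinct coordinates, so $w\succeq z$ holds coordinatewise.

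The difficulty is that $N_b$ for different $b$ must jointly have the multinomial law, while we are reading them off from a shared pool of samples. To handle this, we draw $(N_b)_b$ independently of the edge samples, using external randomness; that gives the correct marginal for $z$ exactly. We then use the edge samples only to realize the within-$b$ choices, which have the correct conditional law whenever $M_b\ge 2N_b$. On the failure event, where some $b$ has $M_b<2N_b$, we fill in the missing edges with fresh randomness. This preserves the marginal of $z$ but possibly breaks $w\succeq z$. Keeping the first $2N_b$ incident edges at $b$ as a function of the ordering is legitimate because, conditioned on the incidence pattern, the edge identities at $b$ are i.i.d.\ conditionals.

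\textbf{Step 3: bounding the failure probability (main obstacle).} We need $\Pr{\exists b\in B: M_b<2N_b}\le\delta$. For each $b$, $N_b\sim\mathrm{Bin}(m,\deg_p(b))$ with mean $\mu_b=m\deg_p(b)\ge m\varepsilon/(2n)\ge 3\log(2n/\delta)$, while $M_b\sim\mathrm{Bin}(5m,2\deg_p(b))$ has mean $10\mu_b$. By Chernoff bounds, $\Pr{N_b>3\mu_b}\le e^{-\mu_b}$ and $\Pr{M_b<6\mu_b}\le e^{-(0.4)^2\cdot 10\mu_b/2}=e^{-0.8\mu_b}$. With $\mu_b\ge 3\log(2n/\delta)$, each of these is at most $(\delta/2n)^{2}\le\delta/(2n)$. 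A union bound over at most $n$ vertices and two events then gives a failure probability of at most $\delta$. The lower bound $\deg_p(b)\ge\varepsilon/(2n)$ for $b\in B$ is used exactly here, and the constants $6$ and $5$ in the hypothesis leave enough slack for it.
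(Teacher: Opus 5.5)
\textbf{Gap in Step~2: the coupling does not give $z$ the law $\bcS(\Walk[p,B],m)$.}
Since $\lambda_2=1$, this marginal must be exact. You notice that a sampled edge $\{b,b'\}$ is incident to two centers, but you only remove the resulting dependence from the counts $(N_b)_b$, by drawing them externally. The \emph{edge identities} are still read from a shared pool, so one sample $\{b,b'\}$ can serve as an arm of a wedge at $b$ and at the same time as an arm of a wedge at $b'$.

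Your justification (for fixed $b$, the incident edges are i.i.d.\ from the conditional law at $b$) is a statement about one center at a time. But $\bcS(\Walk[p,B],m)$ requires the streams at different centers to be mutually independent given $(N_b)_b$, and under your coupling they are not. For example, let $p$ be uniform on the three edges of a triangle on $\{1,2,3\}$. In an i.i.d.\ stream from $p$, the event that the first two edges incident to vertex $1$ are distinct and the first two edges incident to vertex $2$ are distinct has probability $8/27$, not $1/4$. So, conditioned on $N_1=N_2=1$, your wedges at centers $1$ and $2$ are positively correlated, whereas under the true law they are independent. The domination $w\succeq z$ is fine; it is the $z$-marginal that breaks.

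\textbf{The missing device, and how the paper uses it.}
You need to assign each sample to exactly one of its endpoints, via an independent uniformly random orientation. This is what the paper does: sample $i$ is written as $(a_i,b_i)$ with probability $p_{a_ib_i}/2$. Then the center $b_i$ has law $\deg_p(\cdot)$, and given all centers, the other endpoints are independent with the correct conditionals.

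The paper is also more economical than your scheme. It uses the first $m$ oriented samples themselves to choose the centers and the first arms, so no external $N_b$ is needed. It then pairs the $j$-th of these at center $b$ with the $j$-th of the remaining $4m$ samples oriented to $b$. The failure event is $|I_1(b)|>|I_2(b)|$, with $|I_1(b)|\sim\mathrm{Bin}(m,\deg_p(b))$ and $|I_2(b)|\sim\mathrm{Bin}(4m,\deg_p(b))$, and crude Chernoff bounds handle it under the stated constants.

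\textbf{Consequence for Step~3 if you keep your scheme.}
If you keep the external-$N_b$ scheme and only add the orientation, $M_b$ becomes $\mathrm{Bin}(5m,\deg_p(b))$ with mean $5\mu_b$, not $10\mu_b$. Your thresholds $3\mu_b$ and $6\mu_b$ in Step~3 then no longer work. You would need a sharper bound on $\bP[M_b<2N_b]$. Optimizing the exponential moment of $2N_b-M_b$ gives roughly $e^{-0.47\mu_b}$, which still suffices.
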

\begin{proof}
We divide the proof into 4 steps.

\paragraph{Step 1: preparation.} Let 
\begin{equation}\label{eq:5m_samples_from_p}
(a_{1},b_{1}),(a_{2},b_{2}),\dots,(a_{m},b_{m}),(b_{m+1},c_{m+1}),(b_{m+2},c_{m+2}),\dots,(b_{5m},c_{5m})
\end{equation}
be a sequence of $5m$ independent samples drawn from $p$.\footnote{If the $i$-th sample we get is $\nil$ for some $i\in [m]$, we let $(a_{i},b_{i})=(\nil,\nil)$. Similarly, if the $i$-th sample we get is $\nil$ for some $i\in \{m+1,m+2,\dots,5m\}$, we let $(b_{i},c_{i})=(\nil,\nil)$.} For each element $b\in B$, we let $I_{1}(b)$ be the set of indices $i\in \{1,2,\dots,m\}$ such that $b_{i}=b$, and let $I_{2}(b)$ be the set of indices $i\in \{m+1,\dots,5m\}$ such that $b_{i}=b$. Note that $|I_{1}(b)|$ is the sum of $m$ independent Bernoulli random variables with mean $\deg_{p}(b)$. Therefore, by the multiplicative Chernoff bound we have
\begin{equation}\label{eq:I1b_concentration}
\Pr{|I_{1}(b)|\geq 2m\deg_{p}(b)}\leq \exp\left(-\frac{1}{3}\cdot m\deg_{p}(b)\right)\leq \frac{\delta}{2n},
\end{equation}
using the guaranteed lower bounds on $\deg_{p}(b)$ and $m$. Similarly, since the expected value of $|I_{2}(b)|$ is $4m\deg_{p}(b)$, we have
\begin{equation}\label{eq:I2b_concentration}
\Pr{|I_{2}(b)|\leq 2m\deg_{p}(b)}\leq \exp\left(-\frac{1}{8}\cdot 4m\deg_{p}(b)\right)\leq\frac{\delta}{2n}.
\end{equation}

\paragraph{Step 2: construction of coupling.} We now describe a procedure that generates $m$ independent samples $X_{1},\dots,X_{m}$ from $\Walk[p,B]$, based on the $5m$ samples \eqref{eq:5m_samples_from_p} from $p$. We repeat the following for $i=1,2,\dots,m$:
\begin{enumerate}
\item If $b_{i}\not\in B$, let $X_{i}=\nil$.
\item If $b_{i}\in B$, suppose $i$ is the $j$-th smallest index in the set $I_{1}(b_{i})$.
\begin{enumerate}[label=(\roman*)]
\item If $j>|I_{2}(b_{i})|$, draw a random vertex $c^*\in [n]\setminus \{b_{i}\}$ according to the probability vector
\[
\left(\frac{p_{b_{i}c}}{2\deg_{p}(b_{i})}\right)_{c\in [n]\setminus \{b_{i}\}}\in [0,1]^{[n]\setminus \{b_{i}\}}.
\]
If $c^*=a_{i}$, let $X_{i}=\nil$. Otherwise let $X_{i}=(\{a_{i},c^*\},b_{i})$.
\item If $j\leq |I_{2}(b_{i})|$, let $\ell$ be the $j$-th smallest index in the set $I_{2}(b_{i})$. If $c_{\ell}=a_{i}$, let $X_{i}=\nil$. Otherwise, let $X_{i}=(\{a_{i},c_{\ell}\},b_{i})$.
\end{enumerate}
\end{enumerate}

\paragraph{Step 3: analysis of coupling.} It is easy to see that if \eqref{eq:5m_samples_from_p} are $5m$ independent samples drawn from $p$, the above procedure produces $m$ independent samples from $\Walk[p,B]$. Indeed, for each wedge $(\{a,c\},b)\in \Wedge(n)$ such that $b\in B$, the probability that $(a_{i},b_{i})=(a,b)$ is $p_{ab}/2$, and 
\[
\Pr{X_{i}=(\{a,c\},b)\,\big|\, (a_{i},b_{i})=(a,b)}=\frac{p_{bc}}{2\deg_{p}(b)}.
\]
Therefore we have
\begin{align*}
\Pr{X_{i}=(\{a,c\},b)}&=\Pr{(a_{i},b_{i})=(a,b)}\cdot\Pr{X_{i}=\{a,c\}\,\big|\, (a_{i},b_{i})=(a,b)}+\\
&\qquad\quad\Pr{(a_{i},b_{i})=(c,b)}\cdot \Pr{X_{i}=\{a,c\}\,\big|\, (a_{i},b_{i})=(c,b)}\\
&=\left(\frac{p_{ab}}{2}\cdot\frac{p_{bc}}{2\deg_{p}(b)}+\frac{p_{bc}}{2}\cdot\frac{p_{ab}}{2\deg_{p}(b)}\right)=\Walk[p,B]\big(\{a,c\},b\big).
\end{align*}

\paragraph{Step 4: wrapping up.} Given $5m$ independent samples \eqref{eq:5m_samples_from_p} drawn from $p$, we let $x\in \bN^{\binom{[n]}{2}}$ be the empirical count vector of the $5m$ samples. Define a vector $w\in \bN^{\Wedge(n)}$ by letting
\(w_{ac,b}=x_{ab}x_{bc}\)
for all $(\{a,c\},b)\in \Wedge(n)$. We generate $m$ samples $X_{1},\dots,X_{m}$ according to Step 2, and let $z\in \bN^{\Wedge(n)}$ be the empirical count vector of the samples $X_{1},\dots,X_{m}$. Finally, let $\rho$ be the joint distribution of the vector pair $(w,z)\in \bN^{\Wedge(n)}\times \bN^{\Wedge(n)}$. 

It is clear that the marginal distribution of $\rho$ in the $w$ coordinate is identical to $\bcW(p,5m)$, while its marginal distribution in the $z$ coordinate is identical to $\bcS(\Walk[p,B],m)$. Note that whenever $|I_{1}(b)|\leq |I_{2}(b)|$ holds for every $b\in B$, the case 2(i) in the procedure of Step 2 is never activated, which leads to $w_{ac,b}=x_{ab}x_{bc}\geq z_{ac,b}$ for all $(\{a,c\},b)\in\Wedge(n)$. Therefore, by the concentration inequalities \eqref{eq:I1b_concentration}, \eqref{eq:I2b_concentration} and a union bound over all $b\in B$, we have
\[
\Pru{(w,z)\sim \rho}{w\succeq z}\geq 1-\sum_{b\in B}\Pr{|I_{1}(b)|>|I_{2}(b)|\big.}\geq 1-2|B|\cdot\frac{\delta}{2n}\geq 1-\delta,
\]
as desired.
\end{proof}

The next lemma is a standard argument showing that vertices with too small degrees can be safely ignored when choosing the middle vertex of a length-2 walk.

\begin{lemma}\label{lem:dilute_case_Markov}
Let $\varepsilon\in (0,1)$ be a constant. Suppose $p\in [0,1]^{\binom{[n]}{2}}$ is a sub-probability vector and $B$ is the set of vertices $b\in [n]$ such that $\deg_{p}(b)\geq \varepsilon/(2n)$. Then we have 
\[\sum_{\{a,c\}\in \binom{[n]}{2}}\HopD[p](a,c)-\sum_{\{a,c\}\in \binom{[n]}{2}}\HopD[p,B](a,c)\leq \frac{\varepsilon
}{2}.\]
\end{lemma}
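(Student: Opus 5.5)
The plan is to compare $\HopD[p]$ and $\HopD[p,B]$ pairwise for each $\{a,c\}$, bound the pointwise difference by the walk mass passing through middle vertices outside $B$, and then sum that mass using \eqref{eq:total_mass_walk}. Write $W(b):=\Walk[p]\big(\{a,c\},b\big)$ for $b\in[n]\setminus\{a,c\}$. By Definition~\ref{def:walk}, $\Walk[p,B](\{a,c\},b)=W(b)\ind{b\in B}$. Then
\[
\HopD[p](a,c)=\sum_{b}W(b)-\max_{b}W(b),\qquad \HopD[p,B](a,c)=\sum_{b\in B}W(b)-\max_{b\in B}W(b),
\]
where the maximum over an empty set is taken to be $0$, consistent with \eqref{eq:def_HopD}.

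The key pointwise claim is
\[
\HopD[p](a,c)-\HopD[p,B](a,c)\leq \sum_{b\notin B}W(b).
\]
This amounts to the inequality $\max_{b\in B}W(b)\leq \max_b W(b)$, which is trivial since the maximum over a subset is at most the full maximum and all values are nonnegative. Equivalently, the function $S\mapsto \sum_{S}W-\max_S W$ increases by at most $W(b)$ when one element $b$ is added to $S$.

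Summing over all $\{a,c\}\in\binom{[n]}{2}$ then gives
\[
\sum_{\{a,c\}}\big(\HopD[p](a,c)-\HopD[p,B](a,c)\big)\leq \sum_{(\{a,c\},b)\in\Wedge(n),\,b\notin B}\Walk[p]\big(\{a,c\},b\big)=\sum_{\{a,c\}}\Walk[p,[n]\setminus B]\big(\{a,c\},b\big).
\]
Applying \eqref{eq:total_mass_walk} with the vertex set $[n]\setminus B$ bounds the right-hand side by
\[
\sum_{b\notin B}\deg_p(b)<n\cdot\frac{\varepsilon}{2n}=\frac{\varepsilon}{2},
\]
since every $b\notin B$ has $\deg_p(b)<\varepsilon/(2n)$.

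There is no real obstacle here. The only point needing care is the max term in the pointwise claim, and handling the convention that the maximum over an empty $B$ is $0$ (and that $0/0=0$ for walks).
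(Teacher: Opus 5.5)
Your proof is correct and follows the same route as the paper. You compare the two maxima to get $\HopD[p](a,c)-\HopD[p,B](a,c)\le \Hop[p](a,c)-\Hop[p,B](a,c)$, then bound the total walk mass through middle vertices $b\notin B$ by $\sum_{b\notin B}\deg_{p}(b)\le\varepsilon/2$; you cite \eqref{eq:total_mass_walk} with $[n]\setminus B$ where the paper redoes that computation inline. The only slip is notational: the right-hand side of your summed display should sum over wedges $(\{a,c\},b)\in\Wedge(n)$, not only over $\{a,c\}$.
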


\begin{proof}
By Definition~\ref{def:walk}, the function $\Walk[p,B]$ is no larger than the function $\Walk[p]$ on any input, so for any $\{a,c\}\in \binom{[n]}{2}$ we have
\[
\max_{b\in [n]\setminus \{a,c\}}\Walk[p,B]\big(\{a,c\},b\big)\leq \max_{b\in [n]\setminus \{a,c\}}\Walk[p]\big(\{a,c\},b\big).
\]
By Definition~\ref{def:hop}, it follows that
\begin{equation}\label{eq:hop_comparison}
\HopD[p](a,c)-\HopD[p,B](a,c)\leq \Hop[p](a,c)-\Hop[p,B](a,c).
\end{equation}
Expanding and rearranging using Definitions~\ref{def:walk} and~\ref{def:hop}, we have
\begin{align*}
\sum_{\{a,c\}\in \binom{[n]}{2}}\Hop[p](a,c)-\sum_{\{a,c\}\in \binom{[n]}{2}}\Hop[p,B](a,c)&=\sum_{b\in [n]\setminus B}\sum_{\{a,c\}\in \binom{[n]\setminus \{b\}}{2}}\Walk[p]\big(\{a,c\},b\big)\\
&\leq \frac{1}{2}\sum_{b\in [n]\setminus B}\sum_{a,c\in [n]\setminus \{b\}}\frac{p_{ab}p_{bc}}{2\deg_{p}(b)}\\
&=\frac{1}{2}\sum_{b\in [n]\setminus B}2\deg_{p}(b)\\
&\leq (n-|B|)\cdot \frac{\varepsilon}{2n}\leq \frac{\varepsilon}{2}.
\end{align*}
Combining this with \eqref{eq:hop_comparison} immediately yields the conclusion.
\end{proof}

We are now ready to prove the dilute case lemma, Lemma~\ref{lem:dilute_case}.

\begin{proof}[Proof of Lemma~\ref{lem:dilute_case}]
Let $B$ be the set of vertices $b\in [n]$ such that $\deg_{p}(b)\geq \varepsilon/(2n)$. By Lemma~\ref{lem:dilute_case_Markov} and the assumption \eqref{eq:dilute_case_assumption}, we have
\[
\sum_{\{a,c\}\in \binom{[n]}{2}}\HopD[p,B](a,c)\geq \frac{\varepsilon
}{2}.
\]

We now apply Lemma~\ref{lem:birthday_minus_max} to the sub-probability mass function $\Walk[p,B]$ over the set of wedges $\Wedge(n)\subseteq \binom{[n]}{2}\times [n]$. It follows that given at least
\[
\frac{m}{5}\geq 64\left\lceil(\varepsilon/2)^{-1}\log(4/\delta)\sqrt{\binom{n}{2}}\right\rceil
\]
independent samples from $\Walk[p,B]$, with probability at least $1-\delta/2$ there exist two sampled wedges $(\{a,c\},b)$ and $(\{a,c\},d)$ with the same first coordinate $\{a,c\}\in \binom{[n]}{2}$ and different second coordinates $b,d\in [n]$. Note that since $\Walk[p,B]$ is supported on $\Wedge(n)$, we may assume $a,c,b,d$ are distinct vertices.

We next apply Lemma~\ref{lem:dilute_domination} to the probability vector $p$ and the vertex set $B$. Due to the guaranteed lower bound on $m$, the conclusion of Lemma~\ref{lem:dilute_domination} yields that
\[\bcS(\Walk[p,B],m/5)\leq_{(1-\delta/2,1)}\bcW(p,m).\] Since the set
\[S=\left\{w\in \bN^{\Wedge(n)}\,\middle|\,\text{there are no distinct }a,b,c,d\in [n]\text{ s.t. } w_{ac,b},w_{ac,d}\geq 1\right\}\]
is a downward-closed subset of $\bN^{\Wedge(n)}$, it follows from Proposition~\ref{prop:stochastic_domination} that
\[
\Pru{w\sim\bcW(p,m)}{w\in S}\leq \Pru{w\sim \bcS(\Walk[p,B],m/5)}{w\in S}+\frac{\delta}{2}.
\]
By the conclusion of the last paragraph, the first summand on the right-hand side is at most $\delta/2$. Therefore, we conclude that $\Prs{w\sim \bcW(p,m)}{w\in S}\leq \delta$. In other words, with probability at least $1-\delta$, there exist distinct vertices $a,b,c,d\in [n]$ such that all four pairs $\{a,b\},\{b,c\},\{c,d\},\{d,a\}$ appear in a batch of $m$ independent samples from $p$, as desired.
\end{proof}

\subsubsection{The Concentrated Case}\label{subsubsec:concentrated}

To prove the concentrated case lemma, we need the following result from spectral graph theory.

\begin{proposition}\label{prop:spectral_graph}
Let $n$ be a positive integer, and let $S$ be a symmetric subset of $[n]\times [n]$ (i.e. for any $(i,j)\in S$ we have $(j,i)\in S$ as well). For any real numbers $x_{1},\dots,x_{n}$, we have
\[
\sqrt{|S|}\cdot \sum_{i=1}^{n}x_{i}^{2}\geq \sum_{(i,j)\in S}x_{i}x_{j}.
\]
    
\end{proposition}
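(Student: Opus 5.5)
We will prove the inequality through the adjacency matrix $A\in\{0,1\}^{n\times n}$ of $S$, defined by $A_{ij}=1$ if $(i,j)\in S$ and $A_{ij}=0$ otherwise. Since $S$ is symmetric, $A$ is a real symmetric matrix. The right-hand side of the claim is exactly the quadratic form $x^{\top}Ax$, where $x=(x_1,\dots,x_n)$. By the spectral theorem (Rayleigh quotient), we have $x^{\top}Ax\leq \lambda_{\max}(A)\,\|x\|_2^2$, where $\lambda_{\max}(A)$ is the largest eigenvalue of $A$. It therefore suffices to show $\lambda_{\max}(A)\leq \sqrt{|S|}$.

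For this, let $\lambda_1,\dots,\lambda_n$ denote the real eigenvalues of $A$. The trace identity gives
\[
\sum_{i=1}^{n}\lambda_i^2=\mathrm{tr}(A^2)=\sum_{i,j}A_{ij}A_{ji}=\sum_{i,j}A_{ij}^2=|S|,
\]
using symmetry and the fact that the entries of $A$ lie in $\{0,1\}$. Hence $\lambda_{\max}(A)^2\leq \sum_i\lambda_i^2=|S|$, so $\lambda_{\max}(A)\leq |\lambda_{\max}(A)|\leq\sqrt{|S|}$. Combining the two steps gives $\sum_{(i,j)\in S}x_ix_j\leq \sqrt{|S|}\sum_i x_i^2$.

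There is no real obstacle here. The only care needed concerns diagonal pairs $(i,i)\in S$: they contribute $x_i^2$ to the right-hand side and a $1$ on the diagonal of $A$, and the identity $\mathrm{tr}(A^2)=|S|$ still holds in that case, so no separate argument is required. Equivalently, one can avoid eigenvalues entirely with Cauchy--Schwarz: $x^{\top}Ax\leq \|x\|_2\|Ax\|_2\leq \|x\|_2\cdot\|A\|_F\|x\|_2=\sqrt{|S|}\,\|x\|_2^2$, since the operator norm is at most the Frobenius norm and $\|A\|_F^2=|S|$. We would present this one-line version as the proof.
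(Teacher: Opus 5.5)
Your proposal is correct, and your eigenvalue argument is exactly the paper's proof: bound the quadratic form by the top eigenvalue $\lambda_1$ of the indicator matrix $M$, then use $\lambda_1^2\le\sum_i\lambda_i^2=\mathrm{tr}(M^2)=|S|$. The Cauchy--Schwarz/Frobenius one-liner you say you would present gives the same bound, $\|M\|_{\mathrm{op}}\le\|M\|_F=\sqrt{|S|}$, in a more elementary form, with the minor bonus that it does not use the symmetry of $S$.
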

\begin{proof}
Consider the symmetric matrix $M\in \{0,1\}^{n\times n}$ defined by
\[
M_{ij}=\begin{cases}
1,&\text{if }(i,j)\in S,\\
0,&\text{if }(i,j)\not\in S,
\end{cases}\qquad\text{for all }(i,j)\in [n]\times [n].
\]
We know that $M$ has $n$ real eigenvalues, and we order them decreasingly as $\lambda_{1}\geq \lambda_{2}\geq \dots\geq \lambda_{n}$. The matrix $\lambda_{1} I_{n}-M$ is positive semi-definite, where $I_{n}$ is the $n\times n$ identity matrix. In particular, we have
\[
0\leq \sum_{i=1}^{n}\sum_{j=1}^{n}x_{i}(\lambda_{1} I_{n}-M)_{ij} x_{j}=\lambda_{1}\sum_{i=1}^{n}x_{i}^{2}-\sum_{(i,j)\in S}x_{i}x_{j}.
\]
The conclusion thus follows from the fact that $\lambda_{1}\leq \sqrt{|S|}$. To see this, note that
\[
\sum_{i=1}^{n}\lambda_{i}^{2}=\mathrm{trace}(M^{2})=\sum_{i=1}^{n}\sum_{j=1}^{n}M_{ij}M_{ji}=|S|,
\]
which clearly implies $\lambda_{1}\leq \sqrt{|S|}$.
\end{proof}

We have now arrived at the crux of the proof --- showing that \eqref{eq:concentrated_case_assumption} implies a nontrivial lower bound on the $\ell_{4}$-norm of the vector $p$.

\begin{proof}[Proof of Lemma~\ref{lem:concentrated_case}]
For each pair $\{a,c\}\in \binom{[n]}{2}$, we let $h\big(\{a,c\}\big)$ be an arbitrary vertex in the set of maximizers 
\[\argmax_{b\in [n]\setminus \{a,c\}}\Walk[p]\big(\{a,c\},b\big).\]
For each $b\in [n]$, we define a set $S_{b}\subseteq [n]\times [n]$ by
\[
S_{b}:=\big\{(a,c)\,\big|\,a,c\in [n]\setminus \{b\}\text{ such that }a\neq c\text{ and }h\big(\{a,c\}\big)=b\big\}\cup\big\{(a,a)\,\big|\,a\in [n]\setminus \{b\}\big\}.
\]
Note that
\begin{equation}\label{eq:sum_of_S}
\sum_{b=1}^{n}|S_{b}|=\sum_{\{a,c\}\in \binom{[n]}{2}}2+\sum_{b=1}^{n}(n-1)=2n(n-1).
\end{equation}
For each $b\in [n]$, we define
\[
\beta_{b}:=\sum_{a,c\in [n]\setminus \{b\}}p_{ab}p_{cb}=(2\deg_{p}(b))^{2}\quad\text{and}\quad \gamma_{b}:=\sum_{(a,c)\in S_{b}}p_{ab}p_{cb}.
\]

Now by Definitions~\ref{def:walk} and~\ref{def:hop} we have
\begin{align*}
\sum_{\{a,c\}\in \binom{[n]}{2}}\HopD[p](a,c)&=\sum_{\{a,c\}\in \binom{[n]}{2}}\Hop[p](a,c)-\sum_{\{a,c\}\in \binom{[n]}{2}}\max_{b\in [n]\setminus\{a,c\}}\Walk[p]\big(\{a,c\},b\big)\\
&=\sum_{\{a,c\}\in\binom{[n]}{2}}\sum_{b\in [n]\setminus\{a,c\}}\frac{p_{ab}p_{cb}}{2\deg_{p}(b)}-\sum_{\{a,c\}\in\binom{[n]}{2}}\Walk[p]\big(\{a,c\},h\big(\{a,c\}\big)\big)\\
&=\frac{1}{2}\sum_{b=1}^{n}\left(\sum_{a,c\in [n]\setminus \{b\}}\frac{p_{ab}p_{cb}}{2\deg_{p}(b)}-\sum_{(a,c)\in S_{b}}\frac{p_{ab}p_{cb}}{2\deg_{p}(b)}\right)\\
&=\frac{1}{2}\sum_{b=1}^{n}\frac{\beta_{b}-\gamma_{b}}{\sqrt{\beta_{b}}}\geq \frac{1}{2}\sum_{b=1}^{n}\frac{\beta_{b}-\gamma_{b}}{\sqrt{\beta_{b}}+\sqrt{\gamma_{b}}}=\frac{1}{2}\sum_{b=1}^{n}\left(\sqrt{\beta_{b}}-\sqrt{\gamma_{b}}\right)\\
&=\frac{1}{2}\sum_{b=1}^{n}\left(2\deg_{p}(b)-\sqrt{\gamma_{b}}\right)=\sum_{\{a,b\}\in \binom{[n]}{2}}p_{ab}-\frac{1}{2}\sum_{b=1}^{n}\sqrt{\gamma_{b}}.
\end{align*}
Therefore, the assumption \eqref{eq:concentrated_case_assumption} implies $\sum_{b=1}^{n}\sqrt{\gamma_{b}}\geq 2\varepsilon$.
On the other hand, we have
\begin{align*}
\sum_{\{a,b\}\in \binom{[n]}{2}}p_{ab}^{4}&=\frac{1}{2}\sum_{b= 1}^{n}\sum_{a\in [n]\setminus \{b\}}p_{ab}^{4}\geq \frac{1}{2}\sum_{b\in [n],\;S_{b}\neq \emptyset}\left(\frac{1}{\sqrt{|S_{b}|}}\sum_{(a,c)\in S_{b}}p_{ab}^{2}p_{cb}^{2}\right)\tag{using Proposition~\ref{prop:spectral_graph}}\\
&\geq\frac{1}{2}\sum_{b\in [n],\;S_{b}\neq \emptyset}\left(\frac{1}{|S_{b}|^{3/2}}\left(\sum_{(a,c)\in S_{b}}p_{ab}p_{cb}\right)^{2}\right)
\tag{by Cauchy-Schwarz}\\
&\geq \frac{1}{2}\cdot\frac{\left(\sum_{b=1}^{n}\sqrt{\gamma_{b}}\right)^{4}}{\left(\sum_{b=1}^{n}\sqrt{|S_{b}|}\right)^{3}}\geq \frac{1}{2}\cdot\frac{\left(\sum_{b=1}^{n}\sqrt{\gamma_{b}}\right)^{4}}{\left(n\sum_{b=1}^{n}|S_{b}|\right)^{3/2}}\tag{by H\"older's inequality}\\
&\geq \frac{1}{8n^{9/2}}\left(\sum_{i=1}^{n}\sqrt{\gamma_{b}}\right)^{4}\geq \frac{2\varepsilon^{4}}{n^{9/2}},\tag{using \eqref{eq:sum_of_S}}
\end{align*}
as desired.
\end{proof}

\section{Upper Bound for Tree-Freeness}\label{sec:tree_upper_bound}

The goal of this section is to prove the upper bounds for testing tree-freeness and testing cliques, as stated in Theorems~\ref{thm:tree-freeness} and~\ref{thm:clique}, respectively. Although they are seemingly unrelated results, the proofs of these two upper bounds are quite similar, and in particular they rely on the same type of birthday-paradox argument. In Sections~\ref{subsec:more_birthday} and~\ref{subsec:tree_induction}, we develop the birthday-paradox-type lemmas underlying the proofs. The two sample complexity upper bounds will then be proved in Section~\ref{subsec:tree_proofs}.

\subsection{More Birthday Paradox}\label{subsec:more_birthday}

In one formulation of the classical birthday paradox, two batches of samples are drawn from the same probability distribution, and the goal is to show that, with high probability, there exists a common sample appearing in both batches. In this subsection, we take this perspective a step further: we show that the set of common samples of the two batches can, in a rough sense, be viewed as a single batch of samples drawn from the same distribution.

The ``effective size'' of this derived batch depends on the sizes of the two original batches. The classical birthday paradox can then be interpreted as establishing that this effective size is at least $1$, and hence that the intersection of the two batches is likely to be non-empty.

The notion of taking the ``intersection'' of two batches of samples can be formalized as follows.

\begin{definition}\label{def:product_of_empirical_indicators}
Suppose $w^{(1)},w^{(2)}\in \bN^{n}$ are vectors with nonnegative integer coordinates. We let $\cP(w^{(1)},w^{(2)})$ be the entry-wise product vector $w\in \bN^{n}$ defined by $w_{b}=w^{(1)}_{b}w^{(2)}_{b}$ for all $b\in [n]$. If $\mu$ and $\nu$ are probability distributions over $\bN^{n}$, let $\bcP(\mu,\nu)$ be the distribution of $\cP(w^{(1)},w^{(2)})$ where $w^{(1)}\sim \mu$ and $w^{(2)}\sim \nu$ are independent random vectors.
\end{definition}

We will also need the following ``matrix-vector multiplication'' version of Definition~\ref{def:product_of_empirical_indicators}.

\begin{definition}\label{def:matrix_vector_product}
Suppose $w^{(1)}\in \bN^{n}$ and $w^{(2)}\in \bN^{n\times n}$. We let $\cJ(w^{(1)},w^{(2)})$ be the vector $w\in \bN^{n}$ defined by
\[
w_{b}=\sum_{a=1}^{n}w^{(1)}_{a}w^{(2)}_{ab}\qquad\text{for all }b\in [n].
\]
Let $\cJ'(w^{(1)},w^{(2)})$ be the vector $w'\in \bN^{n}$ defined by
\[
w_{b}'=\max_{a\in [n]}w^{(1)}_{a}w^{(2)}_{ab}\qquad\text{for all }b\in [n].
\]
If $\mu$ and $\nu$ are probability distributions over $\bN^{n}$ and $\bN^{n\times n}$, respectively, then we let $\bcJ(\mu,\nu)$ be the distribution of $\cJ(w^{(1)},w^{(2)})$ where $w^{(1)}\sim \mu$ and $w^{(2)}\sim \nu$ are independent random vectors. Similarly, let $\bcJ'(\mu,\nu)$ be the corresponding distribution of $\cJ'(w^{(1)},w^{(2)})$.
\end{definition}

To prepare for the main lemma of this subsection, we make the following two standard definitions. The first allows us to take ``marginals'' of sub-probability mass functions:

\begin{definition}
Let $f:[n]^{2}\rightarrow [0,1]$ be a sub-probability mass function. Define  sub-probability mass functions $\pi_{1}f,\pi_{2}f:[n]\rightarrow [0,1]$ by letting
\[
\pi_{1}f(a)=\sum_{b=1}^{n}f(a,b)\qquad\text{for all }a\in [n],\qquad\text{and}\qquad
\pi_{2}f(b)=\sum_{a=1}^{n}f(a,b)\qquad\text{for all }b\in [n].
\]
\end{definition}

The next definition is motivated by the standard trick of ``ignoring elements with too small weights,'' which have already been used in Section~\ref{sec:square_upper_bound} (see, for example, Lemma~\ref{lem:dilute_case_Markov})  and will continue to come into play frequently in this section.

\begin{definition}
For any finite domain $\Lambda$ and two sub-probability mass functions $f,g:\Lambda\rightarrow [0,1]$, we say that $g$ is an $\varepsilon$-pruning of $f$ if $g(x)\leq f(x)$ for all $x\in \Lambda$ and $\sum_{x\in \Lambda}(f(x)-g(x))\leq \varepsilon$.
\end{definition}

We are now ready to state the main lemma of this subsection.

\begin{lemma}\label{lem:tree_bipartite_birthday_paradox}
Let $\beta,\gamma,\delta,\varepsilon\in (0,1)$ and $C>0$ be constants such that $\beta+\gamma\leq 1$ and $\beta\delta\varepsilon\cdot C\geq 16$. For sufficiently large positive integers $n$ and
\[m_{1}=\left\lceil Cn^{1-\beta}\right\rceil,\qquad m_{2}=\left\lceil Cn^{1-\gamma}\right\rceil,\qquad\text{and}\qquad m_{3}=\left\lceil Cn^{1-\beta-\gamma}\right\rceil,\] we have (recall the notion of stochastic domination in Definition~\ref{def:stochastic_domination}):
\begin{enumerate}[label=(\arabic*)]
\item Any sub-probability mass function $f:[n]\rightarrow [0,1]$ has an $\varepsilon$-pruning $g$ such that 
\begin{equation}\label{eq:batch_intersection_dominate}
\bcS(g,m_{3})\leq_{(1-\delta,1)}\bcP\big(\bcS'(g,m_{1}),\bcS(g,m_{2})\big).
\end{equation}
\item Any sub-probability mass function $f:[n]^{2}\rightarrow [0,1]$ has an $\varepsilon$-pruning $g$ such that
\[
\bcS(\pi_{2}g,m_{3})\leq_{(1-\delta,1)}\bcJ\big(\bcS'(\pi_{1}g,m_{1}),\bcS(g,m_{2})\big).
\]
\end{enumerate}
\end{lemma}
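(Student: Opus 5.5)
Our plan is to prove (1) first and then derive (2) by the same argument applied row by row. For (1), we prune away the light elements: let $g(x)=f(x)\cdot\ind{f(x)\geq \varepsilon/n}$. The removed mass is at most $n\cdot\varepsilon/n=\varepsilon$, so $g$ is an $\varepsilon$-pruning. Next, draw a batch of $m_{1}$ samples from $g$; its empirical indicator vector $w^{(1)}$ tells us which elements were hit. For every $x\in\supp(g)$,
\[
\Pr{w^{(1)}_{x}=1}=1-(1-g(x))^{m_{1}}\geq \min\{1/2,\,m_{1}g(x)/2\}.
\]
Since $g(x)\geq \varepsilon/n$, we have $m_{1}g(x)\geq C\varepsilon n^{-\beta}$. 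So for each $x$ in the support, a sample $y$ from the second batch (count vector $w^{(2)}$) lands on a hit element with probability at least roughly $n^{-\beta}$ times $g(x)$, up to constants.

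We then build the coupling in the spirit of Lemma~\ref{lem:dilute_domination}. First generate $w^{(1)}$. Then generate the $m_{2}$ second-batch samples one at a time, each tagged with an independent uniform $u\in[0,1]$. A second-batch sample $y=x$ is declared \emph{accepted} if $w^{(1)}_{x}=1$ and $u\leq \theta/\Pr{w^{(1)}_{x}=1}$, where $\theta:=\tfrac12\,C\varepsilon n^{-\beta}\leq \Pr{w^{(1)}_{x}=1}$. Averaged over $w^{(1)}$, each second-batch sample is then accepted and equal to $x$ with probability exactly $\theta g(x)$.

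The difficulty is that acceptances across different samples are dependent through $w^{(1)}$, so the accepted samples are not i.i.d.\ draws from $\theta g$. To handle this, we randomize $w^{(1)}$ through Poissonization, as in Lemma~\ref{lem:classical_birthday}. Under Poissonization the indicators $w^{(1)}_{x}$ become independent across $x$, and the count of accepted copies of $x$ becomes a thinning of a Poisson variable with mean $m_{2}g(x)$. Conditioned on $w^{(1)}_{x}=1$, this count is Poisson with mean $m_{2}g(x)\theta/\Pr{w^{(1)}_{x}=1}\geq m_{2}\theta g(x)$. Since $m_{2}\theta\gtrsim C^{2}\varepsilon n^{1-\beta-\gamma}\geq m_{3}$, each coordinate of $\cP(w^{(1)},w^{(2)})$ stochastically dominates an independent Poisson variable with mean $m_{3}g(x)$.

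To convert between Poisson and fixed sample sizes, we take the $(1,\cdot)$ to $(1-\delta,1)$ loss from concentration. The Poisson totals are concentrated because $m_{3}\geq C n^{1-\beta-\gamma}$ is large when $\beta+\gamma<1$. When $\beta+\gamma=1$, the mean is $O(C)$, and the slack $C\beta\delta\varepsilon\geq 16$ is exactly what should give failure probability $\delta$ via Chernoff/Markov. We use a factor-2 oversampling in $m_{1}$ and $m_{2}$ relative to the needed means, absorbed into $C$. This de-Poissonization is the step we expect to be the main obstacle: we need $\lambda_{2}=1$, so we cannot lose the constant factor that Lemma~\ref{lem:classical_birthday} loses. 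Our plan is to instead couple the fixed-size batches to Poisson batches of slightly smaller mean, which they dominate with probability $1-\delta/2$. On the target side, the $m_{3}$ samples are generated exactly by running the procedure until $m_{3}$ acceptances, with failure only when fewer than $m_{3}$ occur.

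For (2), we prune $g(a,b)=f(a,b)\ind{\pi_{1}f(a)\geq \varepsilon/n}$, which is again an $\varepsilon$-pruning. The first batch hits row $a$ with probability at least $\theta$ relative to $\pi_{1}g(a)$. A second-batch sample $(a,b)$ is accepted when row $a$ was hit, with the same rejection step. The accepted samples then contribute to $\cJ(w^{(1)},w^{(2)})_{b}$ and have $b$-marginal $\theta\,\pi_{2}g(b)$. The rest of the argument (Poissonization over rows, concentration) is identical to (1), with rows playing the role of elements.
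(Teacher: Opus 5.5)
Your proposal has a genuine gap, and it sits at the dependence issue you yourself flagged.

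\textbf{The per-coordinate domination step is not justified.} After Poissonizing, you argue as follows. Conditioned on $w^{(1)}_x=1$, the accepted count of $x$ is Poisson with mean at least $m_2\theta g(x)\ge m_3 g(x)$. From this you conclude that each coordinate of $\cP(w^{(1)},w^{(2)})$ dominates $\mathrm{Poi}(m_3 g(x))$. That inference is invalid.

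\begin{itemize}
\item Unconditionally, the count is $0$ with probability at least $1-\Pr{w^{(1)}_x=1}$. This is close to $1$ for light elements: for $g(x)\approx\varepsilon/n$ one has $\Pr{w^{(1)}_x=1}\approx C\varepsilon n^{-\beta}$.
\item A zero-inflated Poisson of mean $\mu$ never dominates $\mathrm{Poi}(\mu)$, since by convexity its mass at $0$ is strictly larger.
\item What you actually need is $\Pr{w^{(1)}_x=1}\cdot\Pr{\mathrm{Poi}(\lambda_x)\ge k}\ge\Pr{\mathrm{Poi}(m_3 g(x))\ge k}$ for every $k\ge 1$ and every range of $g(x)$, from $\varepsilon/n$ up to constant order. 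This is where $g(x)\ge\varepsilon/n$ must really be used. Your inequality $m_2\theta\ge m_3$ only reflects the first-order comparison ($k=1$, light elements).
\end{itemize}

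The claim may well hold with the given constants, but it is the heart of the lemma and it is not proved.

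\textbf{The target-side step fails as written.} You say the $m_3$ target samples are ``generated exactly by running the procedure until $m_3$ acceptances''. The accepted samples are confined to the random set of elements hit by the first batch. They therefore form an exchangeable, not i.i.d., sequence, with inflated collision probability. When $g$ is a probability vector, their count vector has the same total $m_3$ as a draw from $\bcS(g,m_3)$. Coordinatewise domination would then force equality in law, which fails. A Poissonized proof should instead use that a $\mathrm{Poi}(2m_3)$-size batch contains an $m_3$-size batch except with probability $e^{-\Omega(m_3)}$.

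\textbf{Part (2) does not follow ``identically''.} The coordinates $\cJ(w^{(1)},w^{(2)})_b$ share the row indicators $w^{(1)}_a$ and are positively correlated. Coordinatewise domination therefore does not give joint domination. For example, compare $(B,B)$ with two independent copies of $B$ on the event that some coordinate is nonzero. You would have to dominate the row totals and then split them according to $g(a,\cdot)/\pi_1 g(a)$. That is exactly the paper's reduction of (2) to (1).

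\textbf{How the paper avoids all of this.} It matches at the level of sample \emph{positions} rather than elements, and never Poissonizes.
\begin{itemize}
\item Let $\cE_1$ be the event that each $a$ occurs at most $2\beta^{-1}\varepsilon^{-1}ng(a)$ times among $b_1,\dots,b_{m_1}$. This is Lemma~\ref{lem:sample_number_control}, and it is where the pruning at $\varepsilon/n$ and the factor $\beta$ in $\beta\delta\varepsilon C\ge 16$ are used.
\item On $\cE_1$, each second-batch sample $a_i$ is matched, with probability $\beta\varepsilon|I_{a_i}|/(2ng(a_i))\le 1$, to a uniform position in $I_{a_i}$.
\item The normalization uses the \emph{realized} multiplicity $|I_{a_i}|$, not the expected hit probability as in your acceptance rule. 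This makes the matched positions i.i.d.\ uniform on $[m_1]$ and independent of the values $b_r$.
\item A Chebyshev bound shows that at least $m_3$ distinct positions are matched with probability $1-\delta/2$.
\item The values at a uniformly random $m_3$-subset of the matched positions are then exactly $m_3$ i.i.d.\ samples from $g$. Their count vector lies coordinatewise below $\cP(w^{(1)},w^{(2)})$.
\end{itemize}
This gives $\lambda_2=1$ directly, with no Poisson tail comparisons.
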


For an element $a\in [n]$ with very small weight $f(a)$ under a sub-probability mass function $f:[n]\to [0,1]$, the probability that $a$ appears in the intersection of two independent batches of samples is roughly proportional to $f(a)^2$, whereas the probability that it appears in a single batch is proportional to $f(a)$. Thus, for small values of $f(a)$, the former is significantly smaller than the latter. Consequently, a key difficulty in establishing the stochastic domination in~\eqref{eq:batch_intersection_dominate} is handling elements with small weight.

In particular, if there exist elements with extremely small weight --- namely those $a\in[n]$ with $0 < f(a) \lesssim 1/(Cn)$ --- then it is impossible for $\bcP(\bcS'(f,m_1),\bcS(f,m_2))$ to dominate $\bcS(f,m_3)$. This necessitates an $\varepsilon$-pruning step to exclude such elements. For elements with moderately small weights, for instance those $a\in[n]$ with $f(a)\approx 1/n$, the next lemma provides a useful bound on their appearance in a single batch of samples.

\begin{lemma}\label{lem:sample_number_control}
Let $\beta,\varepsilon,\delta\in (0,1)$ and $C\geq 1$ be constants. For sufficiently large positive integers $n$, the following statement holds. Suppose $f:[n]\rightarrow[0,1]$ is a sub-probability mass function such that for all $a\in [n]$, either $f(a)=0$ or $f(a)\geq \varepsilon/n$. Then for $m=\left\lceil Cn^{1-\beta}\right\rceil$, we have
\[
\Pru{w\sim \bcS(f,m)}{w_{a}\leq \frac{2n}{\beta\varepsilon}\cdot f(a)\textup{ for all }a\in [n]}\geq 1-\delta.
\]
\end{lemma}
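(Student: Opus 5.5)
The plan is to apply a Chernoff bound to each coordinate and then take a union bound over the support of $f$. For $w\sim \bcS(f,m)$, each coordinate $w_a$ is a binomial random variable with $m$ trials and success probability $f(a)$. Its mean is
\[
\mu_a=mf(a)\leq (Cn^{1-\beta}+1)f(a).
\]
The threshold is $T_a:=\frac{2n}{\beta\varepsilon}f(a)$. For $a$ in the support we have $f(a)\geq \varepsilon/n$, so $T_a\geq 2/\beta$. More to the point, the ratio satisfies
\[
T_a/\mu_a\gtrsim \frac{2n^{\beta}}{\beta\varepsilon C},
\]
which tends to infinity as $n$ grows. So for large $n$, the event $w_a> T_a$ is a large-deviation event far above the mean. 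Coordinates with $f(a)=0$ satisfy $w_a=0$ deterministically and can be ignored.

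The key estimate is the elementary tail bound
\[
\Pr{\mathrm{Bin}(m,f(a))\geq t}\leq \binom{m}{t}f(a)^{t}\leq \left(\frac{e\,m f(a)}{t}\right)^{t}.
\]
Take $t=\lceil T_a\rceil$. Then $em f(a)/t\leq e(C+1)\beta\varepsilon\, n^{-\beta}/2$, which is at most $n^{-\beta/2}$ once $n$ is large. This gives
\[
\Pr{w_a>T_a}\leq n^{-\beta t/2}.
\]

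The main obstacle is that the union bound needs the total failure probability, summed over up to $n$ coordinates, to be at most $\delta$. A single term is only guaranteed to be about $n^{-\beta t/2}$ with $t\geq 2/\beta$, which gives $n^{-1}$ per coordinate and $O(1)$ in total. That is not enough. To fix this, I will exploit that $t$ scales with $f(a)$ and do the union bound more carefully. The bound $(em f(a)/t)^{t}$ with $t\geq T_a=\frac{2n}{\beta\varepsilon}f(a)$ is at most $n^{-\beta T_a/2}=n^{-n f(a)/\varepsilon}$. Writing $x_a:=nf(a)/\varepsilon\geq 1$ and using $\sum_a x_a\leq n/\varepsilon$, the total failure probability is
\[
\sum_{a}n^{-x_a}.
\]
This sum is still as large as about $n\cdot n^{-1}=1$ when all $x_a\approx 1$, so the constant $2$ in the threshold has to be used more sharply. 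Keep the base as $(e m f(a)/t)$, which is $O(n^{-\beta})$ rather than $n^{-\beta/2}$, and take $n$ large. The exponent is then $\beta t\geq 2$ up to a factor $(1-o(1))$. So each term is $n^{-2+o(1)}\cdot\mathrm{polylog}$, namely $(c\,n^{-\beta})^{t}$ with $t\geq 2/\beta$, which gives $c^{t}n^{-2}$. Summing over at most $n$ coordinates gives $O(n^{-1})$, which is at most $\delta$ for large $n$.

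I expect the delicate point to be exactly this bookkeeping: the rounding $t=\lceil T_a\rceil$, together with $m\leq Cn^{1-\beta}+1$, must keep the per-coordinate bound at $n^{-2+o(1)}$ uniformly in $a$, with constants depending only on $\beta,\varepsilon,C$. I would handle this by checking that $em f(a)/t\leq e(C+1)\beta\varepsilon n^{-\beta}/2$ holds for every $a$ in the support. Since the factor $(e(C+1)\beta\varepsilon/2)^{t}$ decays or grows at most exponentially in $t$, it is absorbed by the spare $n^{-\beta t}$ decay once $t\geq 2/\beta$ and $n$ is sufficiently large.
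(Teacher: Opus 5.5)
Your proposal is correct and takes the same route as the paper. You bound each coordinate's tail by $(c\cdot mf(a)/t)^{t}$ with $t\geq 2/\beta$ and base $O(n^{-\beta})$, get $O(n^{-2})$ per coordinate, and take a union bound; the paper does the same but uses the Chernoff form $(4\,\mathbb{E}[w_a]/t_a)^{t_a}$ with real $t_a$ where you use $\binom{m}{t}f(a)^{t}$ with $t=\lceil T_a\rceil$. Your detours through the $n^{-\beta/2}$ base and the $\sum_a n^{-x_a}$ bookkeeping are unnecessary, since once $c\,n^{-\beta}<1$ the bound $(c\,n^{-\beta})^{t}$ is decreasing in $t$ and hence at most $c^{2/\beta}n^{-2}$ uniformly in $a$, which is exactly the paper's step.
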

\begin{proof}
For each $a\in [n]$ such that $f(a)\neq 0$, the coordinate $w_{a}$ is the sum of $m$ Bernoulli random variables with mean $f(a)$. By Chernoff bound, we have
\begin{equation}\label{eq:binomial_Chernoff_bound}
\Pr{w_{a}\geq t}\leq \left(\frac{4\cdot \Ex{w_{a}}}{t}\right)^{t}\qquad\text{for any }t\geq \Ex{w_{a}}.
\end{equation}
Now let $t_{a}=2\beta^{-1}\varepsilon^{-1}n\cdot f(a)$. Since $f(a)\geq \varepsilon/n$, we have $t_{a}\geq 2\beta^{-1}$. Furthermore, we have
\[
\frac{\Ex{w_{a}}}{t_{a}}=\frac{m\cdot f(a)}{2\beta^{-1}\varepsilon^{-1}n\cdot f(a)}\leq \frac{C\beta \varepsilon }{n^{\beta}}.
\]
Plugging into \eqref{eq:binomial_Chernoff_bound}, it follows that
\[
\Pr{w_{a}\geq t_{a}}\leq \left(\frac{4C\beta\varepsilon}{ n^{\beta}}\right)^{2\beta^{-1}}=\frac{(4C\beta\varepsilon)^{2\beta^{-1}}}{n^{2}}\leq \frac{\delta}{n},
\]
where we used the condition that $n$ is sufficiently large in the last transition. Now, taking a union bound over all $a\in [n]$ such that $f(a)\neq 0$ yields the conclusion.
\end{proof}

We are now ready to prove Lemma~\ref{lem:tree_bipartite_birthday_paradox}.

\begin{proof}[Proof of Lemma~\ref{lem:tree_bipartite_birthday_paradox}]

It is not hard to see that the first statement implies the second statement. In fact, for any sub-probability mass function $f:[n]^{2}\rightarrow [0,1]$, the distribution $\bcS(\pi_{2}f,m_{3})$ equals the output distribution of the following process:
\begin{enumerate}
\item Sample $w^{(1)}\sim \bcS(\pi_{1}f,m_{3})$.
\item Initialize $w^{(2)}\in \{0,1\}^{n}$ to be the all-zero vector. For each $a\in [n]$, repeat the following $w^{(1)}_{a}$ times:
\begin{itemize}
\item Sample an element $b\in [n]$ with probability proportional to $f(a,b)$.
\item Update $w^{(2)}_{b}\gets w_{b}^{(2)}+1$.
\end{itemize}
\item Output $w^{(2)}$.
\end{enumerate}
If the distribution $\bcS(\pi_{1}f,m_{3})$ in the first step is replaced with $\bcP\big(\bcS'(\pi_{1}f,m_{1}),\bcS(\pi_{1}f,m_{2})\big)$, then the distribution of the output in the third step becomes $\bcJ\big(\bcS'(\pi_{1}f,m_{1}),\bcS(f,m_{2})\big)$. Therefore, to prove the second statement of Lemma~\ref{lem:tree_bipartite_birthday_paradox}, we apply the first statement to the sub-probability mass function $\pi_{1}f$. This yields an $\varepsilon$-pruning $g_{1}:[n]\rightarrow [0,1]$ of $\pi_{1}f$ such that
\[
\bcS(g_{1},m_{3})\leq_{(1-\delta,1)}\bcP\big(\bcS'(g_{1},m_{1}),\bcS(g_{1},m_{2})\big).
\]
Since there clearly exists an $\varepsilon$-pruning $g$ of $f$ such that $\pi_{1}g=g_{1}$, it follows from the argument above that for this $g$, we have
\[
\bcS(\pi_{2}g,m_{3})\leq_{(1-\delta,1)}\bcJ\big(\bcS'(\pi_{1}g,m_{1}),\bcS(g,m_{2})\big).
\]

In the rest of the proof, we prove the first statement of Lemma~\ref{lem:tree_bipartite_birthday_paradox}.

Fix an arbitrary sub-probability mass function $f:[n]\rightarrow [0,1]$. We define $g:[n]\rightarrow [0,1]$ by
\[
g(a)=f(a)\cdot\ind{f(a)\geq \frac{\varepsilon}{n}}\qquad\text{for all }a\in [n].
\]
It is clear that $g\leq f$ pointwise and
\[
\sum_{a=1}^{n}(f(a)-g(a))=\sum_{a=1}^{n}f(a)\cdot\ind{f(a)< \frac{\varepsilon}{n}}<n\cdot \frac{\varepsilon}{n}=\varepsilon.
\]
So $g$ is an $\varepsilon$-pruning of $f$. Furthermore, for each $a\in [n]$, either $g(a)=0$ or $g(a)\geq \varepsilon/n$.

We next define and analyze three sampling processes $\frakP_{1}$, $\frakP_{2}$ and $\frakP_{2}'$. Note that both $\frakP_{2}$ and $\frakP_{2}'$ operate on the output of $\frakP_{1}$.  

\paragraph{The process $\frakP_{1}$.} Let $b_{1},\dots,b_{m_{1}}$ be a sequence of $m_{1}$ independent samples drawn from $g$.

\paragraph{Analysis of $\frakP_{1}$.} For each $a\in [n]$, let $I_{a}$ be the set of indices $i\in [m_{1}]$ such that $b_{i}=a$. Let $\cE_{1}$ be the event that \[|I_{a}|\leq 2\beta^{-1}\varepsilon^{-1}n\cdot g(a)\qquad\text{for any }a\in[n].\]
It follows from Lemma~\ref{lem:sample_number_control} that 
\begin{equation}\label{eq:cE_1_prob}
\Prs{\frakP_{1}}{\cE_{1}}\geq 1-\frac{\delta}{2}
\end{equation}
when $n$ is sufficiently large.

\paragraph{The process $\frakP_{2}$.} If the event $\cE_{1}$ does not happen, output the zero vector in $\bN^{n}$. If the event $\cE_{1}$ happens, run the following procedure:
\begin{enumerate}
\item Let $a_{1},\dots,a_{m_{2}}$ be a sequence of $m_{2}$ independent samples drawn from $g$.
\item For each $i\in [m_{2}]$, if $a_{i}=\nil$ then let $r_{i}=\nil$. If $a_{i}\neq \nil$ (i.e. $a_{i}\in [n]$), let $r_{i}$ be a uniformly random element of $I_{a_{i}}$ with probability
\[
p_{i}=\frac{\beta \varepsilon\cdot|I_{a_{i}}|}{2n\cdot g(a_{i})}\in [0,1],
\]
and let $r_{i}=\nil$ with probability $1-p_{i}$. 
\item Initialize $w\in \bN^n$ to be the all-zero vector, and initialize a variable $a\gets 0$.
\item For each $i\in [m_{2}]$, if $r_{i}\neq \nil$, update $a\gets a_{i}$ and $w_{a}\gets w_{a}+1$.
\item Output $w$.
\end{enumerate}

\paragraph{Analysis of $\frakP_{2}$.} We assume $\cE_{1}$ happens. It is easy to see that $r_{1},r_{2},\dots,r_{m_{2}}$ are independent random variables, each being a uniformly random element of $[m_{1}]$ with probability
\[
p=\sum_{a=1}^{n}g(a)\cdot \frac{\beta \varepsilon\cdot|I_{a}|}{2n\cdot g(a)}=\frac{\beta \varepsilon m_{1}}{2n}\leq 1,
\]
and being $\nil$ with probability $1-p$. Let $S=\{i\in [m_{2}]\mid r_{i}\neq \nil\}$, and let $T=\{r_{i}\mid i\in S\}\subseteq [m_{1}]$. Let $s=|S|$ and $t=|T|$. Let $\cE_{2}$ be the event that $t\geq m_{3}$. We next show that (when $n$ is sufficiently large)
\begin{equation}\label{eq:cE_2_prob}
\Prs{\frakP_{2}}{\cE_{2}\mid \cE_{1}}\geq 1-\frac{\delta}{2}.
\end{equation}

Note that when $n$ is sufficiently large, 
\[\Ex{s}=m_{2}p=\frac{\beta\varepsilon}{2}\cdot\frac{m_{1}m_{2}}{n}\geq \frac{\beta\varepsilon}{2}\cdot C^{2}n^{1-\beta-\gamma}\geq 4m_{3}\]
Thus, by Chernoff bound, we have
\begin{equation}\label{eq:S_at_most_3m_3}
\Pr{s\leq 3m_{3}}\leq \exp(-m_{3}/8)\leq \frac{\delta}{4}.
\end{equation}
Conditioned on any fixed $S$ with $|S|\geq 3m_{3}$, we have
\begin{equation}\label{eq:expectation_R_bound}
\Ex{t\mid S}\geq m_{1}\left(1-\left(1-\frac{1}{m_{1}}\right)^{3m_{3}}\right)\geq 2m_{3},
\end{equation}
where we used $m_{1}\geq 4m_{3}$ in the last transition. Conditioned on $S$, the random variables $\ind{r\in T}$ (where $r$ ranges in $[m_{1}]$) are pairwise negatively correlated. So we have
\begin{equation}\label{eq:variance_R_bound}
\Var{t\mid S}\leq \sum_{r=1}^{m_{1}}\Var{\ind{r\in T}\,\big|\, S}\leq\sum_{r=1}^{m_{1}}\Ex{\ind{r\in T}\,\big|\, S}=\Ex{t\mid S}.
\end{equation}
It then follows from Chebyshev's inequality that
\begin{align*}
\Pr{t\leq m_{3}\mid S}&\leq \frac{\Var{t\mid S}}{\bigl(\Ex{t\mid S}-m_{3}\bigr)^{2}}\tag{using~\eqref{eq:expectation_R_bound}}\\
&\leq \frac{\Ex{t\mid S}}{\bigl(\Ex{t\mid S}-m_{3}\bigr)^{2}}\tag{using \eqref{eq:variance_R_bound}}\\
&\leq \frac{2m_{3}}{m_{3}^{2}}\leq \frac{\delta}{4}\tag{using \eqref{eq:expectation_R_bound}}.
\end{align*}
Combining the above with \eqref{eq:S_at_most_3m_3}, we obtain \eqref{eq:cE_2_prob}.

\paragraph{The process $\frakP_{2}'$.} Recall that $b_{1},\dots,b_{m_{1}}\in [n]$ are the samples drawn in the process $\frakP_{1}$. Let $T'\subseteq [m_{1}]$ be a uniformly random subset of size $m_{3}$, and output the empirical count vector of the sequence $\big(b_{r}\big)_{r\in T'}$.

\paragraph{Analysis of $\frakP_{2}'$. } For fixed samples $b_{1},\dots,b_{m_{1}}$ drawn in the process $\frakP_{1}$ such that $\cE_{1}$ happens, we consider the output distributions of $\frakP_{2}$ and $\frakP_{2}'$ when running on $b_{1},\dots,b_{m_{1}}$. Since the set $T$ defined in the analysis of $\frakP_{2}$ is a uniformly random subset of $[m_{1}]$ with (random) size $t$, it follows that conditioned on $\cE_{2}=\{t\geq m_{3}\}$, the output distribution of $\frakP_{2}$ dominates the output distribution of $\frakP_{2}'$ (recall Definition~\ref{def:stochastic_domination}).  

\paragraph{Putting things together.} We use $\frakP_{2}'\circ \frakP_{1}$ to denote the output distribution of $\frakP_{2}'$ running on the output of $\frakP_{1}$. Note that $\frakP_{2}'\circ \frakP_{1}$ is a distribution over $\bN^{n}$. It is easy to see that
\begin{equation}\label{eq:domination_1}
\bcS(g,m_{3})=\frakP_{2}'\circ \frakP_{1}
\end{equation}
Analogously, we use $\frakP_{2}\circ \frakP_{1}$ to denote the output distribution of  $\frakP_{2}$ running on the output of $\frakP_{1}$. By the definition of $\frakP_{2}$, it is easy to see that 
\begin{equation}\label{eq:domination_2}
\frakP_{2}\circ \frakP_{1}\leq_{(1,1)}\bcP\big(\bcS'(g,m_{1}),\bcS(g,m_{2})\big).
\end{equation}
Furthermore, as we have argued, conditioned on the event $\cE_{1}\cap \cE_{2}$ we have
\begin{equation}\label{eq:domination_3}
\big[\frakP_{2}'\circ \frakP_{1}\,\big|\,\cE_{1}\cap\cE_{2}\big]\leq_{(1,1)}\big[\frakP_{2}\circ \frakP_{1}\,\big|\, \cE_{1}\big].
\end{equation}
Since $\Pr{\cE_{1}\cap\cE_{2}}\geq 1-\delta$ by \eqref{eq:cE_1_prob} and \eqref{eq:cE_2_prob}, combining \eqref{eq:domination_1},~\eqref{eq:domination_2} and~\eqref{eq:domination_3} yields
\[
\bcS(g,m_{3})\leq_{(1-\delta,\,1)}\bcP\big(\bcS'(g,m_{1}),\bcS(g,m_{2})\big).\qedhere
\]
\end{proof}

By the definition of empirical count vectors (see Section~\ref{subsec:general_notations}), Lemma~\ref{lem:tree_bipartite_birthday_paradox} immediately implies the following corollary:

\begin{corollary}\label{cor:tree_bipartite_birthday_paradox}
Under the same notation and conditions as in Lemma~\ref{lem:tree_bipartite_birthday_paradox}, we have:
\begin{enumerate}[label=(\arabic*)]
\item Any sub-probability mass function $f:[n]\rightarrow [0,1]$ has an $\varepsilon$-pruning $g$ such that
\[
\bcS'(g,m_{3})\leq_{(1-\delta,1)}\bcP\big(\bcS'(g,m_{1}),\bcS'(g,m_{2})\big).
\]
\item Any sub-probability mass function $f:[n]^{2}\rightarrow [0,1]$ has an $\varepsilon$-pruning $g$ such that
\[
\bcS'(\pi_{2}g,m_{3})\leq_{(1-\delta,1)}\bcJ'\big(\bcS'(\pi_{1}g,m_{1}),\bcS'(g,m_{2})\big).
\]
\end{enumerate}
\end{corollary}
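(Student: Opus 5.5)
The plan is to deduce both items directly from Lemma~\ref{lem:tree_bipartite_birthday_paradox} by applying a single coordinate-wise map to every vector involved. Let $\iota:\bN^{n}\rightarrow\{0,1\}^{n}$ send a vector to its indicator, $\iota(w)_{b}=\ind{w_{b}\geq 1}$. By the definition of empirical indicator vectors, $\bcS'(g,m)$ is the pushforward of $\bcS(g,m)$ under $\iota$. The first step is to record how $\iota$ interacts with the product operations. For item (1), a coordinate of $\cP(w^{(1)},w^{(2)})$ is nonzero exactly when both factors are nonzero, so $\iota(\cP(w^{(1)},w^{(2)}))=\cP(\iota(w^{(1)}),\iota(w^{(2)}))$. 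For item (2), $\cJ(w^{(1)},w^{(2)})_{b}\geq 1$ exactly when $w^{(1)}_{a}w^{(2)}_{ab}\geq 1$ for some $a$, and this is exactly when $\cJ'(\iota(w^{(1)}),\iota(w^{(2)}))_{b}=1$. Hence $\iota(\cJ(w^{(1)},w^{(2)}))=\cJ'(\iota(w^{(1)}),\iota(w^{(2)}))$. Since $\iota\circ\iota=\iota$, the first argument $\bcS'(\cdot,m_{1})$ is unaffected. It follows that the pushforward of $\bcP(\bcS'(g,m_{1}),\bcS(g,m_{2}))$ under $\iota$ is $\bcP(\bcS'(g,m_{1}),\bcS'(g,m_{2}))$, and the pushforward of $\bcJ(\bcS'(\pi_{1}g,m_{1}),\bcS(g,m_{2}))$ is $\bcJ'(\bcS'(\pi_{1}g,m_{1}),\bcS'(g,m_{2}))$.

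The second step is to check that $(\lambda_{1},\lambda_{2})$-domination is preserved under $\iota$. Suppose $\rho$ is a coupling witnessing $\nu\leq_{(1-\delta,1)}\mu$ as in Definition~\ref{def:stochastic_domination}, and consider the pushforward of $\rho$ under $(w,z)\mapsto(\iota(w),\iota(z))$. Condition (1) survives because $\iota$ is monotone: $w\succeq z$ implies $\iota(w)\succeq\iota(z)$. Conditions (2) and (3) survive by taking preimages, since $\Pr{\iota(w)\in S}=\Pr{w\in\iota^{-1}(S)}$, and similarly for $z$. Therefore $\iota_{*}\nu\leq_{(1-\delta,1)}\iota_{*}\mu$.

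Finally, take the same $\varepsilon$-pruning $g$ provided by Lemma~\ref{lem:tree_bipartite_birthday_paradox} and push both sides of each domination through $\iota$; this gives the two claimed statements. No step is a real obstacle. The only point needing care is confirming that in the second item the $\cJ$-to-$\cJ'$ identity holds coordinate-wise. This works because all entries are nonnegative integers, so a sum of such entries is positive exactly when some term is positive.
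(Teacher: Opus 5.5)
Your proposal is correct and is essentially the paper's argument. The paper simply asserts that the corollary follows immediately from Lemma~\ref{lem:tree_bipartite_birthday_paradox} by the definition of empirical indicator vectors. Your pushforward under the coordinatewise indicator map $\iota$ fills in exactly that step: $\iota$ commutes with $\cP$, turns $\cJ$ into $\cJ'$, and preserves $(1-\delta,1)$-domination because it is monotone.
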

\subsection{Induction on the Number of Edges}\label{subsec:tree_induction}

Our main idea for proving the sample complexity upper bound in Theorem~\ref{thm:tree-freeness} is to induct on the number of edges in the tree. However, we cannot directly use the statement of Theorem~\ref{thm:tree-freeness} as an induction hypothesis. Instead, we will formulate a ``tree version'' of the birthday-paradox-type statement in Corollary~\ref{cor:tree_bipartite_birthday_paradox} that is specifically designed to be provable by induction.

For the convenience of the induction argument, we view the edges of a tree as directed edges that converge to a designated root vertex.

\begin{definition}
Let $V$ be a finite set. Given a finite set $T$ of $(|V|-1)$ ordered pairs $(u,v)\in V^{2}$ and a distinguished element $v^*\in V$, the set $T$ is called a \emph{directed rooted tree} on $V$ with root $v^*$ if the following hold: 
\begin{enumerate}[label=(\arabic*)]
\item For each $(u,v)\in T$, we have $u\neq v$.
\item For every $u\in V\setminus\{v^*\}$, there is exactly one $v\in V$ such that $(u,v)\in T$.
\item Every vertex has a path to $v^*$: for every $v_{0}\in V$, there is an integer $\ell\geq 0$ and elements $v_{1},\dots,v_{\ell}\in V$ such that $v_{\ell}=v^*$ and $(v_{i},v_{i+1})\in T$ for all $i\in \{0,1,\dots,\ell-1\}$.
\end{enumerate}
\end{definition}

To formulate a ``tree version'' of Definition~\ref{def:matrix_vector_product}, we make the following two standard definitions.

\begin{definition}
Fix a directed rooted tree $T$ on a finite set $V$. Given a map $\varphi:T\rightarrow [n]^{2}$ and a vector $y\in [n]^{V}$, we say $\varphi$ is \emph{compatible with} $y$ if $\varphi(u,v)=(y_{u},y_{v})$ for all $(u,v)\in T$. 
\end{definition}

\begin{definition}
Let $V$ be a finite set, and let  $f:[n]^{V}\rightarrow [0,1]$ be a sub-probability mass function. For any subset $U\subseteq V$, we define a sub-probability mass function $\pi_{U}[f]:[n]^{U}\rightarrow [0,1]$ by letting 
\[
\pi_{U}[f](z)=\sum_{y\in [n]^{V}}\ind{y_{u}=z_{u}\text{ for all }u\in U}\cdot f(y)\qquad\text{for all }z\in [n]^{U}.
\]
When the cardinality of $U$ is 1 or 2, we slightly abuse the notation as follows. For any two distinct vertices $u,v\in V$, define $\pi_{u,v}f:[n]^{2}\rightarrow [0,1]$ by letting
\[
\pi_{u,v}f(a,b)=\sum_{y\in[n]^{V}}\ind{y_{u}=a\textup{ and }y_{v}=b}\cdot f(y)\qquad\textup{for all }a,b\in [n].
\]
For any single element $v\in V$, analogously define $\pi_{v}f:[n]\rightarrow [0,1]$ by letting
\[
\pi_{v}f(a)=\sum_{y\in[n]^{V}}\ind{y_{v}=a}\cdot f(y)\qquad\text{for all }a\in [n].
\]
\end{definition}

The ``tree version'' of Definition~\ref{def:matrix_vector_product} can now be stated as follows.

\begin{definition}\label{def:J_tree}
Suppose $T$ is a directed rooted tree on a finite set $V$ with root $v^*$. Given a sub-probability mass function $f:[n]^{V}\rightarrow [0,1]$ and a positive integer $m$, let $\bcJ^{T}_{v^*}(f,m)$ be the output distribution of the following process:
\begin{enumerate}
\item For each pair $(u,v)\in T$, independently draw $m$ samples from the sub-probability mass function $\pi_{u,v}f$, and let $X^{(u,v)}\subseteq [n]^{2}$ be the set formed by the $m$ samples.
\item Initialize $w\in \{0,1\}^{n}$ to be the all-zero vector. For each $b\in [n]$, let $w_{b}=1$ if there exists a map $\varphi:T\rightarrow [n]^{2}$ such that
\begin{itemize}
\item $\varphi$ is compatible with some vector $y\in [n]^{V}$ such that $y_{v^*}=b$; and
\item $\varphi(u,v)\in X^{(u,v)}$ for each $(u,v)\in T$.
\end{itemize}
\item Output the vector $w$.
\end{enumerate}
\end{definition}

The next lemma is the ``tree version'' of Corollary~\ref{cor:tree_bipartite_birthday_paradox}, and is proved via induction on the number of edges in the tree.

\begin{lemma}\label{lem:tree_birthday}
Let $k,t$ be positive integers such that $k\geq t$. Let $\delta,\varepsilon\in (0,1)$ and $\delta\varepsilon\cdot C\geq 16k$ be constants. The following statement holds for sufficiently large positive integers $n$. Suppose $T$ is a directed rooted tree on a finite set $V$ with root $v^*$, where $|V|=t+1$. If \[m_{1}=\left\lceil Cn^{(k-1)/k}\right\rceil\qquad\text{and}\qquad m_{t}=\left\lceil Cn^{(k-t)/k}\right\rceil\] 
then any sub-probability mass function $f:[n]^{V}\rightarrow [0,1]$ has a $(t-1)\varepsilon$-pruning $g$ such that
\[
\bcS'(\pi_{v^*}g,m_{t})\leq_{(1-(t-1)\delta ,1)}\bcJ^{T}_{v^*}(f,m_{1}).
\]
\end{lemma}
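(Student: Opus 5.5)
We argue by induction on $t$, the number of edges of $T$. For $t=1$ the tree is a single pair $(u,v^*)$, and we have $m_t=m_1$. The process $\bcJ^{T}_{v^*}(f,m_1)$ draws $m_1$ samples from $\pi_{u,v^*}f$ and marks every $b$ that occurs as a second coordinate. This is exactly $\bcS'(\pi_{v^*}f,m_1)$, so we may take $g=f$ (a $0$-pruning), and the domination holds with parameters $(1,1)$.

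For the inductive step, let $|V|=t+1\geq 3$. Choose a child $u$ of the root, i.e.\ $(u,v^*)\in T$, whose subtree $T_u$ (rooted at $u$, on vertex set $V_u$ with $|V_u|=s+1$) satisfies $1\leq s\leq t-1$. We split into two parts: $T_u$, and the rest $T'=T\setminus(T_u\cup\{(u,v^*)\})$, which is a directed rooted tree on $V\setminus V_u$ with root $v^*$ and $t-s-1$ edges. A cleaner choice is to always peel off one leaf edge. So let $(\ell,v)\in T$ with $\ell$ a leaf, put $T'=T\setminus\{(\ell,v)\}$, re-root the statement at $v$, and apply induction.

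I expect the directed structure to force the following version instead. Pick the root's child $u$, and apply the induction hypothesis to $T_u$ (with $s$ edges) and to $T'$ (with $t-s-1$ edges). This gives marks distributed like $\bcS'(\pi_u g_1,m_s)$ and $\bcS'(\pi_{v^*}g_2,m_{t-s})$ (the latter when $T'$ is nonempty). The edge $(u,v^*)$ is then joined using Corollary~\ref{cor:tree_bipartite_birthday_paradox}(2) with exponents $\beta=s/k$ and $\gamma=1/k$. This combines $\bcS'(\pi_u g,m_s)$ with $m_1$ samples of $\pi_{u,v^*}g$ to give $\bcS'(\pi_{v^*}g,m_{s+1})$. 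Finally, part (1) of the corollary with $\beta=s/k$ and $\gamma=(t-s-1)/k$ intersects the two root-marks into $\bcS'(\pi_{v^*}g,m_t)$. The constant condition $\beta\delta\varepsilon C\geq16$ follows from $\delta\varepsilon C\geq 16k$ and $\beta\geq 1/k$. The $\bcJ'$ and $\bcP$ operations applied to dominated inputs yield dominated outputs, because they are monotone in each argument and the inputs are independent. The $(1-\delta)$ losses and $\varepsilon$-prunings add up: we must count exactly $t-1$ uses of the corollary in total, which works since each combining step merges pieces and there are $t$ edges.

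The main obstacle is consistency of the prunings. Each application of the corollary, and each inductive call, produces its own pruning of a different marginal of $f$. The domination must, however, be stated for one single pruning $g$ of the joint $f:[n]^V\to[0,1]$ whose root marginal is $\pi_{v^*}g$. To handle this, I would perform the prunings successively on the joint function: whenever a marginal $\pi_U f$ is pruned to $h\leq \pi_U f$, scale the conditional distribution to define $g(y)=f(y)\,h(y_U)/\pi_Uf(y_U)$. This is an $\varepsilon$-pruning of the joint function with the prescribed marginal, and the lower-dimensional marginals only decrease. Since sampling from a pointwise-smaller mass function is dominated by sampling from the larger one, every intermediate process run with $g$ is still dominated by the corresponding process run with $f$ in $\bcJ^T_{v^*}(f,m_1)$. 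The total pruning is then at most $(t-1)\varepsilon$. Some care is needed to ensure that the independent batches used for the two subtrees, whose marginals both involve $v^*$, are drawn from marginals of the same final $g$. I would resolve this by carrying out all prunings before any sampling and then re-verifying the dominations, using monotonicity of each domination in the pruned function.
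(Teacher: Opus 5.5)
Your decomposition is the paper's argument unrolled by one level. The paper also splits $T$ at a child $u^*$ of $v^*$ into the subtree $T_1$ hanging from $u^*$, the edge $(u^*,v^*)$, and the remainder $T_2$. It crosses the root edge with Corollary~\ref{cor:tree_bipartite_birthday_paradox}(2), merges at the root with part~(1), and lifts each marginal pruning to a pruning of the joint function with that marginal (your conditional rescaling is the natural way to do this).

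However, your inductive step needs a child $u$ with $s\geq 1$, and none exists exactly when $T$ is a star centered at $v^*$ with $t\geq 2$ edges. In that case there is no inductive call on the empty $T_u$, and part~(2) would need $\beta=0\notin(0,1)$. The paper avoids this by applying the induction hypothesis to $T_3=T_1\cup\{(u^*,v^*)\}$, a tree rooted at $v^*$ with $t_1+1<t$ edges whenever $T_2\neq\emptyset$; for $t_1=0$ this is just the base case. It invokes part~(2) only when $v^*$ has a single child, where $t_1=t-1\geq 1$ holds automatically.

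Your merging parameters are also off by one. $T'$ has $t-s-1$ edges, so induction gives $\bcS'(\pi_{v^*}g_2,m_{t-s-1})$, not $m_{t-s}$. The other input has size $m_{s+1}$, so part~(1) must be used with $\beta=(s+1)/k$ and $\gamma=(t-s-1)/k$. Then $\beta+\gamma=t/k\leq 1$, which is where $k\geq t$ enters. With your $\beta=s/k$, the batch sizes do not match the corollary.

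On consistency of prunings, doing them successively is right, but the order matters. ``Prune everything first, then re-verify each domination by monotonicity in the pruned function'' fails if it means transferring a domination from the corollary to a further pruning of the function it was proved for. That function appears on both sides of the corollary's domination. Shrinking it further can re-create very light atoms (weight $\lesssim 1/(Cn)$). For such atoms, the intersection of two batches is less likely to contain them than a single batch of size $m_3$, and excluding them is the whole purpose of the corollary's pruning.

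What the paper does instead is apply each step to the marginal of the function already pruned by the steps feeding into it: induction on $\pi_{V_2}[f']$, part~(1) on $\pi_{v^*}f''$, and part~(2) on $\pi_{u^*,v^*}f^{\natural}$. The final $g$ is then pointwise below every intermediate function. After that, the only tools needed are:
\begin{itemize}
\item monotonicity of $h\mapsto\bcS'(h,m)$, applied to left-hand sides;
\item monotonicity of $f\mapsto\bcJ^{T}_{v^*}(f,m_1)$, applied to right-hand sides;
\item the product coupling for $\bcP$ and $\bcJ'$.
\end{itemize}
In your unrolled version, this means part~(2) must be applied after the $T_u$ pruning, and part~(1) must come last.
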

\begin{proof}
We proceed by induction on $t$. The base case $t=1$ is straightforward: when $T$ consists of a single edge, for any sub-probability mass function $f:[n]^{V}\rightarrow [0,1]$ and $m_1=\left\lceil C n^{(k-1)/k}\right\rceil$ we have
\[
\bcS'(\pi_{v^*}f,m_{1})=\bcJ^{T}_{v^*}(f,m_{1}).
\]
In the following, we assume $t\geq 2$ and the statement in the lemma holds for all smaller values of $t$. 

Suppose $T$ is a directed rooted tree on $V$ with $t$ edges and a root vertex $v^*$. Let $u^*\in V$ be a vertex such that $(u^*,v^*)\in T$. Then the edge set $T$ can be uniquely partitioned into three sets: a sub-tree $T_{1}$ rooted at $u^*$, a sub-tree $T_{2}$ rooted at $v^*$, and the singleton edge $(u^*,v^*)$. Let $|T_{1}|=t_{1}$ and $|T_{2}|=t_{2}$. Let $V_{1}$ and $V_{2}$ be the vertex sets of the sub-trees $T_{1}$ and $T_{2}$, respectively. Thus $V$ is the disjoint union of $V_{1}$ and $V_{2}$. We also denote \[m_{r}=\left\lceil Cn^{(k-r)/k}\right\rceil\qquad\text{ for each }r\in \{1,2,\dots,t\}.\]

\paragraph{Case 1: $t_{2}\geq 1$.} Let $V_{3}=V_{1}\cup\{v^*\}$ and $T_{3}=T_{1}\cup\{(u^*,v^*)\}$, so $T_{3}$ is a directed rooted tree on $V_{3}$ with root $v^*$. Since $1\leq |T_{3}|=t_{1}+1=t-t_{2}<t$, we can apply the induction hypothesis to $\pi_{V_{3}}[f]$ and obtain a $t_{1}\varepsilon$-pruning $f_{3}$ of $\pi_{V_{3}}[f]$ such that
\begin{equation}\label{eq:case1_1}
\bcS'(\pi_{v^*}f_{3},m_{t_{1}+1})\leq_{(1- t_{1}\delta,1)}\bcJ^{T_{3}}_{v^*}\big(\pi_{V_{3}}[f],m_{1}\big).
\end{equation}
There clearly exists an $t_{1}\varepsilon$-pruning $f'$ of $f$ such that $f_{3}=\pi_{V_{3}}[f']$. Since $1\leq |T_{2}|=t_{2}=t-t_{1}-1<t$, we can apply the induction hypothesis again to $\pi_{V_{2}}[f']$ and obtain a $(t_{2}-1)\varepsilon$-pruning $f_{2}$ of $\pi_{V_{2}}[f']$ such that
\begin{equation}\label{eq:case1_2}
\bcS'(\pi_{v^*}f_{2},m_{t_{2}})\leq_{(1-(t_{2}-1)\delta,1)}\bcJ^{T_{2}}_{v^*}\big(\pi_{V_{2}}[f'],m_{1}\big).
\end{equation}
There clearly exists a $(t_{2}-1)\varepsilon$-pruning $f''$ of $f'$ such that $f_{2}=\pi_{V_{2}}[f'']$. We then apply Corollary~\ref{cor:tree_bipartite_birthday_paradox}(1) to obtain an $\varepsilon$-pruning $f_{4}$ of $\pi_{v^*}f''$ such that
\begin{equation}\label{eq:case1_3}
\bcS'(f_{4},m_{t})\leq_{(1-\delta,1)}\bcP\big(\bcS'(f_{4},m_{t_{1}+1}),\bcS'(f_{4},m_{t_{2}})\big).
\end{equation}
Combining \eqref{eq:case1_1}, \eqref{eq:case1_2} and \eqref{eq:case1_3}, it follows that (using $f_{4}\leq \pi_{v^*}f''=\pi_{v^*}f_{2}\leq \pi_{v^*}f'=\pi_{v^*}f_{3}$)
\begin{equation}\label{eq:case1_4}
\bcS'(f_{4},m_{t})\leq_{(1-(t-1)\delta ,1)}\bcP\big(\bcJ^{T_{3}}_{v^*}\big(\pi_{V_{3}}[f],m_{1}\big),\bcJ^{T_{2}}_{v^*}\big(\pi_{V_{2}}[f'],m_{1}\big)\big).
\end{equation}
There clearly exists an $\varepsilon$-pruning $g$ of $f''$ such that $f_{4}=\pi_{v^*}g$. See Figure~\ref{fig:case1} for an illustration of the relations among the sub-probability mass functions $f,f',f''$ and $g$.

\begin{figure}[H]
\centering
\begin{tikzcd}
f \arrow[d, "\pi_{V_{3}}"] \arrow[r, dotted, "t_{1}\varepsilon\text{-pruning}"] &[2.5em]
f' \arrow[d, "\pi_{V_{3}}"] \arrow[dr, "\pi_{V_{2}}"] \arrow[rr, dotted, "(t_{2}-1)\varepsilon\text{-pruning}"]&
&[4em] f'' \arrow[d, "\pi_{V_{2}}"] \arrow[dr, "\pi_{v^*}"] \arrow[rr, dotted, "\varepsilon\text{-pruning}"] & &[2em] g  \arrow[d, "\pi_{v^*}"]\\
\pi_{V_3}[f] \arrow[r, "t_{1}\varepsilon\text{-pruning}"] &
f_3 &
\pi_{V_2}[f'] \arrow[r, "(t_{2}-1)\varepsilon\text{-pruning}"] & f_{2} & \pi_{v^*}f'' \arrow[r, "\varepsilon\text{-pruning}"] & f_{4}
\end{tikzcd}
\caption{Relations between functions in Case 1}
\label{fig:case1}
\end{figure}

Note that by definition we have
\begin{equation}\label{eq:case1_5}
\bcP\big(\bcJ^{T_{3}}_{v^*}\big(\pi_{V_{3}}[f],m_{1}\big),\bcJ^{T_{2}}_{v^*}\big(\pi_{V_{2}}[f],m_{1}\big)\big)=\bcJ^{T}_{v^*}(f,m_{1}).
\end{equation}
Combining \eqref{eq:case1_4} and \eqref{eq:case1_5}, it follows that (using $f'\leq f$)
\[
\bcS'(\pi_{v^*}g,m_{t})\leq_{(1-(t-1)\delta,1)}\bcJ^{T}_{v^*}(f,m_{1}).
\]
Since $g$ is a $(t-1)\varepsilon$-pruning of $f$, we conclude the proof in Case 1.

\paragraph{Case 2: $t_{2}=0$.} Since $1\leq |T_{1}|=t_{1}=t-1$, we can apply the induction hypothesis to $\pi_{V_{1}}[f]$ and obtain a $(t_{1}-1)\varepsilon$-pruning $f_{1}$ of $\pi_{V_{1}}[f]$ such that
\begin{equation}\label{eq:case2_1}
\bcS'(\pi_{u^*}f_{1},m_{t_{1}})\leq_{(1-(t_{1}-1)\delta,1)}\bcJ^{T_{1}}_{u^*}\big(\pi_{V_{1}}[f],m_{1}\big).
\end{equation}
There clearly exists a $(t_{1}-1)\varepsilon$-pruning $f^{\natural}$ of $f$ such that $f_{1}=\pi_{V_{1}}[f^{\natural}]$. We then apply Corollary~\ref{cor:tree_bipartite_birthday_paradox}(2) to obtain an $\varepsilon$-pruning $f_{0}$ of $\pi_{u^*,v^*}f^{\natural}$ such that
\begin{equation}\label{eq:case2_2}
\bcS'(\pi_{2}f_{0},m_{t})\leq_{(1-\delta,1)} \bcJ'\big(\bcS'(\pi_{1}f_{0},m_{t_{1}}),\bcS'(f_{0},m_{1})\big).
\end{equation}
Combining \eqref{eq:case2_1} and \eqref{eq:case2_2}, it follows that (using $\pi_{1}f_{0}\leq \pi_{u^*}f^{\natural}=\pi_{u^*}f_{1}$)
\begin{equation}\label{eq:case2_3}
\bcS'(\pi_{2}f_{0},m_{t})\leq_{(1-t_{1}\delta,1)} \bcJ'\big(\bcJ^{T_{1}}_{u^*}\big(\pi_{V_{1}}[f],m_{1}\big),\bcS'(f_{0},m_{1})\big).
\end{equation}
There clearly exists an $\varepsilon$-pruning $g$ of $f^{\natural}$ such that $f_{0}=\pi_{u^*,v^*}g$. See Figure~\ref{fig:case2} for an illustration of the relations among the sub-probability mass functions $f,f^{\natural}$ and $g$.

\begin{figure}[H]
\centering
\begin{tikzcd}
f \arrow[d, "\pi_{V_{1}}"] \arrow[r, dotted, "(t_{1}-1)\varepsilon\text{-pruning}"] &[4em]
f^{\natural} \arrow[d, "\pi_{V_{1}}"] \arrow[dr, "\pi_{u^*,v^*}"] \arrow[rr, dotted, "\varepsilon\text{-pruning}"]&
&[2em] g \arrow[d, "\pi_{u^*,v^*}"] \\
\pi_{V_1}[f] \arrow[r, "(t_{1}-1)\varepsilon\text{-pruning}"] &
f_1 &
\pi_{u^*,v^*}f^{\natural} \arrow[r, "\varepsilon\text{-pruning}"] & f_{0}
\end{tikzcd}
\caption{Relations between functions in Case 2}
\label{fig:case2}
\end{figure}

Note that by definition, we have
\begin{equation}\label{eq:case2_4}
\bcJ'\big(\bcJ^{T_{1}}_{u^*}\big(\pi_{V_{1}}[f],m_{1}\big),\bcS'(\pi_{u^*,v^*}f,m_{1})\big)=\bcJ^{T}_{v^*}(f,m_{1}).
\end{equation}
Combining \eqref{eq:case2_3} and \eqref{eq:case2_4}, it follows that (using $f_{0}=\pi_{u^*,v^*}g\leq \pi_{u^*,v^*}f$)
\[
\bcS'(\pi_{v^*}g,m_{t})\leq_{(1-(t-1)\delta,1)}\bcJ^{T}_{v^*}(f,m_{1}).
\]
Since $g$ is a $(t-1)\varepsilon$-pruning of $f$, we conclude the proof in Case 2.
\end{proof}

\subsection{Tree-Freeness and Cliques}\label{subsec:tree_proofs}

Lemma~\ref{lem:tree_birthday} provides the birthday-paradox tool that we need for proving the upper bounds on testing tree-freeness (Theorem~\ref{thm:tree_upper_bound}) and testing cliques (Theorem~\ref{thm:clique_upper_bound}).

In the proof of Theorem~\ref{thm:tree_upper_bound}, we will use the following notation (similar notations have been defined in Section~\ref{subsec:general_notations} and used in Section~\ref{sec:square_upper_bound}).

\begin{definition}
For a fixed positive integer $n$ and a fixed tree $H$ with $t$ edges, we define $\mathsf{Tree}_{H}(n)$ to be the collection of all $t$-edge subsets $E\subseteq \binom{[n]}{2}$ such that the graph $(V(E),E)$ is isomorphic to $H$, where $V(E)$ denotes the set of vertices incident to some edge in $E$. 
\end{definition}

\begin{theorem}\label{thm:tree_upper_bound}
Let $H$ be a fixed tree with $t$ edges, and let $\varepsilon\in (0,1)$ be a constant. Suppose $p\in [0,1]^{\binom{[n]}{2}}$ is a sub-probability vector that is $\varepsilon$-far from $H$-free. Then in $O(n^{(t-1)/t}/\varepsilon)$ independent samples from $p$, with probability at least $2/3$ there exists $t$ sampled edges forming a subgraph isomorphic to $H$.
\end{theorem}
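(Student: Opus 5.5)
The plan is to combine the LP-duality step (Lemma~\ref{lem:LP_weak_duality}) with the tree birthday lemma (Lemma~\ref{lem:tree_birthday}) applied with $k=t$. We fix an orientation of $H$ as a directed rooted tree $T$ on a vertex set $V$ with $|V|=t+1$ and root $v^*$. Consider the $t$-uniform hypergraph on $\binom{[n]}{2}$ whose edges are $\mathsf{Tree}_{H}(n)$. Since $p$ is $\varepsilon$-far from $H$-free, every vertex cover has $p$-mass at least $\varepsilon$. Lemma~\ref{lem:LP_weak_duality} then gives a fractional matching $\lambda$ with $\sum_{E}\lambda_{E}1_{E}\preceq p$ and $\sum_E\lambda_E\geq \varepsilon/t$. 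For each copy $E\in\supp(\lambda)$ we choose one embedding $y^{E}\in[n]^{V}$, with distinct coordinates, that maps $T$ onto $E$. We then define $f:[n]^V\to[0,1]$ by $f(y^{E})=\lambda_E$. By construction, for every edge $\{a,b\}$,
\[
\sum_{(u,v)\in T}\big(\pi_{u,v}f(a,b)+\pi_{u,v}f(b,a)\big)\leq p_{ab}.
\]

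\textbf{Obstacle: degenerate homomorphic images.} The process $\bcJ^{T}_{v^*}$ only finds a map $\varphi$ compatible with some $y\in[n]^V$. Such a $y$ may send two non-adjacent tree vertices to the same vertex of $[n]$, giving a homomorphic image of $H$ rather than a copy. I would handle this by color coding. Take a uniformly random coloring $c:[n]\to V$ and keep only those $y^{E}$ with $c(y^{E}_u)=u$ for all $u\in V$. The retained mass has expectation at least $(t+1)^{-(t+1)}\sum_E\lambda_E$, so we can fix a coloring whose restricted function $f_c$ has mass at least $\varepsilon/(t(t+1)^{t+1})=:\varepsilon_1$. Now every pair in the support of $\pi_{u,v}f_c$ is of the form $(a,b)$ with $c(a)=u$ and $c(b)=v$. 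Hence any $y$ compatible with a $\varphi$ whose images are sampled from these supports satisfies $c(y_u)=u$, so the coordinates of $y$ are pairwise distinct and the sampled edges form a genuine copy of $H$.

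\textbf{Simulating $\bcJ^{T}_{v^*}(f_c,m_1)$ from samples of $p$.} We take $t\cdot m_1$ samples from $p$ and split them into $t$ groups of $m_1$, one group for each $(u,v)\in T$. In the group for $(u,v)$, a sampled edge $\{a,b\}$ is kept as the ordered pair $(a,b)$ with probability $\pi_{u,v}f_c(a,b)/p_{ab}$, kept as $(b,a)$ with probability $\pi_{u,v}f_c(b,a)/p_{ab}$, and otherwise replaced by $\nil$. The displayed inequality guarantees these probabilities are legitimate. This produces independent batches of $m_1$ samples from each $\pi_{u,v}f_c$, exactly as in Definition~\ref{def:J_tree}, and every retained pair is an actually sampled edge.

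\textbf{Applying the tree lemma.} Lemma~\ref{lem:tree_birthday} with $k=t$ uses $m_1=\lceil Cn^{(t-1)/t}\rceil$ and $m_t=\lceil C\rceil$. We take pruning parameter $\varepsilon_1/(2t)$, $\delta=1/(10t)$, and $C=\Theta(t^{2}(t+1)^{t+1}/\varepsilon)$, which satisfies the lemma's constraint $\delta\cdot(\varepsilon_1/(2t))\cdot C\geq 16t$. The lemma gives a pruning $g$ of $f_c$ with total mass at least $\varepsilon_1/2$ such that
\[
\bcS'(\pi_{v^*}g,m_t)\leq_{(1-1/10,1)}\bcJ^{T}_{v^*}(f_c,m_1).
\]
A draw from $\bcS'(\pi_{v^*}g,m_t)$ is the zero vector with probability at most $(1-\varepsilon_1/2)^{C}\leq 1/20$ for our choice of $C$. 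The set $\{0\}$ is downward-closed, so Proposition~\ref{prop:stochastic_domination} shows that the output of $\bcJ^{T}_{v^*}(f_c,m_1)$ is nonzero with probability at least $1-1/20-1/10\geq 2/3$. By the color-coding argument, a nonzero output means the sampled edges contain a copy of $H$. The total number of samples is $t\,m_1=O(n^{(t-1)/t}/\varepsilon)$ for fixed $t$, and the requirement in Lemma~\ref{lem:tree_birthday} that $n$ be sufficiently large is absorbed into the $O(\cdot)$.

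The step I expect to need the most care is the color-coding reduction, namely checking that the compatibility in Definition~\ref{def:J_tree} really forces injectivity once the supports are color-restricted. The remaining parameter bookkeeping, matching the constraint $\delta\varepsilon C\geq 16k$ of Lemma~\ref{lem:tree_birthday} against the mass loss from pruning, is routine.
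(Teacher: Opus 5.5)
Your proposal is correct and follows the same skeleton as the paper's proof. Both apply Lemma~\ref{lem:LP_weak_duality} to the hypergraph $\mathsf{Tree}_H(n)$, lift $\lambda$ to a function $f$ on $[n]^V$ via chosen isomorphisms, apply Lemma~\ref{lem:tree_birthday} with $k=t$ and $m_t=\lceil C\rceil$, and couple the batches from $\pi_{u,v}f$ with batches of samples from $p$ by thinning.

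The one real difference is your color-coding step, and it is worth keeping. The paper simply asserts that, under the obvious coupling, the output of $\frakP_2(m)$ is always at least that of $\frakP_1(m)$. But $\frakP_1$, like $\bcJ^{T}_{v^*}$, accepts any compatible $y\in[n]^V$, including non-injective ones, and the paper never addresses these. For example, let $H$ be a path $u_1u_2u_3$ rooted at $u_2$. Suppose $\supp(\lambda)$ contains two paths through the edge $\{a,b\}$ with middle vertex $b$, and their chosen embeddings place $a$ at $u_1$ and at $u_3$ respectively. Then both batches can contain the pair $(a,b)$, and $y=(a,b,a)$ is compatible. Yet the two sampled edges are the same edge $\{a,b\}$ and form no copy of $H$. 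So the domination claim, and with it the transfer of the $2/3$ bound from $\frakP_1$ to $\frakP_2$, is not justified as written.

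Your restriction of $f$ to rainbow embeddings fixes this. Since every vertex of $T$ lies on some tree edge, it forces $c(y_w)=w$ for all $w\in V$, so every accepting $y$ is injective. The cost is only a factor $(t+1)^{t+1}$ in the constant, which is irrelevant for fixed $H$. Your stronger inequality $\sum_{(u,v)\in T}\bigl(\pi_{u,v}f(a,b)+\pi_{u,v}f(b,a)\bigr)\leq p_{ab}$ is also correct, though per-$(u,v)$ bounds suffice because you use disjoint groups.

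The only slip is arithmetic. Take $\delta=1/(10t)$, per-step pruning $\varepsilon_1/(2t)$ and $\varepsilon_1=\varepsilon/(t(t+1)^{t+1})$. Then the constraint $\delta\cdot\frac{\varepsilon_1}{2t}\cdot C\geq 16t$ requires $C\geq 320\,t^{4}(t+1)^{t+1}/\varepsilon$, not $\Theta(t^{2}(t+1)^{t+1}/\varepsilon)$. This again affects only constants depending on $t$.
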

\begin{proof}
We consider a $t$-uniform hypergraph whose vertex set is $\binom{[n]}{2}$ and whose edge set is the collection of all $t$-edge subsets $E\subseteq \binom{[n]}{2}$ such that the subgraph formed by $E$ is isomorphic to $H$. Since $p$ is $\varepsilon$-far from $H$-free, we can apply Lemma~\ref{lem:LP_weak_duality} to this hypergraph and obtain a sub-probability vector $\lambda=(\lambda_{E})$, where $E$ ranges in the collection $\mathsf{Tree}_{H}(n)$, that satisfies the three conditions listed in Lemma~\ref{lem:LP_weak_duality}.\footnote{We only need the second and third conditions for this proof.}

Let $V$ be the vertex set of $H$. For each $E\in \mathsf{Tree}_{H}(n)$, let $V(E)\subseteq [n]$ denote the set of vertices incident to $E$, and choose an isomorphism map $\psi_{E}:V(E)\rightarrow V$ from the graph $(V(E),E)$ to $H$. Then define a vector $y^{(E)}\in [n]^{V}$ by letting
\[
y^{(E)}_{v}=\psi_{E}^{-1}(v)\in [n]\qquad\text{for all }v\in V.
\]
It is clear that for any $y\in [n]^{V}$, there is at most one $E\in \mathsf{Tree}_{H}(n)$ such that $y=y^{(E)}$. We now define a sub-probability mass function $f:[n]^{V}\rightarrow [0,1]$ by\footnote{Note that $f$ is a sub-probability mass function because $\sum_{y\in [n]^{V}}f(y)=\sum_{E\in \mathsf{Tree}_{H}(n)}\lambda_{E}\leq 1$.}
\[
f(y)=\begin{cases}
\lambda_{E},&\text{if }y=y^{(E)}\text{ for some }E\in \mathsf{Tree}_{H}(n),\\
0, &\text{otherwise}.
\end{cases}
\]
We thus have
\[
\sum_{y\in [n]^{V}}f(y)=\sum_{E\in\mathsf{Tree}_{H}(n)}\lambda_{E}\geq \frac{\varepsilon
}{t},
\]
where in the last transition we used the third condition of the conclusion of Lemma~\ref{lem:LP_weak_duality}. Furthermore, for any edge $\{a,b\}\in \binom{[n]}{2}$ and any edge $(u,v)\in T$, we have
\begin{equation}\label{eq:tree_f_vs_p}
\pi_{u,v}f(a,b)=\sum_{y\in [n]^{V}}\ind{y_{u}=a\text{ and }y_{v}=b}\cdot f(y)\leq \sum_{E\in \mathsf{Tree}_{H}(n)}\ind{\{a,b\}\in E}\cdot \lambda_{E}\leq p_{ab},
\end{equation}
where in the last transition we used the second condition of the conclusion of Lemma~\ref{lem:LP_weak_duality}.

Now we pick an arbitrary vertex $v^*\in V$ and let $T$ be a directed rooted tree (with root $v^*$) on $V$ such that the edges of $T$ (when viewed as undirected edges) coincide with the edges of $H$. Given a positive integer $m$, let $\frakP_{1}(m)$ be the following process:
\begin{enumerate}
\item For each pair $(u,v)\in T$, independently draw $m$ samples from $\pi_{u,v}f$, and let $X^{(u,v)}\subseteq [n]^{2}$ be the set formed by the $m$ samples.
\item Output 1 if there exists a map $\varphi:T\rightarrow [n]^{2}$ such that $\varphi$ is compatible with some vector $y\in [n]^{V}$, and $\varphi(u,v)\in X^{(u,v)}$ for each $(u,v)\in T$. Otherwise, output 0.
\end{enumerate}
By Lemma~\ref{lem:tree_birthday}, if $n$ is sufficiently large and 
\begin{equation}\label{eq:tree_choice_C_m}
C=\frac{288t^{4}}{\varepsilon},\qquad m=\left\lceil Cn^{(t-1)/t}\right\rceil,
\end{equation}
there exists an $(\varepsilon/(2t))$-pruning $g$ of $f$ such that
\[
\bcS'(\pi_{v^*}g,\lceil C\rceil)\leq_{(5/6,1)}\bcJ^{T}_{v^*}(f,m).
\]
By the definition of $\bcJ^{T}_{v^*}(f,m)$ (Definition~\ref{def:J_tree}), it follows that
\begin{equation}\label{eq:tree_frakP_1}
\Pr{\frakP_{1}(m)\text{ outputs }1\big.}\geq \Pru{w\sim \bcJ^{T}_{v^*}(f,m)}{w\neq \vec{0}\big.} \geq \Pru{w\sim \bcS'(\pi_{v^*}g, \lceil C\rceil)}{w\neq \vec{0}\big.}-\frac{1}{6}\geq \frac{2}{3},
\end{equation}
where we used the fact that $\sum_{a=1}^{n}\pi_{v^*}g(a)\geq -\varepsilon/(2t)+\sum_{y\in [n]^{V}}f(y)\geq \varepsilon/(2t)$ in the last transition.

Now consider the following process denoted by $\frakP_{2}(m)$:
\begin{enumerate}
\item For each pair $(u,v)\in T$, independently draw $m$ samples from the sub-probability vector $p$, and let $Y^{(u,v)}\subseteq \binom{[n]}{2}$ be the set formed by the $m$ samples.
\item Output 1 if there exists a map $\varphi:T\rightarrow \binom{[n]}{2}$ such that
\[
\bigl\{\varphi(u,v)\,\big|\,(u,v)\in T\bigr\}\in \mathsf{Tree}_{H}(n).
\]
Otherwise, output 0.
\end{enumerate}
Due to \eqref{eq:tree_f_vs_p}, there is an obvious coupling between the processes $\frakP_{1}(m)$ and $\frakP_{2}(m)$ under which the output of the latter process is always at least the output of the former. By \eqref{eq:tree_frakP_1}, this means that $\frakP_{2}(m)$ outputs 1 with probability at least $2/3$ if $m$ is chosen as in \eqref{eq:tree_choice_C_m}. On the other hand, note that $\frakP_{2}(m)$ takes a total number of $tm$ independent samples from $p$, and whenever it outputs $1$, there are $t$ edges among the $tm$ samples that form a subgraph isomorphic to $H$. Therefore, we conclude that when $n$ is sufficiently large, in
\[
tm=t\cdot\left\lceil \frac{288 t^{4}}{\varepsilon
}n^{(t-1)/t}\right\rceil=O(n^{(t-1)/t}/\varepsilon)
\]
samples from $p$, with probability at least $2/3$ there are $t$ sampled edges forming a subgraph isomorphic to $H$.
\end{proof}

\begin{corollary}\label{cor:tree_upper_bound}
For any fixed tree $H$ with $t$ edges, we have $\sam\big(\cG^{H\textup{-free}}_{n}\big)\leq O(n^{(t-1)/t}/\varepsilon)$.
\end{corollary}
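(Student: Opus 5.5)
This follows the proof of Corollary~\ref{cor:square_freeness_upper} essentially verbatim. We use the canonical one-sided-error tester of Remark~\ref{rem:canonical_tester_monotone}. $H$-freeness is downward-closed, so this tester rejects exactly when the positive samples contain a copy of $H$, and it never rejects an $H$-free input. It therefore suffices to show that it rejects with probability at least $2/3$ whenever the unknown edge set $E$ and distribution $\mu$ satisfy $\mu(E\triangle E')\geq \varepsilon$ for every $E'\in\cG^{H\textup{-free}}_{n}$.

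Given such $E$ and $\mu$, define the sub-probability vector $p\in[0,1]^{\binom{[n]}{2}}$ by $p_{e}=\mu(\{e\})$ if $e\in E$ and $p_{e}=0$ otherwise.

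\textbf{Step 1: $p$ is $\varepsilon$-far from $H$-free.} Let $F$ be any $H$-free edge set. Then $E\cap F$ is also $H$-free, since the family is downward-closed. Moreover
\[
\sum_{e\notin F}p_{e}=\mu(E\setminus F)=\mu\big(E\triangle (E\cap F)\big)\geq\varepsilon .
\]

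\textbf{Step 2: couple the samples.} A labeled sample $e\sim\mu$ is a positive sample exactly with probability $p_{e}$. So the multiset of positive samples among $m$ labeled samples has the same law as $m$ samples from $p$, once the $\nil$ outcomes are discarded.

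\textbf{Step 3: apply Theorem~\ref{thm:tree_upper_bound}.} With $m=O(n^{(t-1)/t}/\varepsilon)$, Theorem~\ref{thm:tree_upper_bound} gives that with probability at least $2/3$ the positive samples contain $t$ edges forming a copy of $H$. In that event the canonical tester rejects.

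The only point needing care is that Theorem~\ref{thm:tree_upper_bound} is stated for sufficiently large $n$, with its constant depending on $t$. For the finitely many small values of $n$ below that threshold, we take $O(n^2/\varepsilon)$ samples, or invoke Theorem~\ref{thm:subgraph_freeness_upper_bound}. For such $n$ this is absorbed into the $O(\cdot)$ constant, which is allowed to depend on $t$. Beyond this bookkeeping there is no real obstacle; all the substance is in Theorem~\ref{thm:tree_upper_bound}.
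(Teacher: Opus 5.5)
Your proposal is correct and is essentially the paper's own argument. The paper proves this corollary by pointing back to the proof of Corollary~\ref{cor:square_freeness_upper}: the canonical one-sided tester, the restriction of $\mu$ to $E$ as a sub-probability vector $p$, and then Theorem~\ref{thm:tree_upper_bound}, which is what you do, including the check via $E\cap F$ that $p$ is $\varepsilon$-far from $H$-free. Your small-$n$ remark, which the paper omits, is also sound, because with the paper's constants the ``sufficiently large $n$'' threshold in the proof of Theorem~\ref{thm:tree_upper_bound} depends only on $t$ and not on $\varepsilon$, so the fallback bound for small $n$ is absorbed into a $t$-dependent constant.
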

\begin{proof}
Theorem~\ref{thm:tree_upper_bound} implies Corollary~\ref{cor:tree_upper_bound} in the same way as Theorem~\ref{thm:square_freeness_upper} implies Corollary~\ref{cor:square_freeness_upper}. We refer to the proof of Corollary~\ref{cor:square_freeness_upper} for an outline of the argument.
\end{proof}

Perhaps somewhat surprisingly, the proof of the upper bound for testing cliques follows the same route as the proof of Theorem~\ref{thm:tree_upper_bound}. The reason is that in any violation hypergraph against the property $\cG^{\textup{cliq}}_{n}$ (see Definition~\ref{def:violation_hypergraph} for the definition of violation hypergraphs), all hyperedges correspond to length-3 paths in the $n$-vertex complete graph (in particular, violation hypergraphs against $\cG^{\textup{cliq}}_{n}$ are always 3-uniform). Since the length-3 path is a tree, the birthday-paradox tools (specifically, Lemma~\ref{lem:tree_birthday}) we have developed for analyzing the tree-freeness tester are also well-suited for analyzing the clique tester.

\begin{theorem}\label{thm:clique_upper_bound}
We have $\sam\big(\cG^{\textup{cliq}}_{n},\varepsilon\big)\leq O(n^{2/3}/\varepsilon)$.
\end{theorem}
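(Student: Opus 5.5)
Since $\cG^{\textup{cliq}}_{n}$ is not downward-closed, we cannot use the canonical tester of Remark~\ref{rem:canonical_tester_monotone}. Instead we use the general one-sided canonical tester: it rejects iff the labeled samples contain an $f$-violation. The plan has four steps: identify the violation hypergraph, extract a fractional matching by Lemma~\ref{lem:LP_weak_duality}, apply Lemma~\ref{lem:tree_birthday} with $t=k=3$, and transfer back to samples from $\mu$, exactly as in the proof of Theorem~\ref{thm:tree_upper_bound}.

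\textbf{Step 1 (violation hypergraph).} Let $f=\ind{\cdot\in E}$. A labeled set $S$ is consistent with some clique iff there is a vertex set $K$ such that every labeled-$1$ edge in $S$ lies inside $K$ and every labeled-$0$ edge in $S$ does not. The natural choice is $K=V(S\cap E)$, and this fails iff some $0$-labeled edge $\{b,c\}\in S$ has both endpoints covered by $1$-labeled edges of $S$. A minimal such witness is one of the following. It may be a path $a$--$b$--$c$--$d$ (or a wedge with the $0$-edge closing a triangle, giving $b,c$ covered by two $1$-edges sharing or not sharing a vertex) whose middle edge $\{b,c\}$ is labeled $0$ and whose outer edges are labeled $1$. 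In the degenerate case it has the shape $\{a,b\},\{a,c\}$ labeled $1$ with $\{b,c\}$ labeled $0$, which is a triangle. One could also cover this by padding to three edges, or simply note that a $2$-edge hyperedge only helps. I would check carefully that all minimal violations have at most $3$ elements and correspond to length-$3$ paths (or triangles) in $K_n$. This matches the remark preceding the theorem.

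\textbf{Step 2 (fractional matching).} Let $\mu$ satisfy $\Pr{f\ne h}\ge\varepsilon$ for every clique $h$. By the vertex-cover proposition of Section~\ref{subsec:general_framework}, every vertex cover of the violation hypergraph has $\mu$-weight at least $\varepsilon$. Apply Lemma~\ref{lem:LP_weak_duality} with the sub-probability vector $p=\mu$ on $\binom{[n]}{2}$, making the hypergraph $3$-uniform; we can handle triangles by viewing them as paths, since a triangle also contains a length-$2$ path plus closing edge. This yields $\lambda$ with $\sum_e\lambda_e\ge\varepsilon/3$ and $\sum_e\lambda_e1_e\preceq\mu$.

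\textbf{Step 3 (birthday lemma).} Fix the path $H=P_4$ on $V=\{v_1,v_2,v_3,v_4\}$. Choose a directed rooted tree $T$ with root $v^*$ and an isomorphism $\psi_e$ for each violating path $e$. Define $f(y)=\lambda_e$ when $y=y^{(e)}$, exactly as in the proof of Theorem~\ref{thm:tree_upper_bound}, so that $\pi_{u,v}f(a,b)\le\mu(\{a,b\})$. Lemma~\ref{lem:tree_birthday} with $t=k=3$ and $m_1=\lceil Cn^{2/3}\rceil$, $C=\Theta(1/\varepsilon)$, together with an $\varepsilon/6$-pruning, gives that the process $\frakP_1(m_1)$ finds a compatible embedding with probability at least $2/3$.

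\textbf{Step 4 (transfer).} Couple $\frakP_1$ with $3m_1$ samples from $\mu$. We must ensure that the found embedding is an actual violation and not merely a $P_4$ in the support. This holds because $f$ is supported only on vectors $y^{(e)}$, and the edge $\varphi(u,v)$ is sampled from $\pi_{u,v}f$. Each sampled edge therefore carries the correct label: the middle edge lies outside $E$ and the outer edges lie in $E$. This follows because $\lambda_e$ is supported on violations, although the three edges may come from different $e$'s.

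\textbf{Main obstacle.} The hardest step is exactly this point. Compatibility only guarantees vertex agreement, so the labels must be pinned by position. I would fix the labeling pattern per tree edge: the edge $(v_2,v_3)$ always carries label $0$ and the others label $1$. The label of an edge $\{a,b\}$ is intrinsic to it, so any compatible $\varphi$ yields $1$-edges $\{y_1,y_2\},\{y_3,y_4\}$ and a $0$-edge $\{y_2,y_3\}$, which is a violation even when $y_1=y_4$ or vertices coincide (provided $y_2\ne y_3$, which holds). Hence $O(n^{2/3}/\varepsilon)$ samples suffice.
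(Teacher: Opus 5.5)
Your proposal is correct and follows essentially the same route as the paper. The paper identifies the minimal violations as $E$-alternating paths $\{a,b\},\{b,c\},\{c,d\}$ with $a=d$ allowed, so triangles are already $3$-edge hyperedges and no padding or $2$-edge case arises. It then extracts $\lambda$ via Lemma~\ref{lem:LP_weak_duality}, encodes it as a mass function on $[n]^4$ with the non-edge in the middle position, applies Lemma~\ref{lem:tree_birthday} to the path $T=\{(1,2),(2,3),(3,4)\}$ with $t=k=3$, and couples back to samples from $\mu$. Your observation that labels are intrinsic to edges, so fixing the $0$-edge in the middle forces any compatible embedding to be a genuine violation, is exactly what the paper's choice of $y^{(P)}$ accomplishes implicitly.
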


\begin{proof}
It is easy to see that for any $E\subseteq \binom{[n]}{2}$, the minimal $E$-violations (recall Definition~\ref{def:violation_hypergraph}) of $\cG^{\textup{cliq}}_{n}$ are exactly the three-edge sets
\[
\bigl\{\{a,b\},\{b,c\},\{c,d\}\bigr\}\subseteq \binom{[n]}{2}
\]
such that $\{a,b\},\{c,d\}\in E$ and $\{b,c\}\notin E$.\footnote{Note that here $a$ and $d$ are not necessarily distinct.} We refer to such three-edge sets as \emph{$E$-alternating paths}.

By the discussion in Section~\ref{subsec:general_framework}, it suffices to show the following for any fixed $E\subseteq \binom{[n]}{2}$: if $\mu$ is a distribution over $\binom{[n]}{2}$ such that
\[
\mu(E\triangle E')\geq \varepsilon\qquad\text{for any }E'\in \cG^{\textup{cliq}}_{n},
\]
then in $O(n^{2/3}/\varepsilon)$ independent samples from $\mu$, with probability at least $2/3$ there are three sampled edges forming an $E$-alternating path.

We apply Lemma~\ref{lem:LP_weak_duality} to the violation hypergraph of $E$ against $\cG^{\textup{cliq}}_{n}$. This yields a sub-probability vector $\lambda=(\lambda_{P})$, where $P$ ranges over all $E$-alternating paths, that satisfies the three conditions in Lemma~\ref{lem:LP_weak_duality}.\footnote{As in the proof of Theorem~\ref{thm:tree_upper_bound}, we only need the second and third conditions.}

For each $E$-alternating path $P=\big\{\{a,b\},\{b,c\},\{c,d\}\big\}$, define a vector $y^{(P)}\in [n]^{4}$ by letting\footnote{Here one can order the four vertices either as $a,b,c,d$ or as $d,c,b,a$.}
\[
y^{(P)}_{1}=a,\qquad y^{(P)}_{2}=b,\qquad y^{(P)}_{3}=c,\qquad\text{and}\qquad y^{(P)}_{4}=d.
\]
We define a sub-probability mass function $f:[n]^{4}\rightarrow [0,1]$ by
\[
f(y)=\begin{cases}
\lambda_{P},&\text{if }y=y^{(P)}\text{ for some }E\text{-alternating path }P,\\
0,&\text{otherwise}.
\end{cases}
\]
We thus have
\[
\sum_{y\in [n]^{4}}f(y)=\sum_{E\text{-alternating paths }P}\lambda_{P}\geq \frac{\varepsilon
}{3},
\]
where in the last transition we used the third condition of the conclusion of Lemma~\ref{lem:LP_weak_duality}. Furthermore, for any edge $\{a,b\}\in \binom{[n]}{2}$ and any $j\in \{1,2,3\}$, we have
\begin{align*}
\pi_{j,j+1}f(a,b)&=\sum_{y\in [n]^{4}}\ind{y_{j}=a\text{ and }y_{j+1}=b}\cdot f(y)\\
&\leq \sum_{E\text{-alternating paths }P}\ind{\{a,b\}\in P}\cdot \lambda_{P}\\
&\leq \mu(\{a,b\}),
\end{align*}
where in the last transition we used the second condition of the conclusion of Lemma~\ref{lem:LP_weak_duality}.

The rest of the proof is entirely analogous to the proof of Theorem~\ref{thm:tree_upper_bound} and is thus omitted.\footnote{The main idea is to apply Lemma~\ref{lem:tree_birthday} to the directed rooted tree $T=\{(1,2),(2,3),(3,4)\}$ with root $4$.}
\end{proof}

\section{Lower Bounds for Subgraph-Freeness}

In this section, we prove the sample complexity lower bounds for testing triangle-freeness, square-freeness and tree-freeness, stated in \eqref{eq:triangle_result}, \eqref{eq:square_result} and Theorem~\ref{thm:tree-freeness}, respectively. As is the case with upper bounds (see Section~\ref{sec:tree_upper_bound}), we will also prove the lower bound for testing cliques (stated in Theorem~\ref{thm:clique}) in Section~\ref{subsec:tree_lower_bound}, along with the lower bound for tree-freeness, because their proofs are similar to each other.

\subsection{Triangle-Freeness Constructions}\label{subsec:triangle_lower_bound}

As discussed in Section~\ref{subsubsec:sparse_removal}, the lower bound for testing triangle-freeness is proved by combining the Ruzsa-Szemer\'edi construction (Proposition~\ref{prop:ruzsa_semeredi}) with a standard technique that lifts lower bounds for one-sided-error tester to two-sided-error tester. 

The technique is reminiscent of that used in Section~\ref{subsec:bipartiteness_lower_bound}. 
Given an edge set $E \subseteq \binom{[n]}{2}$, we consider the two-fold blow-up of the graph $([n],E)$, 
in which each vertex $a \in [n]$ is replaced by a pair of copies. 
For any two such pairs corresponding to vertices $a,b \in [n]$ with $\{a,b\} \in E$, 
the blow-up graph contains all four possible edges between the two pairs.

The key idea is to retain exactly two of these four edges for each $\{a,b\}\in E$. 
The structure of the resulting graph can then vary in an interesting way, 
depending on how the two edges are selected in each case. 
We formalize this operation in the following definition.

\begin{definition}
For any edge set $E\subseteq \binom{[n]}{2}$ and any vector $y\in\bF_2^E$, we define an edge set
\[
R_y(E)
=
\Bigl\{
\bigl\{(a,t),(b,y_{ab}+t)\bigr\}\,\Big|\,\{a,b\}\in E\text{ and }t\in \bF_{2}
\Bigr\}\subseteq \binom{[n]\times \bF_{2}}{2}.
\]
over the vertex set $[n]\times \bF_{2}$.
\end{definition}

Note that the vector $y\in \bF_{2}^{E}$ specifies for each $\{a,b\}\in E$ how two of the four edges between the $a$-copies $(a,0),(a,1)$ and the $b$-copies $(b,0),(b,1)$ are selected. The main observation is that if every edge in $E$ is contained in exactly one triangle, then we can easily make $R_{y}(E)$ either triangle-free or far-from triangle-free, by picking suitable vectors $y$ for each case.

\begin{definition}
Suppose $E\subseteq \binom{[n]}{2}$ is an edge set such that every edge in $E$
is contained in exactly one triangle. We define two collections of vectors $Y^{\yes}_{\triangle}(E)$ and $Y^{\no}_{\triangle}(E)$ by
\begin{align*}
Y^{\yes}_{\triangle}(E)&=\Big\{
y\in \bF_2^{E}\,\Big|\, y_{ab}+y_{bc}+y_{ca}=1\text{ for all triangles }
\{\{a,b\},\{b,c\},\{c,a\}\}\subseteq E
\Big\},\text{ and}\\
Y^{\no}_{\triangle}(E)&=\Big\{
y\in \bF_2^{E}\,\Big|\, y_{ab}+y_{bc}+y_{ca}=0\text{ for all triangles }
\{\{a,b\},\{b,c\},\{c,a\}\}\subseteq E
\Big\},
\end{align*}
\end{definition}

\begin{proposition}\label{prop:property_of_Yyes_Yno}
Suppose $E\subseteq \binom{[n]}{2}$ is an edge set such that every edge in $E$
is contained in exactly one triangle. We have
\begin{enumerate}[label=(\arabic*)]
\item For any $y\in Y^{\yes}_{\triangle}(E)$, the edge set $R_{y}(E)$ is triangle-free. 
\item For any $y\in Y^{\no}_{\triangle}(E)$, the edge set $R_{y}(E)$ is the edge-disjoint union of $2|E|/3$ triangles. Consequently, we have $|R_{y}(E)\setminus E'|\geq |R_{y}(E)|/3$ for any triangle-free edge set $E'\subseteq \binom{[n]\times \bF_{2}}{2}$.
\end{enumerate}
\end{proposition}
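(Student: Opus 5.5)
The plan is to analyze triangles in $R_y(E)$ through the projection $\pi:[n]\times\bF_2\to[n]$, $(a,t)\mapsto a$. Every edge of $R_y(E)$ has the form $\{(a,t),(b,t+y_{ab})\}$ with $\{a,b\}\in E$, so it projects to an edge of $E$. The two endpoints of an edge are distinct copies projecting to distinct vertices $a\neq b$.

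The first step is to show that any triangle in $R_y(E)$ with vertices $(a,s),(b,t),(c,u)$ projects to a genuine triangle $\{a,b\},\{b,c\},\{c,a\}$ of $E$. We need $a,b,c$ pairwise distinct. Each pair among them is joined by an edge of $R_y(E)$, so each pair projects to an edge of $E$, which forces the two projections to differ. By hypothesis this triangle of $E$ is the unique triangle containing each of its edges, but all we need here is that it lies in $E$.

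Next, going around the triangle, the edge conditions give $t=s+y_{ab}$, $u=t+y_{bc}$ and $s=u+y_{ca}$. Summing yields $y_{ab}+y_{bc}+y_{ca}=0$.

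For part (1), if $y\in Y^{\yes}_\triangle(E)$, this sum equals $1$ for every triangle of $E$, a contradiction. Hence $R_y(E)$ is triangle-free.

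For part (2), let $y\in Y^{\no}_\triangle(E)$. Fix a triangle $\{a,b,c\}$ of $E$ and $s\in\bF_2$, and set $t=s+y_{ab}$, $u=t+y_{bc}$. Then $u+y_{ca}=s$, so $(a,s),(b,t),(c,u)$ span a triangle of $R_y(E)$. This gives two triangles per triangle of $E$, one for each $s$.

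We check that these triangles partition $R_y(E)$. Each edge $\{(a,t),(b,t+y_{ab})\}$ lies over $\{a,b\}\in E$, which lies in exactly one triangle of $E$. The value of $t$ then determines the lift uniquely, so every edge lies in exactly one of the constructed triangles.

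It remains to count. The map $(\{a,b\},t)\mapsto\{(a,t),(b,t+y_{ab})\}$ is injective, because $\{a,b\}$ is recovered by projection and $t$ by reading the first-coordinate-$a$ endpoint. So $|R_y(E)|=2|E|$. Since each triangle of $R_y(E)$ has three edges, there are $2|E|/3$ of them, and they are edge-disjoint.

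Finally, any triangle-free $E'$ must omit at least one edge from each of these edge-disjoint triangles. Therefore $|R_y(E)\setminus E'|\ge 2|E|/3=|R_y(E)|/3$.

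No step is a real obstacle. The only point needing care is the distinctness of $a,b,c$ in the first step, which comes from the fact that every edge of $R_y(E)$ joins copies of distinct original vertices.
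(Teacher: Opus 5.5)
Your proof is correct and follows the same route as the paper, which proves (1) by noting that every triangle of $R_y(E)$ projects under $[n]\times\bF_2\to[n]$ to a triangle of $E$, and calls (2) obvious. You supply the details the paper omits: the parity identity $y_{ab}+y_{bc}+y_{ca}=0$ around a lifted triangle, the two lifts of each triangle of $E$ partitioning $R_y(E)$, and the count $|R_y(E)|=2|E|$.
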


\begin{proof}
The second statement is obvious. For the first statement, it suffices to note that for any $y\in \bF_{2}^{E}$, any triangle in $R_{y}(E)$ must ``project'' to a triangle in $E$ under the canonical projection map from the vertex set $[n]\times \bF_{2}$ to the vertex set $[n]$.
\end{proof}

We next show that when $y$ is randomized in either $Y^{\yes}_{\triangle}(E)$ or $Y^{\no}_{\triangle}(E)$, it is impossible to distinguish between the two cases if one is only given $o(|E|^{2/3})$ edge samples from $R_{y}(E)$.

\begin{lemma}\label{lem:triangle_indistinguishable}
Fix an edge set $E\subseteq \binom{[n]}{2}$ such that every edge in $E$
is contained in exactly one triangle. Suppose there is a randomized map
\(
\cA:\binom{[n]\times\bF_{2}}{2}^{m}\rightarrow \{0,1\} 
\)
that satisfies the following.
\begin{enumerate}[label=(\arabic*)]
\item For a uniformly random $y\in Y^{\yes}_{\triangle}(E)$ and independent edge samples $e_{1},\dots,e_{m}\in R_{y}(E)$, we have $\Pr{\cA(e_{1},\dots,e_{m})=1}\geq 2/3$. 
\item For a uniformly random $y\in Y^{\no}_{\triangle}(E)$ and independent edge samples $e_{1},\dots,e_{m}\in R_{y}(E)$, we have $\Pr{\cA(e_{1},\dots,e_{m})=0}\geq 2/3$. 
\end{enumerate}
Then we must have $m\geq |E|^{2/3}$.
\end{lemma}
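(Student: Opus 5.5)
We mirror the proof of Lemma~\ref{lem:bipartiteness_indistinguishable}: it suffices to show that the two input distributions $\cD^{\yes}$ and $\cD^{\no}$ over $\binom{[n]\times\bF_{2}}{2}^{m}$ have total variation distance less than $1/3$ whenever $m<|E|^{2/3}$. Both distributions can be generated in two stages. First, draw $m$ independent uniformly random edges $\{a_{1},b_{1}\},\dots,\{a_{m},b_{m}\}$ of $E$ together with independent uniform bits $t_{1},\dots,t_{m}\in\bF_{2}$. Second, output $e_{i}=\{(a_{i},t_{i}),(b_{i},y_{a_{i}b_{i}}+t_{i})\}$. This works because $R_{y}(E)$ contains exactly two edges over each $\{a,b\}\in E$, so $|R_{y}(E)|=2|E|$ and a uniform sample of $R_{y}(E)$ is a uniform edge of $E$ plus a uniform lift. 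The first stage is identical in the two cases, so all the difference lies in the joint law of the bits $(y_{e})_{e\in F}$, where $F\subseteq E$ is the set of distinct projected edges that were sampled.

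The key step is to show that if $F$ contains no complete triangle of $E$, then $(y_{e})_{e\in F}$ is uniform on $\bF_{2}^{F}$ under both a uniform $y\in Y^{\yes}_{\triangle}(E)$ and a uniform $y\in Y^{\no}_{\triangle}(E)$. Since every edge of $E$ lies in exactly one triangle, $E$ is partitioned into $|E|/3$ edge-disjoint triangles. The constraints defining $Y^{\yes}_{\triangle}$ and $Y^{\no}_{\triangle}$ act independently on each triangle: each is one affine equation over the three bits of that triangle. A uniform element of either collection therefore restricts to independent triples, each uniform on the solutions of its equation. For any single triangle, any two of its three bits are uniform on $\bF_{2}^{2}$ under either equation. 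Hence, when $F$ meets each triangle in at most two edges, the restrictions coincide in law. Conditioned on such a first-stage outcome, the second stage is then identical in the two cases.

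It remains to bound the probability that $F$ contains all three edges of some triangle of the partition. For a fixed triangle, the probability that all three of its edges appear among $m$ uniform samples from $E$ is at most $(3m/|E|)^{3}$, by choosing which samples hit which edge (more carefully, at most $m^{3}/|E|^{3}\cdot 3!$). A union bound over the $|E|/3$ triangles gives a bound of at most $2m^{3}/|E|^{2}$. This is below $1/3$ once $m$ is a small constant times $|E|^{2/3}$, so $\|\cD^{\yes}-\cD^{\no}\|_{\mathrm{TV}}<1/3$ and the assumed map $\cA$ cannot exist.

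The main obstacle is only bookkeeping in this last constant. The crude estimate $6m^{3}/|E|^{3}$ per triangle gives $2m^{3}/|E|^{2}$, which is less than $1/3$ only for $m<(|E|^{2}/6)^{1/3}$, slightly below the stated threshold $|E|^{2/3}$. I would sharpen it using $\Pr{\text{all three edges hit}}\leq(1-(1-1/|E|)^{m})^{3}\leq(m/|E|)^{3}$, via negative association of the occupancy indicators. This gives a total of at most $m^{3}/(3|E|^{2})$, which is less than $1/3$ exactly when $m<|E|^{2/3}$, as required.
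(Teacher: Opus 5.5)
Your proposal is correct and follows the paper's proof. You couple the two input distributions through a common uniform sample of projected edges and uniform lifts, observe that the outputs agree in law unless all three edges of some triangle of $E$ are sampled, and union-bound over the $|E|/3$ triangles to get $m^{3}/(3|E|^{2})<1/3$. One correction on the constant: the per-triangle bound $(m/|E|)^{3}$ already follows from a direct union bound over the $m(m-1)(m-2)\leq m^{3}$ injective assignments of samples to the three edges. Your extra factor $3!$ double counts, so the negative-association step is valid but unnecessary.
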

\begin{proof}
In the two assumptions on $\cA$ stated in the lemma, the input $(e_{1},\dots,e_{m})$ to $\cA$ follows two different distributions. It suffices to show that these two distributions over $\binom{[n]\times \bF_{2}}{2}^{m}$, which we denote by $\cD^{\yes}$ and $\cD^{\no}$, respectively, have total variation distance less than $1/3$ if $m<|E|^{2/3}$. Both $\cD^{\yes}$ and $\cD^{\no}$ can be alternatively generated by first sampling edges $\{u_{1},v_{1}\},\dots,\{u_{m},v_{m}\}$ uniformly at random from $E$ and then letting 
\[e_{i}=\big\{(u_{i},t_{i}),(v_{i},s_{i})\big\}\text{ for some suitably chosen }s_{i},t_{i}\in \bF_{2}\]
for all $i\in [m]$. Note that the first step (choosing $u_{i}$'s and $v_{i}$'s) is identical for $\cD^{\yes}$ and $\cD^{\no}$, while the second step may be implemented differently for the two. Furthermore, if the collection $\big\{\{u_{1},v_{1}\},\dots,\{u_{m},v_{m}\}\big\}$ sampled in the first step does not contain a triangle, the second step is also identical for $\cD^{\yes}$ and $\cD^{\no}$. Since $\big\{\{u_{1},v_{1}\},\dots,\{u_{m},v_{m}\}\big\}$ contains a triangle with probability at most (by union bound)
\[
\frac{|E|}{3}\cdot \frac{m^{3}}{|E|^{3}}=\frac{1}{3}m^{3}|E|^{-2},
\]
we have $\|\cD^{\yes}-\cD^{\no}\|_{\mathrm{TV}}\leq \frac{1}{3}m^{3}|E|^{-2}<\frac{1}{3}$ if $m<|E|^{2/3}$.
\end{proof}

\begin{corollary}
We have $\sam\big(\cG^{\textup{tri}}_{2n},1/3\big)\geq n^{4/3}\exp\bigl(-O\bigl(\sqrt{\log n}\bigr)\bigr)$.
\end{corollary}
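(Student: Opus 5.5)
We combine the Ruzsa--Szemer\'edi construction (Proposition~\ref{prop:ruzsa_semeredi}) with Lemma~\ref{lem:triangle_indistinguishable}. By Proposition~\ref{prop:ruzsa_semeredi}, we fix an edge set $E\subseteq\binom{[n]}{2}$ in which every edge lies in exactly one triangle and $|E|\geq n^{2}\exp(-O(\sqrt{\log n}))$. We identify the vertex set $[n]\times\bF_{2}$ with $[2n]$. For $y\in\bF_2^{E}$, the distribution $\mu_y$ is the uniform distribution over $R_y(E)$. The unknown edge set is $R_y(E)$ itself, so every sample is labeled positive.

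First we check that both $Y^{\yes}_{\triangle}(E)$ and $Y^{\no}_{\triangle}(E)$ are nonempty. Since $E$ is the edge-disjoint union of its triangles, the constraints $y_{ab}+y_{bc}+y_{ca}=b$ involve disjoint sets of coordinates, one constraint per triangle. Hence each system is solvable, and it has the same number of solutions as the other.

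Next we verify the hypotheses of Lemma~\ref{lem:triangle_indistinguishable}. Suppose a tester for $\cG^{\textup{tri}}_{2n}$ with proximity parameter $1/3$ uses $m$ samples.
\begin{enumerate}[label=(\arabic*)]
\item If $y\in Y^{\yes}_{\triangle}(E)$, then $R_y(E)$ is triangle-free by Proposition~\ref{prop:property_of_Yyes_Yno}(1). So the tester accepts with probability at least $2/3$ for every such $y$, and therefore also for a uniformly random one.
\item If $y\in Y^{\no}_{\triangle}(E)$, let $E'$ be any triangle-free set. Proposition~\ref{prop:property_of_Yyes_Yno}(2) gives
\[
\mu_y\bigl(R_y(E)\triangle E'\bigr)\ \geq\ \frac{|R_y(E)\setminus E'|}{|R_y(E)|}\ \geq\ \frac{1}{3}.
\]
So the tester rejects with probability at least $2/3$.
\end{enumerate}
The tester receives labels, but in both cases every label equals $1$. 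It can therefore be viewed as a randomized map $\cA$ on the sequence of sampled edges alone, with the constant labels hard-wired in. Lemma~\ref{lem:triangle_indistinguishable} then yields $m\geq |E|^{2/3}\geq n^{4/3}\exp(-O(\sqrt{\log n}))$.

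The only subtle point is this reduction from the labeled-sample model of Definition~\ref{def:distribution_free_testing} to the unlabeled map $\cA$, which works because the labels are deterministic. There is no real technical obstacle beyond that. The sampling model in Lemma~\ref{lem:triangle_indistinguishable}, independent uniform samples from $R_y(E)$, matches $\mu_y$ exactly.
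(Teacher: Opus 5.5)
Your proposal is correct and matches the paper's proof. You take a Ruzsa--Szemer\'edi edge set $E$ and use Proposition~\ref{prop:property_of_Yyes_Yno}: the uniform distribution on $R_y(E)$ is triangle-free for $y\in Y^{\yes}_{\triangle}(E)$ and $1/3$-far from triangle-free for $y\in Y^{\no}_{\triangle}(E)$. Lemma~\ref{lem:triangle_indistinguishable} then gives $m\geq |E|^{2/3}$. Your extra remarks are details the paper leaves implicit, and you handle them correctly: both $Y$-sets are nonempty because the triangles are edge-disjoint, and the tester can be viewed as a map on edges alone by hard-wiring the all-ones labels.
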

\begin{proof}
We use Proposition~\ref{prop:ruzsa_semeredi} to obtain an edge set $E\subseteq \binom{[n]}{2}$ in which every edge is contained in exactly one triangle, and $|E|=\ex^{=1}(n,C_{3})=n^{2} \exp\bigl(-O\bigl(\sqrt{\log n}\bigr)\bigr)$. For any $y\in Y^{\yes}_{\triangle}(E)$, the graph $R_{y}(E)$ is triangle-free by Proposition~\ref{prop:property_of_Yyes_Yno}(1). On the other hand, it follows from Proposition~\ref{prop:property_of_Yyes_Yno}(2) that if we let $\mu_{y}$ denote the uniform distribution over $R_{y}(E)$ (considered as an edge set over $[2n]$), then
\[
\mu_{y}\big(R_{y}(E)\triangle E'\big)\geq \frac{1}{3}\qquad\text{for any }y\in Y^{\no}_{\triangle}(E)\text{ and any }E'\in\cG^{\textup{tri}}_{2n}. 
\]
Therefore, any sample-based distribution-free tester for $\cG^{\textup{tri}}_{2n}$ with proximity parameter $\varepsilon=1/3$ and sample complexity $m$, when considered as a randomized map $\cA:\binom{[n]\times \bF_{2}}{2}^{m}\rightarrow \{0,1\}$, must satisfy the conditions of Lemma~\ref{lem:triangle_indistinguishable} and hence $m\geq |E|^{2/3}=n^{4/3}\exp\bigl(-O\bigl(\sqrt{\log n}\bigr)\bigr)$.
\end{proof}

\subsection{Square-Freeness Constructions}\label{subsec:square_lower_bound}

As in Section~\ref{subsec:triangle_lower_bound}, it suffices to prove for any positive integer $n$ that
\begin{equation}\label{eq:square_lower_bound_suffices}
\sam\big(\cG^{\textup{squ}}_{2n},1/4\big)\geq \bigl(\ex^{=1}(n,C_{4})\bigr)^{3/4}.
\end{equation}
The desired lower bound \[\sam\big(\cG^{\textup{squ}}_{2n},1/4\big)\geq n^{9/8}\exp\left(-O\left(\sqrt{\log n}\right)\right)\]
then follows by plugging Proposition~\ref{prop:ruzas_square} into \eqref{eq:square_lower_bound_suffices}. The proof of \eqref{eq:square_lower_bound_suffices} is essentially the same as the corresponding proof for triangle-freeness in Section~\ref{subsec:triangle_lower_bound}. In particular, for any edge set $E\subseteq \binom{[n]}{2}$ in which every edge is contained in exactly one square, we can define two collections of vectors
\[
Y^{\yes}_{\square}(E),Y^{\no}_{\square}(E)\subseteq \bF_{2}^{E}
\]
by requiring their members $y$ to satisfy $y_{ab}+y_{bc}+y_{cd}+y_{da}=1$ (respectively, $=0$) for all squares $\{\{a,b\},\{b,c\},\{c,d\},\{d,a\}\}\subseteq E$. The important observation is that for any $y\in \bF_{2}^{E}$ and any edge set $E\subseteq \binom{[n]}{2}$, any square in $R_{y}(E)$ must ``project'' to a square in $E$ under the canonical projection map $[n]\times \bF_{2}\rightarrow [n]$.\footnote{Note that this argument would fail if we were considering the property $C_{6}$-freeness, because a 6-cycle in $R_{y}(E)$ does not necessarily project to a 6-cycle in $E$ (there may be repeated vertices after the projection).} The rest of the argument is entirely analogous to Section~\ref{subsec:triangle_lower_bound}, and thus we omit the proof of \eqref{eq:square_lower_bound_suffices}.

In the rest of this subsection, we sketch the proof of Proposition~\ref{prop:ruzas_square} that is implicit in the paper by Timmons and Verstra\"ete \cite{timmons2015counterexample}.

As is the case with the proof of Proposition~\ref{prop:ruzsa_semeredi} by \cite{ruzsa1978triple}, the construction of graphs in which every edge is contained in exactly one square relies on additive combinatorics. While the Ruzsa-Szemer\'edi construction for $\ex^{=1}(n,C_{3})$ is based on integer sets without 3-term arithmetic progressions, Timmons and Verstra\"ete \cite{timmons2015counterexample} observed that one can similarly obtain constructions for $\ex^{=1}(n,C_{4})$ using certain integer sets known as \emph{$k$-fold Sidon sets}, which were first defined by Lazebnik and Verstra\"ete \cite{lazebnik2003hypergraphs}.

\begin{definition}\label{def:trivial_solution}
Let $c_{1},\dots,c_{r}$ be nonzero integers such that $\sum_{i=1}^{r}c_{i}=0$. Given an Abelian group $\Gamma$, a solution $(a_{1},\dots,a_{r})\in \Gamma^{r}$ to the equation
\[
c_{1}x_{1}+\dots+c_{r}x_{r}=0
\]
is called a \emph{trivial solution} if there exists a partition of $[r]$ into nonempty sets $T_{1},\dots,T_{m}$ such that for every $i\in [m]$, we have $\sum_{j\in T_{i}}c_{j}=0$ and $a_{j_{1}}=a_{j_{2}}$ whenever $j_{1},j_{2}\in T_{i}$.
\end{definition}

\begin{definition}[\cite{lazebnik2003hypergraphs}]
Let $k$ be a positive integer and let $\Gamma$ be an Abelian group. A subset $A\subseteq \Gamma$ is called a \emph{$k$-fold Sidon set} if any solution $(a_{1},\dots,a_{4})\in A^{4}$ to any equation of the form
\[
c_{1}x_{1}+c_{2}x_{2}+c_{3}x_{3}+c_{4}x_{4}=0,
\]
where $c_{1},\dots,c_{4}$ are integers such that $|c_{i}|\leq k$ for all $i\in [4]$ and $c_{1}+c_{2}+c_{3}+c_{4}=0$, must be trivial.
\end{definition}

\begin{proposition}[{\cite[Theorem 7.1]{timmons2015counterexample}}]\label{prop:3fold_Sidon_suffices}
Suppose $n$ is a positive integer not divisible by $2$ or $3$, and $\Gamma$ is an Abelian group of order $n$. If $A\subseteq \Gamma$ is a $3$-fold Sidon set, we have $\ex^{=1}(4n,C_{4})\geq 4n|A|$.
\end{proposition}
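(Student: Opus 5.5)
We construct a graph on vertex set $\Gamma\times[4]$ (which has $4n$ vertices) and then check two things: every edge lies in a square, and no edge lies in two squares. The construction follows the Ruzsa--Szemer\'edi template, but uses a $4$-cycle of parts. Label the four copies of $\Gamma$ as $V_1,\dots,V_4$, with indices taken cyclically. For each $x\in\Gamma$ and $a\in A$, we plant a ``canonical'' square
\[
(x,1),\ (x+a,2),\ (x+2a,3),\ (x+3a,4),
\]
i.e.\ we join $(x+(i-1)a,i)$ to $(x+ia,i+1)$ for $i=1,2,3$, and join $(x+3a,4)$ back to $(x,1)$. Equivalently, between $V_i$ and $V_{i+1}$ the edges are the pairs $(y,i),(y+a,i+1)$ with $a\in A$, for $i=1,2,3$, and between $V_4$ and $V_1$ they are the pairs $(y,4),(y-3a,1)$. (The precise coefficients may need adjusting, e.g.\ to $(1,1,1,-3)$ as here, or to another pattern with entries bounded by $3$. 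Fixing these coefficients so that the Sidon condition applies is part of the plan.)

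\textbf{Step 1: counting edges.} The $n|A|$ planted squares are pairwise edge-disjoint. An edge between $V_i$ and $V_{i+1}$ determines both the difference $a$ and the starting point $x$: since $n$ is coprime to $2$ and $3$, multiplication by $2$ and by $3$ is invertible in $\Gamma$, so $x$ can be recovered. Hence the graph has exactly $4n|A|$ edges, and every edge lies in at least one square.

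\textbf{Step 2: no unintended squares.} Take any $4$-cycle in the graph. Because the parts are arranged cyclically and each edge joins consecutive parts, a $4$-cycle can take one of only a few shapes.
\begin{itemize}
\item It winds once around $V_1V_2V_3V_4$, using differences $a_1,a_2,a_3,a_4\in A$. Closing the cycle forces $a_1+a_2+a_3-3a_4=0$.
\item It goes back and forth between two adjacent parts, $V_i,V_{i+1},V_i,V_{i+1}$. This gives $a_1-a_2+a_3-a_4=0$, possibly with coefficients such as $3$ if the part pair is $(V_4,V_1)$.
\item It goes $V_i,V_{i+1},V_{i+2},V_{i+1}$ and back. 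This gives relations like $a_1+a_2-a_3-a_4=0$, again with some coefficients.
\end{itemize}
In every case the relation is $\sum c_ja_j=0$ with $|c_j|\le 3$ and $\sum c_j=0$. The $3$-fold Sidon property then forces the solution to be trivial.

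\textbf{Step 3: trivial solutions give nothing new.} It remains to check, case by case, that a trivial solution corresponds either to a degenerate walk, with repeated vertices and so not a genuine $4$-cycle, or to exactly the planted square. For instance, $a_1+a_2+a_3-3a_4=0$ is trivial only when all four $a_j$ are equal, and that is the canonical square. For $a_1-a_2+a_3-a_4=0$, a trivial solution has $a_1=a_2$ or $a_1=a_4$, and either equality forces two vertices of the cycle to coincide.

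I expect Step 3 to be the main obstacle. The difficulty is choosing the edge rules between parts so that every closing relation has coefficients bounded by $3$ and summing to zero, and so that every trivial partition genuinely forces either a degenerate cycle or the planted square. Recovering $x$ from the planted structure requires multiplication by $2$ and $3$ to be invertible, which is exactly where $\gcd(n,6)=1$ is used. The enumeration of cycle shapes is routine but has to be complete.
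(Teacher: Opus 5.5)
\textbf{There is a genuine gap: the construction you commit to creates unintended squares.} Your overall template is right, and it matches the paper's: vertex set $\Gamma\times[4]$, planted squares on the ``lines'' $\{(x+(i-1)a,i): i\in[4]\}$, every $4$-cycle reduced to a four-term relation with coefficients of absolute value at most $3$, then the $3$-fold Sidon property. Your Step~1 edge count is also fine. The failure is in exactly the shape you did not check in Step~3. For any $y\in\Gamma$ and distinct $a,b\in A$,
\[
(y,1),\quad (y+a,2),\quad (y+a+b,3),\quad (y+b,2)
\]
is a genuine $4$-cycle in your graph. Its consecutive differences are $a,b,a,b$, each an allowed $V_iV_{i+1}$ edge. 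It is not a planted square, because it meets only three parts. In your Step~2 terms this is the third shape, with relation $a_1+a_2-a_3-a_4=0$. The solution $a_1=a_3$, $a_2=a_4$ is \emph{trivial} (partition $\{1,3\},\{2,4\}$), so the Sidon hypothesis cannot exclude it. Yet the two vertices in $V_2$ are distinct whenever $a_1\neq a_2$. So once $|A|\geq 2$, every edge between $V_1,V_2$ (and likewise between $V_2,V_3$ and between $V_3,V_4$) lies in more than one square, and $\ex^{=1}(4n,C_{4})\geq 4n|A|$ does not follow. No strengthening of the Sidon property repairs this; the defect is in which pairs of parts are joined.

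\textbf{What is needed is a choice of part pairs that avoids such trivial-but-nondegenerate solutions.} Your rule is the line rule $(y,i)\sim(z,j)$ iff $z-y=(j-i)a$ for some $a\in A$, applied to the pairs $\{1,2\},\{2,3\},\{3,4\},\{1,4\}$. Under this rule you need two conditions:
\begin{enumerate}
\item At each part $q$ with part-neighbours $p,r$, the displacements $q-p$ and $r-q$ along $p\to q\to r$ must differ. Your order $1,2,3,4$ violates this at $q=2$ and $q=3$.
\item The winding relation must have no proper zero-sum block of coefficients. For example, the order $1,2,4,3$ fails this: its winding relation $a_1+2a_2-a_3-2a_4=0$ admits $a_1=a_3\neq a_2=a_4$, which gives a non-planted square through all four parts.
\end{enumerate}

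\textbf{The paper's choice satisfies both.} It joins the pairs $\{1,3\},\{1,4\},\{2,3\},\{2,4\}$, i.e.\ the cyclic order $1,3,2,4$.
\begin{itemize}
\item The winding relation is $2a_1-a_2+2a_3-3a_4=0$. Its only zero-sum block is $\{1,2,3,4\}$, so only planted squares wind around.
\item The three-part shapes give relations such as $2a_1-a_2+a_3-2a_4=0$. The only proper trivial partition, $\{1,4\},\{2,3\}$, forces the two vertices in the repeated part to coincide.
\item The two-part shapes give $c(a_1-a_2+a_3-a_4)=0$ with $c\in\{1,2,3\}$, handled as in your Step~3.
\end{itemize}
This is also where invertibility of $2$ is genuinely used: the coefficient $2$ on the pairs $\{1,3\}$ and $\{2,4\}$ needs $a\mapsto 2a$ injective to get $4n|A|$ distinct edges. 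In your construction $2$ played no role, which was a sign something was off.
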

\begin{proof}
We construct a graph with vertex set $\Gamma\times [4]$ where two vertices $(x,i),(y,j)\in \Gamma\times [4]$ are adjacent if and only if $\{i,j\}\in \{\{1,3\},\{1,4\},\{2,3\},\{2,4\}\}$ and
\[
y-x=(j-i)a\qquad\text{for some }a\in A.
\]
The number of edges in this graph is $4n|A|$. Furthermore, using the condition that $A$ is a 3-fold Sidon set, it is easy to see that every edge in this graph is contained in exactly one square.
\end{proof}

In light of Proposition~\ref{prop:3fold_Sidon_suffices} and the prime number theorem for arithmetic progressions, to prove Proposition~\ref{prop:ruzas_square} it suffices to prove the following lemma:

\begin{lemma}
Suppose $p$ is a prime number such that $p\equiv \pm 5\pmod{12}$. Then there is a $3$-fold Sidon set $A\subseteq \bF_{p}^{2}$ (here $\bF_{p}^{2}$ is an Abelian group under addition) of cardinality at least $p\cdot \exp\bigl(-O\bigl(\sqrt{\log p}\bigr)\bigr)$.
\end{lemma}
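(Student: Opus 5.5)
We will construct $A$ by lifting a Behrend-type set onto the moment curve (parabola) in $\bF_{p}^{2}$. Concretely, take a set $B\subseteq \bF_{p}$ and put $A=\{(b,b^{2})\mid b\in B\}$. A linear relation $c_{1}x_{1}+\dots+c_{4}x_{4}=0$ with $\sum c_{i}=0$ and $|c_{i}|\le 3$, applied to points of $A$, gives the two equations
\[
\sum_{i}c_{i}b_{i}=0,\qquad \sum_{i}c_{i}b_{i}^{2}=0\quad\text{in }\bF_{p}.
\]
The plan is to choose $B$ as the image in $\bF_{p}$ of a Behrend-type set $B_{0}\subseteq\{1,\dots,\lfloor p/K\rfloor\}$, with $K$ an absolute constant. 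We want $B_{0}$ to have no nontrivial solutions to every equation $\sum c_{i}x_{i}=0$ with $|c_{i}|\le 3$ and $\sum c_{i}=0$ (in the integers, equivalently modulo $p$ because the entries are small). Behrend's sphere construction handles all such equations simultaneously when they are ``convex'': take digits base $d$ below $d/(4\cdot 3)$ to avoid carries, and restrict to a fixed Euclidean norm. Such a set has size $p\exp(-O(\sqrt{\log p}))$.

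The key steps are as follows.

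\textbf{Step 1 (case reduction).} Enumerate the finitely many coefficient patterns, up to sign and permutation, with $\sum c_i=0$ and $|c_i|\le3$. If some $c_{i}=0$, the relation involves at most three points. For three points the two equations say that the points lie on a common line, and a conic and a line meet in at most two points. So the relation is trivial unless points coincide, and coincidences reduce to $c_{j}+c_{k}=0$ cases, which are trivial by definition. This leaves the genuine 4-term patterns, e.g.\ $(1,1,-1,-1)$, $(1,2,-1,-2)$, $(3,-1,-1,-1)$, $(2,1,-1,-2)$, $(1,2,-3,0)$-type, $(3,1,-2,-2)$, $(3,3,-3,-3)$, etc.

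\textbf{Step 2 (patterns with a single isolated sign).} For patterns such as $c=(3,-1,-1,-1)$ or $(2,-1,-1,0)$, use the quadratic equation together with strict convexity. Writing $b_{1}$ as the weighted mean of the others, the identity $\sum c_i b_i^2 = \sum c_ib_i=0$ forces the weighted variance to vanish, so all points coincide. This works over $\bF_{p}$ as long as the variance identity does not wrap around modulo $p$, which we guarantee by taking $b_{i}<\sqrt{p}/K$.

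That constraint would be fatal to the size bound, so instead we will work in the integers. All $b_{i}< p/K$, so the linear relation holds in $\mathbb{Z}$ exactly. For the quadratic relation we will use the fact that it is a consequence of the linear one; concretely, after substituting the linear relation, the quadratic form in the differences $b_i-b_j$ factors. Since $p\equiv\pm5 \pmod{12}$, neither $2$, $3$, nor $-3$ is a square modulo $p$, so the factored binary quadratic forms that arise, e.g.\ $u^{2}-2v^{2}$, $u^{2}+3v^{2}$, $u^2-3v^2$, are anisotropic over $\bF_{p}$ and force $u=v=0$. This is exactly where the congruence hypothesis enters.

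\textbf{Step 3 (balanced patterns).} For patterns like $(1,1,-1,-1)$ or $(1,2,-1,-2)$, the same discriminant computation yields either anisotropic forms, which are handled by the non-residue hypothesis, or factorizations into the trivial pairings. The remaining cases come from linear relations such as $x_1+2x_2=x_3+2x_4$ that the conic cannot exclude by itself; these are killed by the Behrend property of $B_{0}$.

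I expect the main obstacle to be the bookkeeping in Steps 2--3. We must check that, for each admissible pattern, eliminating one variable with the linear equation leaves a binary quadratic form whose discriminant is, up to squares, one of $2,3,-3,6,\dots$, and that all of these are non-residues for $p\equiv\pm5\pmod{12}$. We also need the Behrend set to be chosen simultaneously free of the few degenerate linear patterns. For each pattern I would first try to compute the discriminant directly.
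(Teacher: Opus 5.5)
Your architecture matches the paper's: lift to the parabola $b\mapsto(b,b^2)$, a Behrend sphere set that kills the ``convex'' linear equations, and a quadratic-form argument for the remaining patterns. Your Step~1 conic argument for three-term relations is correct. A single Behrend set in $[1,p/K]$, handled in $\mathbb{Z}$, is a legitimate substitute for the paper's intersection of random translates. But both steps that carry the real content are wrong as written.

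\textbf{Step 2.} The claim that $2$ and $-3$ are non-squares for all $p\equiv\pm5\pmod{12}$ is false. For $p=7$ one has $2\equiv 3^2$ and $-3\equiv 2^2$, and also $2\equiv 6^2\pmod{17}$. The hypothesis only guarantees that $3$ is a non-residue. For the pattern $(3,-1,-1,-1)$, put $u=b_2-b_1$ and $v=b_3-b_1$. The two relations reduce to $u^2+uv+v^2=0$, which has discriminant $-3$ and is isotropic for every $p\equiv 7\pmod{12}$; modulo $7$, $(u,v)=(1,2)$ is a nontrivial zero. So no anisotropy argument can handle this pattern. It must be killed purely linearly, by the Behrend property for $x_2+x_3+x_4=3x_1$. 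Your $B_0$ has this property, and it applies in $\mathbb{Z}$ because all entries are below $p/K$. This is exactly the paper's Case~1, and the quadratic relation is not needed there.

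\textbf{Step 3.} The more serious problem is that you ask the Behrend set to kill $x_1+2x_2=x_3+2x_4$. No set of the required size can do this. Once $|B_0|^2>3N$, two distinct pairs share a value of $x_1+2x_2$, which is a nontrivial solution. The same objection applies to your opening requirement that $B_0$ avoid every equation with $|c_i|\le 3$, since that includes $x_1+x_2=x_3+x_4$.

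The missing idea is to merge the quadratic and linear relations first. Take $c_1,c_2>0>c_3,c_4$.
\begin{enumerate}
\item The identity $(c_1b_1^2+c_2b_2^2)(c_1+c_2)-(c_1b_1+c_2b_2)^2=c_1c_2(b_1-b_2)^2$, together with its analogue for indices $3,4$, gives $c_1c_2(b_1-b_2)^2=c_3c_4(b_3-b_4)^2$ in $\bF_p$.
\item The product $c_1c_2c_3c_4$ lies in $\{1,4,9,12,16,36,81\}$.
\item The value $12$ is ruled out because $3$ is a non-residue; this is the only place the congruence condition enters. It forces $b_1=b_2$ and $b_3=b_4$, and then all four are equal by the linear relation.
\item Otherwise $c_1c_2c_3c_4=s^2$, and the relation factors as $s(b_1-b_2)=\pm c_3c_4(b_3-b_4)$.
\item Each sign, combined with the original linear relation, eliminates a variable. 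It yields either a two-variable relation, giving a trivial pairing, or a three-term convex relation. Examples are $b_1+3b_3=4b_2$ for $(1,2,-1,-2)$, $b_1+2b_3=3b_2$ for $(1,3,-1,-3)$, and $b_1+5b_3=6b_2$ for $(2,3,-2,-3)$. These lift to $\mathbb{Z}$ once $K$ exceeds the coefficient sizes.
\end{enumerate}
It is these derived three-term relations, not the original four-term ones, that the Behrend set must avoid. Your planned bookkeeping with forms like $u^2-2v^2$ and $u^2+3v^2$ should be replaced by this single computation.
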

\begin{proof}[Proof Sketch.]
As pointed out in \cite{cilleruelo2014k}, this can be proved by adapting Ruzsa's proof of \cite[Theorem 7.3]{ruzsa1993solving}. For each $a\in \bF_{p}$, let $f(a)=(a,a^{2})\in \bF_{p}^{2}$. For any nonzero integers $c_{1},c_{2},c_{3},c_{4}\in [-3,3]$ such that $c_{1}+c_{2}+c_{3}+c_{4}=0$, consider solutions $(a_{1},a_{2},a_{3},a_{4})\in \bF_{p}^{4}$ to the equation
\begin{equation}\label{eq:quadratic_Sidon}
c_{1}f(x_{1})+c_{2}f(x_{2})+c_{3}f(x_{3})+c_{4}f(x_{4})=0.
\end{equation}
A solution $(a_{1},a_{2},a_{3},a_{4})$ to \eqref{eq:quadratic_Sidon} is said to be a trivial solution if \((f(a_{1}),f(a_{2}),f(a_{3}),f(a_{4}))\)
is a trivial solution to the linear equation $c_{1}x_{1}+c_{2}x_{2}+c_{3}x_{3}+c_{4}x_{4}=0$ (as per Definition~\ref{def:trivial_solution}). It now suffices to find a set $A\subseteq \bF_{p}$ of cardinality at least $p\cdot \exp\bigl(-O\bigl(\sqrt{\log p}\bigr)\bigr)$ such that for any equation of the form \eqref{eq:quadratic_Sidon} has only trivial solutions in $A$.

For each individual equation of the form
\begin{alignat}{2}
x_{1}+x_{2}+x_{3}&=3x_{4},\qquad &&\text{or}\label{eq:linear_Sidon_1}\\
d_{1}x_{1}+d_{2}x_{2}&=(d_{1}+d_{2})x_{3},\qquad&&\text{where }d_{1},d_{2}\in \{1,2,\dots,20\},\label{eq:linear_Sidon_2}
\end{alignat}
by Behrend's construction \cite{behrend1946sets} there is a set $B\subseteq \bF_{p}$ of cardinality at least $p\cdot \exp\bigl(-O\bigl(\sqrt{\log p}\bigr)\bigr)$ in which it has no nontrivial solutions. By taking random translations of all these individual sets $B$ and intersecting them, one gets a (random) set $A\subseteq \bF_{p}$ with (expected) size at least $p\cdot \exp\bigl(-O\bigl(\sqrt{\log p}\bigr)\bigr)$ in which no equation of the form \eqref{eq:linear_Sidon_1} or \eqref{eq:linear_Sidon_2} has nontrivial solutions. We claim that in such sets $A$, equations of the form \eqref{eq:quadratic_Sidon} also have no nontrivial solutions.

Case 1: if one of $c_{1},c_{2},c_{3},c_{4}$ has a different sign from the other three, then since $c_{1},c_{2},c_{3},c_{4}$ are integers in the range $[-3,3]$, the equation $c_{1}x_{1}+c_{2}x_{2}+c_{3}x_{3}+c_{4}x_{4}=0$ can only be of the form \eqref{eq:linear_Sidon_1}, which has no nontrivial solutions in $A$.

Case 2: if two of $c_{1},c_{2},c_{3},c_{4}$ are positive and the other two are negative, without loss of generality assume $c_{1},c_{2}>0$ and $c_{3},c_{4}<0$. Using the condition that no equation of the form \eqref{eq:linear_Sidon_2} has nontrivial solutions in $A$, it is easy to see that for any nontrivial solution $(a_{1},a_{2},a_{3},a_{4})$ to \eqref{eq:quadratic_Sidon}, the elements $a_{1},a_{2},a_{3},a_{4}$ must be pairwise distinct. Furthermore, we have
\begin{align*}
c_{1}c_{2}(a_{1}-a_{2})^{2}&=(c_{1}a_{1}^{2}+c_{2}a_{2}^{2})(c_{1}+c_{2})-(c_{1}a_{1}+c_{2}a_{2})^{2}\\
&=(c_{3}a_{3}^{2}+c_{4}a_{4}^{2})(c_{3}+c_{4})-(c_{3}a_{3}+c_{4}a_{4})^{2}=c_{3}c_{4}(a_{3}-a_{4})^{2}.
\end{align*}
This implies $c_{1}c_{2}c_{3}c_{4}$ must be a quadratic residue modulo $p$. Since $3$ is not a quadratic residue modulo $p$ (due to the condition $p\equiv \pm 5\pmod{12}$) and since $c_{1}c_{2}c_{3}c_{4}\in \{1,4,9,12,16,36,81\}$, it must be the case that $c_{1}c_{2}c_{3}c_{4}$ is a perfect square. Thus the quadratic equation $c_{1}c_{2}(a_{1}-a_{2})^{2}=c_{3}c_{4}(a_{3}-a_{4})^{2}$ in variables $a_{1},a_{2},a_{3},a_{4}$ can be factorized into two linear equations. Combining either of the two linear equations with the condition that $c_{1}a_{1}+c_{2}a_{2}+c_{3}a_{3}+c_{4}a_{4}=0$, one obtains a linear equation in the variables $a_{1},a_{2},a_{3}$. This three-variable equation either reduces to a two-variable equation, which would force two of $a_{1},a_{2},a_{3}$ to be equal, or has the form \eqref{eq:linear_Sidon_2}. We thus reach the conclusion that \eqref{eq:quadratic_Sidon} has no nontrivial solutions in $A$.
\end{proof}

\subsection{Tree-Freeness Constructions}\label{subsec:tree_lower_bound}

In this subsection, we prove the lower bound part of Theorems~\ref{thm:clique} and~\ref{thm:tree-freeness}. We first prove the lower bound for testing tree-freeness.

\begin{theorem}\label{thm:tree_lower_bound}
Let $H$ be a fixed tree with $t$ edges. Then there exists a constant $\varepsilon\in (0,1)$ such that $\sam\big(\cG^{H\textup{-free}}_{n},\varepsilon\big)\geq \Omega(n^{(t-1)/t})$.
\end{theorem}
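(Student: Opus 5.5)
We follow the template of the triangle-freeness and square-freeness lower bounds. Start from an $n$-vertex graph in which every edge lies in exactly one copy of $H$ and which has $\Theta(n)$ edges. Concretely, take $N=\lfloor n/(t+1)\rfloor$ vertex-disjoint copies of $H$, and write $E$ for the union of their edge sets, so $|E|=tN=\Theta(n)$. Since $H$ is connected and the copies are vertex-disjoint components, every copy of $H$ in $E$ is one of the planted copies. Consequently any subgraph of $E$ that is missing at least one edge from each copy is $H$-free, and under the uniform distribution on $E$ the set $E$ itself is $(1/t)$-far from $H$-free.

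For two-sided-error testers we cannot use the lifting trick $R_y(E)$ directly, because a bipartite tree has no parity constraint that a sign-twisting would destroy. Instead we use the ``planted copy versus broken copy'' construction. In the yes-instance, each copy of $H$ has one edge, chosen at random, deleted. To keep the sampling distribution comparable, we add a substitute edge: say the deleted edge $\{u,v\}$ is reattached as $\{u,v'\}$, where $v'$ is a fresh isolated vertex in a reserved pool. We must make sure the substitute edges themselves create no copy of $H$. Since each component then becomes a tree $H'\neq H$ with the same number of edges but a different shape, we can choose the reattachment so that $H'$ is not isomorphic to $H$. For example, take a leaf edge $\{u,v\}$ and reattach the leaf $v$ to a different vertex, possibly splitting the component into a smaller tree plus a detached edge. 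Every component of the resulting yes-graph then has at most $t$ edges and is not isomorphic to $H$, so the yes-instance is $H$-free. The no-instance keeps all $N$ copies of $H$ intact. In both cases we take the uniform distribution on the $tN$ edges and relabel vertices by a uniformly random permutation of $[n]$.

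The indistinguishability step is a birthday argument, analogous to Lemma~\ref{lem:triangle_indistinguishable}. Couple the two instances so that, for each component, the $t$ edge slots correspond. A sampled edge then reveals only ``some edge of component $i$ attached at certain relabeled vertices.'' The two distributions over sample sequences coincide unless the samples contain all $t$ edge slots of a single component. Only then is the shape of the full component revealed; with fewer slots sampled, the observed partial forest is distributed identically, after relabeling, in both cases. The probability that all $t$ slots of some component are sampled is at most
\[
N\cdot\binom{m}{t}\cdot\frac{t!}{(tN)^{t}}\leq \frac{m^{t}}{N^{t-1}}.
\]
This is $o(1)$ when $m=o(N^{(t-1)/t})=o(n^{(t-1)/t})$, which gives the bound.

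The main obstacle is the exact-coupling claim. We have to choose the reattachment so that every proper subset of the $t$ slots of a yes-component is isomorphic, as a labeled partial structure, to the corresponding subset of an $H$-copy. This is delicate: the vertex incidences among partial edges differ between $H$ and $H'$, and those differences could be detected by the sampled pattern even without seeing all $t$ slots. I would first try to ensure that the changed incidence only shows up when the specific moved edge and its neighbours are sampled. If that fails, the fallback is to randomize over which edge is moved and bound the distinguishing advantage by the probability of seeing a large enough connected sub-pattern. This likely requires sharpening the construction, for instance using stars or paths specifically, to achieve the $m^{t}/N^{t-1}$ scaling.
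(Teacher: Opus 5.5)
There is a genuine gap, and it is the step you flag yourself as the main obstacle. You claim that with fewer than $t$ samples from a component, the observed partial structure has the same law in both instances. That claim amounts to requiring that the yes-gadget and the no-gadget contain the same number of copies of every graph with at most $t-1$ edges (in expectation, if the gadget is randomized). Replacing $H$ by a single modified tree or forest $H'$ does not achieve this in general.

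Take $H=P_3$, the path with $3$ edges. The $3$-edge graphs $K_3, K_{1,3}, P_3, P_2\cup K_2, 3K_2$ have respectively $3,3,2,1,0$ pairs of edges sharing a vertex. So every $H'\not\cong P_3$ changes the expected number of sampled edge pairs forming a wedge by a constant factor. Counting such pairs distinguishes the two instances once $m\approx C\sqrt{N}$, i.e., as soon as two samples collide in a component. Your construction therefore gives only $\Omega(\sqrt{n})$, not $n^{2/3}$.

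Your fallback of randomizing which edge moves would need a mixture of $H$-free graphs whose expected counts of every pattern with up to $t-1$ edges equal those of $H$. For $P_3$ this can be arranged by hand (an equal mix of $K_{1,3}$ and $P_2\cup K_2$). But you give no construction or argument for a general tree, and that is the whole content of the lower bound.

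The missing idea is the paper's parity (inclusion--exclusion) gadget. It drops your insistence that the no-instance consist of pure copies of $H$.
\begin{itemize}
\item \textbf{Construction.} Write $H=(V,T)$. Let $H^{(0)}$ be the disjoint union of the graphs $(V,T')$ over all $T'\subseteq T$ with $|T|-|T'|$ even, and $H^{(1)}$ the same union over odd $|T|-|T'|$.
\item \textbf{Key identity.} For any fixed $e\in T$, the pairing $T'\leftrightarrow T'\triangle\{e\}$ shows that $H^{(0)}$ and $H^{(1)}$ become isomorphic once all copies of $e$ are deleted.
\item \textbf{Indistinguishability.} Fewer than $t$ samples from one gadget miss at least one edge type $e\in T$. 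An inclusion--exclusion over the set of missed types then shows that any $\le t-1$ samples from a gadget are identically distributed in the two cases.
\item \textbf{Yes and no cases.} $H^{(1)}$ is a union of proper subforests of $H$, hence $H$-free. $H^{(0)}$ contains $(V,T)$ itself, so the uniform distribution on the edges of $n$ randomly relabeled copies of $H^{(0)}$ is $2^{-(t-2)}t^{-1}$-far from $H$-free.
\end{itemize}
Your birthday estimate, applied to the event that some gadget receives at least $t$ samples (probability at most $n\cdot m^{t}/n^{t}$), then finishes the proof as you intended.
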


\begin{proof}
The case $t=1$ is easy; we assume $t\geq 2$ in the following. 

\paragraph{Construction.} Suppose $H=(V,T)$ is a tree with $|T|=t$. We build two graphs $H^{(0)}$ and $H^{(1)}$ as follows:
\begin{enumerate}
\item Initialize $H^{(0)},H^{(1)}$ to be empty graphs (with empty vertex sets). 
\item For each subset $T'\subseteq T$, do the following:
\begin{itemize}
\item If $|T|-|T'|$ is even, add a copy of the graph $(V,T')$ to $H^{(0)}$ (so that $H^{(0)}$ gets $|V|=t+1$ new vertices and $|T'|$ new edges).
\item If $|T|-|T'|$ is odd, add a copy of the graph $(V,T')$ to $H^{(1)}$ (so that $H^{(1)}$ gets $|V|=t+1$ new vertices and $|T'|$ new edges).
\end{itemize}
\end{enumerate}
Since there are exactly $2^{t-1}$ subsets of $T$ with odd (or even) cardinality, both $H^{(0)}$ and $H^{(1)}$ have $2^{t-1}(t+1)$ vertices. We denote $r=2^{t-1}(t+1)$.

For each $j\in \{0,1\}$ and positive integer $n$, let $\cH^{(j)}_{n}$ be the output distribution of the following process:
\begin{enumerate}
\item Initialize $G$ to be a graph with the vertex set $[rn]$ and an empty edge set.
\item For each $i\in [n]$, do the following:
\begin{itemize}
\item Pick a random bijection $\varphi$ from the set $\{(i-1)r+1,\dots,ir\}$ to the vertex set of $H^{(j)}$.
\item For each edge $\{u,v\}$ in $H^{(j)}$, add to $G$ an edge between $\varphi^{-1}(u)$ and $\varphi^{-1}(v)$.
\end{itemize}
\item Output $G$.
\end{enumerate}
In words, a random graph $G\sim \cH^{(j)}_{n}$ is the vertex-disjoint union of $n$ copies of $H^{(j)}$, with the vertices of each copy randomly permuted.

Finally, for each $j\in \{0,1\}$ and positive integers $n,m$, let $\cD^{(j)}_{n,m}$ be the output distribution of the following process:
\begin{enumerate}
\item Sample a graph $G\sim \cH^{(j)}_{n}$.
\item Sample $m$ edges $e_{1},\dots,e_{m}$ independently and uniformly from the edge set of $G$.
\item Output the sequence $(e_{1},\dots,e_{m})$.
\end{enumerate}
A sequence $(e_{1},\dots,e_{m})\in \binom{[rn]}{2}^{m}$ sampled from $\cD_{n,m}^{(j)}$ is said to be \emph{well-behaved} if for each $i\in [n]$, there are at most $(t-1)$ indices $k\in [m]$ such that both endpoints of $e_{k}$ fall in $\{(i-1)r+1,\dots,ir\}$. In other words, the edge sequence $(e_{1},\dots,e_{m})$ is well-behaved if no $t$ edges come from the same copy of $H^{(j)}$.

\paragraph{Analysis.} For any edge $e\in T$, there are exactly $2^{t-2}$ copies of $e$ in both $H^{(0)}$ and $H^{(1)}$. Thus both $H^{(0)}$ and $H^{(1)}$ have $2^{t-2}t$ edges. The main observation is that, for any edge $e\in T$, if we remove all copies of $e$ from $H^{(0)}$ and $H^{(1)}$, the two graphs become isomorphic. From this observation, it is easy to see that the distributions $\cD^{(0)}_{1,m}$ and $\cD^{(1)}_{1,m}$ are identical if $m\leq t-1$. Consequently, for any positive integers $n$ and $m$, a random \emph{well-behaved} sample from $\cD^{(0)}_{n,m}$ is indistinguishable from a random well-behaved sample from $\cD^{(1)}_{n,m}$. For each $j\in \{0,1\}$, a random sample $(e_{1},\dots,e_{m})\sim \cD^{(j)}_{n,m}$ is well-behaved with probability at least (using union bound)
\[
1-n\cdot \frac{m^{t}}{n^{t}}> \frac{2}{3}\qquad\text{if }m<\frac{1}{3}n^{(t-1)/t}.
\]
Therefore, we have
\begin{equation}\label{eq:tree_TV_distance}
\left\|\cD^{(0)}_{n,m}-\cD^{(1)}_{n,m}\right\|< \frac{1}{3}\qquad\text{if }m< \frac{1}{3}n^{(t-1)/t}.
\end{equation}

On the other hand, since $H^{(1)}$ is $H$-free, any graph $G$ in the support of the distribution $\cH_{n}^{(1)}$ is $H$-free. Since $H^{(0)}$ contains a copy of $H$, for any graph $G$ in the support of $\cH_{n}^{(0)}$, at least $n$ edges must be removed from $G$ to make it $H$-free; in other words, the uniform distribution over the edge set of $G$ is $\varepsilon$-far from $H$-free, where $\varepsilon=2^{-(t-2)}t^{-1}$. Therefore, any sample-based distribution-free tester for $\cG^{H\textup{-free}}_{rn}$ with proximity parameter $\varepsilon=2^{-(t-2)}t^{-1}$ and sample complexity $m$ must distinguish $\cD^{(0)}_{n,m}$ from $\cD^{(1)}_{n,m}$ with probability at least $2/3$. By \eqref{eq:tree_TV_distance}, this requires $m\geq n^{(t-1)/t}/3$. We thus conclude that
\[
\sam\Big(\cG^{H\textup{-free}}_{rn},2^{-(t-2)}t^{-1}\Big)\geq \frac{1}{3}n^{(t-1)/t}.\qedhere
\]
\end{proof}

We next prove the lower bound for testing cliques, using the techniques in the proof of Theorem~\ref{thm:tree_lower_bound}.

\begin{theorem}
We have $\sam\big(\cG^{\textup{cliq}}_{6n},1/4\big)\geq n^{2/3}/3$.
\end{theorem}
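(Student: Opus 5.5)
The plan follows the proof of Theorem~\ref{thm:tree_lower_bound} with $t=3$, with $H$ taken to be the length-3 path $P=\{\{1,2\},\{2,3\},\{3,4\}\}$ on the vertex set $[4]$. The difference is that the graphs in the two families must now be distinguished by being a clique or not, rather than by containing or avoiding a tree. For this reason I would encode each gadget as a labeled graph (the edge set $E$ together with its non-edges) rather than as a bare subgraph.

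\textbf{Gadgets.} Following the parity trick, for each $T'\subseteq P$ I take a copy of the path on four vertices in which the edges of $T'$ are in $E$ and the edges of $P\setminus T'$ are in the complement. Copies with $|P|-|T'|$ even go into $H^{(0)}$; copies with $|P|-|T'|$ odd go into $H^{(1)}$. Each $H^{(j)}$ then has four paths and sixteen vertices.

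The size $6n$ in the statement, however, suggests six vertices per block. So I expect a more economical gadget: a block on $6$ vertices whose distribution is supported on the length-3 paths inside the block. In the ``yes'' version every sampled pair should be labeled consistently with some clique, for example a labeling of the form $\{a,b\},\{c,d\}\in E$ with $\{b,c\}$ also in $E$. In the ``no'' version each path should be an $E$-alternating path, meaning $\{a,b\},\{c,d\}\in E$ and $\{b,c\}\notin E$, in a balanced way. The constraint is that for any single removed position of the path, the two labeled multisets become isomorphic after relabeling vertices. I would construct this by taking pairs of paths over six vertices and checking the isomorphism condition by hand.

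\textbf{Disjoint union.} The domain is the disjoint union of $n$ such blocks, each with its vertices randomly permuted, and $\mu$ is uniform over the sampled positions. I would take $E$ to include all pairs within a block that the clique labeling requires. Pairs outside the support of $\mu$ are unconstrained, so distance is measured only on the support.

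\textbf{Indistinguishability.} A sample sequence is called well-behaved if no block receives all $3$ of its path positions. The union bound gives failure probability at most $n(m/n)^3<1/3$ when $m<n^{2/3}/3$. On well-behaved samples the two views coincide, exactly as in the tree argument.

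\textbf{Far and close.} In the ``yes'' case, $E$ restricted to the support must be consistent with a single global clique $K$. I would handle this by choosing the yes-labels so that the vertex set of $K$ is the union of chosen vertices in each block. Since blocks are separate, the cross-block pairs are unconstrained and are not in the support. In the ``no'' case, each block contains an alternating path, that is, a minimal violation. Any clique $E'$ must then disagree with $E$ on at least one of the sampled pairs in every block, which gives distance at least $1/4$ when there are $4$ support pairs per block.

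The main obstacle is designing the 6-vertex labeled gadget that satisfies both requirements at once: full clique-consistency in the ``yes'' case, and isomorphism of the two labeled sample distributions after any single position is deleted. This is where I would spend the effort. I would try first the two-path construction over vertices $\{1,\dots,6\}$ with the middle label flipped.
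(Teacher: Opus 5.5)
Your architecture matches the paper's: constant-size labelled blocks with randomly permuted vertices, a union bound showing that for $m<n^{2/3}/3$ no block receives three samples, and an $E$-alternating path in each no-block forcing distance $1/k$ when a block carries $k$ support pairs. The gap is that the ingredient the stated constants depend on, a $6$-vertex gadget with four support pairs, is never constructed or checked; you leave it as ``the main obstacle.''

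The $16$-vertex parity gadget you do describe is sound. A direct count shows that two samples from a block have identical labelled distributions in both worlds, whether they come from the same copy or from different copies, because two positions never exhaust $P$. But that gadget has $12$ support pairs and only one alternating path per block (the copy labelled $(1,0,1)$). So it proves only $\sam(\cG^{\textup{cliq}}_{16n},1/12)\geq n^{2/3}/3$, which does not give the statement. A lower bound at proximity $1/12$ says nothing at the larger proximity $1/4$, and fitting $16n'$ vertices into $6n$ also loses a constant factor in $n$.

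What is missing is a concrete pair of labelled $4$-edge graphs satisfying two conditions. They must have the same number of pairs of each label. They must also have the same multiset of labelled $2$-edge configurations, meaning wedge versus disjoint together with the two labels. This, not the ``delete one path position'' criterion from the parity trick, is what makes the views coincide when a block gets at most two samples.

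Correspondingly, ``well-behaved'' must mean at most two samples per block, which is what your union bound actually controls. ``Not all three path positions'' is not enough, since other triples of pairs already look different in the two worlds. Also, if ``middle label flipped'' means one path labelled $(1,1,1)$ versus $(1,0,1)$, it fails against single samples, because the label counts differ; the edges must be rearranged rather than relabelled.

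The paper's gadget is as follows. In the no case, $\{1,2\},\{3,4\}$ are labelled $1$ and $\{2,3\},\{5,6\}$ are labelled $0$: an alternating path plus an isolated $0$-pair. In the yes case, $\{1,2\},\{4,5\}$ are labelled $1$ and $\{2,3\},\{5,6\}$ are labelled $0$, with $f$ the indicator of the clique on $\bigcup_i\varphi_i^{-1}(\{1,2,4,5\})$. Both have two wedges labelled $(1,0)$ and disjoint pairs labelled $(1,1),(1,0),(1,0),(0,0)$. That is exactly the check your proposal still has to carry out.
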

\begin{proof}
Define two edge sets $E^{(0)},E^{(1)}\subseteq \binom{[6]}{2}$ as follows:
\[
E^{(0)}=\bigl\{\{1,2\},\{2,3\},\{3,4\},\{5,6\}\bigr\}\qquad\text{and}\qquad E^{(1)}=\big\{\{1,2\},\{2,3\},\{4,5\},\{5,6\}\big\}.
\]
Let $\cD^{\no}_{n}$ be the output distribution of the following process:
\begin{enumerate}
\item For each $i\in [n]$, pick a random bijection $\varphi_{i}:\{6i-5,\dots,6i\}\rightarrow\{1,2,\dots,6\}$.
\item Define a function $f:\binom{[6n]}{2}\rightarrow\{0,1\}$ as follows: for any $\{a,b\}\in \binom{[6n]}{2}$, let $f(\{a,b\})=1$ if and only if 
\[
\{a,b\}=\varphi_{i}^{-1}(\{1,2\})\qquad\text{or}\qquad\{a,b\}=\varphi_{i}^{-1}(\{3,4\})\qquad\text{for some }i\in [n].
\]
\item Let $\mu$ be the uniform distribution over
\[
\bigcup_{i\in [n]}\Bigl\{\varphi_{i}^{-1}(\{1,2\}),\varphi_{i}^{-1}(\{2,3\}),\varphi_{i}^{-1}(\{3,4\}),\varphi_{i}^{-1}(\{5,6\})\Bigr\}\subseteq \binom{[6n]}{2}.
\]
\item Output the pair $(f,\mu)$.
\end{enumerate}
Let $\cD^{\yes}_{n}$ be the output distribution of the following process:
\begin{enumerate}
\item For each $i\in [n]$, pick a random bijection $\varphi_{i}:\{6i-5,\dots,6i\}\rightarrow\{1,2,\dots,6\}$.
\item Define a function $f:\binom{[6n]}{2}\rightarrow\{0,1\}$ as follows: for any $\{a,b\}\in \binom{[6n]}{2}$, let $f(\{a,b\})=1$ if and only if $a,b$ belong to the vertex set
\[
\bigcup_{i\in [n]}\varphi_{i}^{-1}(\{1,2,4,5\})\subseteq [6n].
\]
\item Let $\mu$ be the uniform distribution over
\[
\bigcup_{i\in [n]}\Bigl\{\varphi_{i}^{-1}(\{1,2\}),\varphi_{i}^{-1}(\{2,3\}),\varphi_{i}^{-1}(\{4,5\}),\varphi_{i}^{-1}(\{5,6\})\Bigr\}\subseteq \binom{[6n]}{2}.
\]
\item Output the pair $(f,\mu)$.
\end{enumerate}

For any pair $(f,\mu)$ in the support of $\cD^{\yes}_{n}$, the graph $\big([6n],f^{-1}(1)\big)$ is a clique (of $4n$ vertices) and thus $f\in \cG^{\textup{cliq}}_{6n}$. On the other hand, it is easy to see that for any pair $(f,\mu)$ in the support of $\cD^{\no}_{n}$, we have
\[
\Prs{\{a,b\}\sim \mu}{f(\{a,b\})\neq g(\{a,b\})\big.}\geq \frac{1}{4}\qquad\text{for any }g\in\cG^{\textup{cliq}}_{6n}.
\]
However, using a birthday-paradox argument similar to the proof of Theorem~\ref{thm:tree_lower_bound}, one can show that in order to distinguish the no case $(f,\mu)\sim \cD^{\no}_{n}$ from the yes case $(f,\mu)\sim \cD^{\yes}_{n}$ with probability at least $2/3$, the number of $f$-labeled samples taken from $\mu$ must be at least $n^{2/3}/3$. We can thus conclude that $\sam\big(\cG^{\textup{cliq}}_{6n},1/4\big)\geq n^{2/3}/3$. 
\end{proof}

\section{Open Problems}

Let $\cH$ be a nonempty family of Boolean-valued functions on a finite domain $\Lambda$. We use $\mathsf{VC}(\cH)$ to denote the VC-dimension of $\cH$. A fundamental result in learning theory (see e.g. \cite{shalev2014understanding}) is that for any constant $\varepsilon\in (0,10^{-2})$ the number of $f$-labeled samples needed for PAC-learning a function $f\in \cH$ up to error $\varepsilon$ is $\Theta(\mathsf{VC(\cH)})$.\footnote{Even if query is allowed (see Remark~\ref{rem:tester_with_query}), the query complexity of PAC-learning is still $\Theta(\mathsf{VC}(\cH))$~\cite{turan1993lower}.} It was shown in \cite[Proposition 3.1.1]{goldreich1998property} that (sample-based distribution-free) testing cannot be harder than learning: we have
\[
\sam(\cH,\varepsilon)= O_{\varepsilon}(\mathsf{VC}(\cH))\qquad\text{for any constant }\varepsilon\in(0,1).
\]

A natural question is, for which families $\cH$ is distribution-free testing much easier than PAC-learning? For most of the well-studied function families $\cH_{n}$ (indexed by a parameter $n$ growing to infinity), such as linear threshold functions, conjunctions and decision lists on the hypercube $\{0,1\}^{n}$, there exists some constant $\varepsilon\in (0,1)$ such that $\sam(\cH_{n},\varepsilon)=\widetilde{\Omega}(\mathsf{VC}(\cH_{n}))$ (see \cite{blais2021vc} and \cite[Section 8]{chen2024distribution}). Blais, Ferreira Pinto Jr. and Harms \cite[Section 7]{blais2021vc} also gave two examples of natural function families $\cH_{n}$ for which there exists $c\in (0,1)$ such that 
\begin{equation}\label{eq:BFH_examples}
\sam(\cH_{n},\varepsilon)=O_{\varepsilon}\bigl(\mathsf{VC}(\cH_{n})^{1-c}\bigr)\qquad\text{for any constant }\varepsilon\in (0,1).
\end{equation}
Interestingly, in both of the examples given by \cite{blais2021vc}, the reason that testing can be more efficient than learning seems to be the \emph{birthday paradox}. 

Note that for the subgraph-freeness property $\cG^{H\textup{-free}}_{n}$ defined in the statement of Theorem~\ref{thm:tree-freeness}, we have $\mathsf{VC}(\cG^{H\textup{-free}}_{n})=\ex(n,H)$. Theorems~\ref{thm:main}, \ref{thm:tree-freeness} and~\ref{thm:subgraph_freeness_upper_bound} imply that \eqref{eq:BFH_examples} holds also for the subgraph-freeness property $\cH_{n}=\cG^{H\textup{-free}}_{n}$ if $H$ is a square, a tree with at least 2 edges\footnote{It is well-known that $\ex(n,H)=\Theta(n)$ for any tree $H$ with at least 2 edges.}, or a non-bipartite graph.\footnote{For non-bipartite graphs $H$ we easily have $\ex(n,H)=\Omega(n^{2})$.} Furthermore, our proofs seem to suggest that the reason we have \eqref{eq:BFH_examples} is again (variants of) the birthday paradox. This motivates the following conjecture:
\begin{conjecture}\label{conj:main}
For any connected simple graph $H$ with at least 2 edges, there exists a constant $c\in (0,1)$ such that
\[\sam\big(\cG^{H\textup{-free}}_{n},\varepsilon\big)=O_{\varepsilon}\bigl(\ex(n,H)^{1-c}\bigr)\qquad\text{for any constant }\varepsilon\in (0,1).\]
\end{conjecture}

As discussed in Section~\ref{subsec:general_framework}, if we restrict to sample-based testers with one-sided error, the tester must essentially be the ``canonical'' one. For two-sided-error testers, it is slightly less clear what is the best algorithm. Can there be a better two-sided-error tester for some properties?

\begin{problem}
Does there exist a connected simple graph $H$ such that testing $H$-freeness of edge distributions is much easier for two-sided-error testers than for one-sided-error testers, in terms of sample complexity?
\end{problem}

Another well-studied class of graph properties is the (homogeneous) partition properties \cite{fiat2021efficient}. Given a symmetric $0/1$-matrix $A\in \{0,1\}^{k\times k}$ and a graph $G=([n],E)$, we say that $G$ has the property $\cG^{A\textup{-part}}_{n}$ if there is a partition of the vertex set $\varphi:[n]\rightarrow [k]$ such that for any $\{a,b\}\in \binom{[n]}{2}$, we have $\{a,b\}\in E$ if and only if $A_{\varphi(a),\varphi(b)}=1$. We pose the following question:
\begin{problem}\label{prob:HPP}
Determine the sample complexity $\sam\big(\cG^{A\textup{-part}}_{n},\varepsilon\big)$ asymptotically in $n$ for any fixed symmetric $0/1$-matrix $A$.
\end{problem}
Note that the clique property $\cG^{\textup{cliq}}_{n}$ studied in Theorem~\ref{thm:clique} coincides with $\cG^{A\textup{-part}}_{n}$ for $A=\begin{bmatrix}1 & 0\\ 0& 0\end{bmatrix}$.

The power of ``query access'' in edge-distribution-free property testing has been left unexplored by this work. We pose the following questions:

\begin{problem}
If edge-query is allowed as in Remark~\ref{rem:tester_with_query}, can triangle-freeness be tested in $n^{4/3-\Omega(1)}$ queries? Can bipartiteness be tested in $n^{1-\Omega(1)}$ queries?
\end{problem}
\begin{problem}
If edge-query is allowed as in Remark~\ref{rem:tester_with_query}, what is the query complexity of testing threshold graphs (see Section~\ref{subsec:further_motivation} for the motivation)?
\end{problem}

\section*{Acknowledgements}

The author would like to thank Ronitt Rubinfeld and Asaf Shapira for many stimulating discussions during the development of this work, especially for bringing the papers \cite{alon2008testing} and \cite{timmons2015counterexample} to his attention.

\addcontentsline{toc}{section}{References}
\bibliographystyle{alpha}
\bibliography{reference}

\appendix
\section{Proof of Proposition~\ref{prop:downward_closed_equivalence}}\label{sec:proof_of_easy_equivalence}

\begin{proof}[Proof of Proposition~\ref{prop:downward_closed_equivalence}]
To obtain the first inequality, note that a sample $x$ from $\mu$ is equivalent to an $f$-labeled sample $(x,f(x))$ with $x\sim\mu$, if $f$ is the indicator function of $\supp(\mu)$. This obviously gives a reduction from the distribution testing problem in Definition~\ref{def:support_testing} to the function testing problem in Definition~\ref{def:distribution_free_testing}.

To obtain the second inequality, consider an algorithm $\cA$ testing whether $\supp(\mu)\in \cH$ using $m:=\dsam(\cH,\varepsilon)$ samples from any $\mu$. To test whether $f^{-1}(1)\in \cH$ with respect to $\mu$ in the sense of Definition~\ref{def:distribution_free_testing}, we run the following procedure:
\begin{enumerate}
\item Take samples $\big(x^{(i)},f(x^{(i)})\big)$ for $1\leq i\leq m'=\lceil 18m/\varepsilon\rceil$, where each $x^{(i)}$ is drawn from $\mu$.
\item If the number of $i\in [m']$ such that $f(x^{(i)})\neq 0$ is at most $9m\leq \varepsilon m'/2$, we accept.
\item Otherwise, we take the first $9m$ samples $x^{(i)}$ such that $f(x^{(i)})\neq 0$ and group them into 9 batches of size $m$. These are 9 batches of independent samples from $\mu$ conditioned on the set $f^{-1}(1)$. We then run $\cA$ on these 9 batches and take the majority vote to test whether the support of this conditional distribution (denoted by $\mu'$) belongs to $\cH$. 
\end{enumerate}

Completeness of the reduction: if $f\in \cH$, then since $\cH$ is downward-closed we have $\supp(\mu')\in \cH$, and thus step 3 accepts with probability at least $2/3$. 

Soundness of the reduction: If $f$ is $\varepsilon$-far from $\cH$ with respect to $\mu$, then since the identically-zero function belongs to $\cH$ we have $\Prs{x\sim \mu}{f(x)\neq 0}\geq \varepsilon$, and thus step 2 passes with probability at most $1/9$. We claim that $\|\mu'-\nu\|_{\mathrm{TV}}\geq \varepsilon$ for any distribution $\nu$ over $\Lambda$ such that $\supp(\nu)\in \cH$; in that case, step 3 passes with probability at most $1/6$ and the overall rejection probability of our procedure is at least $1-1/9-1/6=2/3$. Suppose that $\|\mu'-\nu\|_{\mathrm{TV}}<\varepsilon$ for some $\nu$ with $\supp(\nu)\in \cH$. Since $\supp(\mu')\subseteq f^{-1}(1)$ and $\cH$ is downward-closed, we can obviously assume $\supp(\nu)\subseteq f^{-1}(1)$. Let $g$ be the indicator function of $\supp(\nu)$ and we have
\[
\Pru{x\sim \mu}{f(x)\neq g(x)}\leq \Pru{x\sim \mu'}{f(x)\neq g(x)}=\Pru{x\sim \mu'}{x\not\in \supp(\nu)}\leq \|\mu'-\nu\|_{\mathrm{TV}}<\varepsilon,
\]
contradicting the assumption that $f$ is $\varepsilon$-far from $\cH$ with respect to $\mu$.
\end{proof}

\end{document}